\def\confversion{0}
\def\ifconf{\ifnum\confversion=1}
\def\ifnotconf{\ifnum\confversion=0}
\def\showauthornotes{1}
\def\showkeys{0}
\def\showdraftbox{0}
\definecolor{darkred}{rgb}{0.5,0,0}
\definecolor{darkgreen}{rgb}{0,0.35,0}
\definecolor{darkblue}{rgb}{0,0,0.55}
\newcommand{\Authornote}[2]{{\sf\small\color{red}{[#1: #2]}}}
\newcommand{\Authorcomment}[2]{{\sf \small\color{gray}{[#1: #2]}}}
\newcommand{\Authorfnote}[2]{\footnote{\color{red}{#1: #2}}}
\newcommand{\Authornote}[2]{}
\newcommand{\Authorcomment}[2]{}
\newcommand{\Authorfnote}[2]{}
\newcommand{\draftbox}{\begin{center}
  \fbox{%
    \begin{minipage}{2in}%
      \begin{center}%
        \begin{Large}%
          \textsc{Working Draft}%
        \end{Large}\\
        Please do not distribute%
      \end{center}%
    \end{minipage}%
  }%
\end{center}
\vspace{0.2cm}}
\newcommand{\draftbox}{}
\newtheorem{theorem}{Theorem}[section]
\newtheorem{definition}[theorem]{Definition}
\newtheorem{lemma}[theorem]{Lemma}
\newtheorem{remark}[theorem]{Remark}
\newtheorem{proposition}[theorem]{Proposition}
\newtheorem{corollary}[theorem]{Corollary}
\newtheorem{claim}[theorem]{Claim}
\newtheorem{fact}[theorem]{Fact}
\newtheorem{algo}[theorem]{Algorithm}
\newtheorem{bound}[theorem]{Bound}
\def\FullBox{\hbox{\vrule width 6pt height 6pt depth 0pt}}
\def\qed{\ifmmode\qquad\FullBox\else{\unskip\nobreak\hfil
\penalty50\hskip1em\null\nobreak\hfil\FullBox
\parfillskip=0pt\finalhyphendemerits=0\endgraf}\fi}
\def\qedsketch{\ifmmode\Box\else{\unskip\nobreak\hfil
\penalty50\hskip1em\null\nobreak\hfil$\Box$
\parfillskip=0pt\finalhyphendemerits=0\endgraf}\fi}
\def\to{\rightarrow}
\def\eps{\varepsilon}
\def\epsilon{\varepsilon}
\def\eps{\epsilon}
\def\phi{\varphi}
\def\cal{\mathcal}
\def\psdgeq{\succeq} 
\newcommand{\defeq}{:=}
\newcommand{\R}{{\mathbb R}}
\newcommand{\E}{{\mathbb E}}
\newcommand{\N}{{\mathbb{N}}}
\newcommand{\abs}[1]{\ensuremath{\left\lvert #1 \right\rvert}}
\newcommand{\norm}[1]{\ensuremath{\left\lVert #1 \right\rVert}}
\newcommand{\one}{{\mathbf{1}}}
\newcommand{\yes}{{\sf Yes}\xspace}
\def\ProbabilityRender#1#2{
  \@ifnextchar\bgroup%
  {\renderwithdist{#1}{#2}}
   {\singlervrender{#1}{#2}}
}
\def\singlervrender#1#2{%
   \ensuremath{\mathchoice
       {{#1}\left[ #2 \right]}
       {{#1}[ #2 ]}
       {{#1}[ #2 ]}
       {{#1}[ #2 ]}
   }
}
\def\renderwithdist#1#2#3{%
   \@ifnextchar\bgroup
   {\superfancyrender{#1}{#2}{#3}}
   {\ensuremath{\mathchoice
      {\underset{#2}{#1}\left[ #3 \right]}
      {{#1}_{#2}[ #3 ]}
      {{#1}_{#2}[ #3 ]}
      {{#1}_{#2}[ #3 ]}
     }
   }
}
\def\superfancyrender#1#2#3#4#5{
   \ensuremath{\mathchoice
      {\underset{#1}{{#1}}\left#4 #3 \right#5}
      {{#1}_{#2}#4 #3 #5}
      {{#1}_{#2}#4 #3 #5}
      {{#1}_{#2}#4 #3 #5}
   }
}
\newfont{\inhead}{eufm10 scaled\magstep1}
\newcommand{\calD}{{\cal D}}
\newcommand{\calG}{{\cal G}}
\newcommand{\calI}{{\cal I}}
\newcommand{\calL}{{\cal L}}
\newcommand{\calM}{{\cal M}}
\newcommand{\calN}{{\cal N}}
\newcommand{\calP}{{\cal P}}
\newcommand{\calR}{{\cal R}}
\newcommand{\calS}{{\cal S}}
\newcommand{\calT}{{\cal T}}
\newcommand{\poly}{{\mathrm{poly}}}
\DeclareMathOperator{\tr}{\operatorname {tr}}
\let\softO=\softoh
\let\pE=\pseudoE
\newcommand{\Erdos}{Erd\H{o}s\xspace}
\newcommand{\Renyi}{R\'enyi\xspace}
\newcommand{\Id}{\mathbf{Id}}
\newcommand{\trunc}{ \textnormal{\textsf{{truncation}}}}
\newcommand{\T}{\intercal}
\newcommand{\NN}{\mathbb{N}}
\newcommand{\sig}{\sigma}
\newcommand{\al}{\alpha}
\newcommand{\lam}{\lambda}
\newcommand{\Lam}{\Lambda}
\newcommand{\Gam}{\Gamma}
\newcommand{\gam}{\gamma}
\newcommand{\del}{\delta}
\newcommand{\dv}{D_{V}}
\newcommand{\dsos}{D_{\textnormal{SoS}}}
\DeclareMathOperator{\Bernoulli}{Bernoulli}
\DeclareMathOperator{\Aut}{Aut}
\DeclareMathOperator{\Iso}{Iso}
\DeclareMathOperator{\mult}{mult}
\newcommand{\psdleq}{\preceq}
\newcommand{\ssleq}{{\scriptscriptstyle \le}}
\newcommand{\mM}{\mathbf{M}}
\newcommand{\matD}{\mathbf{D}}
\newcommand{\matL}{\mathbf{L}}
\newcommand{\matQ}{\mathbf{Q}}
\newcommand{\matM}{\mathbf{M}}
\newcommand{\matLam}{\mathbf{\Lambda}}
\newcommand{\matId}{\mathbf{Id}}
\DeclareMathOperator{\slack}{slack}
\newcommand{\normapx}[1]{\norm{\mM^\approx_{#1}}}
\newcommand{\xsos}{\textnormal{\textsf{\contour{black}{X}}}}
\newcommand{\idsym}{\matId_{sym}}
\newcommand{\anote}[1]{{\sf\small\color{orange}{ [Aaron: #1] }}}
\newcommand{\cnote}[1]{{\sf\small\color{blue}{ [Chris: #1] }}}
\newcommand{\gnote}[1]{{\sf\small\color{red}{ [Goutham: #1] }}}
\newcommand{\jnote}[1]{{\sf\small\color{violet}{ [Jeff: #1] }}}
\newcommand{\mnote}[1]{{\sf\small\color{olive}{ [Madhur: #1] }}}
\newcommand{\anote}[1]{}
\newcommand{\cnote}[1]{}
\newcommand{\gnote}[1]{}
\newcommand{\jnote}[1]{}
\newcommand{\mnote}[1]{}
\begin{document}
\title{Sum-of-Squares Lower Bounds for\\Densest $k$-Subgraph}

\author{
  Chris Jones\thanks{{\tt Bocconi University}. {\tt chris.jones@unibocconi.it}. Supported in part by the ERC under the European Union's Horizon 2020 research and innovation programme (grant agreement No. 834861).}
  \and
  Aaron Potechin\thanks{{\tt University of Chicago}. {\tt potechin@uchicago.edu}. Supported in part by NSF grant CCF-2008920.}
  \and  
  Goutham Rajendran\thanks{{\tt Carnegie Mellon University}. {\tt gouthamr@cmu.edu}. Supported in part by NSF grants CCF-1816372 and CCF-2008920.}
  \and
  Jeff Xu\thanks{{\tt Carnegie Mellon University}. {\tt jeffxusichao@cmu.edu}. Supported in part by NSF CAREER Award \#2047933 and CyLab Presidential Fellowship.}
}

\maketitle
 \thispagestyle{empty}
\draftbox

\thispagestyle{empty}

\begin{abstract}

Given a graph and an integer $k$, Densest $k$-Subgraph is the algorithmic task of finding the subgraph on $k$ vertices with the maximum number of edges. This is a fundamental problem that has been subject to intense study for decades, with applications spanning a wide variety of fields. The state-of-the-art algorithm is an $O(n^{1/4 + \eps})$-factor approximation (for any $\eps > 0$) due to Bhaskara et al. [STOC '10]. Moreover, the so-called \emph{log-density framework} predicts that this is optimal, i.e. it is impossible for an efficient algorithm to achieve an $O(n^{1/4 - \eps})$-factor approximation. In the average case, Densest $k$-Subgraph is a prototypical noisy inference task which is conjectured to exhibit a \emph{statistical-computational gap}.

In this work, we provide the strongest evidence yet of hardness for Densest $k$-Subgraph by showing matching lower bounds against the powerful Sum-of-Squares (SoS) algorithm, a meta-algorithm based on convex programming that achieves state-of-art algorithmic guarantees for many optimization and inference problems.
For $k \leq n^{\frac{1}{2}}$, 
we obtain a degree $n^{\delta}$ 
SoS lower bound for the hard regime as predicted by the log-density framework.

To show this, we utilize the modern framework for proving SoS lower bounds on average-case problems pioneered by Barak et al. [FOCS '16].
A key issue is that small denser-than-average subgraphs in the input
will greatly affect the value of the candidate pseudoexpectation operator around the subgraph.
To handle this challenge, we devise a novel matrix factorization scheme based on the \emph{positive minimum vertex separator}. We then prove an intersection tradeoff lemma to show that the error terms when using this separator are indeed small.

\end{abstract}

\newpage
\ifnotconf
\pagenumbering{roman}
\setcounter{tocdepth}{2}
	\tableofcontents
\clearpage
\fi

\pagenumbering{arabic}
\setcounter{page}{1}

\section{Introduction}

In the Densest $k$-Subgraph problem, we are given an undirected graph $G$ on $n$ vertices and an integer $k$ and we want to output the subgraph on $k$ vertices with the most edges, or in other words, the subgraph on $k$ vertices with the highest edge density.
This is a natural generalization of the $k$-clique problem \cite{karp1972reducibility} and has been subject to a long line of work for decades \cite{feige1997densest, srivastav1998finding, feige2001dense, feige2001approximation, asahiro2002complexity, feige2002relations, khot2006ruling, goldstein2009dense, BCCFV10, raghavendra2010graph, alon2011inapproximabilty, bhaskara2012polynomial, barman2015approximating, hajek2015computational, ames2015guaranteed, hajek2016achieving,
braverman2017eth, manurangsi2017almost, bombina2020convex, khanna2020planted}.
This problem has been the subject of intense study partly because of its numerous connections to other problems and fields (e.g. \cite{hajiaghayi2006prize, hajiaghayi2006minimum, kolliopoulos2007partially, pisinger2007quadratic, kortsarz2008approximating, andersen2009finding, charikar2011improved, hamada2011hospitals, li2014complexity, chuzhoy2015approximation, chekuri2015note, chen2015combining, skowron2016finding, chestnut2017hardness, tirodkar2017approximability, lee2017partitioning, chlamtac2018densest, mao2023detectionrecovery})
The best known approximation algorithm for this problem yields an approximation factor of $O(n^{1/4 + \eps})$ for any constant $\eps > 0$, due to \cite{BCCFV10}. 
On the other hand, it is conjectured that no efficient algorithm can achieve an $O(n^{1/4 - \eps})$ approximation.

Densest $k$-Subgraph is
a compelling problem
because random instances (\Erdos-\Renyi graphs) are conjectured and widely believed to be the ''hardest'' instances for algorithms.
In fact, the insight that ``worst case is average case'' was crucial to the aforementioned algorithm in \cite{BCCFV10}.
Their idea of going from average-case instances to worst-case instances was generalized into the \textit{log-density framework} (more in \cref{subsec: logdensity}), which has been further applied to various other problems \cite{chlamtac2012everywhere, chlamtavc2017minimizing, chlamtavc2017approximation}. 
Since an algorithm for random instances seems to be the crucial conceptual step needed to solve the problem on all instances, understanding these random instances is a pressing topic.

As stated in \cite{BCCFV10, bhaskara2012polynomial, braverman2017eth, manurangsi2017almost}, Densest $k$-Subgraph on a random graph 
is a landmark question in the field of average-case complexity.
Moreover, the conjectured hardness of this problem on random instances (which is the focus of our work) has been used for applications in finance \cite{arora2010computational} and cryptography \cite{applebaum2010public}.
However, evidence of hardness for Densest $k$-Subgraph stands to be improved, both in the average-case and worst-case settings.
For example, even in the worst-case setting, no work has been able to show that Densest $k$-Subgraph is hard to $n^\eps$-approximate for a fixed $\eps > 0$ using any reasonable complexity-theoretic assumption (although some works come close, see \cref{related_work}).
In the more interesting average-case setting of random graphs, relatively little progress has been made to justify hardness, let alone match the log-density framework.

In this work, we study the hardness of Densest $k$-Subgraph on random graphs through
a generic, powerful algorithm for optimization known as the Sum-of-Squares (SoS) hierarchy \cite{shor1987approach, nesterov2000squared, parrilo2000structured, grigoriev2001complexity, lasserre2001global}.
The SoS hierarchy is a family of semidefinite programming relaxations for polynomial optimization problems which implements a certain type of ``sum-of-squares reasoning''.
Arguably at the center stage of average-case complexity in recent years, SoS has proven to be a highly effective tool for combinatorial and continuous optimization.
Indeed, the SoS hierarchy is rich enough to capture the state-of-the-art convex relaxations for Sparsest Cut \cite{AroraRV04}, Max-Cut \cite{GW94}, all Max $k$-CSPs \cite{Raghavendra08}, etc. 
Sum-of-Squares has also led to new algorithms for approximating CSPs \cite{AJT19, bafna2021playing, bafna2022high} and breakthroughs in robust statistics \cite{kothari2017outlier, hopkins2018mixture, raghavendra2018high, klivans2018efficient, hopkins2020mean, bakshi2021robust, bakshi2020list}, a highlight being the resolution of longstanding open problems in Gaussian mixture learning (over a decade of work culminating in \cite{bakshi2020robustly, liu2021settling}).
Moreover, for a large class of problems, it has been shown that SoS algorithms are the most effective among all semidefinite programming relaxations \cite{lee2015lower}.
Therefore, understanding the limits of SoS algorithms is an important research endeavour and 
lower bounds against SoS serve as strong evidence for algorithmic hardness \cite{hop17, hop18, kunisky2021spectral}.

In this paper, we prove that for $k \leq n^{\frac{1}{2}}$, SoS of degree $n^{\delta}$ does not offer any significant improvement in the conjectural hard regime of random instances for Densest $k$-Subgraph as predicted by the log-density framework. This settles the open questions raised in the works \cite{bhaskara2012polynomial, rajendran2022combinatorial, chlamtavc2018sherali}.
Considering that the algorithm of Bhaskara et al. \cite{BCCFV10} matching the log-density framework is captured by SoS, our lower bound completes the picture of the performance of SoS for Densest $k$-Subgraph for $k \leq n^{\frac{1}{2}}$. 
This gives solid evidence that the conjectured approximability thresholds for Densest $k$-Subgraph are correct.

\subsection{Our contributions}\label{subsec: contributions}

We will now describe our results on SoS lower bounds for Densest $k$-Subgraph that match the predictions of the log-density framework (to be described in \cref{subsec: logdensity}).

Consider the following hypothesis testing variant of the Densest $k$-Subgraph problem. For an integer $n$ and a real $p \in [0, 1]$, let $\calG_{n, p}$ denote the \Erdos-\Renyi random distribution where a graph on $n$ vertices is sampled by choosing each edge to be present independently with probability $p$. For parameters $n, k \in \NN$ and $p, q \in [0, 1]$, we are given a graph $G$ sampled either from
\begin{enumerate}
    \item The null distribution $\calG_{n, p}$ or
    \item The alternative distribution where we first sample $G \sim \calG_{n, p}$, then a set $H \subseteq V(G)$ is chosen by including each vertex with probability $\frac{k}{n}$, and finally we replace $H$ by a sample from $\calG_{|H|, q}$.
\end{enumerate}
and our goal is to correctly identify which distribution it came from, with non-negligible probability.

The hypothesis testing question is a ``planted model'' of Densest $k$-Subgraph which is conjectured to exhibit a \emph{statistical-computational gap} \cite{brennan2020reducibility,brennan2020statistical}.
With high probability, for $q$ slightly larger than $p$, the subgraph $H$ in the alternative distribution is truly the densest subgraph of $G$ with size $k$ (hence the
null and alternative distributions are statistically distinguishable),
but it is conjecturally computationally impossible to distinguish the two cases (in the parameter regime below).

Studying algorithms for this hypothesis testing variant was crucial
to the log-density framework~\cite{BCCFV10},
which both generalizes an algorithm for the hypothesis testing variant
into a worst-case algorithm, and predicts the relationships between $n, k, p, q$ for which the hypothesis testing problem is hard.
In particular, consider the setting 
\[k = n^{\al}, \qquad\qquad p = n^{-\beta}, \qquad\qquad q = n^{-\gam}\]
for constants $\al \in (0, 1/2], \beta \in (0, 1), \gam \in (0, 1)$, a notation that we will use throughout this paper. According to the framework, it's algorithmically hard to solve the problem if
\[\gam > \al\beta\]
That is, in this regime, no polynomial-time algorithm can distinguish the two distributions with probability at least $2/3$ of success.\footnote{When $\al > \frac{1}{2}$, i.e. $k = \omega(\sqrt{n})$, spectral algorithms beat the log-density threshold \cite{BCCFV10, khanna2020planted}.
Spectral algorithms are captured by degree-2 SoS. Various works have also studied other special settings (e.g. when $q = 1$, or when $p, q$ are constants). See \cref{related_work}.}

To state our result, we recall that the SoS hierarchy is a family of convex semidefinite programming relaxations parameterized by an integer $\dsos$ called the 
\emph{degree} or \emph{level} of SoS. The relaxation gets tighter as $\dsos$ increases but the runtime also increases at the rate\footnote{In pathological cases, there may be issues with bit complexity \cite{o2017sos, RW17:sos}} of approximately $n^{O(\dsos)}$ for degree $\dsos$ SoS.
Thus, conceptually degree $O(1)$ corresponds to polynomial time, and degree $n^\delta$ to subexponential time algorithms.
In this work, we study the performance of degree $\dsos = n^{\delta}$ Sum-of-Squares on the Densest $k$-Subgraph problem for a constant $\del > 0$ and obtain strong lower bounds.

Because of the well-known duality between SoS programs and pseudo-expectation operators, to show a lower bound, it suffices to show a feasible pseudo-expectation operator $\pE$ satisfying the constraints.
For a formal definition of SoS, see \cref{subsec: sos}. We are now ready to state our result.

\begin{theorem}\label{thm:main}
    For all constants $\al \in (0, 1/2], \beta \in (0,1), \gam \in (0,1)$
    such that 
    $\gam > \alpha \beta$, there exists $\delta > 0$ such that
    with high probability over $G = (V, E) \sim \calG_{n,p}$, there exists a degree $n^{\delta}$ pseudo-expectation operator $\pE$ on SoS program variables $\{\xsos_u\}_{u \in V}$ such that
    \begin{enumerate}
        \item (Normalization) $\pE[1] = 1 \pm o(1)$.
        \item (Subgraph on $k$ vertices) $\pE[\sum_{v \in V} \xsos_v] = k(1 \pm o(1))$.
        \item (Large density) $\pE[\sum_{\{u,v\} \in E} \xsos_u \xsos_v] = \frac{k^2q}{2}(1 \pm o(1))$
        \item (Feasibility) The moment matrix $\mM$ corresponding to $\pE$ is positive semidefinite.
    \end{enumerate}
\end{theorem}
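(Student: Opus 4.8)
The plan is to build the operator $\pE$ by \emph{pseudo-calibration} against the planted (alternative) distribution, in the style of Barak et al. Writing the input as a sign vector $G \in \{\pm1\}^{\binom{V}{2}}$ with Fourier characters $\chi_T(G) = \prod_{e\in T} G_e$, and letting $\mathrm{pl}$ denote the alternative distribution with planted vertex set $H$, I would define, for $S \subseteq V$ and $T \subseteq \binom{V}{2}$,
\[
\pE\Big[\,\chi_T(G)\cdot{\textstyle\prod_{u\in S}\xsos_u}\,\Big] \;=\;
\begin{cases}
\EE_{\mathrm{pl}}\big[\chi_T(G)\cdot \mathds{1}[\,S\cup V(T)\subseteq H\,]\big], & |S|+|T|\le \tau,\\
0, & \text{otherwise,}
\end{cases}
\]
for a truncation parameter $\tau = n^{\Theta(\delta)}$, and then extend to an arbitrary degree-$\dsos$ polynomial in the $\xsos_u$ by expanding into the multilinear monomial basis (using the Boolean rule $\xsos_u^2=\xsos_u$) and into the Fourier basis on the $G$-side. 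By construction the low-degree moments agree with the planted distribution up to the truncation tail, which is $o(1)$ provided $\tau$ is a sufficiently large power of $n$.

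Constraints (1)--(3) should then be essentially bookkeeping. Normalization: $\pE[1]$ is the (truncated) total mass, $1-o(1)$. Size: $\pE[\sum_v\xsos_v]=\sum_v\Pr_{\mathrm{pl}}[v\in H](1\pm o(1)) = k(1\pm o(1))$ since each vertex is planted with probability $k/n$. Density: expanding $\pE[\sum_{\{u,v\}\in E}\xsos_u\xsos_v]$ reduces to $\sum_{\{u,v\}}\EE_{\mathrm{pl}}[\mathds{1}[u,v\in H]\,G_{uv}]$, and translating from $\pm1$ edge variables to $\{0,1\}$ indicators this is $\binom{k}{2} q(1\pm o(1)) = \frac{k^2q}{2}(1\pm o(1))$. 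The regime hypothesis $\gamma>\alpha\beta$ is not really needed for (1)--(3); it enters through the size of the truncated Fourier coefficients, which is where it becomes indispensable for (4).

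Constraint (4) --- PSD-ness of the moment matrix $\mM$ with high probability over $G\sim\calG_{n,p}$ --- is the real content and where I expect the work to be. I would expand the pseudo-calibrated entries of $\mM$ in the Fourier basis to write $\mM$ as a sum of \emph{graph matrices} indexed by shapes, then seek a factorization $\mM \approx L\,Q\,L^{\top}$ with $Q$ manifestly PSD (dominated by a positive multiple of the ``trivial'' shape, i.e.\ essentially a scaled identity) and $\|\mM - L Q L^{\top}\|$ negligible against $\lambda_{\min}(LQL^{\top})$; relying on the trace-power method and graph-matrix norm bounds, this would finish the proof once the right $L$ is chosen. The obstacle, and the place the argument must depart from Planted Clique, is precisely the choice of separator in this factorization: because $\calG_{n,p}$ itself contains small denser-than-average subgraphs, the usual minimum-\emph{weight} vertex separator can route through such a subgraph and blow up the corresponding intersection term beyond what a naive count gives. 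The fix I would pursue is to factor through a \emph{positive minimum vertex separator} --- the minimum separator constrained to avoid charging these anomalous dense spots --- and then prove an \emph{intersection tradeoff lemma} asserting that every vertex forced into an intersection of two shapes genuinely costs a factor $n^{-\Omega(\gamma-\alpha\beta)}$, so that the $n^{O(\dsos)}$ error shapes sum to something negligible. I expect the correct formulation of the positive minimum vertex separator and the proof of this intersection tradeoff lemma to be the crux; the remaining steps (per-shape norm bounds, checking that truncation and the lower-order ``spider'' corrections do not destroy the positive dominant term) are laborious but follow the established template.
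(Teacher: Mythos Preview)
Your proposal is essentially the paper's approach: pseudo-calibration to define $\pE$, graph-matrix expansion of the moment matrix, an approximate factorization $\matLam \approx \matL Q \matL^\T$ built around the \emph{positive minimum vertex separator}, and an intersection tradeoff lemma exploiting the $\gamma-\alpha\beta$ decay to control the error terms. Two small corrections to keep in mind: since $p=n^{-\beta}\neq \tfrac12$ you must work in the $p$-biased Fourier basis (not $\pm1$ characters), and the PMVS is not really about ``avoiding dense spots'' but about recursively casing on whether the edges inside the candidate SMVS are actually present in $G$ so that the dominant diagonal shapes have genuinely nonnegative entries.
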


This in particular implies that, in the predicted hard regime of the log-density framework, SoS cannot be used to solve the Densest $k$-Subgraph problem as stated above.
As discussed earlier, these SoS lower bounds offer strong evidence that for $k \leq \sqrt{n}$, it is unlikely that efficient algorithms can beat the predictions of the log-density framework for Densest $k$-Subgraph.

By setting $\al = 1/2, \beta = 1/2$ and $\gam = 1/4 + \eps$, we obtain the following important corollary.

\begin{corollary}\label{cor:main}
    For any $\eps > 0$, 
    there exists a constant $\delta > 0$ such that degree-$n^{\del}$ Sum-of-Squares exhibits an integrality gap of $O(n^{1/4 - \eps})$ for the Densest $k$-Subgraph problem. 
\end{corollary}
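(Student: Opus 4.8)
The plan is to read off \cref{cor:main} from \cref{thm:main}: essentially all the work lies in the theorem, and what remains is a parameter substitution together with one routine concentration estimate. We may assume $\eps \le 1/4$, since otherwise $n^{1/4-\eps} < 1$ and the statement is vacuous (a maximization relaxation always has integrality gap at least $1$). I would instantiate \cref{thm:main} with $\al = 1/2$, $\beta = 1/2$, and $\gam = 1/4 + \eps$, so that $k = n^{1/2}$, $p = n^{-1/2}$, and $q = n^{-1/4-\eps}$; these satisfy $\al \in (0,1/2]$, $\beta, \gam \in (0,1)$, and, crucially, $\gam = \tfrac14 + \eps > \tfrac14 = \al\beta$, so the hypothesis of \cref{thm:main} holds. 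The theorem then supplies a constant $\delta > 0$ such that, with high probability over $G = (V,E) \sim \calG_{n,p}$, there is a degree-$n^\delta$ pseudo-expectation $\pE$ feasible for the SoS relaxation of Densest $k$-Subgraph on $G$ with $\pE[1] = 1 \pm o(1)$, $\pE\bigl[\sum_{v \in V}\xsos_v\bigr] = k(1\pm o(1))$, PSD moment matrix, and
\[
\pE\Bigl[\sum_{\{u,v\}\in E}\xsos_u\xsos_v\Bigr] \;=\; \frac{k^2 q}{2}\,(1\pm o(1)) \;=\; \tfrac12\, n^{3/4-\eps}\,(1\pm o(1)) \mper
\]
Rescaling $\pE$ by $1/\pE[1] = 1\pm o(1)$ turns it into a genuine pseudo-expectation with the same estimates up to a $1\pm o(1)$ factor, so the degree-$n^\delta$ SoS optimum on $G$ is $\Omega(n^{3/4-\eps})$ with high probability.

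Next I would upper bound the true Densest $k$-Subgraph value of $G \sim \calG_{n,p}$. For a fixed set $S$ of $k = n^{1/2}$ vertices, the number of edges of $G$ inside $S$ is distributed as $\Bin\bigl(\binom{k}{2}, p\bigr)$ with mean $\binom{k}{2}p = \Theta(n^{1/2})$, so a Chernoff bound gives $\Pr\bigl[e(S) \ge C n^{1/2}\log n\bigr] \le \exp\bigl(-\Omega(C\, n^{1/2}\log n)\bigr)$. Since there are only $\binom{n}{k} \le \exp\bigl(O(n^{1/2}\log n)\bigr)$ choices of $S$, taking $C$ a large enough absolute constant and applying a union bound shows that, with high probability, every $k$-subset of $V$ spans at most $C\, n^{1/2}\log n$ edges. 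Hence the true optimum on $G$ is $\softO(n^{1/2})$ with high probability.

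Finally, combining the two bounds, with high probability the degree-$n^\delta$ SoS relaxation of Densest $k$-Subgraph on $G$ has integrality gap at least $\Omega(n^{3/4-\eps}) / \softO(n^{1/2}) = \softOmega(n^{1/4-\eps})$; running the whole argument with $\eps/2$ in place of $\eps$ absorbs the polylogarithmic factor and yields a gap of at least $n^{1/4-\eps}$ for the original $\eps$, which is the claim. The only points that need any verification are the arithmetic $\al\beta = 1/4$ and the standard union-bound estimate on the densest subgraph of an \Erdos-\Renyi graph, so there is no real obstacle here beyond \cref{thm:main} itself, whose proof (the construction and analysis of the pseudo-expectation $\pE$) is where all the difficulty lies.
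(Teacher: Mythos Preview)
Your proposal is correct and is exactly the approach the paper takes: the paper derives the corollary simply by instantiating \cref{thm:main} with $\al = 1/2$, $\beta = 1/2$, $\gam = 1/4 + \eps$, and remarking (in the log-density discussion) that the true densest $k$-subgraph in $\calG_{n,n^{-1/2}}$ has only $\widetilde{O}(\sqrt{n})$ edges while the pseudo-expectation achieves $\widetilde{\Omega}(n^{3/4-\eps})$. Your write-up just fills in the Chernoff/union-bound step and the $\pE[1]$ normalization that the paper leaves implicit.
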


This corollary essentially matches the best known algorithmic guarantees for the Densest $k$-subgraph problem \cite{BCCFV10}, namely an efficient $O(n^{1/4 + \eps})$-factor approximation algorithm, thereby completing the picture for Sum-of-Squares.

\subsection{The log-density framework}\label{subsec: logdensity}

For more context on our results, we give a brief description of the log-density framework \cite{BCCFV10}. See \cite[Section 1.3]{khanna2020planted} or \cite{chlamtavc2018sherali} for a more detailed treatment.

The log-density framework is a relatively recent technique that devises worst-case algorithms for problems by studying algorithms for average-case instances. It was introduced in the context of the Densest $k$-Subgraph problem and has been since utilized for many other problems such as
Lowest Degree 2-Spanner, Smallest p-Edge Subgraph (SpES) \cite{chlamtac2012everywhere}, Small
Set Bipartite Vertex Expansion (SSBVE) \cite{chlamtavc2017minimizing}, Label Cover, 2-CSPs \cite{chlamtavc2017approximation}, etc.

Formally, for a graph on $n$ vertices with average degree $d$, we define its log-density to be $\frac{\log d}{\log n}$.
Consider the hypothesis testing problem from \cref{subsec: contributions}.
The log-density framework predicts that it is possible to algorithmically distinguish the distributions and solve the hypothesis testing problem if and only if the log-density of the planted subgraph is larger than the log-density of the original graph before planting. Since the average degree of a graph sampled from $G_{n, p}$ is $\approx np$,
this framework predicts that the distributions are distinguishable if for some constant $\eps > 0$,
\[\frac{\log (kq)}{\log k} \ge \frac{\log (np)}{\log n} + \eps \qquad\Longleftrightarrow\qquad \gam \le \al\beta - \eps'\]
for some constant $\eps' > 0$. 

Moreover, and of extreme importance to us, the framework also predicts algorithmic hardness if the other direction of the inequality holds. That is, if \[\gam \ge \al\beta + \eps\] for some constant $\eps > 0$, the log-density framework predicts that no efficient algorithm can distinguish the two distributions.
For the sake of clarity, let's look at the special case $\al = 1/2, \beta = 1/2$ and $\gam = 1/4 + \eps$. Then, we expect it to be hard for efficient algorithms to distinguish the following distributions,
\begin{enumerate}
    \item The null distribution $\calG_{n,\frac{1}{\sqrt{n}}}$ 
    \item The alternative distribution where we first sample $G \sim \calG_{n, \frac{1}{\sqrt{n}}}$, then a set $H \subseteq V(G)$ is chosen by including each vertex with probability $\frac{1}{\sqrt{n}}$ (so $|H| \approx \sqrt{n}$), and finally we replace $H$ by a sample from $\calG_{|H|, \frac{1}{n^{1/4 + 
 \eps}}}$.
\end{enumerate}

As an aside, note that in this case, since the average degree of the densest $k$-subgraph for the null distribution is $\widetilde{O}(\sqrt{n})$ and that of the alternative distribution is $\widetilde{\Omega}(n^{3/4 - \epsilon})$, hardness of the distinguishing problem implies $n^{1/4 - \epsilon}$-factor approximation hardness for Densest $k$-Subgraph.

\subsection{Our approach}

Since Sum-of-Squares is a convex program, in order to prove a lower bound,
it suffices to construct a feasible point, i.e. a \emph{pseudoexpectation operator} or \emph{moment matrix}, which is a large nonlinear random matrix that depends on the input. At a high level, our proof leverages an existing strategy for proving lower bounds
against the Sum-of-Squares algorithm on random inputs:
use \emph{pseudocalibration} \cite{BHKKMP16} to construct a candidate moment matrix, 
then study the spectrum of the candidate matrix using \emph{graph matrices} \cite{AMP20}. This strategy has been successfully applied in several contexts \cite{BHKKMP16, potechin2020machinery, GJJPR20, JPRTX22},
although in each case, including ours, significant additional insights have been required.

Given a random input graph, the first step is to construct the candidate pseudoexpectation operator or moment matrix.
Pseudocalibration
suggests a candidate matrix, which we can use here without further thinking.
Recall that a semidefinite program optimizes over the cone of positive semi-definite (PSD) matrices;
the main challenge is showing that the candidate moment matrix is feasible (PSD) with high probability over the random input.

The main issue we face is that matrix factorization strategies in prior works do not obviously lead to dominant PSD terms in our setting. 
There are several steps in the existing framework:
\begin{enumerate}
    \item Express the candidate moment matrix $\matLam$ in the graph matrix (i.e. Fourier) basis;
    \item Identify a class of spectrally dominant graph matrices in $\matLam$ which are together approximately PSD;
    \item Perform an approximate PSD decomposition to create PSD terms plus additional error terms;
    \item Show that all non-dominant terms and error terms can be charged to the dominant PSD terms, i.e. they are ``negligible''.
\end{enumerate}

For the purposes of the current discussion, it is enough to know that
each graph matrix in step (1) measures how a fixed small subgraph, or \emph{shape}, contributes to the candidate moment matrix, and furthermore that the spectral norm of a
graph matrix can be read off of combinatorial properties of the small shape graph.
It was shown in \cite{JPRTX22, rajendran2022concentration} that the norm of a graph matrix is determined up to lower-order factors by the \emph{Sparse Minimum-weight Vertex Separator (SMVS)} of the shape (\cref{thm:norm-bounds-informal}).
For intuition, shapes with smaller, denser separators have larger norms.

In order to identify the class of norm-dominant shapes in step (2),
previous work decomposes shapes using their leftmost and rightmost \emph{MVS} (in contrast to \emph{SMVS}), yielding for each shape an approximately PSD term that spectrally dominates the original graph matrix.
Using the norm bounds, combinatorial arguments about vertex separators are then employed
to show that all deviation terms in step (4) are small.

Although prior work has avoided using the SMVS as the decomposition criterion and used the MVS instead,
the SMVS is a necessity in our setting, because Densest $k$-Subgraph is sensitive
to small, local structures in the input.
To explain, for a fixed set of vertices $U$, if many vertices in $U$ have a common exterior neighbor or are part of a denser-than-average subgraph, then this greatly increases the algorithm's belief that $U$ is part of the dense subgraph.
Using the SMVS can be thought of as pinpointing, for each shape,
the small dense subgraph which has the strongest effect on the graph matrix's norm.

A decomposition based on SMVS poses new conceptual challenges. 
Surprisingly, the SMVS decomposition, without extra care, may rather lead to some supposedly ``PSD'' terms being negative instead.  We address these technical challenges, alongside our solution using the \emph{Positive Minimum-weight Vertex Separator} (see \cref{sec:proof-outline} for a technical overview) after providing the definitions needed for working with graph matrices.

Once we have properly identified the dominant PSD terms, what remains is to prove that the
error terms in the decomposition are small using an \emph{intersection tradeoff lemma}.
This is also one of our novel contributions as it is significantly different from intersection lemmas in prior works.
This combinatorial lemma is the most crucial part of the proof, as it ensures that the error
terms in the approximate PSD decomposition have small enough norms.

It's worth highlighting that the log-density criterion $\gam > \al\beta$ occurs multiple times throughout our proof, which is fascinating to the authors. A partial explanation is that if we look at the contribution of each Fourier character in \cref{lem:GeneralCoefficientTimesNorm}, the quantity $\gam - \al\beta$ measures the decay as the degree of the Fourier character increases, i.e. it's the edge decay in a shape. Therefore, this has a dampening effect on the higher Fourier levels in the decomposition. Such a Fourier decay is ubiquitous in the analysis of the low-degree likelihood ratio \cite{hop17, hop18, kunisky2022notes} and has been important in prior average-case SoS lower bounds \cite{BHKKMP16, potechin2022subexponential, GJJPR20, JPRTX22}.

\subsection{Related work}\label{related_work}

\paragraph{Algorithms}

Algorithms for the Densest $k$-Subgraph problem have been widely studied, e.g. \cite{feige1997densest, srivastav1998finding, feige2001dense, feige2001approximation, asahiro2002complexity, suzuki2008dense, goldstein2009dense, BCCFV10, manurangsi2015approximating, ames2015guaranteed, barman2015approximating, braverman2017eth, bombina2020convex, khanna2020planted}, and we do not attempt to give an overview of them (see e.g. \cite{khanna2020planted} for a nice overview of some of them). For general graphs, the work \cite{kortsarz1993choosing} (which also introduced the problem) gave a polynomial time $\tilde{O}(n^{0.3885})$-factor approximation algorithm. This was later improved to a $O(n^{1/3 - \eps})$-factor approximation (for a constant $\eps \approx 1/60$) in \cite{feige2001dense} and to a $O(n^{0.3159})$-factor approximation in \cite{goldstein2009dense} respectively. The seminal work of \cite{BCCFV10}, which also proposed the log-density framework improved this to give an algorithm that achieves a $n^{1/4 + \eps}$-factor approximation in time $n^{O(1/\epsilon)}$, for all constants $\eps > 0$. This is conjectured to be the best achievable by efficient algorithms.

\paragraph{Lower bounds for Densest $k$-Subgraph}

Because of its conceptual significance and wide applicability, studying lower bounds against the Densest $k$-Subgraph problem is an important research endeavour. We give a non-exhaustive list of such prior works below. 
\begin{enumerate}
    
    \item Conditional hardness: It's well known that Densest $k$-Subgraph is NP-hard to solve exactly, but to the best of our knowledge, NP-hardness of even constant factor approximation is unknown. There are various other conditional hardness results assuming more than $\textsf{P} \neq \textsf{NP}$, e.g. \cite{feige2002relations, khot2006ruling, raghavendra2010graph, alon2011inapproximabilty, braverman2017eth, manurangsi2017almost}. We highlight the influential work of Manurangsi \cite{manurangsi2017almost}, who assuming the Exponential Time Hypothesis showed almost-polynomial factor hardness for this problem. 
    See the same paper for a more detailed list of other conditional hardness results. It's worth noting that none of these results achieve polynomial factor hardness.
    
    These approaches argue that Densest $k$-Subgraph is hard by reduction.
    One source of difficulty is that
    reductions are not as successful for average-case problems, since a reduction
    tends to distort the input distribution and produce somewhat pathological outputs.
    Proving hardness of Densest $k$-Subgraph may be possible using a reduction to a novel non-random instance, but, if it is true that random (or sufficiently pseudorandom) graphs are the
    \emph{only} hard instances of Densest $k$-Subgraph, then a stronger theory of average-case reductions
    may be a prerequisite.
    Some recent works make exciting progress on realizing average-case reductions \cite{brennan2018reducibility, brennan2020reducibility, boix2021average, hirahara2021nearly}.

    The remaining lower bounds, including ours, are unconditional results that do not rely on any conjectures.
        
    \item Sherali-Adams hardness: An integrality gap of $\Omega(n^{\al(1 - \al) - o(1)})$ was shown for the degree-$\tilde{\Omega}(\log n)$ Sherali-Adams hierarchy (which is a family of linear programming relaxations) in \cite{bhaskara2012polynomial, chlamtavc2018sherali}. Our result is stronger than these Sherali-Adams lower bounds in three important ways. First, we consider SoS rather then Sherali-Adams. The SoS hierarchy captures the Sherali-Adams hierarchy and is known to be much stronger in many cases (e.g., see \cite{khot2015unique, devanur2006integrality, charikar2009integrality, chan2016approximate, kothari2017approximating} in conjunction with \cite{GW94, arora2009expander}) therefore we imply their results. Second, we obtain an $n^{\delta}$ degree lower bound as opposed to an $\tilde{\Omega}(\log n)$ degree lower bound. Finally, while these Sherali-Adams lower bounds are for the particular setting where $\beta = \al$ (the setting that maximizes the integrality gap for a fixed $\al$), our lower bounds work for the entire range of parameters $\al, \beta, \gam$.

    \item SoS hardness: Worst-case SoS lower bounds have been exhibited in \cite{bhaskara2012polynomial, manurangsi2015approximating, chlamtavc2017approximation} obtained by reducing from Max $k$-CSP hardness results, within the SoS framework as pioneered by \cite{Tulsiani09}. However, these SoS lower bounds were not optimal even for worst-case instances, since they didn't match known algorithmic guarantees (to be more precise, they showed an $n^{1/14 - O(\eps)}$-factor lower bound for degree $n^{\eps}$ SoS, whereas $n^{1/4 - O(\eps)}$-factor hardness is conjectured). Our work on the other hand studies average-case instances (as opposed to worst-case) and matches the guarantees of known algorithms. Therefore, we significantly improve these prior hardness results and close the gap. Moreover, our results can be reduced {\`a} la \cite{Tulsiani09} to show SoS hardness for other problems such as Densest $k$-Subhypergraph \cite[Theorem 3.17]{rajendran2022combinatorial} and also potentially Minimum $p$-Union \cite{rajendran2022combinatorial}.
\end{enumerate}

\paragraph{Average-case Sum-of-Squares lower bounds}

Sum-of-Squares lower bounds for average-case problems have proliferated in the last decade, for example, Planted Clique \cite{hopkins2015sos, meka2015sum, BHKKMP16}, Sherrington-Kirkpatrick Hamiltonian \cite{MRX20, GJJPR20, Kunisky20}, Sparse and Tensor PCA \cite{hop17, potechin2020machinery, potechin2022subexponential} and Max $k$-CSPs \cite{KMOW17}.
Most of these works have been in the colloquial ``dense'' regime where the random inputs are sampled from $\calG_{n, 1/2}$ or the standard normal distribution $\calN(0, 1)$. Recently, average-case SoS lower bounds have been shown for the sparse setting, i.e. inputs sampled from $\calG_{n, p}$ where $p = o(1)$, for the problem of Maximum Independent Set \cite{JPRTX22, rajendran2022concentration}.
The common thread underlying recent SoS lower bounds, including ours, is spectral analysis of large random
matrices.
See the works \cite{potechin2020machinery, rajendran2022nonlinear, Jones:thesis} for additional background and intuition on the matrix analysis framework used in these lower bounds.

\paragraph{The low-degree likelihood ratio hypothesis}
We add that similar predictions as the log-density framework for the threshold of algorithmic distinguishability may possibly be obtained by analyzing the \emph{low-degree likelihood ratio} \cite{hop17, hop18, kunisky2022notes}.
The low-degree likelihood ratio is used in the context
of noisy statistical inference problems to predict, among other things, the existence of
statistical-computational gaps, i.e. when the signal (the planted dense subgraph)
is information-theoretically detectable (and hence recoverable by a brute-force search), but is not detectable by efficient algorithms.
In the same context, the low-degree likelihood ratio is used to predict the distinguishing power of low-degree polynomial algorithms.
In \cite{schramm2022computational}, they analyze the low-degree likelihood ratio for certain parameter regimes of Densest $k$-Subraph, but their results do not seem to recover the predictions of the log-density framework precisely.
Our \cref{prop:pE-one} can be interpreted as showing that the low-degree likelihood ratio is $1 + o(1)$ in the entire hard regime for the log-density framework.

\paragraph{Planted Dense Subgraph and Planted Clique conjectures}

In our work, we have focused on the regime $\al \in (0, 1/2], \beta, \gam \in (0, 1)$. Other instantiations of these parameters have also been subject to intense study in recent years and various conjectures predicting the limits of efficient algorithms have been proposed, broadly referred to as the Planted Dense Subgraph conjecture or in the case $\gam = 0$, the Planted Clique conjecture. Furthermore, assuming these conjectures, inapproximability results have been derived for various problems such as Sparse PCA, Stochastic Block model, Biclustering, etc.
See e.g.
\cite{hajek2015computational, chen2014statistical, brennan2018reducibility, brennan2019universality, brennan2019optimal, manurangsi2020strongish, potechin2020machinery, potechin2022subexponential} and references therein. Densest $k$-Subgraph lies at the heart of many of these reductions, therefore it's plausible that our hardness result can be exploited to derive better inapproximability results for various other problems, which we leave for future work.

\subsection{Organization of the paper}

The rest of the paper is organized as follows. We start with a brief overview in \cref{sec:preliminaries} of graph matrices, which are at the heart of our spectral analysis, using it to construct our candidate moment matrix following the pseudo-calibration framework in \cref{sec:pseudocalibration}. With the matrix in hand, we then delve into the extensive PSDness analysis that forms the bulk of work. We
motivate and discuss our conceptually novel PMVS decomposition in \cref{sec:pmvs}, and show the combinatorial analysis for the key ``charging'' arguments of the PSDness proof in \cref{sec:psdness}. We defer the formal details and other technical verifications to appendices.

\paragraph{Acknowledgments}

We thank Madhur Tulsiani for useful discussions. Most of the work for this project was completed while CJ and GR were PhD students at the University of Chicago.

\section{Preliminaries}
\label{sec:preliminaries}

\subsection{The Sum-of-Squares algorithm}\label{subsec: sos}

We now formally describe the Sum-of-Squares hierarchy. For a detailed treatment and survey of SoS, see e.g.
\cite{raghavendra2018high, fleming2019semialgebraic, schramm2017random, hop18, Jones:thesis}.

SoS is used to check feasibility of a system of polynomials.
Given a graph $G = (V, E)$, the simplest polynomial formulation for the existence of a subgraph with $k$ vertices
and $m$ edges encodes the 0/1 indicator of the subgraph:
\begin{align*}
\label{eq:dks-polynomial}
    & \text{Variables: } \xsos_v, \,\,\forall v \in V\\
    & \text{Constraints:}\\
    & \sum_{v \in V}\xsos_v = k &&& (\text{Vertex count})\\
    & \sum_{\{u,v\} \in E} \xsos_u \xsos_v  = m &&& (\text{Edge count})\\
    & \xsos_v^2 = \xsos_v & \forall v \in V & & (\text{Boolean})
\end{align*}

The sum-of-squares algorithm is parameterized by the \emph{degree} $\dsos \in \N$. We assume $\dsos$ is even.
For formal variables $\xsos_1, \dots, \xsos_n$, let $\R^{\leq \dsos}[\xsos_1, \dots, \xsos_n]$
denote the set of polynomials with degree at most $\dsos$.

\begin{definition}[Pseudoexpectation]
    Given a set of variables $\xsos_1, \dots, \xsos_n$, 
    a \emph{degree-$\dsos$ pseudoexpectation
    operator} is a linear functional $\pE : \R^{\leq \dsos}[\xsos_1, \dots, \xsos_n] \to \R$
    such that $\pE[1] = 1$.
\end{definition}

\begin{definition}[Satisfying an equality constraint]
\label{def:satisfying-a-constraint}
    A degree-$\dsos$ pseudoexpectation operator $\pE$ satisfies a polynomial constraint ``$f(\xsos) = 0$''
    if $\pE[f(\xsos)p(\xsos)] = 0$ for all polynomials $p(\xsos)$ such that ${\deg(p) + \deg(f) \leq \dsos}$.
\end{definition}
\begin{definition}[SoS-feasible]
    A degree-$\dsos$ pseudoexpectation operator $\pE$ is \emph{SoS-feasible}
    if for every polynomial $p \in \R^{\leq \dsos/2}[\xsos_1, \dots, \xsos_n]$,
    $\pE[p(\xsos)^2] \geq 0$.
\end{definition}

\begin{definition}[Sum-of-squares algorithm]
Given a system of polynomial constraints $\{f_i(\xsos) = 0\}$
in $n$ variables $\xsos_1, \dots, \xsos_n$,
the degree-$\dsos$ Sum-of-Squares algorithm checks for the existence
of an SoS-feasible degree-$\dsos$ pseudoexpectation operator $\pE$ that satisfies the constraints.
If $\pE$ exists, the algorithm outputs ``may be feasible'', otherwise it outputs ``infeasible''. This can be done algorithmically by solving a semidefinite program
of size $n^{O(\dsos)}$ that searches for a feasible moment matrix~(\cref{def:moment-matrix}).
\end{definition}

If no pseudoexpectation operator exists, then SoS successfully refutes the polynomial system (i.e., it proves that there is no dense subgraph in the input).
On the other hand, if a pseudoexpectation operator exists, SoS cannot rule out that 
the polynomial system is feasible (the pseudoexpectation operator fools SoS, but it may or 
may not correspond to a true distribution on feasible points).
A lower bound against SoS consists of a feasible pseudoexpectation operator in the case when the system is actually infeasible.

\subsection{Moment matrices}

Analysis of the SoS algorithm on an $n$-variable polynomial system is typically accomplished by formulating it in terms of large matrices indexed by subsets of $[n]$,
known as \emph{moment matrices}.

\begin{definition}[Matrix index]
    Let $\calI$ be the set of ordered subsets of $[n]$ of size at most $\dsos / 2$.
\end{definition}

\begin{remark}
Another reasonable definition of $\calI$ uses
subsets of $[n]$ and not ordered subsets.
For technical simplifications, we include an ordering.
\end{remark}

The degree-$\dsos$ sum-of-squares algorithm can be equivalently formulated
in terms of $\R^{\calI \times \calI}$ matrices, which are called \emph{moment matrices}.

\begin{definition}[Moment matrix]\label{def:moment-matrix}
  The moment matrix $\matLam=\matLam(\pE)$ associated to a degree-$\dsos$ pseudoexpectation $\pE$ is an $\calI$-by-$\calI$ matrix defined as
  \[
  \matLam[I, J] \defeq \pE\left[ \xsos^I \cdot \xsos^J \right].
  \]
\end{definition}

\begin{fact}
    $\pE$ is SoS-feasible if and only if $\matLam(\pE) \psdgeq 0$.
\end{fact}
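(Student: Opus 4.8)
The plan is to realize the moment matrix's quadratic form $v^{\T} \matLam v$ as a pseudoexpectation of an explicit square, so that the two positivity conditions match term by term. First I would record the dictionary between polynomials and index vectors: since $\calI$ consists of ordered subsets of $[n]$ of size at most $\dsos/2$, the monomials $\{\xsos^I\}_{I \in \calI}$ span the space $\R^{\le \dsos/2}[\xsos_1, \dots, \xsos_n]$ of polynomials relevant to the program (in the Boolean program every polynomial reduces modulo $\xsos_v^2 = \xsos_v$ to a multilinear one before $\pE$ is applied; distinct orderings of the same set simply duplicate rows and columns of $\matLam$, which does not affect its spectrum). Hence every $p \in \R^{\le \dsos/2}[\xsos]$ can be written $p(\xsos) = \sum_{I \in \calI} v_I \xsos^I$ for a coefficient vector $v \in \R^{\calI}$, and conversely every $v \in \R^{\calI}$ determines such a $p$; note also that each product $\xsos^I \xsos^J$ with $I, J \in \calI$ has degree at most $\dsos$, so $\matLam[I,J] = \pE[\xsos^I \xsos^J]$ is well defined.

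The one computation needed is a single invocation of linearity of $\pE$: for $p(\xsos) = \sum_{I \in \calI} v_I \xsos^I$,
\[
\pE\!\left[ p(\xsos)^2 \right] \;=\; \pE\!\left[ \sum_{I, J \in \calI} v_I v_J\, \xsos^I \xsos^J \right] \;=\; \sum_{I, J \in \calI} v_I v_J\, \pE\!\left[ \xsos^I \xsos^J \right] \;=\; \sum_{I, J \in \calI} v_I v_J\, \matLam[I, J] \;=\; v^{\T} \matLam v .
\]
For the forward implication, suppose $\pE$ is SoS-feasible; given an arbitrary $v \in \R^{\calI}$, form its polynomial $p$ and apply the identity to get $v^{\T} \matLam v = \pE[p^2] \ge 0$, so $\matLam \psdgeq 0$. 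For the reverse implication, suppose $\matLam \psdgeq 0$; given an arbitrary $p \in \R^{\le \dsos/2}[\xsos]$, express it in the monomial spanning set to get a vector $v$ and apply the same identity to get $\pE[p^2] = v^{\T} \matLam v \ge 0$, which is exactly SoS-feasibility.

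I do not expect a genuine obstacle: this is the textbook equivalence between degree-$\le \dsos/2$ sums of squares and positive semidefinite $\calI \times \calI$ moment matrices, and the whole content is the displayed linearity computation. The only points that need care are the bookkeeping ones flagged above — that $\{\xsos^I\}_{I \in \calI}$ really does span the relevant polynomial space, that the ordered-subset convention for $\calI$ is handled consistently (such as by fixing a canonical ordering per subset, or by observing that the induced row/column duplication is harmless), and that the products appearing in $\matLam$ stay within the degree budget of $\pE$. None of these affects the argument.
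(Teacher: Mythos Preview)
Your proposal is correct and is precisely the standard argument; the paper itself states this as a \emph{Fact} without proof, treating it as well known. Your handling of the bookkeeping (ordered subsets, multilinear spanning under the Boolean reduction, degree budget) is appropriate for the paper's setting, so there is nothing to compare.
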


\begin{definition}[SoS-symmetric]
    A matrix $\matLam \in \R^{\calI \times \calI}$ is \emph{SoS-symmetric} if $\matLam[I,J]$
    depends only on the disjoint union $I \sqcup J$ as an unordered multiset.
    Along with the additional constraint $\matLam[\emptyset, \emptyset] = 1$,
    this characterizes $\matLam \in \R^{\calI \times \calI}$ which are moment matrices of degree-$\dsos$ pseudoexpectation operators.
\end{definition}

In the presence of Boolean constraints ``$\xsos_i^2 = \xsos_i$'', a moment matrix satisfies
these constraints if and only if $\matLam[I, J]$ depends only on the union $I \cup J$ as an unordered set (ignore duplicates).

\subsection{\texorpdfstring{$p$}{p}-biased Fourier analysis and graph matrices}

We are interested in matrices which depend on a random graph $G \sim \calG_{n,p}$. To analyze these as functions of $G$, we encode $G$ via its
edge indicator vector in $\{0,1\}^{\binom{n}{2}}$ and perform $p$-biased Fourier analysis. 

\begin{definition}[Fourier character]
$\chi$ denotes the $p$-biased Fourier character,
\begin{align}
    \chi(0) = -\sqrt{\frac{p}{1-p}}, \qquad \chi(1) = \sqrt{\frac{1-p}{p}}\,.
\end{align}
For $H$ a subset or multi-subset of $\binom{[n]}{2}$, let $\chi_H(G) := \prod_{e \in H}\chi(G_e)$.
\end{definition}

\begin{definition}[Ribbon]
    A \emph{ribbon} is a tuple $R = (A_R, B_R, E(R))$ where $A_R,B_R \in \calI$ and
    $E(R) \subseteq \binom{[n]}{2}$.
    The corresponding matrix $\mM_R \in \R^{\calI \times \calI}$ is:
    \begin{align*}
        \mM_R[I,J] =
        \begin{cases}
            \chi_{E(R)}(G) & I = A_R, J = B_R\\
            0 & \text{otherwise}\,.
        \end{cases}
    \end{align*}
\end{definition}

The ribbon matrices $\mM_R$ are mean-zero, orthonormal under the expectation of the Frobenius inner product on matrices,
and form a basis for all $\R^{\calI \times \calI}$-valued functions of $G$.
They are the natural Fourier basis for random matrices that depend on $G$.

In the matrices that we study, the coefficient on a ribbon will not depend on the particular
labels of the ribbon's vertices, but only on the graphical structure of the ribbon.
This graphical structure is called the \emph{shape}.
\begin{definition}[Shape]
	A \emph{shape} $\al$ is an equivalence class of ribbons under relabeling of the vertices (equivalently, permutation by $S_n$). Each shape is associated with a representative graph $(U_\al, V_\al, E(\al))$.
	We let $V(\al) := U_\al \cup V_\al \cup V(E(\al))$.
\end{definition}

For an example of a shape, see \cref{fig: example_shape}.
We use the convention of Greek letters such as $\alpha, \gamma, \tau$ for shapes and Latin letters $R, L, T$ for ribbons.

\begin{definition}[Embedding]
    Given a shape $\al$ and an injective function $\varphi: V(\al) \to [n]$, we let
    $\varphi(\al)$ be the ribbon obtained by labeling $\al$ in the natural way (preserving the order on $U_\al$ and $V_\al$).
\end{definition}

A ribbon $R$ has shape $\al$ if and only if $R$ can be obtained by an embedding of $V(\al)$
into $[n]$. Note that different embeddings may produce the same ribbon.

\begin{definition}[Graph matrix]
    Given a shape $\al$, the graph matrix $\mM_\al$ is
    \[ \mM_\al = \sum_{\text{injective }\varphi: V(\al) \to [n]} \mM_{\varphi(\al)}\,.\]
\end{definition}

The entries of a graph matrix are degree-$\abs{E(\alpha)}$ monomials in the variables $G_e$,
therefore we think of graph matrices as low-degree polynomial random matrices in $G$.
We call them ``nonlinear'' to distinguish them from the degree-1 case, which
is well-studied (being essentially the adjacency matrix of $G$).

\begin{definition}[Trivial]\label{def:trivial}
    A ribbon or shape $\al$ is \emph{trivial} if $E(\al) = \emptyset$.
\end{definition}

\begin{definition}[Diagonal]\label{def:diagonal}
    A ribbon or shape $\alpha$ is \emph{diagonal} if $V(\alpha) = U_\al = V_\al$.
\end{definition}
A diagonal shape is only nonzero on the diagonal entries of the matrix in the block corresponding to $U_\al$.
Note that there are additional shapes which have the same support, namely
shapes which potentially have additional edges and vertices outside of $U_\al = V_\al$.
The diagonal shapes as we have defined them are the most important contributors
to the diagonal entries of the matrix.

\begin{definition}[Transpose]
    The transpose of a ribbon or shape swaps $A_R, B_R$ or $U_\al, V_\al$ respectively.
    This has the effect of transposing the matrix for the ribbon/shape.
\end{definition}

\subsection{Norm bounds}

\begin{definition}[Weight of a set]
For a graph $S$, let $w(S) = |V(S)| - \log_n(1/p) |E(S)|$.
\end{definition}

\begin{definition}[Vertex separator]
    A \emph{vertex separator} of two sets $A, B$ in a graph $G$ is a set $S \subseteq V(G)$
    such that all paths from $A$ to $B$ pass through $S$.
\end{definition}

\begin{definition}[Sparse minimum vertex separator (SMVS)]
    Given a ribbon or shape $\al$, a \emph{sparse minimum vertex separator (SMVS)} is a minimizer of $w(S)$ over $S \subseteq V(\al)$
    which separate $U_\al$ and $V_\al$.
\end{definition}

Observe that up to lower-order factors, $w(S) = \log_n \left(\E [\#\text{ of copies of graph $S$ in }\calG_{n,p}]\right)$.
The SMVS is thus the rarest separator of $\alpha$.

\begin{theorem}[{Norm bound, informal \cite{JPRTX22, rajendran2022concentration}}]
\label{thm:norm-bounds-informal}
With high probability, for all proper shapes $\al$:
\[\norm{\mM_\al} \leq \softO\left(n^{\frac{|V(\al)| - |w(S_{\min})|}{2}}\right)\]
where $S_{\min}$ is the SMVS of $\al$.
\end{theorem}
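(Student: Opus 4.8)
The plan is to establish this bound (known from \cite{JPRTX22, rajendran2022concentration}) via the \emph{trace moment method}. To control $\norm{\mM_\al}$ with high probability, I would bound the even moments $\E\big[\tr\big((\mM_\al \mM_\al^{\T})^q\big)\big]$ for a slowly growing $q = q(n)$ (e.g.\ $q = \polylog n$), and convert this to a spectral-norm tail bound via Markov's inequality, $\Pr\big[\norm{\mM_\al} > t\big] \le t^{-2q}\,\E\big[\tr\big((\mM_\al\mM_\al^{\T})^q\big)\big]$. Expanding the trace yields a sum over \emph{closed walks}: a cyclic sequence of $2q$ ribbons that are alternately copies of $\al$ and of $\al^{\T}$, glued so that the right index set of each ribbon is the left index set of the next (and the last wraps around). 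Each walk is specified by a labeling in $[n]$ of the $O(q\abs{V(\al)})$ vertices appearing along it (with the identifications forced by gluing), and contributes $\E\big[\prod_i \chi_{E(R_i)}(G)\big] = \E\big[\chi_{\uplus_i E(R_i)}(G)\big]$, a product over the multiset union of all traversed edges.

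Next I would invoke the vanishing property of $p$-biased Fourier characters: $\E[\chi_H(G)] = 0$ unless every edge of the multiset $H$ has multiplicity at least $2$. So only walks in which every traversed edge is traversed $\ge 2$ times survive. When all multiplicities equal $2$ the contribution is exactly $1$; edges of higher multiplicity contribute extra factors that are at worst powers of $\sqrt{1/p}$, but such walks simultaneously force additional vertex collisions, so they are lower order and get absorbed into the $\softO$. Thus, up to lower-order terms, it suffices to count closed walks in which every edge appears at least twice.

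The combinatorial heart is to show this count is at most $\softO\!\big(n^{q(\abs{V(\al)} - w(S_{\min}))}\big)$, i.e.\ roughly $n^{\abs{V(\al)} - w(S_{\min})}$ per step, where $w(S) = \abs{V(S)} - \log_n(1/p)\abs{E(S)}$ and $n^{w(S)}$ is the expected number of copies of $S$ in $\calG_{n,p}$. I would build the walk step by step and, at each of the $2q$ steps, bound the number of \emph{ways to introduce new labels}: a vertex not yet seen along the walk can be freely labeled (factor $n$), whereas the ``every edge traversed $\ge 2$ times'' constraint forces that whenever a step opens up new edges they must be closed off by being revisited later. A standard encoding/charging argument then shows that the vertices shared between the already-explored portion of the walk and the part still to be explored form a vertex separator of $U_\al$ and $V_\al$ inside the current copy of $\al$. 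Paying $n$ for each genuinely new vertex and $p$ for each new edge that must later be duplicated, the worst step is governed by the \emph{rarest} (sparsest) separator, which is precisely $n^{\abs{V(\al)}-w(S_{\min})}$ by definition of the SMVS; summing the (at most $q^{O(q)}$, hence $\softO$-absorbable) choices of return pattern gives the moment bound. Taking $2q$-th roots and optimizing $q$ yields $\norm{\mM_\al} \le \softO\big(n^{(\abs{V(\al)} - w(S_{\min}))/2}\big)$, and the simultaneous-over-all-shapes statement follows by refining the moment computation and union-bounding over the (quasipolynomially many, for bounded-size shapes) relevant $\al$, as in \cite{rajendran2022concentration}.

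The step I expect to be the main obstacle is the charging argument in the third paragraph: making precise that every surviving closed walk is ``paid for'' at rate at most $n^{\abs{V(\al)} - w(S_{\min})}$ per step. This requires (i) a clean near-bijective encoding of a walk by its sequence of ``new vertices / revisited vertices'' that does not lose more than $\softO(1)$ factors, (ii) the structural claim that the interface between explored and unexplored portions of the walk separates $U_\al$ from $V_\al$, so its weight is $\ge w(S_{\min})$, and (iii) simultaneous control of edges of multiplicity $> 2$ and of vertices that coincide ``by accident'', both of which must be shown only to help. The passage to the sparse ($p = o(1)$) regime is exactly where the $p$-biased characters and the $\log_n(1/p)$-weighted quantity $w(\cdot)$ are needed for the bound to come out clean in terms of $S_{\min}$; getting this bookkeeping right, rather than any single inequality, is the crux.
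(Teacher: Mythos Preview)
Your proposal is correct and follows essentially the same approach as the paper's own proof (given in the appendix as \cref{thm:generalnormbound} and \cref{lem:tracepowercalculation}): bound $\E[\tr((\mM_\al\mM_\al^{\T})^q)]$ by expanding into closed walks, partition walks into equivalence classes by which vertices coincide, show that for each surviving class the set $S_j$ of vertices appearing both before and after step $j$ must be a vertex separator of $\al$, and then charge each step $n^{|V(\al)|/2}\cdot n^{-w(S_j)/2}$ via the $\sqrt{n}$-per-new-vertex and $\sqrt{(1-p)/p}$-per-repeated-edge accounting. The only minor sharpening relative to your sketch is that the paper handles multiplicity-$>2$ edges not by arguing they are lower order, but by observing that an edge of multiplicity $m$ lies inside $S_j$ for at least $m-2$ values of $j$, so its $(\sqrt{(1-p)/p})^{m-2}$ contribution is already absorbed into the separator weights.
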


\subsection{Graph matrix calculus: factoring}

In light of the relevance of vertex separators to the spectrum
of a graph matrix,
a key ingredient underlying our machinery is that each shape admits an (approximate) factorization into three pieces based on its vertex separators. For the following discussion, fix a shape $\al$. The separators of a shape $\al$ have a natural partial order as follows.

\begin{definition}[Left and right]
    A vertex separator $S$ is \emph{left} (respectively \emph{right}) of a vertex separator $S'$ if $S$ separates $U_\al$ and $S'$ (resp. $S'$ and $V_\al$).
\end{definition}

We will define the \emph{leftmost SMVS} to be the SMVS which is left of
all other SMVS for $\alpha$, and similarly for the \emph{rightmost SMVS}.
For an example, see \cref{fig: example_shape}.
We will decompose each shape into three pieces: the ``left shape'' between $U_\al$ and the leftmost SMVS, the ``middle shape'' between the leftmost and rightmost SMVS, and the ``right shape'' between the rightmost SMVS and $V_\al$.
We now work towards making this formal.

Unfortunately, it is not always true that there is a \emph{unique} SMVS that is left of every SMVS.
Nonetheless, we can define the leftmost SMVS in a natural and canonical way
using the following proposition, whose proof is in \cref{app:graph-matrix}.

\begin{figure}
    \centering
    \includegraphics[height=4cm]{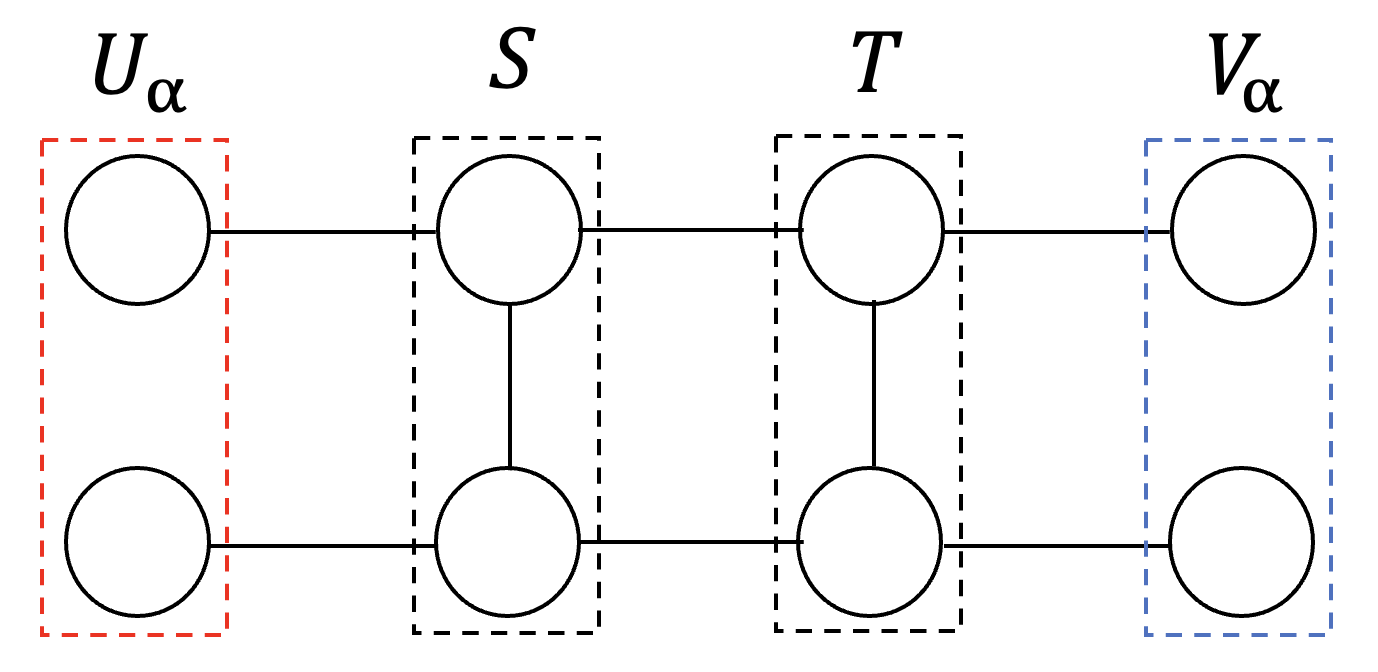}
    \caption{Example of a shape $\al$ with its leftmost SMVS $S$ and rightmost SMVS $T$. This shape has norm $\tilde{O}(n^{\frac{8-2}{2}} \sqrt{\frac{1-p}{p}} ) = \tilde{O}({\frac{n^3}{\sqrt{p}}})$}
    \label{fig: example_shape}
\end{figure}

\begin{restatable}{proposition}{leftmostSmvs}
\label{prop:smvs-uniqueness}
    Every shape has an SMVS which is left of every SMVS.
    Furthermore, there is a unique SMVS left of every SMVS with minimum vertex size.
\end{restatable}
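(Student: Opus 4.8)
The plan is to establish both statements by exploiting the submodularity-type structure of vertex separators with respect to the weight function $w$. First I would recall the standard fact that if $S$ and $S'$ are both vertex separators of $U_\al$ and $V_\al$ in a shape $\al$, then one can form a ``left-merge'' separator and a ``right-merge'' separator from them. Concretely, given two separators $S, S'$, consider the set of vertices reachable from $U_\al$ without passing through $S$, call it $\mathrm{Reach}(S)$; then the set $S_L := (S \cap \mathrm{Reach}(S')) \cup (S' \cap \mathrm{Reach}(S)) \cup (S \cap S')$ is again a separator, and it is left of both $S$ and $S'$. (There is a symmetric construction for a separator right of both.) The key inequality I would prove is a submodularity bound: $w(S_L) + w(S_R) \le w(S) + w(S')$, where $S_R$ is the right-merge. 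This follows by a careful accounting of vertices and edges — every vertex of $S \cup S'$ is counted on the right-hand side with multiplicity equal to the number of sets among $\{S, S'\}$ containing it, and one checks it is counted at least that many times among $\{S_L, S_R\}$; the edge terms are handled because $E(\al)$ is fixed and the edges incident to the separator vertices are distributed appropriately. I expect this submodularity inequality to be the main obstacle, since the edge bookkeeping with the $\log_n(1/p)$-weighting requires some care to make sure no edge is double-subtracted.

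Granting the submodularity inequality, the first statement follows quickly: take $S$ and $S'$ to be any two SMVS, so $w(S) = w(S') = w_{\min}$ is the minimum separator weight. Then $w(S_L) + w(S_R) \le 2w_{\min}$, and since $w(S_L), w(S_R) \ge w_{\min}$ by minimality, we get $w(S_L) = w(S_R) = w_{\min}$, so $S_L$ is itself an SMVS, and by construction it is left of both $S$ and $S'$. Iterating this construction over all (finitely many) SMVS of $\al$ — repeatedly left-merging — produces a single SMVS that is left of every SMVS. This handles existence.

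For the uniqueness-of-minimum-size part, I would argue among all SMVS that are left of every SMVS, pick one of minimum vertex cardinality, say $S^\star$, and suppose $S'$ is another such SMVS of the same minimum cardinality $|S'| = |S^\star|$. Left-merging $S^\star$ and $S'$ gives an SMVS $S_L$ that is still left of every SMVS (being left of $S^\star$, which is left of everything), so $S_L$ is also a candidate; hence $|S_L| \ge |S^\star|$. But from the cardinality version of the merge accounting one gets $|S_L| + |S_R| \le |S^\star| + |S'|$, forcing $|S_L| = |S^\star|$ and $|S_R| = |S'|$ and in fact $S_L = S^\star$ as sets (since $S_L \subseteq S^\star \cup S'$ and is left of $S^\star$ while having the same size). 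Unwinding the definitions of $S_L$ and $S_R$ under these equalities shows $S^\star$ and $S'$ induce the same reachable set from $U_\al$, hence $S' = S^\star$. Thus the minimum-size SMVS left of all SMVS is unique, which is exactly the canonical ``leftmost SMVS'' we want. The symmetric statements for the rightmost SMVS follow by transposing the shape.
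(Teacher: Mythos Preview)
Your approach is essentially the paper's: the left-merge/right-merge construction and the submodularity inequality $w(S_L)+w(S_R)\le w(S)+w(S')$ are exactly what the paper uses for existence, and the edge accounting you flag as ``the main obstacle'' is done there by checking that each edge in $E(S_1)\cup E(S_2)$ appears at least as often in $E(S_L),E(S_R)$ as in $E(S_1),E(S_2)$.

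For uniqueness your final step has a small gap. From $|S_L|\ge |S^\star|$ and $|S_L|+|S_R|\le |S^\star|+|S'|$ you cannot yet force $|S_L|=|S^\star|$, because you have no lower bound on $|S_R|$ ($S_R$ is an SMVS but need not be left of everything). And ``$S_L$ is left of $S^\star$ with the same size'' does not by itself give $S_L=S^\star$. The clean fix, which is what the paper does, is to note that when $S^\star$ and $S'$ are \emph{both} left of every SMVS, the reachable parts $L_1,L_2$ in your merge are empty, so in fact $S_L=S^\star\cap S'$. Then $|S^\star\cap S'|\ge |S^\star|$ (as $S_L$ is again an SMVS left of everything) immediately forces $S^\star=S'$. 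The paper pushes this further to a structural description of the whole family of leftmost SMVS as $S\cup\bigcup_{i\in I}S_i$ with the $S_i$ weight-zero ``attachments,'' but that extra structure is not needed for the proposition as stated.
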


\begin{definition}[Leftmost and rightmost SMVS]
    The \emph{leftmost SMVS} is the SMVS
    which is left of every SMVS and has minimum vertex size.
    The \emph{rightmost SMVS} is defined analogously.
\end{definition}

\begin{definition}[Left shape]\label{def:left-shape}
    A shape $\sigma$ is a \emph{left shape} if the unique SMVS is $V_\sigma$ (hence, it is both leftmost and rightmost).
\end{definition}

\begin{definition}[Middle shape]
    A shape $\tau$ is a \emph{middle shape} if $U_\tau$ is the leftmost SMVS, and $V_\tau$ is the rightmost SMVS.
\end{definition}

\begin{definition}[Right shape]
    A \emph{right shape} $\sigma$ is the transpose of a left shape.
\end{definition}
We also extend these definitions to ribbons.
By splitting a ribbon across its leftmost and rightmost SMVS,
we have the following canonical decomposition theorem for ribbons, to be presented formally in the next section.

\begin{proposition}[informal version of \cref{prop:decomposition-uniqueness}]
\label{prop:unique-factorization}
    Every ribbon $R$ can be expressed uniquely as the composition of a left, middle, and right ribbon.
\end{proposition}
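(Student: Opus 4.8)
The plan is to establish existence and uniqueness separately, in both cases reducing everything to one structural principle: the vertex separators of a composed ribbon are governed by the separators of its three pieces, compatibly with the weight function $w$.

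\emph{Existence.} Given a ribbon $R$, \cref{prop:smvs-uniqueness} gives a canonical leftmost SMVS $S$ and rightmost SMVS $T$ of $R$, and since $S$ is left of every SMVS while $T$ is right of every SMVS, $S$ is left of $T$. I would cut $R$ along $S$ and $T$. The \emph{left ribbon} $L$ has $U_L := U_R$ and $V_L := S$, with $V(L)$ consisting of $S$ together with all vertices that $S$ separates from $V_R$, and with $E(L)$ the edges of $R$ induced on $V(L)$ (edges internal to $S$ handled by a fixed convention); the \emph{right ribbon} $\bar R$ is defined symmetrically from $T$; and the \emph{middle ribbon} $M$ gets $U_M := S$, $V_M := T$, together with all remaining vertices (including any that reach neither terminal) and edges. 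By construction $B_L = A_M = S$ and $B_M = A_{\bar R} = T$, the vertex sets meet only along these interfaces, and the edge sets partition $E(R)$, so $R = L \circ M \circ \bar R$. The content is in checking the pieces have the claimed types. For $L$ to be a left ribbon I need $V_L = S$ to be the unique SMVS of $L$; the point is that separators of $L$ between $U_L$ and $V_L$ are exactly the separators of $R$ lying left of $S$ (with matching weights), so the minimum-weight ones are precisely the SMVSes of $R$ left of $S$, which by the definition and uniqueness of the leftmost SMVS is only $S$ itself. Symmetrically $\bar R$ is a right ribbon, and for $M$ one argues that its separators correspond to the separators of $R$ lying weakly between $S$ and $T$, so that $U_M = S$ is its leftmost SMVS and $V_M = T$ its rightmost; hence $M$ is a middle ribbon.

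\emph{Uniqueness.} Suppose $R = L' \circ M' \circ \bar R'$ with $L'$ left, $M'$ middle, $\bar R'$ right, and write $S' := B_{L'} = A_{M'}$, $T' := B_{M'} = A_{\bar R'}$. Since the pieces produced by the existence construction are functions of the leftmost and rightmost SMVS of $R$ alone (via a deterministic vertex/edge assignment), it suffices to prove $S'$ is the leftmost SMVS of $R$ and $T'$ the rightmost. First, $S'$ separates $U_R$ from $V_R$ in $R$: a path from $U_R \subseteq V(L')$ to $V_R \subseteq V(\bar R')$ leaves $V(L')$, and since no edge of $M'$ or $\bar R'$ touches $V(L') \setminus S'$, it must cross $S'$; likewise $T'$ is a separator. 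Moreover $w(S') = w(T')$ since both are SMVSes of the middle ribbon $M'$. Next, $S'$ is an SMVS of $R$: given any separator $\hat S$ of $R$, intersecting $\hat S$ with the three pieces and closing up across any crossed interface yields separators of $L'$, $M'$, $\bar R'$, each of weight at least the corresponding SMVS weight $w(S')$; submodularity of $w(\cdot) = |V(\cdot)| - \log_n(1/p)\,|E(\cdot)|$ (vertex count modular, edge count supermodular) then recombines these into $w(\hat S) \ge w(S')$. Finally $S'$ is left of every SMVS: if $\hat S$ is an SMVS of $R$, then in the lattice of $U_R$--$V_R$ separators (ordered by ``left of'') the meet $S' \wedge \hat S$ lies left of $S'$, and by submodularity has weight $\le w(S')$, hence is a minimum-weight separator of $L'$, hence equals $V_{L'} = S'$ by uniqueness of $L'$'s SMVS; so $S' \le \hat S$. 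With the minimum-vertex-size condition inherited from $L'$, this makes $S'$ the leftmost SMVS of $R$, and symmetrically $T'$ the rightmost, so $(L', M', \bar R')$ is exactly the construction above.

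\emph{Main obstacle.} The heart of the argument, common to both halves, is the interaction of minimum-weight vertex separators with ribbon composition: projecting a separator of $R$ that straddles two or three pieces down onto the pieces without increasing total weight, and conversely gluing separators of the pieces back together. Because $w$ also rewards internal edges (with a positive coefficient, so dense clusters are cheap), one cannot simply shrink separators to their tight boundaries; the right tool is submodularity of $w$ and the resulting lattice structure on minimum separators --- the same structure underlying \cref{prop:smvs-uniqueness}. The remaining care is bookkeeping: tracking vertices that reach neither $U_R$ nor $V_R$, and edges internal to an interface, both of which must be assigned to pieces by fixed conventions that keep weights consistent between a piece and $R$.
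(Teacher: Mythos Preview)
Your approach is the same as the paper's: construct the factorization by cutting along the leftmost and rightmost SMVS, and argue uniqueness by showing that any factorization's interface sets must coincide with those SMVSes via the submodularity/lattice structure you describe (which the paper packages as \cref{prop:smvs-structure}). Note that the formal version, \cref{prop:decomposition-uniqueness}, states uniqueness only up to the orderings of the two interfaces and the placement of floating components---ambiguities you correctly flag as bookkeeping---so the literal ``unique'' in the informal statement should be read modulo those.
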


\subsection{Graph matrix calculus: composition}

Multiplying graph matrices can be carried out ``diagramatically'' by \emph{composing} the ribbons or shapes.

\begin{definition}[Composing ribbons]
    Two ribbons $R, S$ are \emph{composable} if $B_{R} = A_{S}$.
    The composition $R \circ S$ is the ribbon $T = (A_R, B_S, E(R) \cup E(S))$. Although it never occurs in the current work, see the footnote\footnote{In this case, define $R \circ S$ as an improper ribbon (\cref{def:improper}) whose edge multiset is the disjoint union of $E(R)$ and $E(S)$.}
    for the case $E(R) \cap E(S) \neq \emptyset$.
\end{definition}

\begin{fact}
\label{fact:ribbon-composition}
If $R, S$ are composable ribbons, then $\mM_{R \circ S} = \mM_R\mM_S$. Otherwise, $\mM_R\mM_S = 0$.
\end{fact}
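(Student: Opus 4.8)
The plan is to prove this by a direct entry-wise computation, exploiting the fact that each ribbon matrix $\mM_R$ has exactly one nonzero entry, located at $(A_R, B_R)$ with value $\chi_{E(R)}(G)$. First I would expand the matrix product: for $I, K \in \calI$,
\[
(\mM_R \mM_S)[I,K] = \sum_{J \in \calI} \mM_R[I,J]\,\mM_S[J,K],
\]
and observe that the $J$-th summand is nonzero only when $I = A_R$, $J = B_R$, $J = A_S$, and $K = B_S$ all hold simultaneously, since $\mM_R$ is supported on the single position $(A_R,B_R)$ and $\mM_S$ on $(A_S,B_S)$.

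For the non-composable case $B_R \neq A_S$, I would note that no index $J$ can satisfy both $J = B_R$ and $J = A_S$, so every summand vanishes and hence $\mM_R \mM_S = 0$, which is the second assertion. For the composable case $B_R = A_S$, the product can only be nonzero at $(I,K) = (A_R,B_S)$, where the sum collapses to the single term $J = B_R = A_S$, giving
\[
(\mM_R \mM_S)[A_R, B_S] = \chi_{E(R)}(G)\,\chi_{E(S)}(G) = \prod_{e \in E(R)} \chi(G_e) \;\cdot\; \prod_{e \in E(S)} \chi(G_e).
\]
By definition $R \circ S = (A_R, B_S, E(R) \cup E(S))$, and in the proper case $E(R) \cap E(S) = \emptyset$, so the right-hand side equals $\prod_{e \in E(R) \cup E(S)} \chi(G_e) = \chi_{E(R\circ S)}(G)$; in the improper case one uses the edge multiset $E(R) \sqcup E(S)$ and the identity still holds via the convention $\chi_H(G) = \prod_{e\in H}\chi(G_e)$, although this situation does not arise here. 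Since $\mM_{R\circ S}$ is supported on the single entry $(A_{R\circ S}, B_{R\circ S}) = (A_R, B_S)$ with value $\chi_{E(R\circ S)}(G)$, comparing entries yields $\mM_R \mM_S = \mM_{R\circ S}$, completing the argument.

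I expect there to be no genuine obstacle; the statement is essentially a bookkeeping exercise, which is why it is stated as a \texttt{fact}. The only points that warrant a moment's care are the multiplicativity $\chi_{E(R)}(G)\,\chi_{E(S)}(G) = \chi_{E(R)\sqcup E(S)}(G)$ of the $p$-biased Fourier characters over (multi)sets of edges, which is immediate from the product definition of $\chi_H$, and keeping track of the ordered-tuple matrix indices so that $A_{R\circ S} = A_R$ and $B_{R\circ S} = B_S$ line up correctly with the supports of $\mM_R$ and $\mM_S$.
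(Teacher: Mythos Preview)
Your proposal is correct; the entry-wise computation you outline is exactly the natural verification, and the multiplicativity of the Fourier characters together with the single-entry support of each $\mM_R$ makes it immediate. The paper itself offers no proof of this statement at all (it is recorded as a \texttt{fact} and left to the reader), so there is nothing further to compare against.
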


Therefore, when two matrices expressed as a linear combination of ribbons are multiplied, the effect
is to compose every pair of ribbons.

The composition of two ribbons $R, S$ can be easily visualized by drawing 
the two ribbons next to each other, then identifying the sets $B_R$ and $A_S$.
However, if the vertex sets of two composable ribbons $R, S$ overlap nontrivially (i.e.,
beyond the ``necessary'' overlap $B_R = A_S$), then the resulting ribbon is smaller
than this picture suggests.
We will call these types of ribbons \emph{intersection terms} and classify
them based on their \emph{intersection pattern}.
The intersection terms are error terms in our analysis, but 
carefully bounding them is the most important and difficult conceptual step of the proof.

\begin{definition}[Properly composable]\label{def:properly-composable}
Composable ribbons $R_1, \dots, R_k$ are \emph{properly composable} if there are no intersections
beyond the necessary ones $B_{R_i} = A_{R_{i+1}}$.
\end{definition}

With the above definitions and \cref{prop:smvs-uniqueness}, we can
deduce the main proposition about shape and ribbon factoring.
For a ribbon $R$ and a set of edges $F\subseteq \binom{[n]}{2}$,
we use the notation $R \setminus F$ to denote the ribbon $(A_R, B_R, E(R) \setminus F)$.

\begin{definition}[Floating component]
    The connected components of a ribbon $R$ which are not connected to $A_R \cup B_R$ are called \emph{floating components}.
\end{definition}

\begin{proposition}[Ribbon decomposition]
\label{prop:decomposition-uniqueness}
    Every ribbon $R$ can be expressed as 
    \[R = (L \setminus E(B_L)) \circ M \circ (R' \setminus E(A_{R'}))\]
    where $L, M, R'$ are properly composable left, middle, and right ribbons respectively, such that $E(B_L) = E(A_M)$ and $E(B_M) = E(A_{R'})$.
    Up to the orderings of $B_{L} = A_{M}$ and $B_{M} = A_{R'}$ 
    and the floating components, the decomposition is unique.
\end{proposition}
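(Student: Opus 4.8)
The plan is to build the decomposition by a ``peeling'' procedure using the leftmost and rightmost SMVS, and then argue uniqueness by showing that any valid decomposition must recover exactly those separators. First I would recall from \cref{prop:smvs-uniqueness} that $R$ has a well-defined leftmost SMVS $S$ and rightmost SMVS $T$, and that $S$ is left of $T$ (since $S$ is left of every SMVS, in particular of $T$). Define $L$ to be the sub-ribbon of $R$ induced by all vertices reachable from $A_R$ without passing through $S$, together with $S$ itself as its right boundary $B_L$; dually define $R'$ as the piece between $T$ and $B_R$; and let $M$ be the ``middle'' piece carved out between $S$ and $T$, carrying all remaining vertices and edges, including any floating components. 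The edge sets are partitioned among $L$, $M$, $R'$ according to which piece each edge's endpoints land in — edges internal to $S$ get assigned once (say to $M$, matching the convention $E(B_L) = E(A_M)$ and $E(B_M) = E(A_{R'})$), and the ribbon composition $R = (L \setminus E(B_L)) \circ M \circ (R' \setminus E(A_{R'}))$ then reassembles $E(R)$ with no double-counting. The orderings on $B_L = A_M$ and $B_M = A_{R'}$ are chosen arbitrarily (this is the harmless non-uniqueness mentioned in the statement).

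Next I would verify that the three pieces really are left, middle, and right ribbons in the sense of \cref{def:left-shape} and the middle-shape definition — i.e. that the SMVS of $L$ is its right boundary $B_L = S$, that $M$ has leftmost SMVS $U_M = S$ and rightmost SMVS $V_M = T$, and symmetrically for $R'$. The key point is that $S$ being the leftmost SMVS of the \emph{whole} ribbon $R$ forces it to be the unique SMVS of the left piece: any separator of $U_R$ from $B_L$ inside $L$ is also a separator of $U_R$ from $V_R$ in $R$ (compose with $M \circ R'$), so if $L$ had a cheaper or more-left separator than $B_L$ we could transport it back to contradict minimality/leftmostness of $S$ in $R$. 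The same transport argument, run in both directions, pins down the separators of $M$. One also needs that the pieces are \emph{properly composable}: any vertex shared between $L$ and $M$ beyond $S$ would, by the reachability definition of $L$, already have been absorbed into $L$ or $M$ — so the only overlap is exactly $B_L = A_M = S$, and likewise $B_M = A_{R'} = T$. Floating components of $R$ (those touching neither $A_R$ nor $B_R$) are not reachable from $A_R$ and do not lie in $R'$'s reachability set either, so they fall into $M$; their only non-uniqueness is which middle piece they are nominally attached to, again matching the statement's caveat.

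For uniqueness, suppose $R = (\tilde L \setminus E(B_{\tilde L})) \circ \tilde M \circ (\tilde R' \setminus E(A_{\tilde R'}))$ is any decomposition into properly composable left, middle, right ribbons. Because $\tilde L$ is a left ribbon, its unique SMVS is $B_{\tilde L}$; because $\tilde M$ is a middle ribbon, $U_{\tilde M} = B_{\tilde L}$ is its leftmost SMVS and $V_{\tilde M} = A_{\tilde R'}$ its rightmost. I would show $B_{\tilde L}$ is a vertex separator of $U_R$ and $V_R$ in $R$ of weight equal to $w(S)$, and that it is left of every SMVS, hence by the minimum-vertex-size clause of \cref{prop:smvs-uniqueness} it equals the canonical leftmost SMVS $S$; symmetrically $A_{\tilde R'} = T$. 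Once the two boundary sets are forced, proper composability forces $\tilde L, \tilde M, \tilde R'$ to be exactly the reachability-induced pieces above (a vertex on the $A_R$ side of $S$ cannot appear in $\tilde M$ or $\tilde R'$ without creating an extra intersection, and so on), and the edge partition is forced by the boundary-edge conventions $E(B_{\tilde L}) = E(A_{\tilde M})$, $E(B_{\tilde M}) = E(A_{\tilde R'})$. This leaves precisely the stated ambiguities: the internal orderings of $B_L = A_M$ and $B_M = A_{R'}$, and the labeling of which middle piece the floating components belong to.

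The main obstacle I anticipate is the case analysis around non-unique SMVS and the edge-assignment bookkeeping: \cref{prop:smvs-uniqueness} only gives a canonical \emph{leftmost} separator after a tie-break by vertex size, so the transport arguments must be phrased carefully to respect both the ``left of'' partial order and the weight-then-size minimization, and one must check that reassigning the edges induced on $S$ (and on $T$) consistently to one side — together with the floating components — does not disturb which separators are SMVS of the pieces. Handling improper ribbons is not needed here since, by the hypothesis that all the $R_i$ in a composition are \emph{properly} composable and the footnote convention is never invoked in this work, no edge multiplicities arise. I expect the rest — that $\mM$ respects the composition (\cref{fact:ribbon-composition}) is not even needed for this purely combinatorial statement — to be routine.
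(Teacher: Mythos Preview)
Your proposal is correct and follows essentially the same approach as the paper: construct the decomposition by splitting at the leftmost and rightmost SMVS (placing separator edges and floating components in the middle), and for uniqueness argue that $B_{\tilde L}$ must be an SMVS of $R$ left of every SMVS, then use the structural characterization behind \cref{prop:smvs-uniqueness} to force $B_{\tilde L}$ to be the canonical leftmost SMVS. The paper's proof is much terser (a few sentences, invoking the structural result \cref{prop:smvs-structure} directly to pin down $B_L$ via the uniqueness of $L$'s SMVS), whereas you spell out the reachability construction and transport arguments explicitly, but the strategy is the same.
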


\begin{proof}
The existence of the decomposition follows by splitting $R$ across
the leftmost and rightmost SMVS. Edges inside the SMVS should be put into the middle ribbon. Any floating components can be put into the middle ribbon.

To argue uniqueness, suppose $R = (L \setminus E(B_L)) \circ M \circ (R' \setminus E(A_{R'}))$. Then $B_L = A_M$ is an SMVS of $R$ which is left of all other SMVS of $R$.
By the structural result \cref{prop:smvs-structure},
in order for $L$ to have a unique SMVS, it must be that $B_L$
is the leftmost SMVS of $R$. The same holds for $R'$ with respect to the rightmost SMVS.
\end{proof}

\begin{remark}
When we decompose a ribbon, we will always put the floating components
into the middle part.
\end{remark}

\subsection{Graph matrix calculus: intersections}

Based on the decomposition theorem for ribbons, it would be ideal if for any shape $\al = \sigma \circ\tau\circ\sigma'$, we also had an exact matrix equality
\[ 
\mM_\al  = \mM_\sigma \cdot \mM_\tau \cdot \mM_{\sigma'}\,.
\]
Unfortunately this fails as we have may ''surprise'' intersections beyond the necessary ones along the boundary. Let us describe in further detail these ``intersection terms''.

\begin{definition}[Composing shapes]
    Given shapes $\al$, $\beta$, we call them \emph{composable} if $V_\al = U_\beta$ as subsets of $\calI$. The \emph{composition} $\al \circ \beta$ is the shape
    whose multigraph is the result of gluing together $\al$ and $\beta$ along $V_\al$ and $U_\beta$ (following their respective orders)\footnote{Although this never occurs in the current work,
    if an edge occurs inside both $V_\al$ and $U_\beta$,
    the composition $\al \circ \beta$ is an improper shape (\cref{def:improper}).},
    whose left side is $U_\al$, and whose right side is $V_\beta$.
    See the footnote\footnote{If a vertex in $V_\alpha$ and $U_\beta$ has degree 0, it becomes an isolated vertex in $\al \circ \beta$ (\cref{def:isolated}).} for an additional technicality.
\end{definition}

\begin{definition}[Intersection pattern]
\label{def:intersection-pattern}
    For composable shapes $\alpha_1, \alpha_2, \dots, \alpha_k$, let $\alpha = \alpha_1 \circ \alpha_2 \circ \cdots \circ \alpha_k$.
    An intersection pattern $P$
    is a partition of $V(\alpha)$ such that for all $i$ and $v, w \in V(\alpha_i)$, $v$ and $w$ are not in the same block of the partition.
    We say that a vertex ``intersects'' if its block has size at least 2 and let $V_{intersected}(\al_i)$ denote the set of intersecting vertices in $\al_i$.
    
    Let $\calP_{\alpha_1, \alpha_2, \dots, \alpha_k}$ be the set of intersection
    patterns between $\alpha_1, \alpha_2, \dots, \alpha_k$.
\end{definition}

\begin{definition}[Intersection shape]
\label{def:intersection-shape}
    For composable shapes $\alpha_1, \alpha_2, \dots, \alpha_k$ and an intersection pattern
    $P~\in~\calP_{\alpha_1, \alpha_2, \dots, \alpha_k}$, let $\alpha_P = \alpha_1 \circ \alpha_2 \circ \cdots \circ \al_k$ then
    identify all vertices in blocks of $P$, i.e. contract them into a single vertex. Keep all edges.\footnote{Keep duplicated edges with multiplicity; $\alpha_P$ may be improper (\cref{def:improper}).}
\end{definition}

\begin{proposition}
\label{mtx:prop:multiply-graph-matrices}
    For composable shapes $\alpha_1, \alpha_2, \dots, \alpha_k$,
    \[\mM_{\alpha_1} \cdots \mM_{\alpha_k} =  \sum_{P \in \calP_{\al_1, \dots, \al_k}} 
    \mM_{\alpha_P}\,.\]
\end{proposition}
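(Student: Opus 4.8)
The plan is to expand both sides in the ribbon basis and match them term by term, so that everything reduces to Fact~\ref{fact:ribbon-composition} together with careful bookkeeping of which embeddings collide. First I would write each graph matrix as $\mM_{\alpha_i} = \sum_{\varphi_i} \mM_{\varphi_i(\alpha_i)}$, the sum ranging over injective $\varphi_i : V(\alpha_i) \to [n]$, and expand
\[
\mM_{\alpha_1}\cdots\mM_{\alpha_k} \;=\; \sum_{\varphi_1,\dots,\varphi_k} \mM_{\varphi_1(\alpha_1)}\cdots \mM_{\varphi_k(\alpha_k)}\,.
\]
By repeated application of Fact~\ref{fact:ribbon-composition}, the summand equals $\mM_{\varphi_1(\alpha_1)\circ\cdots\circ\varphi_k(\alpha_k)}$ when the ribbons $\varphi_i(\alpha_i)$ are pairwise composable and vanishes otherwise. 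Composability of $\varphi_i(\alpha_i)$ and $\varphi_{i+1}(\alpha_{i+1})$ means precisely that $B_{\varphi_i(\alpha_i)} = \varphi_i(V_{\alpha_i})$ and $A_{\varphi_{i+1}(\alpha_{i+1})} = \varphi_{i+1}(U_{\alpha_{i+1}})$ coincide as ordered tuples; since $\alpha_i$ and $\alpha_{i+1}$ are composable shapes, $V_{\alpha_i} = U_{\alpha_{i+1}}$ in $\calI$, so this just says $\varphi_i$ and $\varphi_{i+1}$ restrict to the same function on that common ordered set. Hence the surviving tuples $(\varphi_1,\dots,\varphi_k)$ are exactly those that glue to a single (not necessarily injective) map $\varphi$ out of the glued vertex set $V(\alpha)$, where $\alpha \defeq \alpha_1\circ\cdots\circ\alpha_k$, and for such a tuple the product ribbon depends only on $\varphi$: it is the image $\varphi(\alpha)$ with edges kept with multiplicity.

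Next I would organize these glued maps by their fibers. For a glued map $\varphi$, let $P$ be the partition of $V(\alpha)$ with $v \sim w$ iff $\varphi(v) = \varphi(w)$. Each $\varphi_i$ is injective on $V(\alpha_i)$, so no block of $P$ contains two vertices of the same $\alpha_i$; thus $P \in \calP_{\alpha_1,\dots,\alpha_k}$ in the sense of Definition~\ref{def:intersection-pattern}. Grouping by $P$,
\[
\mM_{\alpha_1}\cdots\mM_{\alpha_k} \;=\; \sum_{P \in \calP_{\alpha_1,\dots,\alpha_k}}\;\sum_{\substack{\varphi:V(\alpha)\to[n]\\ \varphi \text{ induces exactly } P}} \mM_{\varphi(\alpha)}\,.
\]
A map $\varphi$ that is constant on the blocks of $P$ factors through the quotient $V(\alpha)/P = V(\alpha_P)$ as $\varphi = \bar{\varphi}\circ\pi$ for a unique $\bar{\varphi} : V(\alpha_P) \to [n]$, and $\varphi$ induces exactly $P$ precisely when $\bar{\varphi}$ is injective; moreover $\mM_{\varphi(\alpha)} = \mM_{\bar{\varphi}(\alpha_P)}$, since contracting $\alpha$ along $P$ and then embedding injectively, versus embedding $\alpha$ with exactly the collisions recorded by $P$, produce the same ribbon (same left tuple, same right tuple, same edge multiset). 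Therefore the inner sum equals $\sum_{\text{injective }\bar{\varphi}: V(\alpha_P) \to [n]} \mM_{\bar{\varphi}(\alpha_P)} = \mM_{\alpha_P}$ by the definition of a graph matrix, and summing over $P$ yields the claim.

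The only step needing genuine care is that last identification: one must check that ``glued map with fiber partition \emph{exactly} $P$'' corresponds bijectively to ``injective embedding of $\alpha_P$'' (it does: $P$ records precisely the coincidences of $\varphi$, so the quotient map is injective, and conversely an injective embedding of $\alpha_P$ pulls back to a $\varphi$ with fibers exactly $P$, never coarser), and that the ordered structure of $U_\alpha, V_\alpha$ and the edge multiplicities are tracked correctly through the gluing so that $\varphi(\alpha)$ and $\bar{\varphi}(\alpha_P)$ are literally the same ribbon — including the possibility that $\alpha_P$ is improper, which the footnotes flag but which does not arise in this paper. Everything else is the mechanical ribbon-basis expansion plus Fact~\ref{fact:ribbon-composition}; in particular, gluing all $k$ maps simultaneously avoids any induction on $k$ and hence any need to reason about associativity of nested intersection patterns.
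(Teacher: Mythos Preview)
Your proof is correct and follows exactly the same approach as the paper's: expand each $\mM_{\alpha_i}$ as a sum over injective embeddings, observe that the joint embedding of $V(\alpha_1 \circ \cdots \circ \alpha_k)$ need not be injective, and group by the intersection pattern $P$ recording which vertices coincide. The paper's proof is a three-sentence sketch of this; you have simply filled in the details (the composability check, the bijection between glued maps with fiber partition exactly $P$ and injective embeddings of $\alpha_P$), all of which are correct.
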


\begin{proof}
The claimed statement expands to
\[\prod_{i=1}^k\sum_{\text{injective }\varphi_i :V(\al_i) \to [n]}
\mM_{\varphi_i(\al_i)}
= \sum_{P \in \calP_{\al_1, \dots, \al_k}} 
    \sum_{\text{injective }\varphi: V(\al_P) \to [n]}\mM_{\varphi(\alpha_P)}\,.\]
Each of the ribbons on the left-hand side is an injective embedding of $\al_i$ into $[n]$;
however, the joint embedding need not be injective.
The intersection pattern $P$ cases on which vertices overlap.
\end{proof}

\subsection{Graph matrix calculus: improper shapes and linearization}

We will need to manipulate matrices expressed in the ribbon or shape basis, for example by casing on whether certain edges exist or multiplying two matrices together.
To simplify the intermediate manipulations, we will allow ribbons to be \emph{improper}.
We lose uniqueness of representation in the ribbon basis (there are multiple ways to express
a given matrix as a linear combination of improper ribbons), but the augmentation lets us easily track combinatorial features such
as the presence of specific edges or subgraphs.
At the end of the proof, we convert improper ribbons back into proper ones by \emph{linearizing} them.

Our ribbons may be improper in three ways: \emph{edge indicators}
, \emph{products of Fourier characters}, and \emph{isolated vertices}.
These are all contained in the following general definition.

\begin{definition}[Improper ribbon]
\label{def:improper}
    An \emph{improper ribbon} is a tuple $R$ given by\\ $R = (A_R, B_R, V(R), E(R), \yes(R))$
    where $A_R,B_R \in \calI$ as before, and additionally $A_R \cup B_R \subseteq V(R) \subseteq [n]$, also $E(R)$ is a multigraph on $V(R)$ without self-loops, and $\yes(R)
    \subseteq \binom{V(R)}{2}$.
    We extend the definition of $\mM_R \in \R^{\calI \times \calI}$ to:
    \begin{align*}
        \mM_R[I,J] =
        \begin{cases}
            \displaystyle\prod_{e \in \yes(R)} \one_{e \in E(G)} 
            \prod_{e \in \binom{V(R)}{2}} \chi_e(G)^{\text{multiplicity of $e$ in }E(R)} & I = A_R, J = B_R\\
            0 & \text{otherwise}\,.
        \end{cases}
    \end{align*}
\end{definition}

\begin{definition}[Improper shape]
	An improper shape $\al$ is defined, as before, as an equivalence class of improper ribbons under relabeling of the vertices. Each improper shape is associated with a representative tuple $(U_\al, V_\al, V(\al), E(\al), \yes(\al))$.
	Equivalently, a ribbon $R$ has shape $\al$ if $R$ can be obtained by labeling $V(\al)$ by an injective mapping $\phi: V(\al)\rightarrow [n]$.
\end{definition}

\begin{definition}[Isolated vertices]
\label{def:isolated}
Let $\Iso(\al)$ be the set of vertices in $V(\alpha) \setminus (U_\al \cup V_\al)$
which are not incident to any edges in $E(\al)$ or $\yes(\al)$.
\end{definition}

We linearize an improper ribbon by using identities
\begin{align*}
\chi_e^k &= c_0 + c_1 \cdot \chi_e\\
\one_{e \in E(G)} &= c'_0  + c'_1 \cdot \chi_e\\
\end{align*}
for some coefficients $c_0,c_1,c_0',c_1'$. The equalities hold for inputs from $\{0,1\}$.

\begin{definition}[Linearization]
\label{def:linearization}
    Given an improper ribbon $R$, we \emph{linearize} $R$ by using the equality
    \[ \mM_R = \sum_{\text{proper ribbons }S}c_{S} \mM_{S}\]
    where $c_S$ are the Fourier coefficients of $\mM_R$.
    The ribbons $S$ which appear with nonzero coefficient $c_S$ are called
    the \emph{linearizations} of $R$.
\end{definition}

When we linearize just the multiedges $\chi_e^k$ into either 1 or $\chi_e$, we use the following proposition to bound the new coefficient.
\begin{proposition}
\label{prop:linearization-coefficient}
    Given an improper ribbon $R$, if we linearize the multiedges,
    the coefficient on a resulting ribbon $S$ satisfies \[\abs{c_S} \leq \left(\sqrt{\frac{1-p}{p}} \right)^{\sum_{e\in mul(R)} \mult(e) - 1 - \one_{e \text{ vanishes} }} \]
    where $mul(R)$ is the set of multi-edges in $R$.
\end{proposition}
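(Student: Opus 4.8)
The plan is to carry out the linearization one multiedge at a time and multiply the resulting scalar coefficients. Restricted to its unique nonzero entry, $\mM_R[A_R,B_R]$ is, as a function of $G$, the product $\prod_{e\in \yes(R)}\one_{e\in E(G)}\cdot\prod_{e\in mul(R)}\chi_e^{\mult(e)}\cdot\prod_{e:\,\mult(e)=1}\chi_e$, and linearizing the multiedges means replacing each factor $\chi_e^{k}$ with $k=\mult(e)\geq 2$ by its expansion in the (two-dimensional) space of functions of the single $p$-biased coordinate $G_e\in\{0,1\}$, with respect to the orthonormal basis $\{1,\chi_e\}$. Since $\E[1]=\E[\chi^2]=1$ and $\E[\chi]=0$, that expansion is exactly $\chi_e^{k}=\E[\chi^{k}]\cdot 1+\E[\chi^{k+1}]\cdot\chi_e$. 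Expanding the product over $e\in mul(R)$, each monomial corresponds to an independent choice, for each multiedge $e$, of either the ``$1$'' branch (so $e$ disappears, \ie $e$ vanishes, contributing the scalar $\E[\chi^{\mult(e)}]$) or the ``$\chi_e$'' branch (so $e$ survives with multiplicity one, contributing $\E[\chi^{\mult(e)+1}]$). Distinct choices of which multiedges survive produce ribbons with distinct edge sets, since the multiedges are disjoint from the multiplicity-one edges of $R$, which are untouched; hence the keep/drop choice producing any given ribbon $S$ is unique, and $c_S=\prod_{e\in mul(R),\,e\text{ does not vanish}}\E[\chi^{\mult(e)+1}]\cdot\prod_{e\in mul(R),\,e\text{ vanishes}}\E[\chi^{\mult(e)}]$.

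It then remains to prove the elementary per-edge estimate $\abs{\E_{x\sim\Bernoulli(p)}[\chi(x)^m]}\leq\big(\sqrt{(1-p)/p}\big)^{m-2}$ for every integer $m\geq 2$, which I would establish by direct computation. Writing $b:=\sqrt{(1-p)/p}$, we have $\chi(0)=-1/b$, $\chi(1)=b$, and the identity $1-p=pb^2$, so $\E[\chi^m]=(1-p)(-1/b)^m+pb^m=(-1)^m pb^{2-m}+pb^m$. Now $pb^m=(1-p)b^{m-2}$, and $b^{2-m}\leq b^{m-2}$ since $p\leq 1/2$ forces $b\geq 1$ while $m\geq 2$; hence by the triangle inequality $\abs{\E[\chi^m]}\leq pb^{2-m}+pb^m\leq pb^{m-2}+(1-p)b^{m-2}=b^{m-2}$. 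Applying this with $m=\mult(e)+1$ for each surviving multiedge (exponent $\mult(e)-1$) and with $m=\mult(e)$ for each vanishing multiedge (exponent $\mult(e)-2=\mult(e)-1-\one_{e\text{ vanishes}}$), and multiplying over $e\in mul(R)$, gives $\abs{c_S}\leq\prod_{e\in mul(R)}\big(\sqrt{(1-p)/p}\big)^{\mult(e)-1-\one_{e\text{ vanishes}}}=\big(\sqrt{(1-p)/p}\big)^{\sum_{e\in mul(R)}(\mult(e)-1-\one_{e\text{ vanishes}})}$, which is the claimed bound.

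The only inequalities used are in the per-edge estimate, so there is no serious obstacle; the one thing to get right is the exponent bookkeeping. In particular, the identity $pb^2=1-p$ is exactly what makes the two terms of $\E[\chi^m]$ recombine into $b^{m-2}$ with no lost constant factor, and the extra $-1$ in the exponent for a vanishing edge is precisely the gap between the power $\mult(e)$ (the ``$1$''-branch coefficient $\E[\chi^{\mult(e)}]$) and the power $\mult(e)+1$ (the ``$\chi_e$''-branch coefficient $\E[\chi^{\mult(e)+1}]$). Throughout we use $p\leq 1/2$, which holds in our setting since $p=n^{-\beta}=o(1)$.
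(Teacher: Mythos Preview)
Your proof is correct and follows the same approach as the paper: both identify the linearization coefficients as $\chi_e^k=\E[\chi^k]+\E[\chi^{k+1}]\chi_e$ and both establish the moment bound $|\E[\chi^m]|\leq\big(\sqrt{(1-p)/p}\big)^{m-2}$ for $m\geq 2$, then multiply over multiedges. The paper proves the moment bound by induction whereas you prove it by direct computation using the identity $pb^2=1-p$; your version is more explicit, but the substance is identical.
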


\begin{proof}
The linearization coefficients are $\chi_e^k = \E[\chi_e^k] + \E[\chi_e^{k+1}]\chi_e$.
By induction, $\abs{\E[\chi_e^k]} \leq \left(\sqrt{\frac{1-p}{p}}\right)^{k-2}$ for all $k \geq 2$, which yields the claim.
\end{proof}

We will estimate the norm of an improper shape by linearizing it and taking the maximum
norm among all of its linearizations.

\subsection{Pseudocalibration}
\label{sec:pseudocalibration}

Pseudocalibration is a heuristic used to construct candidate pseudoexpectation operators $\pE$ for SoS lower bounds, introduced in the context of SoS lower bounds for Planted Clique \cite{BHKKMP16}. 
See e.g. \cite{BHKKMP16, raghavendra2018high, GJJPR20} for a formal description.

The pseudocalibrated operator $\pE[\xsos^I]$ is defined using the Fourier coefficients of the 
corresponding function $\xsos^I(H)$ evaluated on the planted distribution.
First we need to compute these Fourier coefficients. A similar computation was performed by \cite{chlamtavc2018sherali} to exhibit integrality gaps for the Sherali-Adams hierarchy.

\begin{lemma}
Let $\xsos^I(H)$ be the 0/1 indicator function for $I$ being in the planted solution i.e. $I \subseteq H$. Then, for all $I \subseteq [n]$ and $\alpha \subseteq \binom{[n]}{2}$,
\[
\E_{(G, H) \sim \calD_{pl}}[\xsos^I(H) \cdot \chi_\alpha(\widetilde{G})] = \left(\frac{k}{n}\right)^{|V(\al) \cup I|} \left(\frac{q - p}{\sqrt{p(1-p)}}\right)^{|E(\al)|}
\]
\begin{proof}
    First observe that if any vertex of $V(\al) \cup I$ is outside $H$, then the expectation is $0$. This is because either $I$ is outside $H$, in which case $\xsos^I(H) = 0$, or an edge of $\al$ is outside $H$, in which case the expectation of this Fourier character is $0$. Now, each vertex of $V(\al) \cup I$ is in $H$ independently with probability $\frac{k}{n}$. Conditioned on this event happening, each edge independently evaluates to 
    \[
    \E_{e \sim \Bernoulli(q)}\chi(e) = q \cdot \chi(1) + (1-q)\cdot \chi(0) = \frac{q - p}{\sqrt{p(1-p)}}\,.
    \]
    Putting these together gives the result.
\end{proof}
\end{lemma}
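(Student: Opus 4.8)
The plan is to evaluate the expectation by conditioning on the planted vertex set $H$ first, and to exploit the fact that, once $H$ is fixed, the edge indicators of the planted graph $\widetilde G$ are mutually independent. Writing $\chi_\al(\widetilde G)=\prod_{e\in\al}\chi(\widetilde G_e)$, I would use the tower rule and independence to get
\[
\E_{(G,H)\sim\calD_{pl}}\!\left[\xsos^I(H)\cdot\chi_\al(\widetilde G)\right]
= \E_H\!\left[\xsos^I(H)\prod_{e\in\al}\E\!\left[\chi(\widetilde G_e)\mid H\right]\right].
\]
So the whole computation reduces to the per-edge conditional expectations, together with understanding when $\xsos^I(H)$ is nonzero.

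For the per-edge term I would split into two cases. If both endpoints of $e$ lie in $H$, then $\widetilde G_e$ is a fresh $\Bernoulli(q)$ variable, so $\E[\chi(\widetilde G_e)\mid H]=q\chi(1)+(1-q)\chi(0)$; substituting $\chi(1)=\sqrt{(1-p)/p}$ and $\chi(0)=-\sqrt{p/(1-p)}$ and simplifying gives $\tfrac{q-p}{\sqrt{p(1-p)}}$. If instead $e$ has at least one endpoint outside $H$, then $\widetilde G_e=G_e\sim\Bernoulli(p)$, and $\E[\chi(\widetilde G_e)\mid H]=p\chi(1)+(1-p)\chi(0)=0$ by the mean-zero normalization of the $p$-biased character. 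Consequently the product $\prod_{e\in\al}\E[\chi(\widetilde G_e)\mid H]$ vanishes unless every edge of $\al$ is contained in $H$, i.e.\ unless $V(\al)\subseteq H$ (where $V(\al)$ is the set of vertices incident to edges of $\al$), in which case it equals $\bigl(\tfrac{q-p}{\sqrt{p(1-p)}}\bigr)^{|E(\al)|}$.

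Finally I would fold in the indicator $\xsos^I(H)=\1{I\subseteq H}$: the summand is nonzero precisely on the event $\{V(\al)\cup I\subseteq H\}$, and on that event it equals the constant $\bigl(\tfrac{q-p}{\sqrt{p(1-p)}}\bigr)^{|E(\al)|}$. Since in the planted distribution each vertex is placed in $H$ independently with probability $k/n$, we have $\Pr[V(\al)\cup I\subseteq H]=(k/n)^{|V(\al)\cup I|}$ — this is unaffected by the fact that $|H|$ is itself random, since $\{S\subseteq H\}$ has probability $(k/n)^{|S|}$ for any fixed $S$. Taking the expectation over $H$ then yields the stated formula.

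I do not expect a genuine obstacle here — the statement is essentially a bookkeeping computation. The only points needing a little care are recognizing that the conditional expectation of a Fourier character collapses to $0$ as soon as a single edge of $\al$ pokes outside $H$, and remembering to track the $\xsos^I(H)$ factor so that the decisive event is $V(\al)\cup I\subseteq H$ rather than merely $V(\al)\subseteq H$.
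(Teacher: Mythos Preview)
Your proposal is correct and follows essentially the same approach as the paper's proof: condition on $H$, observe that the expectation vanishes unless $V(\al)\cup I\subseteq H$ (since either $\xsos^I(H)=0$ or some edge character has mean zero), then use independence of vertex inclusion to get the $(k/n)^{|V(\al)\cup I|}$ factor and compute the per-edge conditional expectation as $(q-p)/\sqrt{p(1-p)}$. Your write-up is slightly more explicit about the tower rule and the per-edge case split, but the argument is the same.
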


Pseudocalibration suggests transferring the low-degree Fourier coefficients 
from the planted distribution, in order for the pseudoexpectation operator to emulate a true expectation operator
from a planted distribution.
As long as the size of the Fourier coefficients is larger than the SoS degree,
then SoS should not notice that we are using a truncation.

We will truncate up to shapes of size $D_V$ for a parameter $D_V = O(\dsos)$ which is formally specified in \cref{assumptions}.

\begin{definition}[$\calS$]
\label{def:calS}
    Let $\calS$ be the set of (proper) ribbons $R$ such that:
    \begin{enumerate}[(i)]
        \item (\textbf{Degree bound}) $\abs{A_R}, \abs{B_R} \leq \dsos/2$
        \item (\textbf{Size bound}) $|V(R)| \leq \dv$
    \end{enumerate}
    We will sometimes use $\al \in \calS$ as the set of shapes with
    the same properties, following the convention of using Latin letters for ribbons and Greek letters for shapes.
\end{definition}

\begin{definition}[$\matM$]
Define the pseudocalibrated candidate moment matrix 
\[\matM = \sum_{R \in \calS}\left(\frac{k}{n}\right)^{\abs{V(R)}} \left(\frac{q-p}{\sqrt{p(1-p)}}\right)^{|E(R)|}\mM_R\]
\end{definition}

For technical convenience, we adjust the parameters $\beta$ and $\gam$ slightly so that
\[n^{-\beta} = \frac{p}{1-p}, \qquad \qquad \qquad n^{-\gam} = \frac{q-p}{1-p}\]
This change does not formally affect the statement of \cref{thm:main}.

For the purposes of analyzing the spectrum of $\matM$ in later sections, it is more convenient to rescale the entries so that $\pE[\xsos^I]$ has order 1 for all $I \subseteq [n]$. This will be the matrix $\matLam$.

\begin{definition}[$\lam_\al$]
    Given a shape or ribbon $\alpha$, let
    \begin{align*}
    \lam_\al &= \left(\frac{k}{n}\right)^{\abs{V(\al)} - \frac{|U_\al| + |V_\al|}{2}} \left(\frac{q-p}{\sqrt{p(1-p)}}\right)^{|E(\al)|}\\
    & = n^{(\alpha - 1)\left(|V(\al)| - \frac{|U_\al| + |V_\al|}{2}\right) + (\frac{\beta}{2} - \gamma)|E(\al)|}\,.
    \end{align*}
\end{definition}

\begin{lemma}\label{lem:coefficients-factor}
    If $R,S$ are properly composable ribbons, then $\lam_{R \circ S} = \lam_R \lam_S$.
\end{lemma}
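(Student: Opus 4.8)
The claim to prove is Lemma~\ref{lem:coefficients-factor}: if $R, S$ are properly composable ribbons, then $\lam_{R \circ S} = \lam_R \lam_S$.

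\medskip

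\textbf{Proof proposal.} The plan is to track each of the two exponents appearing in the definition of $\lam_\al$ separately, using the combinatorial description of how $R \circ S$ is built from $R$ and $S$ when the composition is \emph{proper}. Recall that for a ribbon $\al$ we have
\[
\lam_\al = \left(\frac{k}{n}\right)^{\abs{V(\al)} - \frac{|U_\al| + |V_\al|}{2}} \left(\frac{q-p}{\sqrt{p(1-p)}}\right)^{|E(\al)|},
\]
so it suffices to show that the ``vertex exponent'' $v(\al) := \abs{V(\al)} - \tfrac{1}{2}(|U_\al| + |V_\al|)$ is additive, i.e. $v(R \circ S) = v(R) + v(S)$, and that the ``edge exponent'' $|E(\al)|$ is additive, i.e. $|E(R \circ S)| = |E(R)| + |E(S)|$. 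Multiplying the two corresponding powers then gives the result immediately.

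\medskip

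First I would handle the edge exponent. By the definition of composition of ribbons, $R \circ S = (A_R, B_S, E(R) \cup E(S))$, and since $R, S$ are properly composable the only shared vertices are those in $B_R = A_S$, with no shared edges (the footnote case $E(R) \cap E(S) \neq \emptyset$ does not arise here). Hence $E(R \circ S) = E(R) \sqcup E(S)$ as a genuine disjoint union, so $|E(R \circ S)| = |E(R)| + |E(S)|$. Next I would handle the vertex exponent. Because the composition is proper, $V(R \circ S) = V(R) \cup V(S)$ with $V(R) \cap V(S) = B_R = A_S$ exactly, so $\abs{V(R \circ S)} = \abs{V(R)} + \abs{V(S)} - |B_R|$. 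On the boundary sides, $U_{R \circ S} = A_R = U_R$ and $V_{R \circ S} = B_S = V_S$, while the ``internal'' boundary satisfies $|B_R| = |A_S|$. Therefore
\[
v(R \circ S) = \abs{V(R)} + \abs{V(S)} - |B_R| - \tfrac{1}{2}\bigl(|U_R| + |V_S|\bigr),
\]
and adding and subtracting $\tfrac{1}{2}(|B_R| + |A_S|) = |B_R|$ lets me regroup this as
\[
\Bigl(\abs{V(R)} - \tfrac{1}{2}(|U_R| + |B_R|)\Bigr) + \Bigl(\abs{V(S)} - \tfrac{1}{2}(|A_S| + |V_S|)\Bigr) = v(R) + v(S),
\]
using $V_R = B_R$ and $U_S = A_S$ as subsets of $[n]$. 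Combining the two additivity facts, $\lam_{R \circ S} = (k/n)^{v(R) + v(S)} \cdot (\cdots)^{|E(R)| + |E(S)|} = \lam_R \lam_S$.

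\medskip

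I do not expect a genuine obstacle here; the lemma is essentially bookkeeping. The only point requiring mild care is making sure the hypothesis ``\emph{properly} composable'' is used in exactly the right place, namely to guarantee that $V(R) \cap V(S)$ is \emph{precisely} $B_R = A_S$ (no surprise intersections) and that $E(R) \cap E(S) = \emptyset$ — without properness, extra identified vertices would change $\abs{V(R \circ S)}$ and the identity would fail (this is precisely why the general composition in Proposition~\ref{mtx:prop:multiply-graph-matrices} produces a sum over intersection patterns rather than a single product). It is also worth remarking that the same argument applies verbatim at the level of shapes, giving $\lam_{\sigma \circ \tau} = \lam_\sigma \lam_\tau$ for properly composable shapes, which is the form used later when decomposing a shape into its left, middle, and right pieces.
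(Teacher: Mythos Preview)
Your proposal is correct and is exactly the natural verification; the paper itself states this lemma without proof, treating it as immediate from the definition of $\lam_\al$ and of proper composition. Your careful tracking of the vertex and edge exponents, and your explicit identification of where the ``properly composable'' hypothesis is used, fills in precisely the routine bookkeeping the paper omits.
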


\begin{definition}[$\matLam$]
Define $\matLam =\sum_{R \in \calS}\lam_R \mM_R$.
\end{definition}

\begin{lemma}
$\matM \psdgeq 0$ if and only if $\matLam \psdgeq 0$.
\end{lemma}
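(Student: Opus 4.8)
The plan is to realize $\matM$ and $\matLam$ as congruent matrices under an explicit, invertible, positive-definite diagonal transformation, so that their positive semidefiniteness is equivalent. First I would introduce the diagonal matrix $D \in \R^{\calI \times \calI}$ defined by $D[I,I] = (k/n)^{|I|/2}$ and $D[I,J] = 0$ for $I \neq J$. Since $1 \leq k \leq n$, each diagonal entry is a strictly positive real, so $D$ is positive definite; in particular it is invertible, with $D^{-1}[I,I] = (n/k)^{|I|/2}$.

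Next I would check the identity $\matM = D\, \matLam\, D$ by comparing coefficients in the ribbon basis $\{\mM_R\}_R$, which is legitimate since these matrices form a basis for the $\R^{\calI \times \calI}$-valued functions of $G$ and both $\matM$ and $\matLam$ are sums over the same index set $\calS$ of ribbons. For a ribbon $R \in \calS$, the matrix $\mM_R$ is supported on the single block $(A_R, B_R)$, so $D\, \mM_R\, D = (k/n)^{(|A_R| + |B_R|)/2}\, \mM_R$. Writing $\al$ for the shape of $R$, so that $|A_R| = |U_\al|$ and $|B_R| = |V_\al|$, the definition of $\lam_R$ gives
\[
(k/n)^{(|A_R| + |B_R|)/2}\, \lam_R = (k/n)^{|V(R)|}\left(\frac{q-p}{\sqrt{p(1-p)}}\right)^{|E(R)|},
\]
which is exactly the coefficient of $\mM_R$ in $\matM$. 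Summing over $R \in \calS$ yields $D\, \matLam\, D = \matM$.

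Finally, since $D$ is invertible, the matrices $\matM = D\, \matLam\, D$ and $\matLam = D^{-1}\, \matM\, D^{-1}$ are congruent, and congruence by an invertible matrix preserves the signature (Sylvester's law of inertia), hence in particular preserves positive semidefiniteness. Therefore $\matM \psdgeq 0$ if and only if $\matLam \psdgeq 0$. I do not expect a genuine obstacle here, as the statement is essentially a bookkeeping fact; the only point deserving care is the per-ribbon matching of exponents $|A_R| = |U_\al|$ and $|B_R| = |V_\al|$, which is immediate from the definitions of ribbon and shape.
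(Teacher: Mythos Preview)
Your proof is correct and follows essentially the same approach as the paper: both introduce the diagonal matrix $D$ with $D[I,I] = (k/n)^{|I|/2}$, verify $\matM = D\,\matLam\,D$, and conclude from the invertibility and positivity of $D$. The paper's version is terser (it simply asserts the identity and uses $x^\T \matM x \geq 0 \Leftrightarrow x^\T \matLam x \geq 0$ rather than invoking Sylvester), but the content is the same.
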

\begin{proof}
We have $\matM = \matD\matLam \matD$ where $\matD$ is a diagonal matrix with positive entries $\matD[I,I] = \left(\frac{k}{n}\right)^{\frac{|I|}{2}}$. Hence $x^\T \matM x \geq 0$ for all $x \in \R^\calI$ if and only if $x^\T \matLam x \geq 0$ for all $x\in \R^\calI$.
\end{proof}

\begin{lemma}
    $\matM$ is SoS-symmetric and satisfies the constraints ``$\xsos_i^2 = \xsos_i$''.
\end{lemma}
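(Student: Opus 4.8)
The plan is to verify the two claimed properties of $\matM$ directly from its definition as a sum over ribbons. Recall $\matM = \sum_{R \in \calS} c_R \mM_R$ where $c_R = \left(\frac{k}{n}\right)^{\abs{V(R)}} \left(\frac{q-p}{\sqrt{p(1-p)}}\right)^{\abs{E(R)}}$. The key observation is that the coefficient $c_R$ depends only on the shape of $R$ (the number of vertices and edges are shape-invariants), so $\matM$ is in fact a linear combination of graph matrices $\mM_\al$ over shapes $\al \in \calS$ with coefficients $c_\al$. This lets us reason about $\matM$ entry-by-entry through its graph matrix expansion.

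For \emph{SoS-symmetry}, I would unwind what $\matM[I,J]$ equals. An entry $\matM[I,J]$ receives a contribution $c_R \chi_{E(R)}(G)$ from every ribbon $R$ with $A_R = I$, $B_R = J$ (as ordered tuples) and any edge set $E(R) \subseteq \binom{[n]}{2}$ on the vertex set $I \cup J \cup V(E(R))$; the coefficient $c_R$ depends only on $\abs{V(R)}$ and $\abs{E(R)}$. The crucial point is that this sum depends on $I$ and $J$ only through the \emph{unordered multiset} $I \sqcup J$ (and actually, in the presence of the Boolean constraint below, only through the set $I \cup J$): the collection of possible edge-decorated vertex sets extending $I \cup J$, together with the weights $c_R$, is manifestly a function of $I \cup J$ alone, and permuting or reordering the labels within $I$ or $J$, or exchanging which elements lie in $I$ versus $J$, does not change this collection. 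Hence $\matM[I,J]$ depends only on $I \sqcup J$ as an unordered multiset, which is exactly the definition of SoS-symmetry; combined with the normalization $\matM[\emptyset,\emptyset]$ coming from the trivial ribbon (which contributes $c_\emptyset = 1$), this shows $\matM$ is the moment matrix of a genuine degree-$\dsos$ pseudoexpectation operator.

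For the \emph{Boolean constraints} $\xsos_i^2 = \xsos_i$, by the remark following \cref{def:satisfying-a-constraint} (and the characterization just after the definition of SoS-symmetric), it suffices to check that $\matM[I,J]$ depends only on the \emph{union} $I \cup J$ as a set, i.e. that repeated indices are collapsed. This again follows from the ribbon expansion: a ribbon is specified by the ordered tuples $A_R, B_R \in \calI$ and an edge set on the underlying vertex set, and $\mM_R[I,J]$ is nonzero only when $A_R = I$ and $B_R = J$; when we sum $c_R \mM_R$ over all $R \in \calS$, whether an index appears once or twice in $I \sqcup J$ is invisible to both the set of vertices $V(R) \supseteq A_R \cup B_R$ and to the coefficient $c_R$ (which sees only $\abs{V(R)}$), so $\matM[I,J] = \matM[I', J']$ whenever $I \cup J = I' \cup J'$ as sets. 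Equivalently, one can argue at the level of pseudocalibration: $\pE[\xsos^I]$ was defined by transferring Fourier coefficients of the $0/1$-valued indicator $\xsos^I(H) = \prod_{i \in I}\indicator{i \in H}$, which is idempotent in each coordinate, so the resulting linear functional automatically respects $\xsos_i^2 = \xsos_i$.

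I do not expect a genuine obstacle here — both statements are bookkeeping about the definition of $\matM$ and the ribbon basis. The only mild subtlety to get right is keeping the two notions of ``symmetry'' straight (SoS-symmetry is about the \emph{multiset} $I \sqcup J$, while the Boolean constraint upgrades this to the \emph{set} $I \cup J$), and making sure the trivial-ribbon term delivers the normalization $\matM[\emptyset,\emptyset] = 1$ needed to conclude that $\matM$ is a bona fide moment matrix rather than merely SoS-symmetric.
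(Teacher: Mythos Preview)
Your proposal is correct and follows the natural approach; the paper itself states this lemma without proof, treating it as immediate from the definitions, so your verification is exactly the kind of routine bookkeeping the authors are omitting.

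One small correction: your parenthetical remark that the trivial ribbon gives $\matM[\emptyset,\emptyset] = 1$ is not quite right. The entry $\matM[\emptyset,\emptyset]$ receives contributions from \emph{all} ribbons with $A_R = B_R = \emptyset$, not only the trivial one, so $\matM[\emptyset,\emptyset] = \pE[1] = 1 \pm o(1)$ (this is \cref{prop:pE-one} in the paper). Fortunately the lemma as stated only asserts SoS-symmetry and the Boolean constraints, not that $\matM$ is a normalized moment matrix, so this does not affect the validity of your argument for the statement at hand.
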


\begin{restatable}{proposition}{pseudoEone}
\label{prop:pE-one}
With high probability, we have
$\pE[1] = 1\pm o(1)$.
\end{restatable}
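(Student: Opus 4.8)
The plan is to show that $\pE[1] = \matM[\emptyset, \emptyset]$ concentrates around $1$ by analyzing it as a low-degree polynomial in the random graph $G$ and bounding its variance. Observe that $\matM[\emptyset,\emptyset] = \sum_{R \in \calS : A_R = B_R = \emptyset} \left(\frac{k}{n}\right)^{|V(R)|}\left(\frac{q-p}{\sqrt{p(1-p)}}\right)^{|E(R)|}\chi_{E(R)}(G)$, i.e.\ it is exactly the restriction of the pseudocalibrated function to ribbons with empty index sets, which are precisely the (labeled copies of) small subgraphs $H \subseteq \binom{[n]}{2}$ with $|V(H)| \le \dv$. Grouping by shape, the expectation $\E[\pE[1]]$ comes only from the trivial shape $E(\alpha) = \emptyset$ with $V(\alpha) = \emptyset$ (since all ribbon matrices $\mM_R$ with $E(R) \neq \emptyset$ are mean-zero), which contributes exactly $1$. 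So the first step is: $\E[\pE[1]] = 1$, and it remains to control the fluctuation.

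Next I would bound $\Var[\pE[1]] = \sum_{R \neq \emptyset} \left(\frac{k}{n}\right)^{2|V(R)|}\left(\frac{q-p}{\sqrt{p(1-p)}}\right)^{2|E(R)|}$, using orthonormality of the $\mM_R$ (really the scalar Fourier characters $\chi_{E(R)}$) under $\E$ and the fact that distinct nonempty $E(R)$ give orthogonal, mean-zero contributions. Organizing this sum by shape $\alpha$: the number of ribbons (labeled copies) of shape $\alpha$ is at most $n^{|V(\alpha)|}$, so the variance is at most $\sum_{\alpha \neq \emptyset, \alpha \in \calS} n^{|V(\alpha)|}\left(\frac{k}{n}\right)^{2|V(\alpha)|}\left(\frac{q-p}{\sqrt{p(1-p)}}\right)^{2|E(\alpha)|}$. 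Substituting $k = n^\al$, $p = n^{-\beta}$, $q-p \approx n^{-\gam}$ (after the reparametrization $n^{-\beta} = p/(1-p)$, $n^{-\gam} = (q-p)/(1-p)$ in the paper), the exponent of $n$ in the summand for a shape with $v$ vertices and $m$ edges is $v - 2v + 2\al v + (\beta - 2\gam)m = (2\al - 1)v - (2\gam - \beta)m$. Since $\al \le 1/2$ we have $(2\al-1)v \le 0$, and a shape with no isolated vertices and $v \ge 1$ must have $m \ge 1$; the key point is that each edge contributes a factor $n^{\beta - 2\gam}$. Using $\gam > \al\beta$ and $\al \le 1/2$ one gets $2\gam - \beta > 2\al\beta - \beta = -(1-2\al)\beta \ge -\beta$, which is not immediately negative, so the clean argument is: every connected component of a nontrivial shape attached to nothing has $m \ge v - 1$ if it is a tree and at least one edge per vertex otherwise; combining $(2\al-1)v$ with $(\beta - 2\gam)(v-1)$ and checking that the per-vertex exponent $(2\al - 1) + (\beta - 2\gam)$ together with the ``$+1$'' slack from $-(\beta - 2\gam)$ on the tree makes each such component contribute $n^{-\Omega(1)}$, so the total variance is $n^{-\Omega(1)}$ after summing the geometric-like series over all shapes of bounded size (there are at most $\dv^{O(\dv)} = n^{o(1)}$ shapes of each size, absorbed into the $\softO$/$o(1)$). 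Then Chebyshev gives $\pE[1] = 1 \pm o(1)$ with high probability.

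The main obstacle I anticipate is making the variance bound genuinely $o(1)$ uniformly over the whole hard regime $\gam > \al\beta$, $\al \le 1/2$ — in particular, handling shapes which are \emph{not} connected (floating components), where each component independently multiplies the count by $n^{|V|}$ but the edge-decay factor must still dominate. The resolution is that a floating component $C$ with $v_C$ vertices and $m_C \ge v_C$ edges (if it has a cycle) or $m_C = v_C - 1$ (if a tree, but then it has a vertex of degree 1, and one can still check $v_C \le $ something) contributes $n^{(2\al-1)v_C + (\beta - 2\gam)m_C}$, and the worst case is a single edge ($v_C = 2$, $m_C = 1$) giving $n^{2(2\al - 1) + \beta - 2\gam} = n^{-2(1-2\al) + \beta - 2\gam}$; one needs $\gam > \al\beta$ together with $\al \le 1/2$ to force this exponent below $0$ — indeed $\beta - 2\gam < \beta - 2\al\beta = \beta(1-2\al) \le 1 - 2\al$ wait this needs $\beta \le 1$, which holds, giving $-2(1-2\al) + \beta - 2\gam < -2(1-2\al) + (1-2\al) = -(1-2\al) \le 0$ — so strict inequality when $\al < 1/2$, and the boundary case $\al = 1/2$ requires using $\gam > \al\beta = \beta/2$ strictly, which gives $\beta - 2\gam < 0$ directly. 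Once the per-component (equivalently per-edge-weighted-by-vertices) exponent is shown to be at most $-c$ for a constant $c > 0$, the sum over all shapes in $\calS$ is bounded by a convergent series times $n^{o(1)}$, completing the proof. I would also remark that this is exactly the statement that the degree-$\dv$ truncated low-degree likelihood ratio is $1 + o(1)$ throughout the hard regime, matching the discussion in the introduction.
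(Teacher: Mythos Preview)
Your second-moment/Chebyshev approach has a genuine gap: the variance of $\pE[1]$ is \emph{not} $o(1)$ throughout the hard regime. Your per-shape exponent is $(2\al-1)v + (\beta-2\gam)m$, and you only verify negativity for single edges and trees before asserting ``once the per-component exponent is shown to be at most $-c$\ldots''. But when $\al < 1/2$ one can have $\beta > 2\gam$ (any $\gam \in (\al\beta, \beta/2)$ works), and then dense shapes with $m \sim v^2$ make the exponent positive. Concretely, take $\al = 0.1$, $\beta = 0.9$, $\gam = 0.1$ (so $\gam > \al\beta = 0.09$); for $K_4$ the exponent is $4(2\al-1) + 6(\beta-2\gam) = -3.2 + 4.2 = 1$, so this single shape already contributes $\Theta(n)$ to the variance. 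The variance is dominated by the rare event that a small clique sits inside $G$, on which $\chi_{E(K_v)}(G) \approx p^{-\binom{v}{2}/2}$ is enormous; Chebyshev cannot see past this heavy tail.

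This is precisely why the paper does not argue via variance. Instead it bounds $|\pE[1]-1| \le \sum_{\alpha \neq \emptyset} \lam_\alpha \|\mM_\alpha\|$ and uses the graph-matrix norm bound $\lam_\alpha\|\mM_\alpha\| \lesssim n^{-(\frac{1}{2}-\al)w(\alpha) - \frac{1}{2}w(S) - (\gam-\al\beta)|E(\alpha)|}$, which holds with high probability. The potentially dangerous terms $w(\alpha), w(S)$ (which are negative exactly for dense subgraphs) are controlled by \emph{conditioning} on the high-probability event that $G$ contains no small subgraph $H$ with $w(H) \ll 0$; under that event $w(\alpha), w(S) \ge -o(1)\cdot|E(\alpha)|$ and the $(\gam-\al\beta)|E(\alpha)|$ edge decay wins. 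The paper's remark immediately following the proposition makes the distinction explicit: $\pE[1] = 1 \pm o(1)$ \emph{does} fail with inverse-polynomial probability (when a rare dense subgraph appears), so ``with high probability'' here genuinely cannot be obtained from a second-moment bound.
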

\begin{proof}[Proof (formal version in \cref{sec:formal-pseudocal})]
\begin{align*}
    \abs{\pE[1]-1} &= \big\lvert\sum_{\substack{\al \in \calS:\\ U_\al = V_\al =\emptyset,\\ E(\alpha) \neq \emptyset} }\lam_\al \mM_\al\big\rvert
        \leq \sum_{\substack{\al \in \calS:\\U_\al = V_\al = \emptyset, \\ E(\alpha) \neq \emptyset}} \lam_\al \|{\mM_\al}\|
\end{align*}
Up to a subpolynomial factor that is offset by edge decay, the norm bounds are (to be computed later in \cref{lem:GeneralCoefficientTimesNorm}),
\[ \lam_\al \|M_\al\| \lesssim n^{-(\frac{1}{2}-\alpha)w(\al) - \frac{w(S)}{2} - (\gam - \al \beta)|E(\al)|}\,.\]
With high probability over the random graph $G \sim \calG_{n,p}$, all small subgraphs $H$ have $w(H) \geq 0$ (up to a small error term).
Therefore, a conditioning argument will get rid of any small shape such with $w(\al) < 0$ or $w(S) < 0$.
The remaining shapes have 
both $w(\al)\geq 0$ and $w(S) \geq 0$.
Since the parameters satisfy $\frac{1}{2} - \al \geq 0$ and $\gam - \al \beta > 0$, the exponent of $n$ is negative
and the entire sum is $o(1)$.
\end{proof}

\begin{remark}
    $\pE[1] = 1 \pm o(1)$ fails with inverse polynomial probability.
    This is equivalent to the observation that there is a low-degree distinguisher
    that succeeds with inverse polynomial probability.
    Specifically, if we consider a constant-size subgraph which is unlikely to appear in $\calG_{n,p}$ (e.g. $K_{10}$ when $p$ is sufficiently small), the probability that the dense subgraph $\calG_{k,q}$ contains
    a copy of the subgraph is larger by a $\poly(n)$ factor. That said, even when $\pE[1] \gg 1$, we can still show that $\matLam \succeq 0$ with high probability. For details, see the appendix.
\end{remark}
\begin{restatable}{proposition}{pseudoEXv}
\label{prop:approximate-constraints}
With high probability, \[ 
\left|\sum_{i=1}^n\pE[\xsos_i] - k\right| = o(k) ,
\]
and \[ 
\left|\sum_{\{i,j\} \in E(G)}\pE[\xsos_i\xsos_j] - \frac{k^2q}{2}\right| = o({k^2}q)
\]
\end{restatable}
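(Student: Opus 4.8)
The plan is to apply $\pE$ to each of the two constraint polynomials, expand the result in the ribbon/shape Fourier basis $\{\mM_R\}$, peel off the one or two shapes that carry the stated main term, and bound every remaining shape exactly as in the proof of \cref{prop:pE-one}: via the norm bound \cref{thm:norm-bounds-informal}, the conditioning argument that eliminates shapes possessing a negative-weight vertex separator (valid because w.h.p.\ every small subgraph $H$ of $G$ has $w(H)\ge 0$ up to a small error), and the parameter inequalities $\tfrac12-\al\ge 0$ and $\gam-\al\beta>0$.

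\textbf{Vertex count.} Since $\pE[\xsos_i]=\matM[\{i\},\emptyset]=(k/n)^{1/2}\matLam[\{i\},\emptyset]$, we get $\sum_i\pE[\xsos_i]=(k/n)^{1/2}\sum_i\matLam[\{i\},\emptyset]$. The unique shape with $|U_\al|=1$, $|V_\al|=0$, $E(\al)=\emptyset$ is a single isolated vertex, and it contributes $\sum_i (k/n)^{1/2}\cdot (k/n)^{1/2}=k$ exactly. The remaining shapes sum to $(k/n)^{1/2}\sum_{\al}\lam_\al\sum_i\mM_\al[\{i\},\emptyset]$, and the crucial observation is $\sum_i\mM_\al[\{i\},\emptyset]=\mM_{\al^{+}}$, the \emph{scalar} graph matrix obtained by making the lone vertex of $U_\al$ internal (empty terminal sets); hence this quantity has absolute value at most $\softO(n^{|V(\al)|/2})$ by \cref{thm:norm-bounds-informal}. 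After the conditioning, $(k/n)^{1/2}\lam_\al n^{|V(\al)|/2}=n^{|V(\al)|(\al-1/2)+(\beta/2-\gam)|E(\al)|}$ up to subpolynomial factors absorbed by the edge decay, and comparing this against $k=n^{\al}$: the combination $\al\le\tfrac12$, $\gam>\al\beta$, and (in the regime $\gam<\beta/2$) the density bound $|E(\al)|\le|V(\al)|/\beta$ implied by $w(\al)\ge 0$ forces the exponent strictly below $\al$ with a gap growing in the shape size, so the sum is $o(k)$. This is essentially verbatim the computation in \cref{prop:pE-one}.

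\textbf{Edge count, main terms.} Now $\sum_{\{i,j\}\in E(G)}\pE[\xsos_i\xsos_j]=(k/n)\sum_{\{i,j\}\in E(G)}\matLam[\{i\},\{j\}]$, and two shapes dominate. The empty-edge shape $(U_\al=\{u\},V_\al=\{v\})$ has $\mM_\al[\{i\},\{j\}]=1$ for $i\ne j$, hence contributes $(k/n)^2|E(G)|=\tfrac{k^2p}{2}(1\pm o(1))$, using the w.h.p.\ bound $|E(G)|=\binom n2 p(1\pm o(1))$. The single-edge shape $E(\al)=\{\{u,v\}\}$ has $\mM_\al[\{i\},\{j\}]=\chi(G_{ij})$, which on an edge of $G$ equals the \emph{constant} $\chi(1)=\sqrt{(1-p)/p}$, hence contributes $(k/n)^2\tfrac{q-p}{p}|E(G)|=\tfrac{k^2(q-p)}{2}(1\pm o(1))$. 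Adding these gives $\tfrac{k^2q}{2}(1\pm o(1))$; it is precisely the restriction ``$\{i,j\}\in E(G)$'' acting on the single-edge shape that promotes the $p$ to $q$.

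\textbf{Edge count, error terms, and the hard part.} For every other shape (diagonal shapes with $U_\al=V_\al$ do not contribute, since the sum runs over $i\ne j$), I write $G_{ij}=p+\sqrt{p(1-p)}\,\chi(G_{ij})$ in the case that $\al$ does not contain the edge $\{u,v\}$, so that $\sum_{\{i,j\}\in E(G)}\mM_\al[\{i\},\{j\}]$ splits into $\tfrac p2\sum_{i\ne j}\mM_\al[\{i\},\{j\}]=\tfrac p2\mM_{\al^{++}}$ and $\tfrac{\sqrt{p(1-p)}}2\sum_{i\ne j}\chi(G_{ij})\mM_\al[\{i\},\{j\}]=\tfrac{\sqrt{p(1-p)}}2\mM_{(\al')^{++}}$, where $\al'$ is $\al$ with the $\{u,v\}$-edge adjoined and $(\cdot)^{++}$ denotes freeing both terminals; if $\al$ already contains $\{u,v\}$ one first linearizes $\chi_{uv}^2=1+\Esymb[\chi_{uv}^3]\chi_{uv}$ and recurses on the resulting lower shapes. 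Every piece is now a scalar graph matrix with empty terminals, so \cref{thm:norm-bounds-informal} together with the same conditioning and the same use of $\al\le\tfrac12$, $\gam>\al\beta$ bounds each contribution by a power of $n$ strictly below $k^2q$, with a gap growing in the shape size. The main obstacle is exactly this step: the correlation between $\matLam[\{i\},\{j\}]$ and the summation constraint $\{i,j\}\in E(G)$ forces an extra edge to be folded into each shape, and one must verify this never produces a norm-dominant term. It does not, because adjoining an edge raises $|E(\al)|$ by one while leaving $|V(\al)|$ unchanged, costing an extra factor $n^{\beta/2-\gam}$, i.e.\ one more unit of the log-density decay $\gam-\al\beta$; the only shapes for which the folding is ``free'' are the empty-edge and single-edge shapes, which is exactly why those two are the main terms.
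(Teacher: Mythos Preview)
Your approach is correct and essentially the same as the paper's: single out the dominant contribution, recast every remaining term as a scalar graph matrix (a shape with $U_\alpha=V_\alpha=\emptyset$), and bound those via \cref{thm:norm-bounds-informal} plus the conditioning on small subgraphs plus the inequalities $\alpha\le\tfrac12$ and $\gamma>\alpha\beta$, exactly as in \cref{prop:pE-one}. The only bookkeeping difference is that the paper regroups directly by shapes with empty boundaries and computes each coefficient exactly (so the dominant $\tfrac{k^2q}{2}$ emerges from the empty shape alone, after the indicator $\one_{\{i,j\}\in E(G)}$ has been folded into the coefficient), whereas you keep the original boundaries and recover $\tfrac{k^2q}{2}$ as the sum of the empty-edge and single-edge shapes; these are equivalent.
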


\begin{proof}[Proof (formal version in \cref{sec:formal-pseudocal})]
For each $i$, $\pE[\xsos_i] = \mM[(i),\emptyset]$. For most vertices $i$, the dominant term in $\mM[(i),\emptyset]$ is $\frac{k}{n}\mM_R[(i),\emptyset]$ where $R$ is the ribbon such that $V(R) = \{i\}$, $A_R = (i)$, $B_R = \emptyset$, and $E(R) = \emptyset$. $\mM_R[(i),\emptyset] = 1$ so this gives a contribution of $\frac{k}{n}$. Summing this over all $i \in [n]$ gives $k$. In \cref{sec:formal-pseudocal}, we verify that the contribution from the other terms is $o(k)$ with high probability.

For each $i,j \in V(G)$ such that $i < j$, 
$\one_{\{i,j\} \in E(G)}\pE[{\xsos_i}{\xsos_j}] = \one_{\{i,j\} \in E(G)}\mM[(i),(j)]$. For most $i,j$, the dominant term in $\one_{\{i,j\} \in E(G)}\mM[(i),(j)]$ is $\frac{{k^2}(q-p)}{{n^2}\sqrt{p(1-p)}}\one_{\{i,j\} \in E(G)}\mM_R[(i),(j)]$ where $R$ is the ribbon such that $V(R) = \{i,j\}$, $A_R = (i)$, $B_R = (j)$, and $E(R) = \{\{i,j\}\}$.

$\one_{\{i,j\} \in E(G)}\mM_R[(i),(j)] = \sqrt{\frac{1-p}{p}}$ if $\{i,j\} \in E(G)$ and $0$ if $\{i,j\} \notin E(G)$. Summing over all $i < j$ gives a total contribution of $\frac{{k^2}(q-p)}{{n^2}p}|E(G)| \approx \frac{{k^2}q}{2}$. In \cref{sec:formal-pseudocal}, we verify that the contribution from the other terms is $o({k^2}q)$ with high probability.
\end{proof}

\section{Positive Minimum Vertex Separator Decomposition}
\label{sec:pmvs}

\subsection{Motivation for the positive minimum vertex separator}
\label{sec:proof-outline}

After pseudocalibration, to complete the proof of \cref{thm:main}, 
we need to show that the rescaled candidate moment matrix is PSD with high probability, \[
\matLam  = \sum_{\al \in \calS}  \lambda_\al\cdot  \mM_\al \succeq 0\,.
\]
 
For each graph matrix $\lam_\al \mM_\al$ in $\matLam$, we want to find an approximately-PSD term
which spectrally dominates it.
Previous work led to the following idea:
for each shape $\alpha$, we can split it across the leftmost and rightmost minimum vertex separators so that $\alpha$ is decomposed into three parts, 
\[ 
    \alpha= \sigma \circ \tau \circ \sigma'^\T\,.
\]
Then the target spectral upper bound is given by 
\[ \lam_\sigma^2 \mM_{\sigma \circ \sigma^\T} + \lam_{\sigma'}^2 \mM_{\sigma' \circ\sigma'^\T}\,.\]
This is approximately PSD since $\mM_{\sigma \circ \sigma^\T} \approx \mM_\sigma \mM_\sigma^\T \psdgeq 0$. To make this strategy work, we need to prove that the middle shape $\mM_\tau$ is spectrally dominated by the corresponding identity via combinatorial charging. In previous work, it
has been essentially possible to charge all middle shapes to the identity matrix,
but this breaks down in the setting of Densest $k$-Subgraph. 
In the baby case, this is evident in our calculation for $\sum_{(u,v) \in E(G)}\pE[\xsos_u\xsos_v]$ in \cref{prop:approximate-constraints}, where the dominant term is no longer the trivial shape but instead the shape with an edge in between $i$ and $j$.

A second, related issue is the presence of edges inside the separator.
Concretely, say that 
$(U_\tau, E(U_\tau))$ and $(V_\tau, E(V_\tau))$ are the leftmost/rightmost SMVS of a middle shape $\tau$, and we hope to charge $\tau$ to the diagonal matrix corresponding to the leftmost/rightmost SMVS. Concretely, letting $U_\tau$ also
denote the diagonal shape with edges $E(U_\tau)$, we want to charge
\[ \lam_\tau (\mM_\tau + \mM_\tau^\T) \psdleq \lam_{U_\tau}\mM_{U_\tau} + \lam_{V_\tau}\mM_{V_\tau}\,.\]

However, this strategy crucially requires that $\lambda_{U_\tau} \cdot \mM_{U_\tau}$ and $\lambda_{V_\tau}\cdot \mM_{V_\tau}$ are PSD by themselves in order to conclude that the result is PSD.
Since $\lambda_\alpha$ is non-negative, this boils down to the PSD-ness of the diagonal shape $(U_{\tau}, E(U_{\tau}))$ for the SMVS. This latter matrix is easy to verify as the non-zero diagonal entries are given by, for a ribbon $R$ of the corresponding shape $U_\tau$, \[
 \chi_{E(R)}(G) =\prod_{e\in E(R)} \chi_e(G)
\]
and recall that we are working on the $p$-biased Fourier basis,\begin{align*}
\chi_e(1) = \sqrt{\frac{1-p}{p}} ,  \;\;\:\:\: \chi_e(0) = -\sqrt{\frac{p}{1-p }}
\end{align*}
At this point, we observe that the instantiation of the SMVS edges $E(R)$ plays a crucial role as they determine whether our candidate ''PSD'' mass is truly positive. 
If all edges of $E(R)$ are present in $G$, then the diagonal entry is positive,
\[\prod_{e\in E(R)} \chi_e(G) = \sqrt{\frac{1-p}{p}}^{\abs{E(R)}} \geq 0\,. \]
On the other hand, if an edge is missing, then positivity is not guaranteed.
Ignoring this bad case for now, we have the following sufficient criterion for finding
a PSD dominant term.
If $T$ is a ribbon of shape $\tau$, and $R$ is the restricted ribbon to $U_\tau$, then if $E(R) \subseteq E(G)$,
we must charge $\lam_\tau \mM_T$ to $\lam_{U_\tau}\mM_R$.

When an edge is missing inside the SMVS, then we need to look harder. 
Despite the candidate PSD term not being truly positive, it is not yet time to panic. In this case,
(1) a missing edge scales down the matrix, in line with the intuition that subgraphs with edges present are the highest-norm terms, therefore (2) we look in the remainder of the shape for the new SMVS, to determine the new matrix norm.
This creates a recursive process, and when all edges inside the candidate SMVS are
actually present in the graph, we terminate, calling this the \emph{Positive Minimum-weight Vertex Separator (PMVS).}

\begin{figure}[!htbp]
    \centering
    \includegraphics[width=0.5\textwidth]{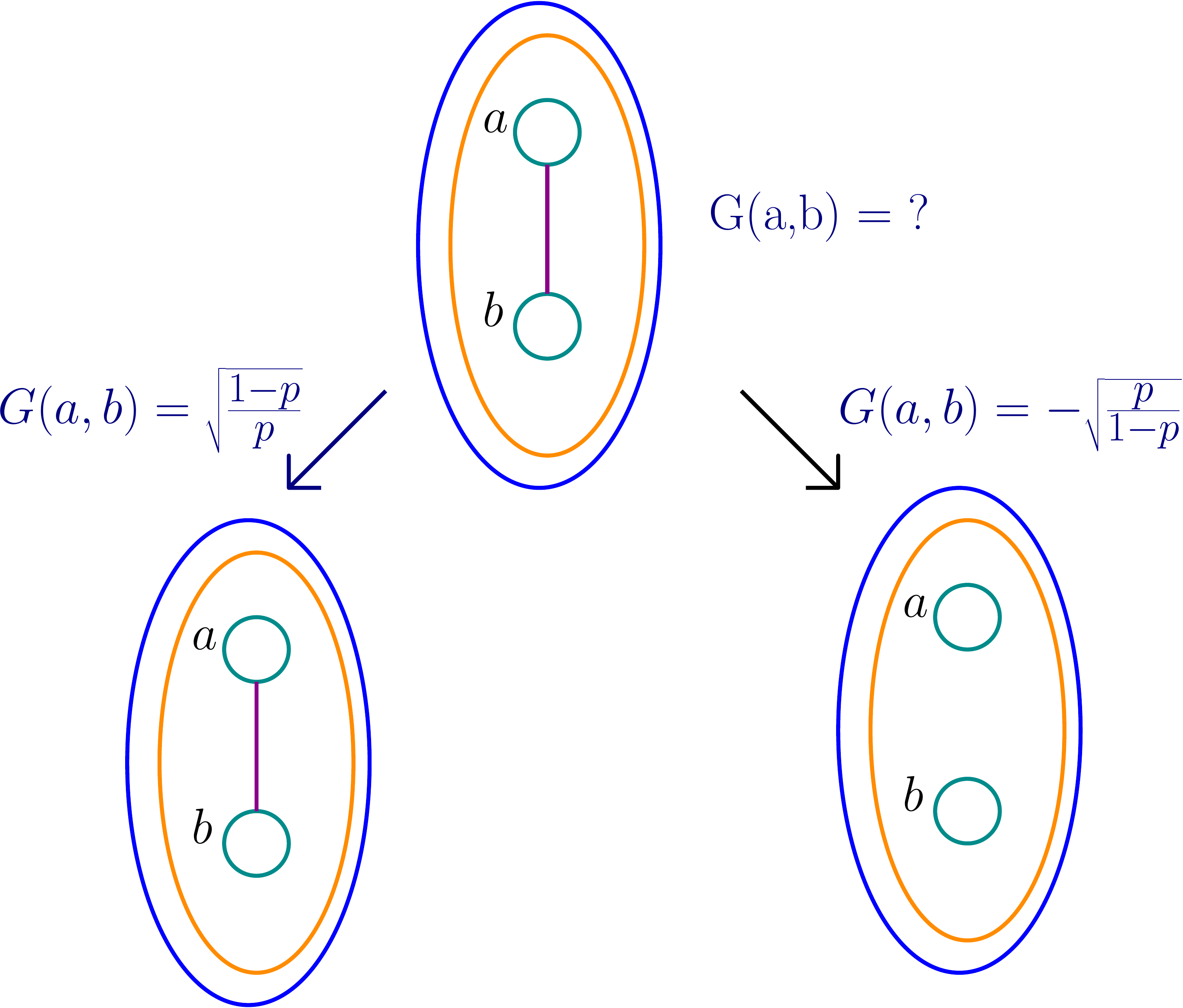}
    \caption{PMVS Search}
    \label{fig:eg}
\end{figure}

Let us give an example. The graph matrix at the top of  \cref{fig:eg} appears on the diagonal of our moment matrix. In this example shape, the only vertex separator is the entire shape, and so the SMVS contains the edge $G(a,b)$. We check whether or not the edge appears in the graph.
In the ``yes'' outcome on the left, we have a PSD matrix whose $(a,b)$-th diagonal entry is $\one_{(a,b) \in E(G)} \sqrt{\frac{1-p}{p}}$. In the ``no'' outcome on the right, the $(a,b)$-th diagonal entry is $-\one_{(a,b) \not\in E(G)} \sqrt{\frac{p}{1-p}}$, which is negative and therefore the matrix is not PSD.
This matrix comes with a small coefficient of approximately $\sqrt{p}$ and hence it can be charged to the corresponding identity matrix, whose $(a,b)$-th diagonal entry is just $1$.
In this example, the recursion terminates after just one step, but in larger shapes, we would need to find the new SMVS for the case on the right.

It would have been cleaner if one can define the PMVS in ``one shot'',
rather than through a recursion. As described above, the recursion outputs the minimizer of the weight function defined in \cref{app:true-pmvs}. However, we need to slightly modify the recursive process described above so that it always ``moves left'', in order for the crucial \cref{rmk:left-oblivious} to hold.

\subsection{PMVS subroutine}\label{sec:ribbonoperations}

We make the following extended definition of the \emph{Positive Minimum Vertex Separator (PMVS)} of a ribbon $R$.

\begin{definition}[Left and right indicators]
We say that a ribbon $R$ has \emph{left indicators} if $R$ has edge indicators for every edge $e \in E(A_R)$.
Similarly, we say that a ribbon $R$ has \emph{right indicators} if $R$ has edge indicators for every edge $e \in E(B_R)$.
\end{definition}

Our goal is to have composable triples of ribbons $R_1,R_2,R_3$ with the following properties:
\begin{definition}[Ribbons with PMVS identified]
\label{def:pmvs}
A composable triple of ribbons $R_1, R_2, R_3$ has PMVS identified if:
\begin{enumerate}[(i)]
\item $R_1$ is a left ribbon and $R_3$ is a right ribbon.
\item $R_1,R_2,R_3$ are properly composable.
\item $R_1$ has right indicators, $R_2$ has both left and right indicators, and $R_3$ has left indicators.
\item The edges and edge indicators agree on $B_{R_1} = A_{R_2}$ and $B_{R_2} = A_{R_3}$.
\item $R_1, R_2, R_3$ have no other edge indicators.
\end{enumerate}
When these properties hold, we say that the left PMVS is $A_{R_2}$ and the right PMVS is $B_{R_2}$.
\end{definition}

\begin{remark}
The left and right PMVS may not have the same size or weight.
In fact, they may not even be an SMVS of $R_2$.
We will bound the difference between the PMVS and the SMVS in \cref{sec:psdness}.
\end{remark}

At the beginning, we take each ribbon $R$ and decompose it into ribbons $R_1,R_2,R_3$ based on the leftmost and rightmost SMVS using \cref{prop:decomposition-uniqueness}. This gives us a composable triple of ribbons $R_1,R_2,R_3$ such that 
\begin{enumerate}[(i)]
\item $R_1$ is a left ribbon, $R_2$ is a middle ribbon, and $R_3$ is a right ribbon.
\item $R_1,R_2,R_3$ are properly composable.
\setcounter{enumi}{3}
\item The edges agree on $B_{R_1} = A_{R_2}$ and $B_{R_2} = A_{R_3}$.
\item $R_1$, $R_2$, and $R_3$ have no edge indicators.
\end{enumerate}

\begin{remark}
    The ribbon encoded by the triple $R_1, R_2, R_3$ is $(R_1 \setminus E(B_{R_1})) \circ R_2 \circ (R_3 \setminus E(A_{R_3}))$ rather than $R_1 \circ R_2 \circ R_3$
    because edges inside $B_{R_1} = A_{R_2}$ should not be duplicated.
\end{remark}

In order to satisfy the condition that $R_1$ has right indicators, $R_2$ has both left and right indicators, and $R_3$ has left indicators, we repeat the following sequence of operations as many times as needed.

\begin{enumerate}
\item \textbf{Adding left and right indicators operation:} To add indicators to $B_{R_1} = A_{R_2}$ and $B_{R_2} = A_{R_3}$, we replace each edge $e \in E(A_{R_2}) \cup E(B_{R_2})$ that does not yet have an indicator using\footnote{The high-level overview of the PMVS alluded to the slightly different formula $\chi_e = \one_{e \in E(G)}\chi_e + \one_{e \notin E(G)}\chi_e$. These are morally equivalent, but the formula here is simpler to analyze.} the equation $\chi_e = \frac{1}{1-p}\one_{e \in E(G)}\chi_e - \sqrt{\frac{p}{1-p}}$.
This leads to two possible new ribbons which have different edge structure, one with $e$ still present and the other with $e$ removed.

\item \textbf{PMVS operation:} After adding the edge indicators to $B_{R_1} = A_{R_2}$ and $B_{R_2} = A_{R_3}$, we check if $R_1$ is still a left ribbon and $R_3$ is still a right ribbon. If so, we stop and exit the loop. If not, we let $A'$ be the leftmost SMVS separating $A_{R_1}$ from $B_{R_1}$ and we let $B'$ be the rightmost SMVS separating $A_{R_3}$ from $B_{R_3}$. We then replace $R_1$, $R_2$, and $R_3$ with the ribbons $R'_1$, $R'_2$, and $R'_3$ where 
\begin{enumerate}
\item $R'_1$ is the part of $R_1$ between $A_{R_1}$ and $A'$.
\item $R'_2$ is the composition of the part of $R_1 \setminus E(B_{R_1})$ between $A'$ and $B_{R_1}$, $R_2$, and the part of $R_3 \setminus E(A_{R_3})$ between $A_{R_3}$ and $B'$.
\item $R'_3$ is the part of $R_3$ between $B'$ and $B_{R_3}$.
\end{enumerate}
\item \textbf{Removing middle edge indicators operation:} If $R_2$ has one or more edge indicators which are now outside of $A_{R_2}$ and $B_{R_2}$, we re-convert them back into Fourier characters using the equation $\frac{1}{1-p}\one_{e \in E(G)}\chi_{e} = \sqrt{\frac{p}{1-p}} + \chi_{e}$.
\end{enumerate}

We call this repeated sequence of operations the \textbf{Finding PMVS subroutine}, which takes a triple of composable ribbons $R_1,R_2,R_3$ which have all the needed properties except having left and right indicators (some but not all indicators may be present) and gives us a triple of composable ribbons with all of the needed properties.

\begin{remark}
    Note that each triple $R_1, R_2, R_3$ leads to many
    triples $R'_1, R'_2, R'_3$ depending on which summand is taken in each equation.
    The recursion proceeds on every term except for the one in which every $\chi_e$ is replaced by $\frac{1}{1-p}\one_{e\in E(G)}\chi_e$.
\end{remark}

\begin{remark}\label{rmk:graph-dependence}
At first glance, checking whether or not edges inside $A_{R_2}$ and $B_{R_2}$ are present leads to a complicated dependence on the input graph $G$.
In order to mathematically express the recursion in a $G$-independent way,
we formally use the edge indicator function to express the two cases.
\end{remark}

\subsection{Intersection term operation}
\label{sec:intersectionoperation}

Once we have these triples of ribbons $R_1,R_2,R_3$, we can apply an approximate factorization across the PMVS. When we do this, we will obtain error terms which can be described by triples of ribbons $R_1$, $R_2$, and $R_3$ which have at least one non-trivial intersection (they are not properly composable) but satisfy the other four properties in \cref{def:pmvs}. We handle this as follows.
\begin{enumerate}
\item \textbf{Intersection term decomposition operation:} Let $A'$ be the leftmost SMVS between $A_{R_1}$ and $B_{R_1} \cup V_{intersected}(R_1)$ and let $B'$ be the rightmost SMVS between $A_{R_3} \cup V_{intersected}(R_3)$ and $B_{R_3}$.
We now replace $R_1$, $R_2$, and $R_3$ with the ribbons $R'_1$, $R'_2$, and $R'_3$ where 
\begin{enumerate}
\item $R'_1$ is the part of $R_1$ between $A_{R_1}$ and $A'$.
\item To obtain $R'_2$, we improperly compose the part of $R_1 \setminus E(B_{R_1})$ between $A'$ and $B_{R_1}$, $R_2$, and the part of $R_3 \setminus E(A_{R_3})$ between $A_{R_3}$ and $B'$. We then linearize the multi-edges, replacing $\chi_e^k = c_0 + c_1 \chi_e$ using the appropriate coefficients $c_0, c_1$.

\smallskip

In the edge case that a multi-edge also has an edge indicator (for example, because an each inside $A_{R_2}$ intersects with an edge from $R_3$), we instead use the equation $\one_e \chi_e^k = \left(\sqrt{\frac{1-p}{p}}\right)^{k-1} \one_e \chi_e$.

\item $R'_3$ is the part of $R_3$ between $B'$ and $B_{R_3}$.
\end{enumerate}
\item We apply the \textbf{Removing middle edge indicators operation} to $R_2$.
\end{enumerate}
The ribbon $R'_2$ is
defined to ``grow'' $R_2$ so that it includes the intersections.
After these steps, we are in essentially the same situation as we started. More precisely, we have a triple of ribbons $R_1,R_2,R_3$ such that 
\begin{enumerate}[(i)]
\item $R_1$ is a left ribbon and $R_3$ is a right ribbon.
\item $R_1,R_2,R_3$ are properly composable.
\setcounter{enumi}{3}
\item The edges and edge indicators agree
on $B_{R_1} = A_{R_2}$ and $B_{R_2} = A_{R_1}$.
\item $R_1, R_2, R_3$ have no edge indicators outside of $B_{R_1} = A_{R_2}$ and $B_{R_2} = A_{R_3}$.
\end{enumerate}

At this point, we can repeat the operations, applying the \textbf{Finding PMVS subroutine} to identify a new PMVS, approximately factoring, then decomposing intersection terms, as many times as needed.

\subsection{Summary of the operations and overall decomposition}
\label{sec:psd-decomposition-informal}

We now summarize our procedure. 

\textbf{Finding PMVS subroutine:} repeat the following until
convergence,
\begin{enumerate}
\item Apply the \textbf{Adding left and right indicators operation} to add indicators to $B_{R_1} = A_{R_2}$ and $B_{R_2} = A_{R_3}$.
\item Apply the \textbf{PMVS operation} to ensure that $R_1$ is a left ribbon and $R_3$ is a right ribbon. If no change is made to $R_1$ or $R_3$, then we have identified the PMVS.
\item Apply the \textbf{Removing middle edge indicators operation} to $R_2$ to ensure that $R_2$ has no middle indicators.
\end{enumerate}
Overall decomposition procedure:
\begin{enumerate}
\item We start with triples of composable ribbons $R_1,R_2,R_3$ which have all the needed properties except having left and right indicators.
\item We apply the \textbf{Finding PMVS subroutine}.
\item Recursive factorization: We apply the following procedure repeatedly until there are no more error terms.
\begin{enumerate}[1.]
\item We approximate the sum over the composable triples of ribbons $R_1, R_2, R_3$ by enlarging the sum to include all left ribbons $R_1$ and right ribbons $R_3$
(not necessarily properly composable with $R_2$ or with each other).
This yields a matrix $\matL\matQ_i\matL^\T$ where $\matL$ sums over left ribbons and $\matQ_i$ sums over the ribbons $R_2$ on the $i$th iteration of the loop.
We then move to the triples of ribbons $R_1,R_2,R_3$ for the intersection error terms, if any.\footnote{There are also additional error terms for the truncation error, as the maximum size of the left ribbons $R_1, R_3$ will be slightly smaller for intersection terms. This must be handled separately.}
\footnote{We enlarge the sum to include only ribbons $R_1$ such that
$B_{R_1}= A_{R_2}$ and $R_3$ such that $A_{R_3} = B_{R_2}$. For this reason, the matrix $\matL$ is slightly more restricted than including all left ribbons.}
\item We apply the \textbf{Intersection term decomposition operation} to obtain a triple of ribbons $R_1,R_2,R_3$ which are properly composable.
\item We apply the \textbf{Removing middle edge indicators operation} to $R_2$ to ensure that $R_2$ has no middle indicators.
\item We apply the \textbf{Finding PMVS subroutine}.
\end{enumerate}
\end{enumerate}
\begin{remark}\label{rmk:left-oblivious}
As with previous SoS lower bounds using graph matrices, a key observation is that the PMVS operation and the intersection term decomposition operation are unaffected by replacing $R'_1$ with a different left ribbon $R''_1$ or replacing $R'_3$ with a different right ribbon $R''_3$ as long as $B_{R''_1} = B_{R_1'} = A_{R'_2}$ and $A_{R''_3} = A_{R_3'} = B_{R'_2}$.
This ensures that all left ribbons $R_1'$ and right ribbons $R_3'$ appear in the matrices $\matL$ and $\matL^\T$.
\end{remark}

Carrying out this process, the overall decomposition of the moment matrix is then
\begin{align*}
    \matLam &= \matL\left(\sum_{i = 0}^{D_V} \matQ_i\right)\matL^\T
     \pm \text{truncation error}\,.
\end{align*}

Therefore, the main requirement for $\matLam \psdgeq 0$ is to show that $\sum_{i = 0}^{D_V}\matQ_i \psdgeq 0$.
We will show that the norm-dominant terms are the diagonal shapes (\cref{def:diagonal}). By virtue of the PMVS factorization,
these shapes are PSD, as we can easily check.

\begin{lemma}
    If $R_1, R_2, R_3$ are ribbons with PMVS identified, such that $R_2$ is diagonal, then $\lam_{R_2}\mM_{R_2} \psdgeq 0$.
\end{lemma}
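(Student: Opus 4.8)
The plan is to reduce the claim to the statement that a diagonal matrix has non-negative diagonal entries, which is precisely what the PMVS bookkeeping was set up to guarantee.

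First I would use that $R_2$ is diagonal: by \cref{def:diagonal} we have $U_{R_2}=V_{R_2}=V(R_2)$, so each embedding $\varphi$ contributes only to the single diagonal entry of $\mM_{\varphi(R_2)}$ indexed by $\varphi(V(R_2))$; hence $\mM_{R_2}=\sum_{\varphi}\mM_{\varphi(R_2)}$ is itself a diagonal matrix, and it suffices to check that each of its diagonal entries is $\geq 0$ (and that $\lam_{R_2}\geq 0$). By \cref{def:improper}, a generic diagonal entry of $\mM_{\varphi(R_2)}$ has the form
\[
C \cdot \prod_{e \in \yes(\varphi(R_2))} \one_{e \in E(G)} \cdot \prod_{e \in E(\varphi(R_2))} \chi_e(G)^{\mult(e)}\,,
\]
where $C>0$ collects the positive scalar constants ($\frac{1}{1-p}$ and powers of $\sqrt{\frac{1-p}{p}}$) left behind by the \textbf{Adding left and right indicators} and \textbf{Intersection term decomposition} operations; note that the scalar $-\sqrt{\frac{p}{1-p}}$ arising in the ``edge removed'' branch of the Adding-indicators step never survives into a triple with the PMVS \emph{identified}, since those branches are exactly the ones on which the recursion continues.

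Next I would invoke the PMVS conditions of \cref{def:pmvs}. Property (iii) says $R_2$ has both left and right indicators, i.e.\ an edge indicator on every $e\in E(A_{R_2})\cup E(B_{R_2})$; since $R_2$ is diagonal this is \emph{all} of $E(R_2)$, and property (v) says there are no other indicators, while (after the multi-edge linearizations inside the intersection operation) no edge occurs with multiplicity exceeding $1$. Consequently every Fourier factor $\chi_e(G)$ appearing in the display above is accompanied by the indicator $\one_{e\in E(G)}$: if some such $e\notin E(G)$ the entry is $0$, and otherwise $\one_{e\in E(G)}=1$ and $\chi_e(G)=\sqrt{\frac{1-p}{p}}>0$ for every edge, so the entry is a product of positive quantities. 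Either way the entry is $\geq 0$. Finally, from the definition of $\lam_\al$ the exponent of $\frac{k}{n}$ vanishes for a diagonal shape, leaving $\lam_{R_2}=\left(\frac{q-p}{\sqrt{p(1-p)}}\right)^{|E(R_2)|}\geq 0$, so $\lam_{R_2}\mM_{R_2}$ is a diagonal matrix with non-negative entries and is therefore $\psdgeq 0$. I do not expect a genuine obstacle here: the substance of the lemma is the observation that the PMVS, unlike the SMVS, forces all of its edges to be present in $G$, so the only Fourier value that can occur on those edges is the positive one $\sqrt{\frac{1-p}{p}}$ rather than $-\sqrt{\frac{p}{1-p}}$; the only point requiring care is tracking that no stray negative scalar slips in, which the ``edge removed'' remark above handles.
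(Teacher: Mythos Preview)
Your core argument is correct and matches the paper's proof: because $R_2$ is diagonal with PMVS identified, every edge of $E(R_2)$ carries an indicator, so the single nonzero entry of $\mM_{R_2}$ is $\prod_{e\in E(R_2)}\one_{e\in E(G)}\chi_e(G)$, which is either $0$ or $\bigl(\sqrt{(1-p)/p}\bigr)^{|E(R_2)|}>0$; together with $\lambda_{R_2}\geq 0$ this gives the result.

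Two minor confusions worth cleaning up. First, $R_2$ is a \emph{ribbon}, not a shape: by definition $\mM_{R_2}$ already has exactly one nonzero entry, and there is no sum over embeddings $\varphi$ (that formula is for graph matrices $\mM_\alpha$). Second, the scalar constants you gather into $C$ are \emph{not} part of $\mM_{R_2}$ at all. By \cref{def:improper} the entry of $\mM_{R_2}$ is exactly $\prod_{e\in\yes(R_2)}\one_{e\in E(G)}\prod_{e}\chi_e(G)^{\mult(e)}$, with no extra constant; the accumulated factors from the recursion (the $\tfrac{1}{1-p}$'s, the linearization coefficients, and indeed the $-\sqrt{p/(1-p)}$'s from removed edges in earlier iterations) are bookkept separately in $c_{R_2}$ (\cref{def:coefficient}), which the lemma does not mention. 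So the entire discussion of whether the negative scalar ``survives into a triple with PMVS identified'' is unnecessary here --- and, for the record, $c_{R_2}$ \emph{can} be negative, which is harmless because the later charging arguments only use $|c_{R_2}|$.
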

\begin{proof}
    $\lam_{R_2} \geq 0$ and $R_2$ is diagonal with one nonzero entry, so we need that the entry is nonnegative.
    Since $R_2$ has edge indicators, the entry is
    \[\prod_{e \in E(R_2)} \one_{e \in E(G)} \chi_e(G)\,.\]
    Any time the entry is nonzero, its value is $\chi(1)^{|E(R_2)|} = \left(\sqrt{\frac{1-p}{p}}\right)^{|E(R_2)|}\geq 0$.
\end{proof}
In the next section, we will prove that the norm of any individual term in the $\matQ_i$ is small
relative to these PSD terms, which is the key remaining component of the PSDness proof.
Summing over all the terms and bounding the truncation error is done in \cref{sec:formal-details}.

So far, we have described how the ribbons are manipulated.
Each ribbon also comes with a coefficient that we need to track.
Initially, the coefficient of every ribbon $R$ is $\lam_R$.
Since the coefficients satisfy $\lam_{R \circ S} = \lam_R \lam_S$ (\cref{lem:coefficients-factor}),
we may factor $\lam_R$ whenever we factor the ribbon.
Doing so, the left and right ribbons $R_1$ and $R_3$ always come with the factors $\lam_{R_1}$ and $\lam_{R_3}$.

The coefficient on $R_2$ is initially $\lam_{R_2}$, but
it accrues extra factors in some steps of the process.

\begin{definition}[$c_R$]
\label{def:coefficient}
Given ribbons $R_1, R_2, R_3$ which produce ribbons $R_1', R_2', R_3'$ let $c_{R_2'}$ be such that the final coefficient is $c_{R_2'}\lam_{R'_1}\lam_{R'_2}\lam_{R'_3}$. Note that $c_{R'_2}$ does not depend on $R'_1$ or $R'_3$, but it does depend on
the cases in the decomposition process.
\end{definition}

For example, during the \textbf{Adding left
and right indicators operation}, $c_{R'_2}$ accrues a factor of $\frac{1}{1-p}$ for each edge indicator or a factor $-\sqrt{\frac{p}{1-p}}$
if an edge is removed.
It changes in the same way during the \textbf{Removing middle edge indicators operation}. It will also be multiplied 
by the excess edge and vertex factors during the \textbf{Intersection term decomposition operation}, as well
as the linearization coefficient either $c_0$ or $c_1$.

\section{Combinatorial Norm Charging Arguments}
\label{sec:psdness}

Now that we have identified the dominant PSD terms, we analyze the norms of the non-dominant terms that appear during the decomposition process
in \cref{sec:psd-decomposition-informal} and show that they are small.

Each graph matrix making up $\matLam$ has norm $\poly(n)$ times additional log factors.
It is most important to perform the proof at the coarse level of $\poly(n)$, ignoring the log factors
and other relatively small\footnote{In order to study the regime where the random graph has subpolynomial average degree, or improve the SoS degree above $n^\delta$, we would need to carefully track log factors and $\dsos$ respectively.
} combinatorial factors such as $\poly(\dsos)$.
In this section, we will work at the coarse level by defining away all of the subpolynomial factors in order to focus on the key
combinatorial arguments.
Only the first subsection will involve $n$, and the remaining
subsections will be pure combinatorics on shapes.
The lower-order factors will be formally incorporated in \cref{sec:formal-details}.

\subsection{Setup}
In this section we will use the parameterization $\al, \beta, \gam$ instead of $k, p, q$.

\begin{remark}
Take note that there is a notation clash between the size of the dense subgraph $\al \in (0,1/2)$ and a generic shape $\al$, and also between the edge density of the planted subgraph $\gam \in (0,1)$ and a left shape $\gam$ which participates in an intersection.
It should be clear from context whether the symbol refers to a shape or a real number.
\end{remark}

Let $\tau$ be the shape of a ribbon $R_2$ that appears in a matrix $\matQ_i$ in \cref{sec:psd-decomposition-informal}.
The contribution of this term to $\matQ_i$ is $c_\tau \lam_\tau \mM_\tau$.
We wish to show that when $\tau$ is not a diagonal shape, this
expression has small norm.

Recall that the pseudocalibrated coefficients are
\[\lambda_{\tau} =  n^{(\alpha - 1)\left(|V(\tau)| - \frac{|U_\tau| + |V_\tau|}{2}\right)  + (\frac{\beta}{2} - \gamma)|E(\tau)|}\,.\]

Recall the weight function $w(S) = |S| - \beta \cdot |E(S)|$.
\begin{definition}[Approximate norm bound]
Given a shape $\alpha$ (possibly improper), let:
\begin{align*}
    \normapx{\al} &= n^{\frac{|V(\al)| - w(S_{min}) + |\Iso(\al)|}{2}}\,.
\end{align*}
\end{definition}
It would be more proper to write $\norm{\mM_\al}^\approx$ although we use this version for more compact notation.
\begin{definition}[Approximate coefficient change, informal]
\label{def:approx-coefficient}
Given a shape $\tau$, let $c_\tau^\approx = \abs{c_\tau}$ when ignoring subpolynomial factors.
\end{definition}

For the shape $\tau$,
we view $U_\tau$ and $V_\tau$ also as diagonal shapes which include
only edges with both endpoints inside $U_\tau$ or $V_\tau$.
We want to bound $c_{\tau}\lambda_{\tau}\mM_{\tau}$ using the diagonal shapes $\lambda_{{U_{\tau}}}\mM_{U_{\tau}}$ and $\lambda_{{V_{\tau}}}\mM_{V_{\tau}}$.
In order to do this, we need to have that
\[
c^\approx_{\tau}\lambda_{\tau}\normapx{\tau} \leq \sqrt{\lambda_{{U_{\tau}}}\lambda_{{V_{\tau}}}\normapx{U_\tau}\normapx{V_\tau}}\,.
\]
It turns out that this inequality will hold with a $\poly(n)$ factor of slack,
which furthermore increases for larger shapes $\tau$. We will use this extra slack
to control the subpolynomial factors in the formal analysis.
To keep track of this extra slack, we define the following slack parameter.
\begin{definition}[Slack]
Given a shape $\tau$ with a coefficient $c_{\tau}$, we define $\slack(\tau)$ so that $c^\approx_{\tau}\lambda_{\tau}\normapx{\tau} = n^{-\slack(\tau)}\sqrt{\lambda_{{U_{\tau}}}\lambda_{{V_{\tau}}} \normapx{U_\tau}\normapx{V_\tau}}$.
\end{definition}

By construction, $\slack(\tau)$ is a combinatorial quantity that does not depend on $n$.
It is crucial to prove that $\slack(\tau) \geq 0$, and in the remaining subsections, we will prove the following positive lower bound on $\slack(\tau)$, by proving combinatorially that the slack increases during each operation of the recursion.
\begin{definition}[$V_{tot}(\tau)$ and $E_{tot}(\tau)$]
    During the recursion, some vertices and edges are lost during intersections, or when adding or removing indicators.
    Given a shape $\tau$ of a ribbon during the recursion,
    let $V_{tot}(\tau)$ be the vertex set without performing the intersections. Let $E_{tot}(\tau)$ be the enlargement of $E(\tau)$ to include all of the removed edges.
\end{definition}

\begin{remark}
    Identify $(U_\tau, V_\tau, V_{tot}(\tau), E_{tot}(\tau))$ with the shape $\gam_j \circ \cdots \circ \gam_1 \circ \tau_{0} \circ \gam'^\T_1 \circ \cdots \circ\gam'^\T_j$ where $\tau_0$ is the ``initial'' middle shape, $j$ is the level of the recursion, and $\gam_i, \gam'^\T_i$ will be described in the following sections.
\end{remark}

\begin{restatable}{theorem}{slackTheorem}\textnormal{(Slack lower bound).}
\label{thm:slack}
    At all times in the decomposition procedure described in \cref{sec:psd-decomposition-informal}, letting $\tau$ be the shape of $R_2$,
    \[\slack(\tau) \geq \epsilon\left(|E_{tot}(\tau)| - \frac{|E(U_{\tau})| + |E(V_{\tau})|}{2} + |V_{tot}(\tau)| - \frac{|U_{\tau}| + |V_{\tau}|}{2}\right)\]
    where $\eps = \min\left\{1-\al, \frac{\gam - \al \beta}{8}\right\}$.
\end{restatable}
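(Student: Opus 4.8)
The plan is to prove the inequality by tracking $\slack(\tau)$ as an invariant through each elementary operation in the decomposition procedure of \cref{sec:psd-decomposition-informal}, showing that every operation only increases the right-hand side's ``budget'' by at least the amount of slack it consumes. Concretely, I would first unpack the definition: writing out $\lam_\tau$, $\normapx{\tau}$, and the target $\sqrt{\lam_{U_\tau}\lam_{V_\tau}\normapx{U_\tau}\normapx{V_\tau}}$ in terms of $\al,\beta$ and the combinatorial data $|V(\tau)|, |E(\tau)|, w(S_{\min})$, one obtains an explicit closed form for $\slack(\tau)$ as an affine function of these quantities. The key structural input is that, because of the PMVS identification, the leftmost and rightmost separators $U_\tau, V_\tau$ are genuine separators of $\tau$, so $w(S_{\min}) \le \min\{w(U_\tau), w(V_\tau)\}$, and moreover one can relate $w(S_{\min})$ to the sizes of $U_\tau$ and $V_\tau$ using the submodularity / uncrossing of vertex separators. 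I would first establish the base case: for the ``initial'' middle shape $\tau_0$ obtained by splitting across the leftmost/rightmost SMVS (so $U_{\tau_0}, V_{\tau_0}$ are actual SMVS's and $c_{\tau_0} = 1$), a direct computation shows $\slack(\tau_0) \ge 0$, and in fact one gets the stated bound because the relevant edge/vertex deficits all vanish or are nonnegative.

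Next, I would perform the induction over the recursion levels, handling each operation type separately. For the \textbf{Adding left and right indicators operation}: converting $\chi_e$ via $\chi_e = \frac{1}{1-p}\one_{e\in E(G)}\chi_e - \sqrt{\frac{p}{1-p}}$ either keeps the edge (coefficient factor $\frac{1}{1-p} = 1 + O(p)$, subpolynomial, no effect) or removes it (coefficient factor $-\sqrt{p/(1-p)} \approx n^{-\beta/2}$, and $|E(\tau)|$ drops by one). I would check that removing an edge inside the separator region changes $\lam_\tau\normapx{\tau}$ by a factor of exactly $n^{-\beta/2 + \text{(norm change)}}$ and that this is at most the change in the target, using that a missing separator edge forces the PMVS search to move and hence $w(S_{\min})$ can only increase — this is precisely the mechanism described informally in \cref{sec:proof-outline}. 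For the \textbf{PMVS operation} and \textbf{Intersection term decomposition operation}: here $R_2$ absorbs pieces of $R_1$ and $R_3$ (and, for intersections, vertices are contracted and multiedges linearized). The crucial point is that the new $U_\tau, V_\tau$ are separators of the enlarged shape, the coefficient accrues the excess edge/vertex factors plus the linearization coefficient ($c_0$ or $c_1$, each subpolynomial or carrying the right $\sqrt{(1-p)/p}$ power), and $V_{tot}, E_{tot}$ are defined exactly to book-keep the ``lost'' vertices and edges. I would argue that each intersected vertex contributes a net factor of at least $n^{\al - 1}$ (from the $(k/n)$ scaling) which, combined with the norm change, yields slack $\ge (1-\al)$ per intersected vertex, and each lost edge contributes slack $\ge \frac{\gam - \al\beta}{8}$ after accounting for the worst-case norm fluctuation; summing gives the claimed bound with $\eps = \min\{1-\al, \frac{\gam-\al\beta}{8}\}$. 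The factor $8$ in the denominator is the slack I would deliberately leave on the table to absorb the several places where an edge deletion could, in the worst case, be partially offset by a separator moving the ``wrong'' way.

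The main obstacle — and the heart of the argument — is the \textbf{Intersection term decomposition operation}, because there the shape genuinely grows, the separator structure is recomputed, and one must show that the combined effect of (i) contracting intersected vertices, (ii) the change in $w(S_{\min})$ of the new middle shape, (iii) the excess edge factors, and (iv) the multiedge linearization coefficients, still nets a strictly positive slack proportional to the vertex and edge deficits. The delicate sub-case is when an intersection creates a multiedge that \emph{also} carries an edge indicator (handled via $\one_e\chi_e^k = (\sqrt{(1-p)/p})^{k-1}\one_e\chi_e$), since then one must verify the coefficient bound of \cref{prop:linearization-coefficient} still leaves enough room. I expect that the cleanest way to control this is to phase the analysis: first prove a ``one-step'' slack increment lemma bounding $\slack(\tau') - \slack(\tau)$ from below by $\eps(\Delta|E_{tot}| + \Delta|V_{tot}| - \text{separator-boundary changes})$ using only local properties of the operation, then telescope over the recursion and combine with the base case $\slack(\tau_0) \ge 0$. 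The separator-boundary changes telescope because $U_\tau, V_\tau$ only shrink or stay the same as the recursion proceeds, so $\frac{|U_\tau| + |V_\tau|}{2}$ and $\frac{|E(U_\tau)| + |E(V_\tau)|}{2}$ are monotone, which is exactly why they appear with a minus sign on the right-hand side of the theorem. The remaining bookkeeping — confirming that all the subpolynomial factors ($\frac{1}{1-p}$, $\poly(\dsos)$, log factors from the norm bound) are genuinely negligible at the $\poly(n)$ granularity — is deferred, as the theorem statement and \cref{sec:psdness} explicitly work at the coarse level.
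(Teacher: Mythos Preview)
Your overall induction structure---base case for the initial middle shape, then track slack increments through each operation---matches the paper's approach, and you correctly identify that the intersection-term step is where the difficulty lies. However, there is a genuine gap at exactly that point.

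The step you gloss over is controlling the change in $w(S_{\min})$, the SMVS weight of the \emph{new} middle shape $\tau_P$ after an intersection (or PMVS) operation, relative to $w(S)$ for the old $\tau$. You write that this follows from ``local properties of the operation'' and ``submodularity / uncrossing of vertex separators,'' but it does not. The new SMVS $S'$ of $\tau_P$ can be an arbitrary separator of the enlarged shape $\gamma^- \circ \tau \circ (\gamma'^\T)^-$ after intersections and linearization; there is no a priori reason it should be close to $S$, $U_\tau$, or $V_\tau$. The paper handles this with dedicated \emph{tradeoff lemmas} (\cref{lem:PMVStradeofflemma} and \cref{lem:intersectiontradeofflemma}): one explicitly constructs three auxiliary sets $X \subseteq V(\gamma)$, $Y \subseteq V(\tau)$, $Z \subseteq V(\gamma'^\T)$ from the preimage of $S'$, proves that $X$, $Y$, $Z$ are each separators of their respective pieces (so $w(X) \ge w(U_\gamma)$, $w(Y) \ge w(S)$, $w(Z) \ge w(V_{\gamma'^\T})$), and then does a vertex-by-vertex and edge-by-edge accounting to show $w(X)+w(Y)+w(Z) \le w(S') + w(U_\tau) + w(V_\tau) + (\text{intersections}) - \beta(\text{linearization}) - \cdots$. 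This combinatorial construction is the technical heart of the argument and is not a consequence of generic submodularity; your proposal does not supply it or an alternative.

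A secondary issue: your telescoping argument claims ``$U_\tau, V_\tau$ only shrink or stay the same as the recursion proceeds.'' This is false. In both the PMVS and intersection operations the middle shape \emph{grows} and $U_{\tau'} = U_\gamma$, $V_{\tau'} = V_{\gamma'^\T}$ are new separators that can be larger in vertex count. What the paper uses instead is that $w(U_{\tau'}) \ge w(U_\tau)$ (since $\gamma$ is a left shape with $V_\gamma = U_\tau$), and this inequality on \emph{weights}---not sizes---is what makes the $\alpha \cdot \tfrac{w(U_{\tau'})+w(V_{\tau'})-w(U_\tau)-w(V_\tau)}{2}$ term in \cref{thm:PMVSslack} nonnegative and hence discardable in \cref{cor:pmvs-slack}. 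Without the tradeoff lemmas and with the wrong monotonicity, your slack-increment inequality cannot be established as stated.
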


We develop a combinatorial formula for the slack in the next few lemmas.

\begin{lemma}\label{lem:GeneralCoefficientTimesNorm}
For any shape $\tau$, if $S$ is an SMVS of $\tau$ then
\[
\lambda_{\tau}\normapx{\tau} = n^{(1 -\al)\left(\frac{|U_\tau| + |V_\tau|}{2}\right)-(\frac{1}{2} - \alpha)w(\tau) - \frac{w(S)}{2} + \frac{|\Iso(\tau)|}{2}  - (\gamma - {\alpha}{\beta})|E(\tau)|}
\]
\end{lemma}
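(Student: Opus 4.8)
The plan is to prove the identity by a direct computation with exponents of $n$, since both sides are explicit powers of $n$. First I would recall the two ingredients: by definition $\lambda_\tau = n^{(\alpha-1)\left(|V(\tau)| - \frac{|U_\tau|+|V_\tau|}{2}\right) + \left(\frac{\beta}{2}-\gamma\right)|E(\tau)|}$, and $\normapx{\tau} = n^{\frac{|V(\tau)| - w(S_{min}) + |\Iso(\tau)|}{2}}$, where I would first note that $w(S_{min}) = w(S)$ for any SMVS $S$ of $\tau$ (all SMVS minimize $w$ by definition), so the choice of $S$ in the statement is immaterial. Multiplying the two, the exponent of $\lambda_\tau\normapx{\tau}$ becomes $(\alpha-1)\left(|V(\tau)| - \frac{|U_\tau|+|V_\tau|}{2}\right) + \left(\frac{\beta}{2}-\gamma\right)|E(\tau)| + \frac{|V(\tau)| - w(S) + |\Iso(\tau)|}{2}$.

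Next I would expand the claimed right-hand exponent using the identity $w(\tau) = |V(\tau)| - \beta|E(\tau)|$, writing it as $(1-\alpha)\frac{|U_\tau|+|V_\tau|}{2} - \left(\frac{1}{2}-\alpha\right)\big(|V(\tau)| - \beta|E(\tau)|\big) - \frac{w(S)}{2} + \frac{|\Iso(\tau)|}{2} - (\gamma-\alpha\beta)|E(\tau)|$, and then collect, on both sides separately, the coefficients of the five independent quantities $|V(\tau)|$, $\frac{|U_\tau|+|V_\tau|}{2}$, $|E(\tau)|$, $w(S)$, and $|\Iso(\tau)|$. On the right, the $|E(\tau)|$-coefficient is $\left(\frac12-\alpha\right)\beta - (\gamma-\alpha\beta) = \frac{\beta}{2} - \gamma$ (the $\alpha\beta$ terms cancel), the $|V(\tau)|$-coefficient is $-\left(\frac12-\alpha\right)$, and the $\frac{|U_\tau|+|V_\tau|}{2}$-coefficient is $1-\alpha$, while $w(S)$ and $|\Iso(\tau)|$ appear with coefficients $-\frac12$ and $\frac12$. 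On the left, using $(\alpha-1)+\frac12 = -\left(\frac12-\alpha\right)$ for $|V(\tau)|$ and $-(\alpha-1) = 1-\alpha$ for $\frac{|U_\tau|+|V_\tau|}{2}$, and noting the $|E(\tau)|$, $w(S)$, $|\Iso(\tau)|$ coefficients are already $\frac{\beta}{2}-\gamma$, $-\frac12$, $\frac12$, I obtain the identical expression, which finishes the proof.

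There is no real obstacle here: the lemma is a bookkeeping identity that merely rewrites the product of the pseudocalibration coefficient and the approximate norm bound so that the dependence on the whole-shape weight $w(\tau)$ and the separator weight $w(S)$ is displayed explicitly — precisely the form consumed by the slack computations in the subsequent lemmas. The only statement needing a word of care is that $w$ is constant across all SMVS of $\tau$, which is immediate from the definition of the SMVS as a minimizer of $w$.
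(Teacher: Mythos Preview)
Your proposal is correct and takes essentially the same approach as the paper: both arguments multiply the definitions of $\lambda_\tau$ and $\normapx{\tau}$ and then algebraically regroup the exponent using $w(\tau) = |V(\tau)| - \beta|E(\tau)|$. The only cosmetic difference is that the paper presents the computation as a short chain of equalities rewriting the exponent directly, while you instead compare the coefficients of each of the five quantities on both sides; the content is identical.
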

\begin{proof}
\begin{align*}
\lambda_{\tau}\normapx{\tau} &= n^{(\alpha-1)\left(|V(\tau)| - \frac{|U_\tau| + |V_\tau|}{2}\right) + (\frac{\beta}{2} - \gamma)|E(\tau)|}n^{\frac{|V(\tau)| - w(S) + |\Iso(\tau)|}{2}} \\
&= n^{(1 - \al)\left(\frac{|U_\tau| + |V_\tau|}{2}\right) - (\frac{1}{2} - \alpha)|V(\tau)| + (\frac{1}{2} - {\alpha}){\beta}|E(\tau)| - (\gamma - {\alpha}{\beta})|E(\tau)| - \frac{w(S)}{2} + \frac{|\Iso(\tau)|}{2}} \\
&= n^{(1 -\al)\left(\frac{|U_\tau| + |V_\tau|}{2}\right) -(\frac{1}{2} - \alpha)w(\tau) - (\gamma - {\alpha}{\beta})|E(\tau)| - \frac{w(S)}{2} + \frac{|\Iso(\tau)|}{2}}
\end{align*}
\end{proof}

\begin{lemma}\label{lem:norm-ratio}
For any shape $\tau$, if $S$ is an SMVS of $\tau$ then
\begin{align*}
&\frac{\lam_\tau \normapx{\tau}}{\sqrt{\lambda_{{U_{\tau}}}\lambda_{{V_{\tau}}} \normapx{U_\tau}\normapx{V_\tau}}} \\
=& n^{- (\frac{1}{2} -\al)\left(w(\tau) - \frac{w(U_\tau) + w(V_\tau)}{2}\right) - (\gam - \al \beta)\left(|E(\tau)| - \frac{|E(U_\tau)| + |E(V_\tau)|}{2}\right) + \frac{1}{2}\left(\frac{w(U_\tau) + w(V_\tau)}{2} - w(S)\right) + \frac{|\Iso(\tau)|}{2}}\,.
\end{align*}
\end{lemma}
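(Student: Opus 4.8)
The plan is to obtain Lemma~\ref{lem:norm-ratio} as a direct algebraic consequence of Lemma~\ref{lem:GeneralCoefficientTimesNorm}, applied three times: once to $\tau$ together with the given SMVS $S$, once to the diagonal shape $U_\tau$, and once to the diagonal shape $V_\tau$ (recalling the convention that these carry only the edges internal to $U_\tau$, resp.\ $V_\tau$), and then to take the quotient and simplify.

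First I would record what Lemma~\ref{lem:GeneralCoefficientTimesNorm} specializes to on a diagonal shape $\delta$ (here $\delta \in \{U_\tau, V_\tau\}$). Since $\delta$ is diagonal we have $U_\delta = V_\delta = V(\delta)$, hence $\tfrac{|U_\delta|+|V_\delta|}{2} = |V(\delta)|$ and $\Iso(\delta) = \emptyset$ (there are no vertices outside $U_\delta \cup V_\delta$). Moreover, since the left and right boundary sets coincide, each vertex of $V(\delta)$ is joined to itself by a trivial path from $U_\delta$ to $V_\delta$, so the unique vertex separator of $U_\delta$ and $V_\delta$ is $V(\delta)$ itself; thus the SMVS of $\delta$ is $V(\delta)$ and $w(S_{\min}(\delta)) = w(\delta)$. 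Plugging this into Lemma~\ref{lem:GeneralCoefficientTimesNorm} and using $-(\tfrac12 - \al) - \tfrac12 = -(1-\al)$ yields
\[\lambda_{U_\tau}\normapx{U_\tau} = n^{(1-\al)|U_\tau| - (1-\al)w(U_\tau) - (\gam - \al\beta)|E(U_\tau)|},\]
and the same with $V_\tau$ in place of $U_\tau$. Taking the square root of the product, the exponent of $\sqrt{\lambda_{U_\tau}\lambda_{V_\tau}\normapx{U_\tau}\normapx{V_\tau}}$ is
\[(1-\al)\tfrac{|U_\tau|+|V_\tau|}{2} - (1-\al)\tfrac{w(U_\tau)+w(V_\tau)}{2} - (\gam - \al\beta)\tfrac{|E(U_\tau)|+|E(V_\tau)|}{2}.\]

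Finally I would subtract this from the exponent of $\lambda_\tau\normapx{\tau}$ provided by Lemma~\ref{lem:GeneralCoefficientTimesNorm}. The two $(1-\al)\tfrac{|U_\tau|+|V_\tau|}{2}$ contributions cancel, and splitting $(1-\al)\tfrac{w(U_\tau)+w(V_\tau)}{2} = (\tfrac12 - \al)\tfrac{w(U_\tau)+w(V_\tau)}{2} + \tfrac12\cdot\tfrac{w(U_\tau)+w(V_\tau)}{2}$ regroups the surviving terms into exactly the claimed expression: pair $-(\tfrac12-\al)w(\tau)$ with the first piece, pair $-\tfrac{w(S)}{2}$ with the second, and carry the $|E|$-terms and $\tfrac{|\Iso(\tau)|}{2}$ along unchanged. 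The whole argument is bookkeeping with exponents; the only point needing a (short) justification is the determination of the SMVS of the diagonal shapes $U_\tau, V_\tau$ in the first step, so I do not anticipate a genuine obstacle — this lemma is essentially a convenient rewriting of Lemma~\ref{lem:GeneralCoefficientTimesNorm} tailored for the slack computation feeding into Theorem~\ref{thm:slack}.
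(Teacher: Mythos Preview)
Your proposal is correct and follows essentially the same approach as the paper: compute $\lambda_{U_\tau}\normapx{U_\tau}$ and $\lambda_{V_\tau}\normapx{V_\tau}$ for the diagonal shapes, take the square root of the product, and divide into the expression from Lemma~\ref{lem:GeneralCoefficientTimesNorm}. The only cosmetic difference is that the paper evaluates $\lambda_{U_\tau}\normapx{U_\tau}$ directly from the definitions (obtaining $n^{(\beta-\gamma)|E(U_\tau)|}$) rather than routing through Lemma~\ref{lem:GeneralCoefficientTimesNorm}; your expression $(1-\alpha)\bigl(|U_\tau|-w(U_\tau)\bigr)-(\gamma-\alpha\beta)|E(U_\tau)|$ simplifies to the same thing since $|U_\tau|-w(U_\tau)=\beta|E(U_\tau)|$, and the ensuing algebra is equivalent.
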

\begin{proof}
For the diagonal shapes $U_\tau$ and $V_\tau$, we have 
\begin{align*}
\lam_{U_\tau}\normapx{U_\tau} &= n^{(\frac{\beta}{2} - \gam)|E(U_\tau)|}n^{\frac{\beta}{2}|E(U_\tau)|} = n^{(\beta- \gam)|E(U_\tau)|}\\
\lam_{V_\tau}\normapx{V_\tau} &= n^{(\frac{\beta}{2} - \gam)|E(V_\tau)|}n^{\frac{\beta}{2}|E(V_\tau)|} = n^{(\beta- \gam)|E(V_\tau)|}\,.
\end{align*}
Therefore,
\[ \sqrt{\lambda_{{U_{\tau}}}\lambda_{{V_{\tau}}} \normapx{U_\tau}\normapx{V_\tau}} = n^{(\beta- \gam)\left(\frac{|E(U_\tau)| + |E(V_\tau)|}{2}\right)}\,.\]
Multiplying with \cref{lem:GeneralCoefficientTimesNorm},
\begin{align*}
&\frac{\lam_\tau \normapx{\tau}}{\sqrt{\lambda_{{U_{\tau}}}\lambda_{{V_{\tau}}} \normapx{U_\tau}\normapx{V_\tau}}} \\
=& n^{(1 -\al)\left(\frac{|U_\tau| + |V_\tau|}{2}\right)-(\frac{1}{2} - \alpha)w(\tau) - \frac{w(S)}{2} + \frac{|\Iso(\tau)|}{2}  - (\gamma - {\alpha}{\beta})|E(\tau)| - (\beta - \gam)\left(\frac{|E(U_\tau)| + |E(V_\tau)|}{2}\right)} \\
=& n^{(1 -\al)\left(\frac{|U_\tau| + |V_\tau|}{2}\right)-(\frac{1}{2} - \alpha)w(\tau) - \frac{w(S)}{2} + \frac{|\Iso(\tau)|}{2}  - (\gamma - {\alpha}{\beta})|E(\tau)| - (\al\beta - \gam)\left(\frac{|E(U_\tau)| + |E(V_\tau)|}{2}\right)  + (\al\beta - \beta) \left(\frac{|E(U_\tau)| + |E(V_\tau)|}{2}\right)}\\
=& n^{(1 -\al)\left(\frac{w(U_\tau) + w(V_\tau)}{2}\right)-(\frac{1}{2} - \alpha)w(\tau) - \frac{w(S)}{2} + \frac{|\Iso(\tau)|}{2}  - (\gamma - {\alpha}{\beta})\left(|E(\tau)| - \frac{|E(U_\tau)| + |E(V_\tau)|}{2}\right)}\\
=& n^{- (\frac{1}{2} -\al)\left(w(\tau) - \frac{w(U_\tau) + w(V_\tau)}{2}\right) - (\gam - \al \beta)\left(|E(\tau)| - \frac{|E(U_\tau)| + |E(V_\tau)|}{2}\right) + \frac{1}{2}\left(\frac{w(U_\tau) + w(V_\tau)}{2} - w(S)\right) + \frac{|\Iso(\tau)|}{2}}\,.
\end{align*}
\end{proof}

As a corollary, we have the following combinatorial formula for the slack.
\begin{lemma}\label{cor:findingandusingslack}
\begin{align*}
\slack(\tau) =\\
&(\frac{1}{2} -\al)\left(w(\tau) - \frac{w(U_\tau) + w(V_\tau)}{2}\right) + (\gam - \al \beta)\left(|E(\tau)| - \frac{|E(U_\tau)| + |E(V_\tau)|}{2}\right) \\
-& \frac{1}{2}\left(\frac{w(U_\tau) + w(V_\tau)}{2} - w(S)\right) - \frac{|\Iso(\tau)|}{2} - \log_n(c_\tau^\approx)\\
\end{align*}
\end{lemma}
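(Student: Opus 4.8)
The plan is to obtain this formula as an immediate algebraic consequence of the \emph{definition} of $\slack(\tau)$ together with the closed-form norm ratio computed in \cref{lem:norm-ratio}; there is no new combinatorics to do here, only careful bookkeeping of signs.

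First I would rewrite the defining relation
\[
c^\approx_{\tau}\lambda_{\tau}\normapx{\tau} = n^{-\slack(\tau)}\sqrt{\lambda_{{U_{\tau}}}\lambda_{{V_{\tau}}}\normapx{U_\tau}\normapx{V_\tau}}
\]
in the equivalent form (all the quantities being positive)
\[
n^{-\slack(\tau)} = c^\approx_\tau \cdot \frac{\lambda_\tau \normapx{\tau}}{\sqrt{\lambda_{U_\tau}\lambda_{V_\tau}\normapx{U_\tau}\normapx{V_\tau}}},
\]
and then take $\log_n$ of both sides to get
\[
-\slack(\tau) = \log_n\bigl(c^\approx_\tau\bigr) + \log_n\!\left(\frac{\lambda_\tau \normapx{\tau}}{\sqrt{\lambda_{U_\tau}\lambda_{V_\tau}\normapx{U_\tau}\normapx{V_\tau}}}\right).
\]
Next I would substitute for the second summand the explicit exponent from \cref{lem:norm-ratio}, namely
\[
-\Bigl(\tfrac12-\al\Bigr)\Bigl(w(\tau)-\tfrac{w(U_\tau)+w(V_\tau)}{2}\Bigr) - (\gam-\al\beta)\Bigl(|E(\tau)|-\tfrac{|E(U_\tau)|+|E(V_\tau)|}{2}\Bigr) + \tfrac12\Bigl(\tfrac{w(U_\tau)+w(V_\tau)}{2}-w(S)\Bigr) + \tfrac{|\Iso(\tau)|}{2},
\]
where $S$ is any SMVS of $\tau$. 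Negating both sides of the resulting identity and collecting terms produces exactly the claimed expression for $\slack(\tau)$.

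The only point requiring care — and it is bookkeeping, not a genuine obstacle — is keeping the sign conventions straight: $\slack(\tau)$ is defined through a \emph{negative} exponent of $n$, so after negating the logarithmic identity the coefficient-change contribution enters as $-\log_n(c^\approx_\tau)$ and every term inherited from \cref{lem:norm-ratio} flips sign relative to the statement there. I would also remark that the formula does not depend on the particular SMVS chosen, since all SMVS of $\tau$ have the same weight $w(S)$ by definition of the SMVS as a minimizer of $w(\cdot)$ over separators of $U_\tau$ and $V_\tau$; thus $w(S)$ is a well-defined function of $\tau$ and the identity is unambiguous.
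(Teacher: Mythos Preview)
Your proposal is correct and matches the paper's approach exactly: the paper presents this lemma as an immediate corollary of \cref{lem:norm-ratio} together with the definition of $\slack(\tau)$, without writing out any further argument. Your expansion of the bookkeeping (taking $\log_n$, substituting the exponent, and negating) is precisely the intended derivation.
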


\subsection{Slack for middle shapes}
\label{sec:charging-middle-shapes}

We start by computing the slack for middle shapes.
This is the slack at the start of the process.
\begin{theorem}\label{thm:middleshapeslack}
Let $\tau$ be a proper middle shape. Then:
\[
\slack(\tau) \geq (\gamma - {\alpha}{\beta})\left(|E(\tau)| - \frac{|E(U_{\tau})| + |E(V_{\tau})|}{2}\right)
\]
\end{theorem}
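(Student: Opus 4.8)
The plan is to read off $\slack(\tau)$ from the explicit formula in \cref{cor:findingandusingslack} and check that, for a proper middle shape $\tau$ at the very start of the decomposition procedure, every term there other than the $(\gam-\al\beta)$-term is zero or nonnegative. Concretely, with $S$ any SMVS of $\tau$,
\begin{align*}
\slack(\tau) =\ & \left(\frac{1}{2} -\al\right)\left(w(\tau) - \frac{w(U_\tau) + w(V_\tau)}{2}\right) + (\gam - \al \beta)\left(|E(\tau)| - \frac{|E(U_\tau)| + |E(V_\tau)|}{2}\right) \\
& - \frac{1}{2}\left(\frac{w(U_\tau) + w(V_\tau)}{2} - w(S)\right) - \frac{|\Iso(\tau)|}{2} - \log_n(c_\tau^\approx),
\end{align*}
so it suffices to show the last three ``extra'' terms vanish and that the first term is nonnegative.

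First I would dispose of the degenerate contributions. Since $\tau$ is the shape of a middle ribbon coming straight out of the canonical decomposition (\cref{prop:decomposition-uniqueness}) with no operations yet applied, its coefficient is exactly $\lam_\tau$, so $c_\tau^\approx = 1$ and $\log_n(c_\tau^\approx) = 0$. Because $\tau$ is proper, $V(\tau) = U_\tau \cup V_\tau \cup V(E(\tau))$, so every vertex outside $U_\tau\cup V_\tau$ is incident to an edge; hence $\Iso(\tau) = \emptyset$ and that term vanishes as well. Next, since $\tau$ is a middle shape, $U_\tau$ and $V_\tau$ are the leftmost and rightmost SMVS of $\tau$, and $S$ is also an SMVS, so all three achieve the minimum separator weight: $w(U_\tau) = w(V_\tau) = w(S)$. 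This makes the third extra term zero and collapses the first term to $\left(\frac12 - \al\right)\bigl(w(\tau) - w(S)\bigr)$.

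It remains to argue $w(\tau) \geq w(S)$. The key observation is that $V(\tau)$ is itself a (trivial) vertex separator of $U_\tau$ and $V_\tau$, and all edges of $\tau$ have both endpoints in $V(\tau)$, so $w(V(\tau)) = |V(\tau)| - \beta|E(\tau)| = w(\tau)$; by minimality of the SMVS weight, $w(\tau) = w(V(\tau)) \geq w(S)$. Since $\al \leq \frac12$ we conclude $\left(\frac12-\al\right)\bigl(w(\tau)-w(S)\bigr)\geq 0$, and therefore $\slack(\tau) \geq (\gam - \al\beta)\bigl(|E(\tau)| - \frac{|E(U_\tau)| + |E(V_\tau)|}{2}\bigr)$, as claimed.

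I do not expect a genuine obstacle in this base case: the only things to double-check are that $V(\tau)$ legitimately qualifies as a separator (equivalently, that the SMVS weight never exceeds $w(\tau)$) and that no indicators, intersections, or linearization factors have been introduced yet, both of which are immediate. The substantive work lies elsewhere, namely in \cref{thm:slack}, which must preserve this inequality — in the stronger vertex-and-edge-weighted form stated there — through the Adding-indicators, PMVS, Removing-middle-indicators, and Intersection-decomposition operations; that propagation, via the intersection tradeoff lemmas of the following subsections, is where the real difficulty resides.
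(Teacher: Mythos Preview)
Your proposal is correct and follows essentially the same approach as the paper: start from the slack formula in \cref{cor:findingandusingslack}, use that a proper middle shape has $c_\tau = 1$ and $\Iso(\tau) = \emptyset$, observe $w(U_\tau) = w(V_\tau) = w(S)$ since all three are SMVS, and finally that $w(\tau) \geq w(S)$ together with $\al \le \tfrac12$ makes the first term nonnegative. Your justification of $w(\tau)\ge w(S)$ via ``$V(\tau)$ is itself a separator'' is a bit more explicit than the paper's one-line assertion, but the argument is otherwise identical.
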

\begin{proof}
Since $\tau$ is a proper middle shape, $|\Iso(\tau)| = 0$ and $c_\tau = 1$.
By \cref{cor:findingandusingslack} we have
\begin{align*}
\slack(\tau) & =(\frac{1}{2} -\al)\left(w(\tau) - \frac{w(U_\tau) + w(V_\tau)}{2}\right) + (\gam - \al \beta)\left(|E(\tau)| - \frac{|E(U_\tau)| + |E(V_\tau)|}{2}\right) \\
&- \frac{1}{2}\left(\frac{w(U_\tau) + w(V_\tau)}{2} - w(S)\right)\,.
\end{align*}
Furthermore, since $w(\tau) \geq w(S) = w(U_\tau) = w(V_\tau)$, the last term is 0,
and the first term is non-negative. Thus,
\[ \slack(\tau) \geq (\gam - \al \beta)\left(|E(\tau)| - \frac{|E(U_\tau)| + |E(V_\tau)|}{2}\right)\,.
\]
\end{proof}

\subsection{Slack for the PMVS subroutine}
\label{sec:pmvs-slack}
For this subsection, suppose that one iteration of the \textbf{Finding PMVS subroutine} replaces ribbon $R_2$ of shape $\tau$ by ribbon $R_2'$ of shape $\tau'$.

We will need to consider the ``removed edges'' $E(R_2) \setminus E(R'_2)$, which intuitively are the edges of the ribbon that were queried and not present, and formally are the edges that disappear during either the \textbf{Adding left and right indicators operation} or the \textbf{Removing middle edge indicators operation}.
Observe that all removed edges are in $U_\tau \cup V_\tau$.

Recall that in the \textbf{PMVS operation},
we take a triple of ribbons $R_1, R_2, R_3$ after indicators have been added,
and split $R_1$ across the leftmost SMVS $A'$ between $A_{R_1}$ and $B_{R_1}$, and likewise split $R_3$ across the rightmost SMVS $B'$ between $A_{R_3}$ and $B_{R_3}$.
\begin{definition}[$\gamma$ and $\gam'$]\label{def:gamma}
Let $\gamma$ be the shape of the part of $R_1$ between $A'$ and $B_{R_1}$. Let $\gamma'^\T$ be the shape of the part of $R_3$ between $A_{R_3}$ and $B'$.
\end{definition}
\begin{remark}
    Note that $\gam, \tau,$ and $\gam'^\T$ should include all removed edges, whereas $\tau'$ has the edges removed.
\end{remark}
\begin{remark}
While this $\gamma$ is technically different than the $\gamma$ for an intersection term in \cref{sec:intersection-terms}, it plays a similar role.
\end{remark}

\begin{lemma}\label{lem:gam-left-shape}
$\gam, \gam'$ are left shapes.
\end{lemma}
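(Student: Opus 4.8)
The plan is to unpack the definition of a left shape (\cref{def:left-shape}): a shape $\sigma$ is a left shape if its unique SMVS is $V_\sigma$ — equivalently, $V_\sigma$ is a vertex separator of $\sigma$ and no proper subset of $V_\sigma$, nor any other set of vertices, is an SMVS of smaller or equal weight that lies strictly to the left. I would argue the statement for $\gamma$; the argument for $\gamma'$ is identical after transposing. Recall the setup from \cref{def:gamma}: $\gamma$ is the piece of the ribbon $R_1$ lying between $A'$ and $B_{R_1}$, where $A'$ is defined in the \textbf{PMVS operation} to be the \emph{leftmost SMVS} separating $A_{R_1}$ from $B_{R_1}$ (taken in the current ribbon $R_1$, which itself is a left ribbon, so $B_{R_1}$ is the unique SMVS of all of $R_1$). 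Thus $U_\gamma = A'$ and $V_\gamma = B_{R_1}$.

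First I would check that $V_\gamma = B_{R_1}$ is indeed an SMVS of $\gamma$. Any separator $S$ of $U_\gamma$ and $V_\gamma$ inside $\gamma$, together with the part of $R_1$ from $A_{R_1}$ to $A'$, gives a separator of $A_{R_1}$ and $B_{R_1}$ in $R_1$; but $B_{R_1}$ is the (unique) SMVS of the left ribbon $R_1$, so $w(S) \ge w(B_{R_1})$ — actually one has to be slightly careful and use that $A'$ itself is a minimum-weight separator, so that splitting at $A'$ is weight-neutral, giving $w(S) \ge w(V_\gamma)$. Hence $V_\gamma$ achieves the minimum weight among separators of $\gamma$, i.e. it is an SMVS. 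Next, I would show $V_\gamma$ is the \emph{unique} minimum-weight-minimum-size one left of all others: if $\gamma$ had an SMVS $S \ne V_\gamma$ that is left of $V_\gamma$, then (pasting with the $A_{R_1}$-to-$A'$ part as above) $S$ would be an SMVS of $R_1$ strictly to the left of $B_{R_1}$, contradicting that $B_{R_1}$ is the unique SMVS of the left ribbon $R_1$. And $V_\gamma$ must itself be left of every SMVS of $\gamma$ — otherwise there is an SMVS of $\gamma$ strictly left of $V_\gamma$, handled by the previous sentence. Finally, minimality of vertex size: any SMVS of $\gamma$ left of $V_\gamma$ with fewer vertices would again transport to such an SMVS of $R_1$, contradicting the definition of $B_{R_1}$ as the leftmost SMVS of $R_1$ (which is the minimum-vertex-size one left of all others, by \cref{prop:smvs-uniqueness}). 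Combining, $V_\gamma$ is the leftmost SMVS of $\gamma$ and it equals $V_\gamma$ itself, so $\gamma$ is a left shape by \cref{def:left-shape}.

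The main subtlety — and the one I would be careful about — is the interaction between "minimum weight" and "minimum vertex size" in the definition of leftmost SMVS (\cref{prop:smvs-uniqueness}): one must ensure that the transport of separators between $\gamma$ and $R_1$ described above preserves \emph{both} the weight and (when relevant) the vertex-count comparisons, which hinges on $A'$ being chosen as a \emph{minimum-weight} separator of $A_{R_1},B_{R_1}$ so that concatenation at $A'$ does not change the weight of a separator. A secondary point is that the \textbf{PMVS operation} may have just added edge indicators and deleted missing edges from $B_{R_1}=A_{R_2}$, so strictly speaking one should note that $\gamma$ here is taken with \emph{all removed edges included} (as flagged in the remark after \cref{def:gamma}); with that convention, $\gamma$ is a genuine shape with no edge indicators, and the separator-weight bookkeeping above goes through verbatim. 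I expect the whole argument to be short once these conventions are pinned down, with essentially no computation — the content is entirely the "paste-and-compare-separators" manipulation together with the uniqueness statement of \cref{prop:smvs-uniqueness}.
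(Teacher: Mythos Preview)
Your approach is essentially the same as the paper's: show that every separator $S$ of $\gamma$ is already a separator of the (original, pre-edge-removal) left ribbon $R_1$, and then invoke that $B_{R_1}=V_\gamma$ is the \emph{unique} SMVS of $R_1$ to conclude $w(S)\ge w(V_\gamma)$ with equality only for $S=V_\gamma$. That is exactly what the paper does, via a short path argument: any $A_{R_1}$--$B_{R_1}$ path must hit $A'=U_\gamma$ (since $A'$ is a separator of $R_1$), and the tail of the path after its last visit to $A'$ lies entirely in $\gamma$, so $S$ cuts it.

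Two places where you overcomplicate things. First, once you know $V_\gamma$ is the \emph{unique} SMVS of $\gamma$, you are done by \cref{def:left-shape}; the additional discussion of ``leftmost,'' ``minimum vertex size,'' and \cref{prop:smvs-uniqueness} is unnecessary, and your ``equivalently'' restatement of the left-shape definition is not quite an equivalence. Second, the ``weight-neutral splitting at $A'$'' concern is a red herring: the argument uses only that $A'$ is a separator of $R_1$ (to route paths into $\gamma$), never its weight. No concatenation of separators is needed---$S$ itself, viewed inside $V(R_1)$, is the separator.
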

\begin{proof}
    Suppose that $S$ is a separator of $\gam$.
    We claim that $S$ is also a separator of $R_1$.
    Let $P$ be any path from $A_{R_1}$ to $B_{R_1}$.
    Since $A' = U_{\gam}$ is a separator for $R_1$,
    $P$ must pass through $A'$.
    Starting from the final vertex of the path in $A'$ gives a path from
    $A' = U_{\gam}$ to $B_{R_1} = V_{\gam}$
    which is entirely contained in $\gam$.
    Finally, since $S$ is a separator of $\gam$,
    it must contain a vertex of the path $P$, and thus
    $S$ cuts the path $P$.

    Since $R_1$ is a left ribbon, we conclude that $w(S) \geq w(V_{\gam})$ and the
    unique SMVS of $\gam$ is $V_\gam$.
\end{proof}

\begin{theorem}\label{thm:PMVSslack}
Let $R_2 \to R_2'$ be a ribbon that undergoes one iteration of the \textbf{Finding PMVS subroutine}.
Let $\tau$ and $\tau'$ be their respective shapes. Then
\begin{align*}
    \slack(\tau')& - \slack(\tau)\\
    \geq\;& \alpha \left(\frac{w(U_{\tau'}) + w(V_{\tau'}) - w(U_{\tau}) - w(V_{\tau})}{2}\right)\\
    & +(\gam - \al\beta)\left(|E(\tau')| - \frac{|E(U_{\tau'})| + |E(V_{\tau'})|}{2} - |E({\tau})| + \frac{|E(U_{{\tau}})| + |E(V_{{\tau}})|}{2}\right) \\
    &+(\gam-\al\beta)x
\end{align*}
where $x$ is the total number of removed edges.
\end{theorem}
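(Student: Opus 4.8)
The plan is to expand both $\slack(\tau')$ and $\slack(\tau)$ via the closed form of \cref{cor:findingandusingslack} and subtract, reducing the claim to a purely combinatorial comparison of the quantities $w(\cdot)$, the SMVS weight $w(S)$, the boundary weights $w(U_\cdot),w(V_\cdot)$, the edge counts, $|\Iso(\cdot)|$, and $\log_n c_\cdot^\approx$ before and after the iteration. First I would record how each piece changes. The iteration identifies $(U_{\tau'},V_{\tau'},V_{tot}(\tau'),E_{tot}(\tau'))$ with $\gam \circ \tau \circ \gam'^\T$ (the removed edges still in place), after which $\tau'$ is obtained by deleting the $x$ removed edges; hence $U_{\tau'}=U_\gam$, $V_{\tau'}=V_{\gam'^\T}$, $w(\tau')=w(\gam\circ\tau\circ\gam'^\T)+\beta x$, and $|E(\tau')|=|E_{tot}(\tau')|-x$. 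On the coefficient side, each removed edge --- coming either from the $-\sqrt{p/(1-p)}$ summand of the Adding-indicators identity or the $\sqrt{p/(1-p)}$ summand of the Removing-middle-indicators identity --- multiplies $c_\tau^\approx$ by a factor of order $\sqrt{p}=n^{-\beta/2\pm o(1)}$, while each surviving indicator contributes only a $\Theta(1)$ factor; one must also track the factor of order $n^{(\gam-\beta/2)(|E(U_{\tau'})|+|E(V_{\tau'})|)}$ produced by re-expressing the overall Fourier coefficient in terms of $\lambda_{R_1'}\lambda_{R_2'}\lambda_{R_3'}$ as in \cref{def:coefficient}.

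The combinatorial engine is \cref{lem:gam-left-shape}: $\gam$ and $\gam'$ are left shapes, so their unique SMVS are $V_\gam$ and $V_{\gam'}$; in particular every vertex separator of $\gam$ has weight at least $w(V_\gam)$, and symmetrically for $\gam'$. From this I would extract: (i) since $V(\gam)$ is itself a separator of $\gam$, $w(\gam)\ge w(V_\gam)$, so gluing the left shape $\gam$ onto $\tau$ cannot decrease the shape weight, giving $w(\tau')\ge w(\tau)+\beta x$; (ii) $w(U_{\tau'})=w(U_\gam)\ge w(V_\gam)$, so the boundary weights do not decrease, while the choice of $A'$ as the \emph{leftmost} SMVS of $R_1$ in the PMVS operation bounds the increase $w(U_{\tau'})+w(V_{\tau'})-w(U_\tau)-w(V_\tau)$ from above; (iii) for the new SMVS $S'$ of $\tau'$, any separator of $\tau'$ either sits inside $\gam$ (weight $\ge w(V_\gam)=w(U_{\tau'})$ by the left-shape property), inside $\gam'^\T$ (symmetric), or restricts to a separator of $\tau$ across $U_\tau$ and $V_\tau$ (weight, recomputed after the deletions, $\ge w(S)$), so $w(S')$ is controlled from below, which together with the trivial $w(S')\le\tfrac12(w(U_{\tau'})+w(V_{\tau'}))$ pins it down; finally, the left shapes contribute no new isolated vertices, so the $|\Iso|$-term does not change.

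Substituting everything back, the contributions of $|E(\tau')|,|E(U_{\tau'})|,|E(V_{\tau'})|$ reproduce --- using $|E(\tau')|=|E_{tot}(\tau')|-x$ --- the displayed $(\gam-\al\beta)$-term of the statement together with the trailing $(\gam-\al\beta)x$, and what remains is an inequality involving only the $w$-quantities and the coefficient gain, which one closes using $\al\le\tfrac12$, $\gam>\al\beta$, and the non-negative ``leftover'' weight contributed by $\gam$ and $\gam'$. The step I expect to be the main obstacle is exactly this final accounting: attributing each removed edge correctly across the Adding-indicators and Removing-middle-indicators steps so that the single count $x$ simultaneously governs the coefficient gain, the $\beta x$ increase of $w(\tau')$, and the $-x$ in $|E(\tau')|=|E_{tot}(\tau')|-x$; proving the lower bound on $w(S')$ in (iii) despite the fact that the edges inside $U_\tau$ have been moved from the $R_1$-side onto the $R_2$-side when the boundary shifted; and matching the $\lambda$/$c$ split so that the $n^{(\gam-\beta/2)(\cdots)}$ factor is exactly absorbed by the $\normapx{U_{\tau'}}$, $\normapx{V_{\tau'}}$ normalizations and the $(\gam-\al\beta)$-term. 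This is where the minimality built into ``left shape'' does the real work.
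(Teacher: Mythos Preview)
Your overall architecture --- expand $\slack(\tau')-\slack(\tau)$ via \cref{cor:findingandusingslack}, then bound the individual pieces using that $\gam,\gam'$ are left shapes --- matches the paper's, and your claims $w(\tau')\ge w(\tau)+\beta x$ and $c^\approx_{\tau'}/c^\approx_\tau = n^{-\gamma x}$ (after cleaning up the bookkeeping) are correct. But the proof has a real gap at step (iii), the control of $w(S')$.

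Your trichotomy ``$S'$ sits inside $\gam$, or inside $\gam'^\T$, or restricts to a separator of $\tau$'' is false: a minimum-weight separator of $\tau'=\gam^-\circ\tau\circ(\gam'^\T)^-$ (with the $x$ edges deleted) will in general cut through all three pieces simultaneously, and $S'\cap V(\tau)$ need \emph{not} separate $U_\tau$ from $V_\tau$ --- a path through $\tau$ avoiding $S'\cap V(\tau)$ can have its $U_\tau$-endpoint blocked from $U_{\tau'}$ by $S'\cap V(\gam)$ instead. What is actually needed is not a lower bound on $w(S')$ alone, but the \emph{tradeoff}
\[
w(U_{\tau'})+w(V_{\tau'})-w(S')+|\Iso(\tau')|\;\le\; w(U_\tau)+w(V_\tau)-w(S)+|\Iso(\tau)|+\beta x,
\]
which says that any growth of the boundary weights must be matched by growth of $w(S')$. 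The paper proves this (their PMVS tradeoff lemma) by taking the single separator $S'$ and manufacturing from it \emph{three} separators $X\subseteq V(\gam)$, $Y\subseteq V(\tau)$, $Z\subseteq V(\gam'^\T)$ --- $X$ the vertices of $V_\gam\cup(S'\cap V(\gam))$ reachable from $U_\gam$ without crossing $S'$, $Y$ essentially the complement inside $\tau$, $Z$ symmetric --- and then doing a per-vertex and per-edge counting argument to show $w(X)+w(Y)+w(Z)\le w(S')+w(U_\tau)+w(V_\tau)+\beta x-(|\Iso(\tau')|-|\Iso(\tau)|)$ while $w(X)\ge w(U_{\tau'})$, $w(Y)\ge w(S)+\beta x_\cap$, $w(Z)\ge w(V_{\tau'})$. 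Your sketch contains nothing corresponding to this construction.

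A related casualty: your claim that ``the left shapes contribute no new isolated vertices'' is also incorrect. Vertices of $V_\gam=U_\tau$ are interior in $\tau'$ and can become isolated once the removed edges inside $U_\tau$ are deleted (they may have degree $0$ in $\gam$). The $|\Iso(\tau')|-|\Iso(\tau)|$ term is genuinely present and is absorbed precisely by the tradeoff inequality above, not by a separate argument.
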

\begin{proof}
By \cref{cor:findingandusingslack},
\begin{align*}
    &\slack(\tau') - \slack(\tau)\\
    =\;&(\frac{1}{2} -\al)\left(w(\tau') - \frac{w(U_{\tau'}) + w(V_{\tau'})}{2} - w(\tau) + \frac{w(U_\tau) + w(V_\tau)}{2}\right) \\
    &+ (\gam - \al \beta)\left(|E(\tau')| - \frac{|E(U_{\tau'})| + |E(V_{\tau'})|}{2} - |E(\tau)| + \frac{|E(U_\tau)| + |E(V_\tau)|}{2}\right) \\
    &- \frac{1}{2}\left(\frac{w(U_{\tau'}) + w(V_{\tau'})}{2} - w(S') - \frac{w(U_\tau) + w(V_\tau)}{2} + w(S)\right) \\
    &- \frac{|\Iso(\tau')| - |\Iso(\tau)|}{2} - \log_n(c_{\tau'}^\approx) + \log_n(c_\tau^\approx)\\
    =\;& (\frac{1}{2}-\al)\left(w(\tau') - w(\tau)\right)
    + \alpha \left(\frac{w(U_{\tau'}) + w(V_{\tau'}) - w(U_{\tau}) - w(V_{\tau})}{2}\right)\\
    & + \frac{w(U_\tau) + w(V_\tau) - w(U_{\tau'}) - w(V_{\tau'}) + w(S') - w(S) + |\Iso(\tau)| - |\Iso(\tau')|}{2}\\
    & +(\gam - \al\beta)\left(|E(\tau')| - \frac{|E(U_{\tau'})| + |E(V_{\tau'})|}{2} - |E(\tau)| + \frac{|E(U_{\tau})| + |E(V_{\tau})|}{2}\right) - \log_n\left(\tfrac{c^\approx_{\tau'}}{c^\approx_\tau}\right)
\end{align*}
The large term multiplied by $\frac{1}{2}$ is the most dangerous term
as it does not come with any small coefficient.
We analyze the terms as follows.

\begin{claim}
\label{claim:coefficient-edge-removal}
$\tfrac{c^\approx_{\tau'}}{c^\approx_{\tau}} = n^{-{\gamma}x}$
\end{claim}
\begin{proof}[Proof of \cref{claim:coefficient-edge-removal}]
    As noted after \cref{def:coefficient}, for each edge which is removed, we get a factor of magnitude  $n^{-\frac{\beta}{2}}$. Furthermore, we get a factor of $n^{\frac{\beta}{2} - \gamma}$ shifted from $\lambda_{\tau}$ to $c_{\tau'}$. Multiplying these factors together gives a factor of magnitude $n^{-\gamma}$ per removed edge.
\end{proof}

\begin{restatable}{claim}{tauprimelowerbound}
\label{claim:tauprimelowerbound}
$w(\tau') \geq w(\tau) + {\beta}x$
\end{restatable}
\begin{proof}[Proof of \cref{claim:tauprimelowerbound}]
Observe that
\begin{align*}
w(\tau') &= w(\gamma) + w(\tau) + w(\gamma'^\T) - w(U_{\tau}) - w(V_{\tau}) + \beta x\,.
\end{align*}
$U_{\tau} = V_\gam$ is an SMVS for $\gam$ so $w(\gam) \geq w(U_{\tau})$.
Similarly, $w(\gam') \geq w(V_\tau)$. 
Putting these equations together, we have that 
$w(\tau') \geq w(\tau) + {\beta}x$, as needed.
\end{proof}
The next lemma is a tradeoff lemma for the PMVS, which will be proven in the next sub-subsection.
\begin{restatable}[PMVS tradeoff lemma]{lemma}{PMVStradeofflemma}
\label{lem:PMVStradeofflemma}
\begin{align*}
w(U_{\tau'}) + w(V_{\tau'}) - w(S') + |\Iso(\tau')|  &\leq w(U_{\tau}) + w(V_{\tau}) - w(S) + |\Iso(\tau)| + \beta{x}
\end{align*}
\end{restatable}
Multiplying \cref{claim:tauprimelowerbound} by $\frac{1}{2} - \alpha$ and multiplying \cref{lem:PMVStradeofflemma} by $\frac{1}{2}$, we have that 
\begin{enumerate}
    \item $(\frac{1}{2} - \alpha)(w(\tau') - w(\tau)) \geq \frac{\beta}{2} x - \alpha\beta x$
    \item 
    \begin{align*}
    \frac{w(U_\tau) + w(V_\tau) - w(U_{\tau'}) - w(V_{\tau'}) + w(S') - w(S) + |\Iso(\tau)| - |\Iso(\tau')|}{2}&\geq -\frac{\beta}{2}x
    \end{align*}
\end{enumerate}
Using these equations in the formula above, we have that
\begingroup
\allowdisplaybreaks
\begin{align*}
    \slack(\tau')& - \slack(\tau)\\
    \geq\;& \frac{\beta}{2} x-\al\beta x +\al \left(\frac{w(U_{\tau'}) + w(V_{\tau'}) - w(U_{\tau}) - w(V_{\tau})}{2}\right)  - \frac{\beta}{2} x\\
    & +(\gam - \al\beta)\left(|E(\tau')| - \frac{|E(U_{\tau'})| + |E(V_{\tau'})|}{2} - |E(\tau)| + \frac{|E(U_{\tau})| + |E(V_{\tau})|}{2}\right)\\
    & + \gamma x\\
   =\;& (\gam-\al\beta)x + \alpha \left(\frac{w(U_{\tau'}) + w(V_{\tau'}) - w(U_{\tau}) - w(V_{\tau})}{2}\right)\\
    & +(\gam - \al\beta)\left(|E(\tau')| - \frac{|E(U_{\tau'})| + |E(V_{\tau'})|}{2} - |E(\tau)| + \frac{|E(U_{\tau})| + |E(V_{\tau})|}{2}\right)
\end{align*}
\endgroup
as needed.
\end{proof}

\begin{corollary} \label{cor:pmvs-slack}
\begin{align*} 
\slack(&\tau') - \slack(\tau) \\
&\geq  (\gam-\al\beta) \cdot \left(|E_{tot}(\tau')| - \frac{|E(U_{\tau'})|+|E(V_{\tau'})|}{2} - |E_{tot}(\tau)| + \frac{|E(U_\tau)|+|E(V_\tau)|}{2}\right) 
\end{align*}
\end{corollary}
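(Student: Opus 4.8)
The plan is to obtain \cref{cor:pmvs-slack} directly from \cref{thm:PMVSslack} by two bookkeeping reductions, since the corollary is essentially a repackaging of the theorem. After absorbing the free $(\gam-\al\beta)x$ summand, the bound of \cref{thm:PMVSslack} reads $\slack(\tau')-\slack(\tau)\ge \al\!\left(\tfrac{w(U_{\tau'})+w(V_{\tau'})-w(U_\tau)-w(V_\tau)}{2}\right) + (\gam-\al\beta)\!\left(|E(\tau')| - \tfrac{|E(U_{\tau'})|+|E(V_{\tau'})|}{2} - |E(\tau)| + \tfrac{|E(U_\tau)|+|E(V_\tau)|}{2} + x\right)$, so relative to the corollary there is one superfluous term -- the $\al$-term, which I want to discard as nonnegative -- and one identity to install, namely $|E(\tau')|-|E(\tau)|+x = |E_{tot}(\tau')|-|E_{tot}(\tau)|$. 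Throughout I will use $\al\ge 0$ and $\gam-\al\beta>0$ (the latter from the standing hypothesis $\gam>\al\beta$).

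For the $E_{tot}$ identity: by definition $E_{tot}$ enlarges $E(\cdot)$ by \emph{all} edges deleted so far during the recursion. In passing from $\tau$ to $\tau'$ exactly $x$ new edges are removed and no previously-removed edge re-enters, while the edges absorbed from $\gam$ and $\gam'^\T$ stay in $E(\tau')$. Writing $F$ for the set of edges removed in strictly earlier iterations (which is disjoint from $E(\tau)$), this gives $|E_{tot}(\tau)| = |E(\tau)| + |F|$ and $|E_{tot}(\tau')| = |E(\tau')| + |F| + x$, whence $|E_{tot}(\tau')|-|E_{tot}(\tau)| = |E(\tau')|-|E(\tau)|+x$. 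Since the $\tfrac{|E(U_{\tau'})|+|E(V_{\tau'})|}{2}$ and $\tfrac{|E(U_\tau)|+|E(V_\tau)|}{2}$ summands are literally the same in both statements, this converts the $(\gam-\al\beta)$-block of \cref{thm:PMVSslack} into the $(\gam-\al\beta)$-block of \cref{cor:pmvs-slack}.

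For the $\al$-term, I will show $w(U_{\tau'})\ge w(U_\tau)$ and $w(V_{\tau'})\ge w(V_\tau)$. By construction in the \textbf{PMVS operation}, $U_{\tau'}=A'=U_\gam$ where $\gam$ is the shape of \cref{def:gamma}, and $V_\gam=B_{R_1}=U_\tau$. By \cref{lem:gam-left-shape}, $\gam$ is a left shape, so $V_\gam$ is its unique SMVS and in particular $w(U_\gam)\ge w(V_\gam)=w(U_\tau)$; moreover $U_{\tau'}$ as it sits inside $\tau'$ has weakly fewer edges than $U_\gam$ inside $\gam$ (any deleted edge lies in $U_\tau\cup V_\tau$, and deleting edges only raises $w$), so $w(U_{\tau'})\ge w(U_\gam)\ge w(U_\tau)$. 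The inequality $w(V_{\tau'})\ge w(V_\tau)$ is symmetric via $\gam'$. Hence $\al\!\left(\tfrac{w(U_{\tau'})+w(V_{\tau'})-w(U_\tau)-w(V_\tau)}{2}\right)\ge 0$, and dropping it from \cref{thm:PMVSslack} together with the $E_{tot}$ identity yields \cref{cor:pmvs-slack}.

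The only genuinely delicate point is pinning down the $E_{tot}$ identity precisely: one must be careful about exactly which edges count as ``removed'' at which stage, and cross-check against the boundary accounting already used in the proof of \cref{claim:tauprimelowerbound} (there $w(\tau')=w(\gam)+w(\tau)+w(\gam'^\T)-w(U_\tau)-w(V_\tau)+\beta x$, which is the edge-count identity $|E(\tau')|=|E(\gam)|+|E(\tau)|+|E(\gam'^\T)|-|E(U_\tau)|-|E(V_\tau)|-x$ in disguise, and is consistent with the above). Everything else is immediate from $\al\ge 0$, $\gam>\al\beta$, and \cref{lem:gam-left-shape}.
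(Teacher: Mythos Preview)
Your proposal is correct and follows essentially the same approach as the paper's proof: drop the $\alpha$-term by showing $w(U_{\tau'})\ge w(U_\tau)$ and $w(V_{\tau'})\ge w(V_\tau)$ via the left-shape property of $\gamma,\gamma'$, and absorb the $(\gamma-\alpha\beta)x$ summand into the $E_{tot}$ count. If anything, you are more explicit than the paper about the bookkeeping identity $|E_{tot}(\tau')|-|E_{tot}(\tau)|=|E(\tau')|-|E(\tau)|+x$, which the paper leaves implicit.
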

\begin{proof}
Using our slack calculation above, observe that we can remove the vertex factor as
\[  \alpha \left(\frac{w(U_{\tau'}) + w(V_{\tau'}) - w(U_{\tau}) - w(V_{\tau})}{2}\right) \geq 0\]
since $U_{\tau'}$ (including the removed edges) is a larger separator than $U_\tau$, as $U_\tau$ is the SMVS of $R_1$ at this point in the iteration (and likewise for $V_{\tau'}$ and $V_\tau$).
Removing edges only increases $w(U_{\tau'})$.
\end{proof}

\subsubsection{Proof of the PMVS tradeoff lemma}

\PMVStradeofflemma*

\begin{proof}
To prove this, we construct and analyze the following sets of vertices.
\begin{enumerate}
\item We take $X_0$ to be the set of vertices in $V_{\gamma} \setminus S'$ which can be reached in $\gamma$ from $U_{\gamma}$ without passing through a vertex in $S'$. We then take $X = X_0 \cup (S' \cap V(\gamma))$.
\item We take $Y_{l}$ to be the set of non-isolated vertices in $V_{\gamma} \setminus S'$ which are not reachable in $\gamma$ from $U_{\gamma}$ without passing through $S'$. Similarly, we take $Y_{r}$ to be the set of non-isolated vertices in $U_{{\gamma'}^\T} \setminus S'$ which are not reachable in ${\gamma'}^\T$ from $V_{{\gamma'}^\T}$ without passing through a vertex in $S'$. We then take $Y = Y_l \cup Y_r \cup (S' \cap V(\tau)) \cup (V_{\gamma} \cap U_{{\gamma'}^\T})$.
\item We take $Z_0$ to be the set of vertices in $U_{{\gamma'}^\T} \setminus S'$ which can be reached in ${\gamma'}^\T$ from $V_{{\gamma'}^\T}$ without passing through a vertex in $S'$. We take $Z_{extra}$ to be the set of non-isolated vertices in $(U_{\tau} \cap V_{\tau}) \setminus S'$ which are not reachable from $V_{{\gamma'}^\T}$ in ${\gamma'}^\T$. We then take $Z = Z_0 \cup Z_{extra} \cup (S' \cap V({\gamma'}^\T))$.
\end{enumerate}
Let $x_\cap$ be the number of edges removed from $U_\tau \cap V_\tau$.
We now observe that it is sufficient to show the following statements.
\begin{enumerate}
\item $w(X) \geq w(U_{\gamma})$, $w(Y) \geq w(S) + \beta x_\cap$, and $w(Z) \geq w(V_{{\gamma'}^\T})$.
\item 
\begin{align*}
w(X) + w(Y) + w(Z) &\leq w(S') + w(V_{\gamma}) + w(U_{{\gamma'}^\T}) - (|\Iso(\tau')| - |\Iso(\tau)|)\\
&= w(S') \\
& \quad+ w(U_{\tau}) + {\beta}(\# \text{ of edges removed from } U_{\tau}) \\
&\quad+ w(V_{\tau}) + {\beta}(\# \text{ of edges removed from } V_{\tau}) \\
&\quad-  (|\Iso(\tau')| - |\Iso(\tau)|)
\end{align*}
\end{enumerate}
Using the three initial inequalities on the left-hand side of the second statement,
\begin{align*}
w(U_\gam) + w(S) + \beta x_\cap + w(V_{\gam'^\T}) \leq &\;
 w(S') + w(U_{\tau}) + w(V_{\tau}) \\&\quad + \beta x + \beta x_\cap
-  \abs{\Iso(\tau')} + |\Iso(\tau)|\,.
\end{align*}
Rearranging this, we have
\begin{align*}
w(U_{\tau'}) + w(V_{\tau'}) - w(S') + |\Iso(\tau')|  &\leq w(U_{\tau}) + w(V_{\tau}) - w(S) + |\Iso(\tau)| + \beta{x}
\end{align*}
as needed.

Moving to the statements, to show that $w(X) \geq w(U_{\gamma})$ and $w(Z) \geq w(V_{{\gamma'}^\T})$, we observe that because of how we chose $X$ and $Z$, $X$ is a vertex separator of $\gamma$ and $Z$ is a vertex separator of ${\gamma'}^\T$. 

To show that $w(Y) \geq w(S) + \beta x_\cap$, we first observe that $Y$ is a vertex separator of $\tau$ (with or without the missing edges). To see this, assume that $Y$ is not a vertex separator of $\tau$. If so, there is a path $P_m$ from a vertex $u \in U_{\tau}$ to a vertex $v \in V_{\tau}$ which does not intersect $Y$ and thus does not intersect $S'$. Since $u \notin Y$, there is a path $P_l$ from $U_{\gamma}$ to $u$ in $\gamma$ which does not intersect $S'$. Similarly, since $v \notin Y$, there is a path $P_r$ from $v$ to $V_{{\gamma'}^\T}$ in ${\gamma'}^\T$ which does not intersect $S'$. Composing $P_l$, $P_m$, and $P_r$ gives a path from $U_{\tau'}$ to $V_{\tau'}$ which does not intersect $S'$ which is a contradiction as $S'$ is an SMVS for $\tau'$.

We now observe that when the missing edges are removed, all vertex separators of $\tau$ have their weight increased by at least $\beta x_\cap$. Thus, after the missing edges are deleted, all vertex separators of $\tau$ have weight at least $w(S) + \beta x_\cap$ so $w(Y) \geq w(S) + \beta x_\cap$.

To show that
\[
w(X) + w(Y) + w(Z) \leq w(S') + w(V_{\gamma}) + w(U_{{\gamma'}^\T}) - (|\Iso(\tau')| - |\Iso(\tau)|)
\]
we show the following two statements:
\begin{enumerate}
\item For each vertex $v$, $\one_{v \in X} + \one_{v \in Y} + \one_{v \in Z} \leq \one_{v \in S'} + \one_{v \in V_{\gamma}} + \one_{v \in U_{{\gamma'}^\T}} - \one_{v \in \Iso(\tau') \setminus \Iso(\tau)}$.
\item For each edge $e$, $\one_{e \in E(X)} + \one_{e \in E(Y)} + \one_{e \in E(Z)} \geq \one_{e \in E(S')} + \one_{e \in E(V_{\gamma})} + \one_{e \in E(U_{{\gamma'}^\T})}$.
\end{enumerate}
For the first statement, we make the following observations:
\begin{enumerate}
\item If $v \in S'$ then $v$ is not isolated. If $v \notin V_{\gamma} \cup U_{{\gamma'}^\T}$ then $v$ is in exactly one of $X$, $Y$, and $Z$ depending on whether $v$ is in $V(\gamma)$, $V(\tau)$, or $V({\gamma'}^\T)$. If $v \in V_{\gamma} \cup U_{{\gamma'}^\T}$ then $v \in Y$, $v \in X$ if and only if $v \in V_{\gamma}$ and $v \in Z$ if and only if $v \in U_{{\gamma'}^\T}$.
\item If $v \in \Iso(\tau') \setminus \Iso(\tau)$ then $v \notin X$ and $v \notin Z$. Moreover, $v$ must be in $V_{\gamma}$ or $U_{{\gamma'}^\T}$ and $v \in Y$ if and only if $v$ is in both $V_{\gamma}$ and $U_{{\gamma'}^\T}$.
\item If $v \in (V_{\gamma} \setminus S') \setminus U_{{\gamma'}^\T}$ and $v$ is not isolated then $\one_{v \in S'} + \one_{v \in V_{\gamma}} + \one_{v \in U_{{\gamma'}^\T}} - \one_{v \text { is isolated}} = 1$ and $\one_{v \in X} + \one_{v \in Y} + \one_{v \in Z} = 1$ as $v \notin Z$ and either $v \in X$ or $v \in Y$ but not both.
\item If $v \in (U_{{\gamma'}^\T} \setminus S') \setminus V_{\gamma}$ and $v$ is not isolated then $\one_{v \in S'} + \one_{v \in V_{\gamma}} + \one_{v \in U_{{\gamma'}^\T}} - \one_{v \text { is isolated}} = 1$ and $\one_{v \in X} + \one_{v \in Y} + \one_{v \in Z} = 1$ as $v \notin X$ and either $v \in Y$ or $v \in Z$ but not both.
\item If $v \in (V_{\gamma} \cap U_{{\gamma'}^\T}) \setminus S' $ and $v$ is not isolated then $\one_{v \in S'} + \one_{v \in V_{\gamma}} + \one_{v \in U_{{\gamma'}^\T}} - \one_{v \text { is isolated}} = 2$ and $\one_{v \in X} + \one_{v \in Y} + \one_{v \in Z} = 2$ as $v \in Y$ and either $v \in X$ or $v \in Z$ but not both.
\end{enumerate}
For the second statement, we make the following observations:
\begin{enumerate}
\item If $e \in E(S')$ and $e \notin E(V_{\gamma}) \cup E(U_{{\gamma'}^\T})$ then $e$ is in exactly one of $E(X)$, $E(Y)$, and $E(Z)$ depending on whether $e$ is in $E(\gamma)$, $E(\tau)$, or $E({\gamma'}^\T)$. If $e \in E(V_{\gamma}) \cup E(U_{{\gamma'}^\T})$ then $e \in E(Y)$, $e \in E(X)$ if and only if $e \in E(V_{\gamma})$, and $e \in E(Z)$ if and only if $e \in E(U_{{\gamma'}^\T})$. In all of these cases, $\one_{e \in E(X)} + \one_{e \in E(Y)} + \one_{e \in E(Z)} \geq \one_{e \in E(S')} + \one_{e \in E(V_{\gamma})} + \one_{e \in E(U_{{\gamma'}^\T})}$.
\item If $e \in (E(V_{\gamma}) \setminus E(S')) \setminus E(U_{{\gamma'}^\T})$ then $\one_{e \in E(S')} + \one_{e \in E(V_{\gamma})} + \one_{e \in E(U_{{\gamma'}^\T})} = 1$ and $\one_{e \in E(X)} + \one_{e \in E(Y)} + \one_{e \in E(Z)} = 1$ as $e \notin E(Z)$ and either $e \in E(X)$ or $e \in E(Y)$ but not both.
\item If $e \in (E(U_{{\gamma'}^\T}) \setminus E(S')) \setminus E(V_{\gamma})$ then $\one_{e \in E(S')} + \one_{e \in E(V_{\gamma})} + \one_{e \in E(U_{{\gamma'}^\T})} = 1$ and $\one_{e \in E(X)} + \one_{e \in E(Y)} + \one_{e \in E(Z)} = 1$ as $e \notin E(X)$ and either $e \in E(Y)$ or $e \in E(Z)$ but not both.
\item If $e \in (E(V_{\gamma}) \cap E(U_{{\gamma'}^\T})) \setminus E(S')$ then $\one_{e \in E(S')} + \one_{e \in E(V_{\gamma})} + \one_{e \in E(U_{{\gamma'}^\T})} = 2$ and $\one_{e \in E(X)} + \one_{e \in E(Y)} + \one_{e \in E(Z)} = 2$ as $e \in E(Z)$ and either $e \in E(X)$ or $e \in E(Z)$ but not both.
\end{enumerate}
\end{proof}

\subsection{Slack for intersection terms}
\label{sec:intersection-terms}

We now analyze the slack for the \textbf{Intersection term decomposition operation}.
The analysis is similar to that for the PMVS subroutine, albeit with additional considerations.

Suppose that the operation replaces a ribbon $R_2$ of shape $\tau$
by a ribbon $R'_2$ of shape $\tau_P$.
Recall that in the \textbf{Intersection term decomposition operation},
we take a triple of ribbons $R_1, R_2, R_3$ with intersections,
and split $R_1$ across the leftmost SMVS $A'$ between $A_{R_1}$ and $B_{R_1} \cup V_{intersected}(R_1)$, and likewise split $R_3$ across the rightmost SMVS $B'$ between $A_{R_3} \cup V_{intersected}(R_3)$ and $B_{R_3}$.
\begin{definition}[$\gam$ and $\gam'$]
\label{def:gam}
    Let $\gam$ be the shape of the part of $R_1$ between $A'$ and $B_{R_1}$. Let $\gam'^\T$ be the shape of the part of $R_3$ between $A_{R_3}$ and $B'$.
\end{definition}

The notation $\tau_P$ is used because $\tau_P$ is an intersection shape (\cref{def:intersection-shape}) for some intersection pattern $P \in \calP_{\gam, \tau, \gam'^\T}$ (after linearization).

The edges that are linearized away into a constant term during the \textbf{Intersection term decomposition operation}
are referred to as ``vanishing edges''.

\begin{theorem}\label{thm:intersectiontermslack}
Let $R_2 \to R_2'$ be a ribbon that undergoes the \textbf{Intersection term decomposition operation}.
Let $\tau$ and $\tau_P$ be their respective shapes.
Then for $\gam, \gam'$ as defined in \cref{def:gam},
\begin{align*}
\slack(\tau_P)& - \slack(\tau) \geq\\
&(1 - \al)\left(\frac{w(U_{\gam}) + w(V_{\gam'^\T}) - w(U_{\tau}) - w(V_{{\tau}})}{2}\right) \\
&+(\gam - {\al}\beta)\left(|E(\tau_P)| - \frac{|E(U_{\tau_P})| + |E(V_{\tau_P})|}{2} - |E({\tau})| + \frac{|E(U_{{\tau}})| + |E(V_{{\tau}})|}{2}\right)\\
&+(\gam - \al\beta)\cdot\text{edgereduction}
\end{align*}
where $\text{edgereduction}$ is the total number of vanishing edges.
\end{theorem}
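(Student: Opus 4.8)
The plan is to follow the template used for \cref{thm:PMVSslack}, replacing ``edge removal'' by ``vertex intersection plus linearization of multi-edges'' at each step. First I would apply \cref{cor:findingandusingslack} to both $\tau$ and $\tau_P$ and subtract, writing $\slack(\tau_P) - \slack(\tau)$ as a signed sum of five pieces: the $(\tfrac12-\al)$-weighted difference $w(\tau_P)-w(\tau)-\tfrac12(w(U_{\tau_P})+w(V_{\tau_P})-w(U_\tau)-w(V_\tau))$; the $(\gam-\al\beta)$-weighted edge difference (which already appears verbatim in the target inequality); the SMVS-correction $-\tfrac12(\tfrac{w(U_{\tau_P})+w(V_{\tau_P})}{2}-w(S')-\tfrac{w(U_\tau)+w(V_\tau)}{2}+w(S))$; the isolated-vertex difference $-\tfrac12(|\Iso(\tau_P)|-|\Iso(\tau)|)$; and $-\log_n(c_{\tau_P}^\approx/c_\tau^\approx)$. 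Exactly as in the PMVS case, the dangerous contribution is the term weighted by $\tfrac12$ that carries no small coefficient, so everything hinges on controlling it through a tradeoff lemma.

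Second, I would compute the coefficient ratio $c_{\tau_P}^\approx/c_\tau^\approx$. During the \textbf{Intersection term decomposition operation} this ratio accrues: the excess vertex factors from contracting intersecting vertices (each contracted internal vertex shifts $\lambda$ by a factor $n^{\al-1}$, which is the origin of the coefficient $(1-\al)$ appearing in the statement rather than the $\al$ of \cref{thm:PMVSslack}); the excess edge factors for edges that coincide under the contraction; and, by \cref{prop:linearization-coefficient}, a linearization coefficient of magnitude at most $(\sqrt{(1-p)/p})^{\,k-1-\one_{e\text{ vanishes}}}$ for each multi-edge, together with the analogous formula $\one_e\chi_e^k = (\sqrt{(1-p)/p})^{k-1}\one_e\chi_e$ in the case that a multi-edge also carries an indicator. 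Netting these against the $\lambda$-factors carried by the vanishing edges yields a factor of magnitude $n^{-\gam}$ per vanishing edge, which is exactly what produces the $(\gam-\al\beta)\cdot\text{edgereduction}$ term plus the vertex/edge contraction contributions.

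Third, and this is the crux, I would establish the structural inputs that bound the dangerous $\tfrac12$-term. The analog of \cref{lem:gam-left-shape} is that $\gam$ and $\gam'$ are left shapes: any separator of $\gam$ also separates $A_{R_1}$ from $B_{R_1}\cup V_{intersected}(R_1)$ inside $R_1$, so since $A'$ is the \emph{leftmost} SMVS of that pair, the unique SMVS of $\gam$ is $V_\gam$, hence $w(\gam)\ge w(U_\tau)$ (up to $\beta$ times the edges removed from $U_\tau$) and symmetrically $w(\gam')\ge w(V_\tau)$; these replace \cref{claim:tauprimelowerbound}. The heart is then an \textbf{intersection tradeoff lemma} mirroring \cref{lem:PMVStradeofflemma}, namely
\[
w(U_{\tau_P})+w(V_{\tau_P})-w(S')+|\Iso(\tau_P)| \;\le\; w(U_\tau)+w(V_\tau)-w(S)+|\Iso(\tau)|+\beta\cdot(\text{contracted/removed edges})\,.
\]
I would prove it by the same three-set construction: $X$ = vertices of $\gam$ reachable from $U_\gam$ without meeting $S'$, together with $S'\cap V(\gam)$; $Z$ = the symmetric set for $\gam'^\T$; $Y$ = the remaining non-isolated vertices of $\tau$, together with $S'\cap V(\tau)$ and the glued boundary $V_\gam\cap U_{\gam'^\T}$. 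One checks that $X$ separates $\gam$, $Z$ separates $\gam'^\T$, and $Y$ separates $\tau$ (giving $w(X)\ge w(U_\gam)$, $w(Z)\ge w(V_{\gam'^\T})$, $w(Y)\ge w(S)+\beta x_\cap$, where $x_\cap$ is the number of edges removed inside $U_\tau\cap V_\tau$), and then verifies a pointwise vertex inequality $\one_{v\in X}+\one_{v\in Y}+\one_{v\in Z}\le \one_{v\in S'}+\one_{v\in V_\gam}+\one_{v\in U_{\gam'^\T}}-\one_{v\in\Iso(\tau_P)\setminus\Iso(\tau)}$ and the reverse edge inequality $\one_{e\in E(X)}+\one_{e\in E(Y)}+\one_{e\in E(Z)}\ge \one_{e\in E(S')}+\one_{e\in E(V_\gam)}+\one_{e\in E(U_{\gam'^\T})}$ by casing on the position of $v$ (resp.\ $e$) relative to $S'$, $V_\gam$ and $U_{\gam'^\T}$.

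Finally I would combine: multiply the $w(\tau_P)$ lower bound by $\tfrac12-\al$, multiply the intersection tradeoff lemma by $\tfrac12$, substitute both together with the coefficient ratio into the subtracted \cref{cor:findingandusingslack} formula, and watch the $\tfrac\beta2$-per-edge and $\al\beta$-per-edge contributions cancel exactly as in \cref{thm:PMVSslack}, leaving precisely the $(1-\al)$ vertex-separator term, the $(\gam-\al\beta)$ edge difference, and $(\gam-\al\beta)\cdot\text{edgereduction}$. The main obstacle I anticipate is the intersection tradeoff lemma itself: unlike the PMVS setting, intersections can occur simultaneously on the left, on the right, and inside $U_\tau\cap V_\tau$; contracting vertices can create multi-edges at the boundary that then vanish under linearization; and a vertex non-isolated in $\tau$ can become isolated in $\tau_P$. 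Consequently the vertex- and edge-counting case analyses that must both be tight are strictly larger than those in the proof of \cref{lem:PMVStradeofflemma}, and making the definitions of $X$, $Y$, $Z$ interact correctly with $V_{intersected}$ being folded into the separation target is the delicate part.
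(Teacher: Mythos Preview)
Your overall strategy is exactly the paper's: expand via \cref{cor:findingandusingslack}, compute $c_{\tau_P}^\approx/c_\tau^\approx$, prove a $w(\tau_P)$ lower bound and an intersection tradeoff lemma, then combine. However, three details of your plan do not match what is actually needed, and each would cause the cancellation you describe to fail.

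\textbf{The coefficient ratio is not simply $n^{-\gamma}$ per vanishing edge.} You must keep two separate quantities,
\[
\text{edgereduction}=\sum_{e}\bigl(\mult(e)-1+\one_{e\text{ vanishes}}\bigr),\qquad
\text{linearization}=\sum_{e}\bigl(\mult(e)-1-\one_{e\text{ vanishes}}\bigr),
\]
and the ratio is $\log_n(c_{\tau_P}^\approx/c_\tau^\approx)=(\alpha-1)(\#\text{intersections})+(\tfrac{\beta}{2}-\gamma)\,\text{edgereduction}+\tfrac{\beta}{2}\,\text{linearization}$. The number of intersections enters with weight $1$, not $\beta$; this is essential because it must later cancel against the $(\tfrac12-\alpha)(-\#\text{intersections})$ coming from the $w(\tau_P)$ bound and against the $-\tfrac12(\#\text{intersections})$ from the tradeoff lemma.

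\textbf{The tradeoff lemma uses $w(U_\gamma),w(V_{\gamma'^\T})$, not $w(U_{\tau_P}),w(V_{\tau_P})$.} These agree as vertex sets but differ in edges when multi-edges inside $U_\gamma$ or $V_{\gamma'^\T}$ vanish under linearization. The correct statement is
\[
w(U_\gamma)+w(V_{\gamma'^\T})-w(S')+|\Iso(\tau_P)|\;\le\;w(U_\tau)+w(V_\tau)-w(S)+|\Iso(\tau)|+(\#\text{intersections})-\beta\,\text{linearization}-\beta(\#\text{vanishing in }U_\gamma\cup V_{\gamma'^\T}),
\]
and correspondingly the slack formula must be rewritten with an \emph{extra} separate term $-\tfrac{1-\alpha}{2}\bigl(w(U_{\tau_P})+w(V_{\tau_P})-w(U_\gamma)-w(V_{\gamma'^\T})\bigr)$, handled at the very end via $w(U_{\tau_P})-w(U_\gamma)=\beta(\#\text{vanishing in }U_\gamma)-\beta(\#\text{added to }U_\gamma)\le \beta(\#\text{vanishing in }U_\gamma)$.

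\textbf{The $X,Y,Z$ construction must be upgraded.} Work with the pre-image $S'_{pre}\subseteq V(\gamma\circ\tau\circ\gamma'^\T)$ of $S'$; define reachability in $\gamma$ (resp.\ $\gamma'^\T$) along \emph{non-vanishing} edges only; restrict $X_0$ to vertices that are intersected or in $V_\gamma$; and let $Y$ pick up vertices of $\tau$ that lie in $V(\gamma)$ or $V(\gamma'^\T)$ \emph{via intersection}, not only via the glued boundary. With these definitions the per-vertex and per-edge case analyses go through, but the pointwise inequalities you wrote from the PMVS proof are not the right ones here (e.g.\ the edge case analysis must track vanishing edges inside $U_\gamma$ and $V_{\gamma'^\T}$ to produce the extra $-\beta$ terms on the RHS).
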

\begin{proof}
By \cref{cor:findingandusingslack},
\begin{align*}
    &\slack(\tau_P) - \slack(\tau)\\
    &=(\frac{1}{2} -\al)\left(w(\tau_P) - \frac{w(U_{\tau_P}) + w(V_{\tau_P})}{2} - w(\tau) + \frac{w(U_\tau) + w(V_\tau)}{2}\right) \\
    &+ (\gam - \al \beta)\left(|E(\tau_P)| - \frac{|E(U_{\tau_P})| + |E(V_{\tau_P})|}{2} - |E(\tau)| + \frac{|E(U_\tau)| + |E(V_\tau)|}{2}\right) \\
    &- \frac{1}{2}\left(\frac{w(U_{\tau_P}) + w(V_{\tau_P})}{2} - w(S') - \frac{w(U_\tau) + w(V_\tau)}{2} + w(S)\right) \\
    &- \frac{|\Iso(\tau_P)| - |\Iso(\tau)|}{2} - \log_n(c_{\tau_P}^\approx) + \log_n(c_\tau^\approx)\\
    &= (\frac{1}{2}-\al)\left(w(\tau_P) - w(\tau)\right) - \frac{1 - \alpha}{2}\left(w(U_{\tau_P}) + w(V_{\tau_P}) - w(U_{\gam}) - w(V_{{\gam'}^{\T}})\right)\\
    &+ \alpha \left(\frac{w(U_{\gam}) + w(V_{{\gam'}^{\T}}) - w(U_{\tau}) - w(V_{\tau})}{2}\right)\\
    &+ \frac{w(U_\tau) + w(V_\tau) - w(U_{\gam}) - w(V_{{\gam'}^{\T}}) + w(S') - w(S) + |\Iso(\tau)| - |\Iso(\tau_P)|}{2}\\
    & +(\gam - \al\beta)\left(|E(\tau_P)| - \frac{|E(U_{\tau_P})| + |E(V_{\tau_P})|}{2} - |E(\tau)| + \frac{|E(U_{\tau})| + |E(V_{\tau})|}{2}\right)- \log_n\left(\tfrac{c^\approx_{\tau_P}}{c^\approx_\tau}\right)
\end{align*}

We now analyze the different terms that appear.

\begin{claim}\label{claim:coefficient-intersection-term}
\begin{align*}
\frac{c^\approx_{\tau_P}}{c^\approx_{\tau}} & = n^{(\alpha - 1)\cdot (\# \text{ of intersections})} \left(\prod_{e \in E_{tot}(\tau_P)}{\left(n^{\frac{\beta}{2} - \gamma}\right)^{\mult(e) - 1 + \one_{e \text{ vanishes}}}\left(n^{\frac{\beta}{2}}\right)^{\mult(e) - 1 - \one_{e \text{ vanishes}}}}\right)
\end{align*}
\end{claim}
Note that we consider the iteration $e\in E_{tot}(\tau_P)$
to yield each multiedge only once.
\begin{proof}[Proof of \cref{claim:coefficient-intersection-term}]
Observe that for each intersection, we have one fewer vertex factor of $n^{(\alpha - 1)}$ in $\lambda_{\tau_P}$ so we need to add this factor to $c_{\tau_P}$. For each multiedge in $E_{tot}(\tau_P)$, we have $\mult(e) - 1 + \one_{e \text{ vanishes}}$ fewer factors of $n^{\frac{\beta}{2}-\gam}$ in $\lambda_{\tau_P}$ so these factors need to be added to $c_{\tau_P}$.  Finally, for each multiedge in $e \in E_{tot}(\tau_P)$, when we express it as a linear combination of $1$ and $\chi_{\{e\}}$, we gain $\mult(e) - 1 - \one_{e \text{ vanishes}}$ factors of $n^{\frac{\beta}{2}}$ (\cref{prop:linearization-coefficient}) which also need to be added to $c_{\tau_P}$. When a multiedge with an indicator is linearized, it never vanishes, and the coefficient is the same
as linearizing a multiedge without an indicator.
\end{proof}

For notational convenience, we define the following expressions:
\begin{align*}
\text{linearization} &= \sum_{e \in E_{tot}(\tau_P)} \mult(e) - 1 - \one_{e\text{ vanishes}}\,.
\end{align*}
\[
\text{edgereduction} = \sum_{e \in E_{tot}(\tau_P)} \mult(e) - 1 + \one_{e\text{ vanishes}}
\]
With these expressions, we can express $\log_n\left(\tfrac{c^\approx_{\tau_P}}{c^\approx_\tau}\right)$ as follows.
\begin{corollary}
\[
\log_n\left(\tfrac{c^\approx_{\tau_P}}{c^\approx_\tau}\right) = (\alpha - 1)\cdot (\# \text{ of intersections}) + (\frac{\beta}{2} - \gamma)(\text{edgereduction}) + \frac{\beta}{2}(\text{linearization})
\]
\end{corollary}

\begin{restatable}{claim}{tauPlowerbound}
\label{claim:tauPlowerbound}
\begin{align*}
w(\tau_P) &\geq w(U_{\gamma}) + w(V_{{\gam'}^{\T}}) - w(U_{\tau}) - w(V_{\tau}) + w(\tau) \\
&- (\# \text{ of intersections}) + {\beta}(\text{edgereduction})
\end{align*}
\end{restatable}
\begin{proof}[Proof of \cref{claim:tauPlowerbound}]
We first observe that 
\begin{align*}
w(\tau_P) &= w(\gamma) + w(\tau) + w({\gamma'}^\T) - w(U_{\tau}) - w(V_{\tau})\\
&- (\# \text{ of intersections}) + {\beta}(\text{edgereduction})\,.
\end{align*}
To see this, note that if there were no intersections then we would have that $w(\tau_P) = w(\gamma) + w(\tau) + w({\gamma'}^\T) - w(U_{\tau}) - w(V_{\tau})$. Each intersection reduces the number of vertices and thus decreases the weight of $\tau_P$ by $1$. The change in the number of edges increases the weight of $\tau_P$ by ${\beta}(\text{edgereduction})$.

We now observe that $w(\gamma) \geq w(U_{\gam})$ as otherwise $\gamma$ would be a separator in $\gam$ between $U_{\gamma}$ and $V_{\gamma} \cup V_{intersected}(\gamma)$ with smaller weight than $U_{\gamma}$. Following similar logic, $w(\gamma') \geq w(V_{{\gam'}^\T})$. Thus, we have that 
\begin{align*}
w(\tau_P) &\geq w(U_{\gamma}) + w(V_{{\gamma'}^\T}) - w(U_{\tau}) - w(V_{\tau}) + w(\tau) \\
&- (\# \text{ of intersections}) + {\beta}(\text{edgereduction})
\end{align*}
as needed.
\end{proof}

The next lemma is an intersection tradeoff lemma to be proven in the next sub-section.
\begin{restatable}[Intersection tradeoff lemma]{lemma}{intersectiontradeofflemma}
\label{lem:intersectiontradeofflemma}
Given is a shape $\tau$, left shapes $\gam, \gam'$, an intersection pattern $P \in \calP_{\gamma^-, \tau, (\gamma'^\T)^-}$ such that the following structural property holds:
\begin{quote}
    $U_\gamma$ is the leftmost SMVS of $U_\gam$ and $V_\gam \cup V_{intersected}(\gam)$, and $V_{\gam'^\T}$ is the rightmost SMVS of $U_{\gam'^\T} \cup V_{intersected}(\gam'^\T)$ and $V_{\gam'^\T}$.
\end{quote}
Let $S$ be an SMVS of $\tau$, let $\tau_P$ be the shape resulting from $P$ followed by linearization, and let $S'$ be an SMVS of $\tau_P$. Then,
\begin{align*}
w(U_{\gam}) + w(V_{{\gam'}^{\T}}) - w(S') +  |\Iso(\tau_P)| \leq \;& w(U_{\tau}) + w(V_{\tau}) - w(S) + |\Iso(\tau)| \\
&+(\# \text{ of intersections}) - \beta\left(\text{linearization}\right) \\
&-\beta(\# \text{ of vanishing edges in } U_{\gam}) \\
&-\beta(\# \text{ of vanishing edges in } V_{{\gam'}^\T})
\end{align*}
\end{restatable}
Multiplying \cref{claim:tauPlowerbound} by $\frac{1}{2} - \alpha $ and multiplying \cref{lem:intersectiontradeofflemma} by $\frac{1}{2}$, we have that
\begin{enumerate}
    \item \begin{align*}
       &(\frac{1}{2}-\al)\left(w(\tau_P) - w(\tau)\right) \geq \\
       &(\frac{1}{2} - \al)\left(w(U_{\gamma}) + w(V_{{\gamma'}^\T}) - w(U_{\tau}) - w(V_{\tau}) - (\# \text{ of intersections})\right) \\
       &+(\frac{1}{2} - \al){\beta}(\text{edgereduction})
    \end{align*}
    \item 
    \begin{align*}
    &\frac{w(U_{\tau}) + w(V_{\tau}) - w(U_{\tau_P}) - w(V_{\tau_P}) + w(S') - w(S) + |\Iso(\tau)| - |\Iso(\tau_P)|}{2} \\
    &\geq \frac{-1}{2}(\# \text{ of intersections}) +\frac{\beta}{2}\left(\text{linearization}\right) \\
    &+\frac{\beta}{2}((\# \text{ of vanishing edges in } U_{\gam}) + (\# \text{ of vanishing edges in } V_{{\gam'}^\T}))
    \end{align*}\end{enumerate}
Using these equations in the formula above, we have that 
\begingroup
\allowdisplaybreaks
\begin{align*}
    \slack(\tau_P)& - \slack(\tau)\\
    \geq\;& (\frac{1}{2}-\al)\left(w(U_{\gam}) +w(V_{{\gam'}^\T}) - w(U_{\tau}) - w(V_\tau) - (\# \text{ of intersections})\right)\\
    & +(\frac{1}{2} - \al){\beta}\left(\text{edgereduction}\right) - \frac{1 - \alpha}{2}\left(w(U_{\tau_P}) + w(V_{\tau_P}) - w(U_{\gam}) - w(V_{{\gam'}^{\T}})\right) \\
    &+ \alpha \left(\frac{w(U_{\gam}) + w(V_{{\gam'}^{\T}}) - w(U_{\tau}) - w(V_{\tau})}{2}\right)\\
    & + \frac{\beta}{2} \left(\text{linearization}\right) - \frac{1}{2}(\# \text{ of intersections})\\
    &+\frac{\beta}{2}((\# \text{ of vanishing edges in } U_{\gam}) + (\# \text{ of vanishing edges in } V_{{\gam'}^\T}))\\
    & +(\gam - \al\beta)\left(|E(\tau_P)| - \frac{|E(U_{\tau_P})| + |E(V_{\tau_P})|}{2} - |E(\tau)| + \frac{|E(U_{\tau})| + |E(V_{\tau})|}{2}\right)\\
    &- (\alpha - 1)(\# \text{ of intersections}) - \frac{\beta}{2}\left(\text{linearization}\right) \\
    &- (\frac{\beta}{2} - \gam)(\text{edgereduction})\\
=\; & \frac{1-\al}{2}\left(w(U_{\gam}) +w(V_{{\gam'}^\T}) - w(U_{\tau}) - w(V_\tau)\right)\\
    &- \frac{1 - \alpha}{2}\left(w(U_{\tau_P}) + w(V_{\tau_P}) - w(U_{\gam}) - w(V_{{\gam'}^{\T}})\right) + (\gam - {\alpha}\beta)\left(\text{edgereduction}\right)\\
    &+\frac{\beta}{2}((\# \text{ of vanishing edges in } U_{\gam}) + (\# \text{ of vanishing edges in } V_{{\gam'}^\T}))\\
    &+(\gam - \al\beta)\left(|E(\tau_P)| - \frac{|E(U_{\tau_P})| + |E(V_{\tau_P})|}{2} - |E(\tau)| + \frac{|E(U_{\tau})| + |E(V_{\tau})|}{2}\right)
\end{align*}
\endgroup
We now use that
    \begin{align*}
    &w(U_{\tau_P}) + w(V_{\tau_P}) - w(U_{\gam}) - w(V_{{\gam'}^\T})\\
    =\;& \beta(\# \text{ of vanishing edges in } U_{\gam}) + \beta(\# \text{ of vanishing edges in } V_{{\gam'}^\T})\\
    &-\beta(\# \text{ of edges added to } U_{\gam}) - \beta(\# \text{ of edges added to } V_{{\gam'}^\T})\\
    \leq\;& \beta(\# \text{ of vanishing edges in } U_{\gam}) + \beta(\# \text{ of vanishing edges in } V_{{\gam'}^\T})
    \end{align*}
Plugging this in, we have that
\begin{align*}
&\slack(\tau_P) - \slack(\tau) \\
\geq\;&\frac{1 - \al}{2}\left(w(U_{\gam}) + w(V_{{\gam'}^\T}) - w(U_{\tau}) - w(V_{{\tau}})\right) \\
&+(\gam - \al\beta)\left(|E(\tau_P)| - \frac{|E(U_{\tau_P})| + |E(V_{\tau_P})|}{2} - |E(\tau)| + \frac{|E(U_{\tau})| + |E(V_{\tau})|}{2}\right)\\
&- \frac{\beta-\alpha \beta}{2} \left((\text{\# of vanishing edges in }U_\gam) + (\text{\# of vanishing edges in }V_{\gam'^\T})\right)\\
&+ (\gam - \al\beta)\left(\text{edgereduction}\right)\\
&+\frac{\beta}{2}\left((\text{\# of vanishing edges in }U_\gam) + (\text{\# of vanishing edges in }V_{\gam'^\T})\right)\\
=\;&\frac{1 - \al}{2}\left(w(U_{\gam}) + w(V_{{\gam'}^\T}) - w(U_{\tau}) - w(V_{{\tau}})\right) \\
&+(\gam - \al\beta)\left(|E(\tau_P)| - \frac{|E(U_{\tau_P})| + |E(V_{\tau_P})|}{2} - |E(\tau)| + \frac{|E(U_{\tau})| + |E(V_{\tau})|}{2}\right)\\
&+ (\gam - \al\beta)\cdot\text{edgereduction}\\
&+\frac{\al\beta}{2}\left((\text{\# of vanishing edges in }U_\gam) + (\text{\# of vanishing edges in }V_{\gam'^\T})\right)
\end{align*}
as needed.
\end{proof}

\begin{corollary}\label{cor:intersection-slack}
\begin{align*} 
\slack(&\tau_P) - \slack(\tau) \\
&\geq\;(\gam-\al\beta)\cdot\left(|E_{tot}(\tau_P)| - \frac{|E(U_{\tau_P})| +|E(V_{\tau_P})| }{2} - |E_{tot}(\tau)| + \frac{|E(U_{\tau})| +|E(V_{\tau})| }{2} \right)\\
&+\frac{1-\alpha}{2}\left(\text{\# of vertices in }(U_{\tau_P} \cup V_{\tau_P}) \setminus (U_{\tau_P} \cap V_{\tau_P})\text{ not incident to }E_{tot}(\tau_P)\right)\,.
\end{align*}
\end{corollary}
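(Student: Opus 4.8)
The plan is to deduce Corollary~\ref{cor:intersection-slack} from Theorem~\ref{thm:intersectiontermslack} by (a) rewriting the $(\gam-\al\beta)$-weighted edge terms of the theorem in terms of $E_{tot}$, and (b) extracting the extra $\frac{1-\al}{2}$ term out of the vertex--weight term $(1-\al)\cdot\tfrac12\big(w(U_\gam)+w(V_{\gam'^\T})-w(U_\tau)-w(V_\tau)\big)$.

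For (a), first I would note the bookkeeping identity $|E_{tot}(\tau_P)| = |E(\tau_P)| + \text{edgereduction}$. Indeed $E_{tot}(\tau_P)$ is the edge multiset of $\gam\circ\tau\circ\gam'^\T$ before linearization, so it has $\sum_e \mult(e)$ members (sum over distinct edges $e$), while $|E(\tau_P)|$ counts the distinct non-vanishing edges; subtracting and comparing with the definition $\text{edgereduction}=\sum_e(\mult(e)-1+\one_{e\text{ vanishes}})$ gives the identity. Together with $|E_{tot}(\tau)|\ge|E(\tau)|$ (the total edge count only grows as edges are removed during the recursion) and $\gam-\al\beta>0$, this shows that the $(\gam-\al\beta)$-part of the theorem's bound is at least the $(\gam-\al\beta)$-part of the corollary's bound. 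So it remains to prove
\[ w(U_\gam)+w(V_{\gam'^\T})-w(U_\tau)-w(V_\tau)\;\ge\; \#\big\{v\in(U_{\tau_P}\cup V_{\tau_P})\setminus(U_{\tau_P}\cap V_{\tau_P}) : v\text{ not incident to }E_{tot}(\tau_P)\big\}\,. \]

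For (b) I would argue on the two sides symmetrically; on the left, let $I_L$ be the set of vertices of $U_{\tau_P}=U_\gam$ that do not lie in $V_{\tau_P}$ and are not incident to any edge of $E_{tot}(\tau_P)$, so in particular each $v\in I_L$ is isolated inside the shape $\gam$ (as $E(\gam)\subseteq E_{tot}(\tau_P)$). The crux is to show $U_\gam\setminus I_L$ is still a vertex separator of $A_{R_1}$ and $B_{R_1}$ in $R_1$: any $A_{R_1}$--$B_{R_1}$ path must hit $U_\gam=A'$ (which separates $A_{R_1}$ from $B_{R_1}\cup V_{intersected}(R_1)$), and taking its last vertex $v$ on $A'$, the remaining subpath lies entirely inside $\gam$, so if $v\in I_L$ its first outgoing edge would be incident to $v$ in $\gam$, a contradiction --- except in the degenerate case where that subpath is trivial, i.e.\ $v\in A'\cap B_{R_1}$, which must be ruled out (or absorbed) using the minimality of $A'$ and the convention that a degree-$0$ boundary vertex of the composition goes into $\Iso(\tau_P)$ rather than onto the boundary (\cref{def:isolated}). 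Granting the separator claim, since $R_1$ is a left ribbon whose unique SMVS is $B_{R_1}=U_\tau$ we get $w(U_\gam\setminus I_L)\ge w(U_\tau)$, and since the vertices of $I_L$ carry no edges inside $U_\gam$ we get $w(U_\gam)=w(U_\gam\setminus I_L)+|I_L|\ge w(U_\tau)+|I_L|$. The mirror-image argument gives $w(V_{\gam'^\T})\ge w(V_\tau)+|I_R|$, and adding the two, together with the fact that the count in the corollary is exactly $|I_L|+|I_R|$, completes the proof.

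The main obstacle is the separator claim in (b), and specifically the accounting for vertices that sit on more than one of the boundaries $A_{R_1},\,A',\,B_{R_1}=A_{R_2},\,B_{R_2}=A_{R_3},\,B'$ at once --- these are exactly the vertices that threaten either to be double-counted between the left and right halves of the count or to escape the ``delete from the separator'' step. I expect that handling them cleanly requires a short per-vertex/per-edge indicator-counting bookkeeping in the spirit of the $X,Y,Z$ analysis used to prove the PMVS tradeoff lemma (\cref{lem:PMVStradeofflemma}), after which everything reduces to routine algebra on the weight function $w$.
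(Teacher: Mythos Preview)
Your approach is the paper's. Two remarks. In (a), your identity $|E_{tot}(\tau_P)|=|E(\tau_P)|+\text{edgereduction}$ misreads $E_{tot}$, which tracks edges removed across \emph{all} rounds of the recursion, not just the current one (see the remark identifying $E_{tot}$ with $\gam_j\circ\cdots\circ\gam_1\circ\tau_0\circ\gam_1'^\T\circ\cdots\circ\gam_j'^\T$); the right relation is $|E_{tot}(\tau_P)|-|E_{tot}(\tau)|=|E(\tau_P)|+\text{edgereduction}-|E(\tau)|$, and with it the edge part of the corollary matches Theorem~\ref{thm:intersectiontermslack}'s as an \emph{equality}, so your auxiliary inequality $|E_{tot}(\tau)|\ge|E(\tau)|$ is not needed.

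In (b), the paper argues inside the left shape $\gam$ rather than in $R_1$: degree~$0$ in $E_{tot}(\tau_P)$ forces degree~$0$ in $\gam$, so removing those vertices from $U_\gam$ still separates $U_\gam$ from $V_\gam$ in $\gam$, and $\gam$ being a left shape (by the argument of \cref{lem:gam-left-shape}, which goes through verbatim here) gives $w(U_\gam\setminus I_L)\ge w(V_\gam)=w(U_\tau)$. That is the entire proof in the paper --- a single sentence, with no $X,Y,Z$-style bookkeeping --- so you are substantially overestimating what the boundary case $v\in A'\cap B_{R_1}$ requires, though you are right that the paper does not spell it out.
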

\begin{proof}
Following our slack calculation above, observe that
\[  (1-\alpha) \left(\frac{w(U_{\gam}) + w(V_{\gam'^\T}) - w(U_{\tau}) - w(V_{\tau})}{2}\right) \geq 0\]
since $U_{\gam}$ is a larger separator than $U_\tau = V_\gam$, as $\gam$ is a left shape (and likewise for $V_{\gam'^\T}$ and $V_\tau = U_{\gam'^\T}$).
Furthermore, this holds if we remove the degree-0 vertices from
$U_\gam$ or $V_{\gam'^\T}$, since $U_\gam$ remains a separator
without these vertices.
Therefore we may replace the right-hand side by
\[ \frac{1-\alpha}{2}\left(\text{\# of degree-0 vertices in }(U_\gam \cup V_{\gam'^\T}) \setminus (U_\gam \cap V_{\gam'^\T})\right)\,.\]
The edge factors follow immediately from the slack formula.
Thus the claim holds.
\end{proof}

\subsubsection{Proof of the intersection tradeoff lemma}

\intersectiontradeofflemma*
\begin{proof}
Let $S'_{pre}$ be the preimage of $S'$ before the intersections, as a subset of $V(\gamma \circ \tau \circ \gamma'^\T)$. We construct sets $X \subseteq V(\gamma)$, $Y \subseteq V(\tau)$, and $Z \subseteq V(\gamma'^\T)$ as follows.
\begin{enumerate}
\item We take $X_0$ to be the set of vertices in $V(\gamma) \setminus S'_{pre}$ which can be reached from $U_{\gamma}$ by a path of non-vanishing edges in $\gamma \setminus S'_{pre}$, are either intersected or are in $V_{\gamma}$, and are not isolated in $\tau_P$. We then take $X = X_0 \cup (S'_{pre} \cap V(\gamma))$.
\item We take $Y_{l}$ to be the set of non-isolated vertices in $V(\tau)$ which are also in $V(\gamma)$ (either because they are in $V_{\gamma} = U_{\tau}$ or because of an intersection) and which are not reachable from $U_{\gamma}$ by a path of non-vanishing edges in $\gamma \setminus S'_{pre}$. Similarly, we take $Y_{r}$ to be the set of non-isolated vertices in $V(\tau)$ which are also in $V({\gamma'}^\T)$ (either because they are in $V_{\tau} = U_{\gamma'^\T}$ or because of an intersection) and which are not reachable from $V_{{\gamma'}^\T}$ by a path of non-vanishing edges in ${\gamma'}^\T \setminus S'_{pre}$. We then take $Y = Y_l \cup Y_r \cup V_{common} \cup (S'_{pre} \cap V_{\tau})$ where $V_{common}$ is the set of vertices which appear in $\gamma$, $\tau$, and ${\gamma'}^\T$.
\item We take $Z_0$ to be the set of vertices in $V({\gamma'}^\T) \setminus S'_{pre}$ which can be reached from $V_{{\gamma'}^\T}$ by a path of non-vanishing edges in ${\gamma'}^\T \setminus S'_{pre}$, are either intersected or are in $U_{{\gamma'}^\T}$, and are not isolated in $\tau_P$. We take $Z_{extra}$ to be the set of vertices in $V({\gamma'}^\T) \setminus S'_{pre}$ which are also in $V(\gamma)$ and which are not reachable from $U_{\gamma}$ by a path of non-vanishing edges in $\gamma \setminus S'_{pre}$. We then take $Z = Z_0 \cup Z_{extra} \cup (S'_{pre} \cap V({\gamma'}^\T))$.
\end{enumerate}
\begin{claim}\label{claim:xyz-separators}\ 
\begin{enumerate}
\item $X$ separates $U_{\gamma}$ from $V_{\gamma} \cup \{\text{intersected vertices}\}$.
\item $Y$ separates $U_{\tau}$ from $V_{\tau}$
\item $Z$ separates $U_{\gamma'^\T} \cup \{\text{intersected vertices}\}$ from $V_{{\gamma'}^\T}$
\end{enumerate}
\end{claim}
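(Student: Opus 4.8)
The plan is to prove each of the three separation statements by contradiction, in each case extracting from an offending path a certificate that contradicts either the reachability conditions built into $X_0,Y_l,Y_r,Z_0,Z_{extra}$ or, for $Y$, the assumption that $S'$ separates $\tau_P$. I use throughout that the edges of $\tau_P$ are exactly the non-vanishing edges of $\gam\circ\tau\circ\gam'^\T$, that $S'_{pre}$ is the preimage of $S'$, and that $U_{\tau_P}=U_\gam$ and $V_{\tau_P}=V_{\gam'^\T}$.

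For part 1, I would suppose that a path $Q$ in $\gam$ runs from $U_\gam$ to a vertex $w\in V_\gam\cup V_{intersected}(\gam)$ while avoiding $X$, taken of minimum length so that $w$ is the only vertex of $Q$ in that target set; since $X\supseteq S'_{pre}\cap V(\gam)$ and $V(Q)\subseteq V(\gam)$, the path $Q$ also avoids $S'_{pre}$. Let $e=\{a,b\}$ be the vanishing edge of $Q$ nearest $U_\gam$ (with $a$ the nearer endpoint) if one exists, and put $w'=a$; otherwise put $w'=w$. In either case the prefix of $Q$ from $U_\gam$ to $w'$ uses only non-vanishing edges and avoids $S'_{pre}$, so $w'$ is reachable in the sense required by $X_0$; $w'$ is incident to a non-vanishing edge, hence non-isolated in $\tau_P$; and $w'\in V_\gam\cup V_{intersected}(\gam)$, either directly (if $w'=w$) or because a vanishing edge of $\gam$ can only vanish by coinciding, under the identifications of the intersection pattern, with an edge of $\tau$ or of $\gam'^\T$, which forces its endpoint $a$ to be intersected or to lie in $V_\gam$. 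Hence $w'\in X_0\subseteq X$, contradicting that $Q$ avoids $X$. Part 3 is symmetric, with $V_{\gam'^\T}$ in place of $U_\gam$, and with $Z_{extra}$ added exactly to absorb the vertices of $\gam'^\T$ that also lie in $\gam$ but are unreachable there; the slight asymmetry between the definitions of $X$ and $Z$ is what lets both one-sided separation problems be handled this way.

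For part 2 I would use the path-composition argument from the PMVS tradeoff lemma, adapted to intersections. Suppose $P_m$ is a path in $\tau$ from $u\in U_\tau$ to $v\in V_\tau$ avoiding $Y$; then $P_m$ avoids $S'_{pre}$, $V_{common}$, $Y_l$, and $Y_r$. Since $u\in U_\tau=V_\gam$ and $u\notin Y_l$, either $u$ is isolated in $\tau_P$ (a degenerate case, see below) or $u$ is reachable from $U_\gam$ by a path $P_\ell$ of non-vanishing edges in $\gam\setminus S'_{pre}$; symmetrically $v$ is reachable from $V_{\gam'^\T}$ by such a path $P_r$ in $\gam'^\T$. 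I would then convert $P_\ell\cdot P_m\cdot P_r$ into a walk in $\tau_P$ from $U_{\tau_P}$ to $V_{\tau_P}$ avoiding $S'$: non-vanishing edges of $P_m$ survive unchanged, and whenever $P_m$ traverses a vanishing edge $\{a,b\}$ --- whose endpoints, not lying in $Y$, must either both lie in $V(\gam)\cap V(\tau)$ and be reachable from $U_\gam$ in $\gam\setminus S'_{pre}$, or both lie in $V(\gam'^\T)\cap V(\tau)$ and be reachable from $V_{\gam'^\T}$, the remaining possibility placing them in $V_{common}\subseteq Y$ --- I reroute through the corresponding witness paths, e.g.\ $a\leadsto U_\gam\leadsto b$, all of whose edges are non-vanishing. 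The resulting walk avoids $S'$, contradicting that $S'$ is a vertex separator of $\tau_P$. The length-zero and ``isolated in $\tau_P$'' cases that show up in any of the three arguments only ever concern vertices isolated in $\tau_P$, which carry no weight and are tracked separately through the $|\Iso(\tau_P)|$ term, so they may be set aside here.

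The hard part will be precisely the vanishing-edge and isolated-vertex bookkeeping in parts 2 and 3: because $\gam$, $\tau$, and $\gam'^\T$ keep their vanishing edges while $\tau_P$ does not, I must check both that every reachability witness supplied by the definitions uses only edges surviving in $\tau_P$ and that a $Y$-avoiding path in $\tau$ can always be detoured around its vanishing edges inside $\gam$ or $\gam'^\T$. Getting the case analysis right --- which endpoint of a vanishing edge is forced into which of $X_0$, $Y_l$, $Y_r$, $Z_0$, $Z_{extra}$, and excluding ``mixed'' vanishing edges via $V_{common}$ --- is exactly where the restriction to non-vanishing paths and the exclusion of $\tau_P$-isolated vertices in those definitions pay off, and it is where the argument genuinely goes beyond the PMVS case.
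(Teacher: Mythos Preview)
Your arguments for parts 1 and 3 are correct and fill in what the paper leaves as ``follows from the definitions of $X$ and $Z$.'' One simplification you may have missed: your minimality assumption on $Q$ actually renders the vanishing-edge case vacuous, since \emph{both} endpoints of any vanishing $\gamma$-edge lie in $V_\gamma\cup V_{intersected}(\gamma)$ (the coinciding edge from $\tau$ or $\gamma'^\T$ forces both endpoints there), so a shortest $Q$ can contain no vanishing edge and $w'=w$ always.

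For part 2 your intuition is right but the rerouting step as written is imprecise: the detour ``$a\leadsto U_\gamma\leadsto b$'' is not a connected walk when the witness paths for $a$ and $b$ land at different vertices of $U_\gamma$. Your argument can be patched by phrasing everything in terms of the reachable set from $U_{\tau_P}$ in $\tau_P\setminus S'$ rather than a single walk, but the paper sidesteps the issue with a cleaner device. Instead of processing each vanishing edge of $P_m$, take $u$ to be the \emph{last} vertex of $P_m$ lying in $V(\gamma)$ and $v$ the \emph{first} subsequent vertex lying in $V(\gamma'^\T)$. On the segment $P_m[u{:}v]$, only $u$ is in $V(\gamma)$ and only $v$ is in $V(\gamma'^\T)$, so no edge there can coincide with an edge of $\gamma$ or $\gamma'^\T$---hence none vanish. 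Composing $P_\ell\cdot P_m[u{:}v]\cdot P_r$ already gives a single non-vanishing walk from $U_\gamma$ to $V_{\gamma'^\T}$ avoiding $S'_{pre}$, with no rerouting needed. This ``last-in-$\gamma$, next-in-$\gamma'^\T$'' choice dispatches in one stroke all the vanishing-edge and isolated-vertex bookkeeping you anticipate in your final paragraph.
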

\begin{proof}[Proof of \cref{claim:xyz-separators}]
Statements 1 and 3 follow from the definitions of $X$ and $Z$. For statement $2$, assume there is a path $P$ from $U_{\tau}$ to $V_{\tau}$ which does not intersect $Y$. Let $u$ be the last vertex in $P$ which is in $\gamma$ (either because $u$ is in $V_{\gamma} = U_{\tau}$ or because of an intersection) and let $v$ be the next vertex in $P$ which is in $\gamma'^\T$ (either because $v$ is in $V_{\tau} = U_{{\gamma'}^\T}$ or because of an intersection). Since $u \notin Y$, there is a path of non-vanishing edges in $\gamma \setminus S'_{pre}$ from $U_{\gamma}$ to $u$. Since $v \notin Y$, there is a path of non-vanishing edges in ${\gamma'}^\T \setminus S'_{pre}$ from $v$ to $V_{{\gamma'}^\T}$.

Now consider the part of $P$ between $u$ and $v$. For the vertices in this part, only $u$ is in $V(\gamma)$ and only $v$ is in $V({\gamma'}^\T)$, so no edges in this part vanish. Since no vertex of $P$ is in $Y$, none of these vertices are in $S'_{pre}$. Thus, we can compose these path segments to obtain a path of non-vanishing edges from $U_{\gamma}$ to $V_{\gamma}$ which does not intersect $S'$. This contradicts the fact that $S'$ is the SMVS of $\tau_P$.
\end{proof}
We conclude that $w(X) \geq w(U_{\gamma})$, $w(Y) \geq w(S)$, and $w(Z) \geq w(V_{{\gamma'}^\T})$ .
To complete the proof of the intersection tradeoff lemma, we now need to show that 
\begin{align*}
w(X) + w(Y) + w(Z) &\leq (\# \text{ of intersections}) - (|\Iso(\tau_P)| - |\Iso(\tau)|) \\
&- \beta\left(\sum_{e \in E_{tot}(\tau_P)}{(\mult(e) - 1 - \one_{e \text{ vanishes}})}\right) + w(U_{\tau}) + w(V_{\tau}) + w(S')\\
&-\beta((\# \text{ of vanishing edges in } U_{\gam}) + (\# \text{ of vanishing edges in } V_{{\gam'}^\T}))
\end{align*}
To show this, we consider the number of times vertices in $V(\tau_P)$ and edges in $E_{tot}(\tau_P)$ appear on both sides.
\begin{enumerate}
\item Vertices $u$ in $S'$ appear $\one_{u \in V(\gamma)} + \one_{u \in V(\tau)} + \one_{u \in V({\gamma'}^\T)}$ times on the left hand side and $1 + \one_{u \in U_{\tau}} + \one_{v \in V_{\tau}} + (\# \text{ of intersections for } u)$ times on the right hand side. It is not hard to check that these two expressions are equal.
\item Vertices $u$ which are not in $S'$ and are not isolated appear $(\# \text{ of intersections for } u) + \one_{u \in U_{\tau}} + \one_{v \in V_{\tau}}$ times on the right hand side and at most $(\# \text{ of intersections for } u) + \one_{u \in U_{\tau}} + \one_{v \in V_{\tau}} = \one_{u \in V(\gamma)} + \one_{u \in V(\tau)} + \one_{u \in V({\gamma'}^\T)} - 1$ times on the left hand side. To see this, observe that $u$ cannot be in both $X$ and $Y_{l}$, cannot be in both $Y_{r}$ and $Z_0$, and cannot be in both $X$ and $Z$.
\item Vertices $u$ which are in $\Iso(\tau_P) \setminus \Iso(\tau)$ (and thus not in $S'$) cannot appear in $X$ or $Z$ and appear in $Y$ if and only if they appear in $\gamma$, $\tau$, and ${\gamma'}^\T$. Thus, $u$ appears on the left hand side $\one_{u \in V(\gamma)} + \one_{u \in V(\tau)} + \one_{u \in V({\gamma'}^\T)} - 2$ times and appears on the right hand side $(\# \text{ of intersections for } u) + \one_{u \in U_{\tau}} + \one_{v \in V_{\tau}} - 1 = \one_{u \in V(\gamma)} + \one_{u \in V(\tau)} + \one_{u \in V({\gamma'}^\T)} - 2$ times.
\item Edges $e$ between two vertices in $S'$ appear $\one_{e \in E(\gamma)} + \one_{e \in E(\tau)} + \one_{e \in E({\gamma'}^\T)}$ times on the left hand side and 
\begin{align*}
&\one_{e \text{ does not vanish}} + \one_{e \in E(U_{\tau})} + \one_{e \in E(V_{\tau})} + (\mult(e) - 1 - \one_{e \text{ vanishes}}) \\
&+ \one_{e \text{ vanishes from } U_{\gam}} + \one_{e \text{ vanishes from } V_{{\gam'}^\T}} \\
&= \one_{e \in E(\gamma)} + \one_{e \in E(\tau)} + \one_{e \in E({\gamma'}^\T)} - 2\cdot\one_{e \text{ vanishes}} + \one_{e \text{ vanishes from } U_{\gam}} + \one_{e \text{ vanishes from } V_{{\gam'}^\T}} \\
&\leq \one_{e \in E(\gamma)} + \one_{e \in E(\tau)} + \one_{e \in E({\gamma'}^\T)}
\end{align*}
times on the right hand side.
\item Non-vanishing edges $e$ which are not between two vertices in $S'$ appear $\one_{e \in E(U_{\tau})} + \one_{e \in E(V_{\tau})} + \mult(e) - 1$ times on the right hand side and at least 
$\one_{e \in E(\gamma)} + \one_{e \in E(\tau)} + \one_{e \in E({\gamma'}^\T)} - 1 = \one_{e \in E(U_{\tau})} + \one_{e \in E(V_{\tau})} + \mult(e) - 1$ times on the left hand side. To see this, observe that if $e$ appears in both $E(\gamma)$ and $E(\tau)$ then either $e \in E(X)$ or $e \in E(Y)$ (depending on whether the endpoints of $e$ are reachable from $U_{\gamma}$). Similarly, if $e$ appears in both $E(\tau)$ and $E({\gamma'}^\T)$ then either $e \in E(Y)$ or $e \in E(Z)$. Finally, if $e$ appears in both $E(\gamma)$ and $E({\gamma'}^\T)$ then either $e \in E(X)$ or $e \in E(Z)$
\item Vanishing edges $e$ which are not between two vertices in $S'$ appear
\begin{align*}
&\one_{e \in E(U_{\tau})} + \one_{e \in E(V_{\tau})} + \mult(e) - 2 + \one_{e \text{ vanishes from } U_{\gam}} + \one_{e \text{ vanishes from } V_{{\gam'}^\T}}\\
&= \one_{e \in E(\gamma)} + \one_{e \in E(\tau)} + \one_{e \in E({\gamma'}^\T)} - 2 + \one_{e \text{ vanishes from } U_{\gam}} + \one_{e \text{ vanishes from } V_{{\gam'}^\T}}
\end{align*}
times on the right hand side and at least 
\[
\one_{e \in E(\gamma)} + \one_{e \in E(\tau)} + \one_{e \in E({\gamma'}^\T)} - 2 + \one_{e \text{ vanishes from } U_{\gam}} + \one_{e \text{ vanishes from } V_{{\gam'}^\T}}
\]
times on the left hand side. To see this, we make the following observations:
\begin{enumerate}
    \item $\one_{e \in E(Y)} \geq \one_{e \in E(\gamma)} + \one_{e \in E(\tau)} + \one_{e \in E({\gamma'}^\T)} - 2$ as if $e$ appears in $E(\gamma)$, $E(\tau)$, and $E({\gamma'}^\T)$ then $e \in E(Y)$.
    \item $\one_{e \in E(X)} \geq \one_{e \text{ vanishes from } U_{\gam}}$ as if $e$ vanishes from $U_{\gam}$ then $e \in E(X)$.
    \item $\one_{e \in E(Z)} \geq \one_{e \text{ vanishes from } V_{{\gam'}^\T}}$ as if $e$ vanishes from $V_{{\gam'}^\T}$ then $e \in E(Z)$.
\end{enumerate}
\end{enumerate}
\end{proof}

\subsection{Slack for Removing Middle Edge Indicators}

We analyze the slack for the \textbf{Removing middle edge indicators operation} after the \textbf{Intersection term decomposition operation}.
To do this,
we may imagine that an edge removed during the \textbf{Removing middle edge indicators operation} was removed immediately before modifying the sets $U_\tau$ and $V_\tau$;
applying the \textbf{Intersection term decomposition operation} will result in the same shape whether the edge
is removed before or after the operation.\footnote{Formally, this requires the following check. When an edge indicator moves out of $U_\tau \cup V_\tau$ and into the middle, this means that the SMVS no longer includes the edge. Therefore, whether or not the edge is present is not affecting the calculation of the SMVS, and therefore the presence of the edge is not affecting the recursion.}
\footnote{Edges removed by the \textbf{Removing middle edge indicators operation} in the \textbf{Finding PMVS subroutine}
can be analyzed using either \cref{thm:adding-indicators-slack} or the analysis
in \cref{sec:pmvs-slack}.}

\begin{theorem}\label{thm:adding-indicators-slack}
Let $R_2 \to R_2'$ be a ribbon that undergoes the \textbf{Adding left and right indicators operation}.
Let $\tau$ and $\tau'$ be their respective shapes. Then
\[
\slack(\tau') - \slack(\tau) \geq \frac{\gam}{2}\left(x + x_{\cap}\right)
\]
where $x$ is the total number of removed edges, and $x_\cap$ is the number of edges removed from $U_\tau \cap V_\tau$.
\end{theorem}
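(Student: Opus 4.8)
The plan is to read $\slack(\tau)$ and $\slack(\tau')$ off the combinatorial formula in \cref{cor:findingandusingslack} and compare its five summands, exploiting that the \textbf{Adding left and right indicators operation} only deletes edges lying in $E(U_\tau)\cup E(V_\tau)$. Write $x$ for the total number of deleted edges and $x_\cap$ for the number of deleted edges with both endpoints in $U_\tau\cap V_\tau$ (equivalently, in $E(U_\tau)\cap E(V_\tau)$), so the remaining $x-x_\cap$ deleted edges each lie in exactly one of $E(U_\tau),E(V_\tau)$. Since no vertex is touched, a direct count gives $w(\tau')=w(\tau)+\beta x$, $\;w(U_{\tau'})+w(V_{\tau'})=w(U_\tau)+w(V_\tau)+\beta(x+x_\cap)$, $\;|E(\tau')|=|E(\tau)|-x$, and $\;|E(U_{\tau'})|+|E(V_{\tau'})|=|E(U_\tau)|+|E(V_\tau)|-(x+x_\cap)$. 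Moreover $|\Iso(\tau')|=|\Iso(\tau)|$, because every deleted edge is incident only to vertices of $U_\tau\cup V_\tau$ and hence to no body vertex; and $c^\approx_{\tau'}/c^\approx_{\tau}=n^{-\gamma x}$ exactly as in \cref{claim:coefficient-edge-removal} — each deleted edge contributes the substitution coefficient $\sqrt{p/(1-p)}=n^{-\beta/2}$ together with the factor $n^{\beta/2-\gamma}$ shifted out of $\lambda$ when the edge leaves the shape.

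The one term of \cref{cor:findingandusingslack} that resists the routine bookkeeping is $-\tfrac12\big(\tfrac{w(U_\tau)+w(V_\tau)}{2}-w(S)\big)$, since I need to relate $w(S')$ to $w(S)$ for SMVSs $S,S'$ of $\tau,\tau'$; this is the main obstacle. The key point is that deleting an edge internal to $U_\tau$ or to $V_\tau$ does not change which vertex sets separate $U_\tau$ from $V_\tau$: given any $U_\tau$-$V_\tau$ path in $\tau$, its segment between its last vertex in $U_\tau$ and its first subsequent vertex in $V_\tau$ uses no deleted edge (such an edge would force a later vertex of the segment into $U_\tau$ or an earlier one into $V_\tau$) and avoids exactly the same vertices, so it is a path in $\tau'$ with the same separator-avoidance; hence every separator of $\tau'$ is a separator of $\tau$. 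Since in addition $U_\tau\cap V_\tau$ lies in every separator, each of the $x_\cap$ deleted edges lies inside every separator $S''$ of $\tau'$, so the weight of $S''$ computed in $\tau'$ is at least its weight computed in $\tau$ plus $\beta x_\cap$, which is in turn at least $w(S)+\beta x_\cap$; minimizing over $S''$ yields $w(S')\ge w(S)+\beta x_\cap$.

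Finally I would plug everything into $\slack(\tau')-\slack(\tau)$. The $(\tfrac12-\al)$-summand changes by $(\tfrac12-\al)\tfrac{\beta(x-x_\cap)}{2}$, the $(\gamma-\al\beta)$-summand by $-(\gamma-\al\beta)\tfrac{x-x_\cap}{2}$, the separator summand by at least $-\tfrac{\beta(x+x_\cap)}{4}+\tfrac{\beta x_\cap}{2}$ (using the inequality just proved), the $\Iso$-summand by $0$, and the $-\log_n c^\approx$-summand by $+\gamma x$. Adding these, the $\al\beta$ contributions cancel, the remaining $\beta$ contributions cancel because $\tfrac{\beta(x-x_\cap)}{4}-\tfrac{\beta(x+x_\cap)}{4}+\tfrac{\beta x_\cap}{2}=0$, and the $\gamma$ contributions combine to $\gamma x-\gamma\tfrac{x-x_\cap}{2}=\tfrac{\gamma}{2}(x+x_\cap)$, which is the claimed bound. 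The same computation applies verbatim to an edge deleted by the \textbf{Removing middle edge indicators operation}, which per the convention of this subsection is treated as deleted from $U_\tau\cup V_\tau$ immediately before $U_\tau,V_\tau$ are modified.
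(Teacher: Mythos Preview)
Your proof is correct and follows essentially the same approach as the paper: both compute each term in the slack formula of \cref{cor:findingandusingslack}, establish $w(S')\ge w(S)+\beta x_\cap$ via the fact that every separator of $\tau'$ is a separator of $\tau$ and contains $U_\tau\cap V_\tau$, and then verify the algebraic cancellation yielding $\tfrac{\gamma}{2}(x+x_\cap)$. Your write-up is somewhat more explicit than the paper's---you justify why deleting edges inside $U_\tau$ or $V_\tau$ preserves the separator collection and why $|\Iso|$ is unchanged, whereas the paper asserts these without argument---but the structure and all key steps coincide.
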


\begin{proof}
By \cref{cor:findingandusingslack},
\begin{align*}
    &\slack(\tau') - \slack(\tau)\\
    &= (\frac{1}{2}-\al)\left(w(\tau') - w(\tau)\right)
    - (1-\alpha) \left(\frac{w(U_{\tau'}) + w(V_{\tau'}) - w(U_{\tau}) - w(V_{\tau})}{2}\right)\\
    & +(\gam - \al\beta)\left(|E(\tau')| - \frac{|E(U_{\tau'})| + |E(V_{\tau'})|}{2} - |E(\tau)| + \frac{|E(U_{\tau})| + |E(V_{\tau})|}{2}\right)\\
    &+\frac{w(S') - w(S)}{2} - \frac{|\Iso(\tau')| - |\Iso(\tau)|}{2} - \log_n\left(\tfrac{c^\approx_{\tau'}}{c^\approx_\tau}\right)\,.
\end{align*}
We now make the following observations:
\begin{enumerate}
    \item For all vertex sets $X \subseteq V(\tau)$ equal to $X' \subseteq V(\tau')$,
we have
\[
    w(X') - w(X) = \beta \left(\text{\# of edges removed from }X\right)\,.
\]
In particular, $w(\tau') - w(\tau) = {\beta}x$, $w(U_{\tau'}) - w(U_{\tau}) = {\beta}\left(\text{\# of edges removed from }U_{\tau}\right)$, and $w(V_{\tau'}) - w(V_{\tau}) = {\beta}\left(\text{\# of edges removed from }V_{\tau}\right)$.
    \item $\log_n(c_{\tau'}^\approx) = \log_n(c_{\tau}^\approx) - {\gam}x$. Similarly to \cref{claim:coefficient-edge-removal},
    when the edge indicator is replaced by a constant term, we get a factor of magnitude  $n^{-\frac{\beta}{2}}$ from the update equation. Furthermore, we get a factor of $n^{\frac{\beta}{2} - \gamma}$ shifted from $\lambda_{\tau}$ to $c_{\tau'}$. Multiplying these factors together gives a factor of magnitude $n^{-\gamma}$ per removed edge.
    \item 
    \begin{align*}
        &|E(\tau')| - \frac{|E(U_{\tau'})| + |E(V_{\tau'})|}{2} - |E(\tau)| + \frac{|E(U_{\tau})| + |E(V_{\tau})|}{2} \\
        &= -\frac{1}{2}((\# \text { of edges removed from } U_{\tau} \setminus V_{\tau}) + (\# \text { of edges removed from } V_{\tau} \setminus U_{\tau})) \\
        &= -\frac{1}{2}(x - x_{\cap}) 
        \end{align*}
    \item $w(S') \geq w(S) + {\beta}x_{\cap}$. To see this, let $S'_{pre}$ be $S'$ before the edges are removed. Since $S'_{pre}$ is also a separator for $\tau$, $w(S'_{pre}) \geq w(S)$. Since all separators for $\tau$ contain $U_{\tau} \cap V_{\tau}$, removing the edges increases the weight of all separators for $\tau$ by at least ${\beta}x_{\cap}$ which implies that $w(S') \geq w(S'_{pre}) + {\beta}x_{\cap}$. Putting these pieces together, $w(S') \geq w(S) + {\beta}x_{\cap}$, as needed.
    \item $\Iso(\tau') = \Iso(\tau)$.
\end{enumerate}
Putting these pieces together, we have that 
\begingroup
\allowdisplaybreaks
\begin{align*}
    \slack(\tau') - \slack(\tau) &\geq (\frac{1}{2}-\al){\beta}x - \frac{(1-\alpha)\beta}{2} (x + x_{\cap}) -\frac{\gam - \al\beta}{2}\left(x - x_{\cap}\right) + \frac{\beta}{2}x_{\cap} + {\gamma}x\\
    &= \frac{\gamma}{2}(x + x_{\cap})\,.
\end{align*}
\endgroup
as needed.
\end{proof}

\subsection{Final slack lower bound}

\slackTheorem*
\begin{proof}
Examining
\cref{cor:pmvs-slack}, \cref{cor:intersection-slack}, \cref{thm:adding-indicators-slack},
\begin{align*}
    \slack(&\tau) \geq \frac{\gam - \al \beta}{2}\left(|E_{tot}(\tau)| - \frac{|E(U_\tau)| + |E(V_\tau)|}{2}\right)\\
+&\frac{1-\alpha}{2}\left(\text{\# of vertices in }(U_{\tau} \cup V_{\tau}) \setminus (U_{\tau} \cap V_{\tau})\text{ not incident to }E_{tot}(\tau)\right)\,.
\end{align*}
Note that the latter term is initially zero and after the \textbf{Finding PMVS subroutine}, since otherwise a degree-0 vertex in $(U_{\tau'} \cup V_{\tau'}) \setminus (U_{\tau'} \cap V_{\tau'})$ could be removed
to reduce the size of the separator.

Every vertex in $V_{tot}(\tau) \setminus (U_\tau \cup V_\tau)$ is not isolated and so is incident to an edge in $E_{tot}(\tau) \setminus (E(U_\tau) \cup E(V_\tau))$. On the other hand, the vertices of $U_\tau \cup V_\tau$
which are not incident to an edge of $E_{tot}(\tau)$ are accounted
for by the second term. In summary,
\begin{align*}
    &\frac{\gam - \al \beta}{2}\left(|E_{tot}(\tau)| - \frac{|E(U_\tau)| + |E(V_\tau)|}{2}\right)\\
&+\frac{1-\alpha}{2}\left(\text{\# of vertices in }(U_{\tau} \cup V_{\tau}) \setminus (U_{\tau} \cap V_{\tau})\text{ not incident to }E_{tot}(\tau)\right)\\
\geq \;& \frac{\gam - \al \beta}{4}\left(|E_{tot}(\tau)| - \frac{|E(U_\tau)| + |E(V_\tau)|}{2}\right)\\
&+\min\left\{1-\alpha, \frac{\gam - \al\beta}{8}\right\}\left(|V_{tot}(\tau)| - \frac{|U_\tau| + |V_\tau|}{2}\right)
\end{align*}
as needed.
\end{proof}

\section{Conclusion}

In this work, we showed Sum-of-Squares lower bounds for Densest $k$-Subgraph.
Our results lend strength to the conjecture that Densest $k$-Subgraph is truly a hard problem in the predicted ``hard'' parameter regime.
Our results are in line with the log-density framework 
for Densest-$k$-Subgraph, complementing the extraordinary work of \cite{BCCFV10} from over a decade ago.

Our work provides a formal lower bound against a concrete class of algorithms for Densest $k$-Subgraph. For the optimistic algorithm designer that wishes to solve Densest $k$-Subgraph, what kind of algorithms could circumvent our lower bound?
First, one could try to modify the constraints or objective of the semi-definite program.
For example, ``mismatching'' the size of the hidden subgraph may be helpful for the related Planted Clique problem~\cite{angelini2021mismatching}.
Our proof does not formally rule out non-standard SDP-based algorithms, although we believe it is likely that our proof
could be modified into a lower bound against other SDPs.
Second, algebraic approaches based on finite fields, Gaussian elimination, or lattice-based methods 
are not captured by Sum-of-Squares reasoning \cite{zadik2022lattice}. However, these techniques typically require a rigid ``noise-free'' structure in the problem which isn't present in Densest $k$-Subgraph, so such an algorithm would be unexpected.

There are some technical limitations to our work, which are also present in almost all existing SoS lower bounds.
Technical improvements such as improving the SoS degree from $n^{\eps}$ to $\tilde{\Omega}(k)$, or tightening the slack $\gam - \al\beta$ seem out of reach for our current techniques.
We could also consider the closely related planted model where the size of the planted subgraph is not approximately but exactly $k$.
Our analysis doesn't go through immediately in this setting for technical reasons, which is also the case in most existing SoS lower bounds.
With additional
work, this might be overcome, as Pang~\cite{Pang21} did for Planted Clique.
That said, we believe
that the behavior of SoS is qualitatively the same.

\bibliographystyle{alpha}
{\footnotesize\bibliography{macros, madhur}}

\clearpage

\appendix
\section{Additional Content on Graph Matrices}

\subsection{Proof of \texorpdfstring{\cref{prop:smvs-uniqueness}}{Proposition \ref{prop:smvs-uniqueness}}}
\label{app:graph-matrix}

\leftmostSmvs*

To prove this, we prove a stronger structural characterization
of the leftmost SMVSs of a shape $\al$.
The leftmost SMVS of $\al$ with minimum vertex size is then $S$ in 
the statement of \cref{prop:smvs-structure}.
The intuition for this structural characterization is that the sets $S_i$ are ``subgraphs of weight 0''
that may be freely added to or removed from the ``necessary core'' SMVS $S$.

\begin{proposition}\label{prop:smvs-structure}
    The collection of SMVS of $\al$ which are left of every SMVS has the following structure:
    there are disjoint vertex sets $S$ and $S_1, \dots, S_k$ such that 
    the collection is exactly $S$ unioned with $\bigcup_{i \in I}S_i$ for all subsets $I \subseteq [k]$.
\end{proposition}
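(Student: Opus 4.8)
The plan is to prove that the collection $\mathcal{L}$ of SMVS of $\alpha$ that are left of every SMVS forms a Boolean algebra of sets, with bottom element $S$ and with the $S_i$ its atoms (shifted by $S$). Throughout write $w(T)=|T|-\beta\,|E(T)|$, where $\beta=\log_n(1/p)\in(0,1)$ and $E(T)$ is the induced edge set of $T$; since $T\mapsto|T|$ is modular and $T\mapsto|E(T)|$ is supermodular, $w$ is submodular. Let $W_{\min}$ denote the common weight of the SMVS.

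First I would show that $\mathcal{L}$ is closed under union and intersection. Given $T_1,T_2\in\mathcal{L}$, a path-chasing argument shows that $T_1\cap T_2$ separates $U_\alpha$ from $V_\alpha$, and more generally is left of every SMVS $T$: if a path $P$ from $U_\alpha$ to $T$ avoided $T_1\cap T_2$, let $u$ be the first vertex of $P$ lying in $T_1$, so $u\notin T_2$; since $T_2$ is left of $T_1$, the initial segment of $P$ up to $u$ meets $T_2$ at some vertex strictly before $u$, and since $T_1$ is left of $T_2$, that segment in turn meets $T_1$ strictly before that vertex, contradicting minimality of $u$. By submodularity $w(T_1\cup T_2)+w(T_1\cap T_2)\le 2W_{\min}$; both sets are separators, hence each has weight $\ge W_{\min}$, so both have weight exactly $W_{\min}$. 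Finally $T_1\cup T_2\supseteq T_1$ is left of every SMVS because $T_1$ is, and $T_1\cap T_2$ is left of every SMVS by the same path-chasing; so both lie in $\mathcal{L}$. Set $S=\bigcap\mathcal{L}$ (the minimum) and $M=\bigcup\mathcal{L}$ (the maximum).

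Next I would reformulate: since ``left of every SMVS'' is inherited by supersets of $S$, one has $\mathcal{L}=\{T : S\subseteq T\subseteq V(\alpha),\ T\text{ a separator},\ w(T)=W_{\min}\}$. Writing $T=S\cup P$ identifies $\mathcal{L}$ with the family $\mathcal{N}$ of \emph{null pieces} $P\subseteq V(\alpha)\setminus S$ with $w'(P):=w(S\cup P)-w(S)=0$; here $w'$ is submodular, and since $S\cup P$ is always a separator we have $w'\ge 0$ everywhere, with equality exactly on $\mathcal{N}$. Thus $\mathcal{N}$ is the zero-set of a nonnegative submodular function, hence closed under union and intersection, with bottom $\emptyset$ and top $M\setminus S$. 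It then remains to show $\mathcal{N}$ is a Boolean lattice: its atoms $S_1,\dots,S_k$ (minimal nonempty null pieces) would be pairwise disjoint (two intersecting atoms would have, by closure under intersection, a strictly smaller nonempty null piece below them), they would partition $M\setminus S$, every null piece would be a union of atoms, and conversely every union of atoms is again a null piece; translating back gives exactly $\mathcal{L}=\{S\cup\bigcup_{i\in I}S_i : I\subseteq[k]\}$ with $S,S_1,\dots,S_k$ pairwise disjoint, and $S$ is then the claimed unique leftmost SMVS of minimum vertex size of \cref{prop:smvs-uniqueness}.

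The crux — and what I expect to be the main obstacle — is the Booleanness of $\mathcal{N}$. Being a sublattice of a power set, $\mathcal{N}$ is automatically distributive, so it is Boolean as soon as it is complemented, i.e. $P\in\mathcal{N}\implies (M\setminus S)\setminus P\in\mathcal{N}$. Using the equality case of submodularity for $w'$ — valid because $P$, $(M\setminus S)\setminus P$, $M\setminus S$, and $\emptyset$ are all null — one computes that $w'\big((M\setminus S)\setminus P\big)$ equals $\beta$ times the number of edges between $P$ and $(M\setminus S)\setminus P$, so complementation is equivalent to the statement that \emph{no null piece has an edge to its complement within $M\setminus S$}; the same equality case, applied to incomparable null pieces $P_1,P_2$, already shows there are no edges between $P_1\setminus P_2$ and $P_2\setminus P_1$. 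The remaining work is to promote this edge-freeness from the incomparable case to the nested pair $P\subseteq M\setminus S$: I would argue extremally, taking a hypothetical edge $\{x,y\}$ with $x\in P$, $y\in(M\setminus S)\setminus P$, and observing that either some null piece contains $y$ but not $x$ (contradicting edge-freeness) or every null piece containing $y$ contains $x$, in which case analysing the minimal null piece containing $y$ and using $\beta<1$ yields a weight violation. Carrying out this final combinatorial step carefully, together with dispatching the degenerate cases in which separators meet $U_\alpha$ or $V_\alpha$, is where the real content lies; everything else is bookkeeping.
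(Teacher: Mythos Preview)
Your approach through the point where $\mathcal{L}$ is shown to be closed under union and intersection is correct and, via submodularity of $w$, cleaner than the paper's explicit $S_L/S_R$ construction; you also rightly isolate complementation of $\mathcal{N}$ as the only substantive step left. Unfortunately, your Case~2 sketch cannot be completed --- not because of a missing trick, but because the proposition as stated is false.

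Take $\beta = 1/2$, $V(\alpha) = \{a,b,x,y\}$ with $U_\alpha = V_\alpha = \{a,b\}$ and edges $\{a,y\},\{b,y\},\{a,x\},\{x,y\}$. Every separator must contain $\{a,b\}$; one computes $w(\{a,b\}) = w(\{a,b,y\}) = w(\{a,b,x,y\}) = 2$ while $w(\{a,b,x\}) = 5/2$, so the SMVS are exactly the first three sets, each containing $U_\alpha$ and hence left of every SMVS. Thus $\mathcal{L}$ is a three-element chain, which cannot be written as $\{S\cup\bigcup_{i\in I}S_i : I\subseteq[k]\}$ for any choice of disjoint nonempty $S_i$. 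In your language $\mathcal{N}=\{\emptyset,\{y\},\{x,y\}\}$: the null piece $P=\{y\}$ has complement $\{x\}$ with $w'(\{x\})=1-\beta=1/2>0$, and you land squarely in your Case~2 (the only null piece containing $x$ is $\{x,y\}$, which also contains $y$), yet there is no weight violation to exploit --- the edge $\{x,y\}$ genuinely sits between $P$ and its complement. The paper's own proof carries the matching gap: it asserts each $S\cup S_i$ is an SMVS ``since they are both SMVS'' without justification, and the subsequent ``no edges between $S_i$ and $S_j$'' conclusion is exactly what fails in this example. For the paper's purposes this does no harm: the only downstream use (\cref{prop:smvs-uniqueness}) needs merely that $\mathcal{L}$ is closed under intersection and hence has a unique minimum-vertex-size element, and that part both you and the paper establish correctly.
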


\begin{proof}
    First, we prove the existence of a leftmost SMVS.
    Let $S_1, S_2$ be two minimum vertex separators.
    Then we can construct a minimum vertex separator to the left of both of
    them as follows (see \cref{fig:mvs} for a picture). Since this process cannot continue indefinitely,
    it must terminate in a vertex separator which is left of all other vertex separators.
    
    Let $L_1 \subseteq S_1$ be vertices of $S_1$ reachable from $U_\al$
    without passing through $S_2$, and likewise for $L_2 \subseteq S_2$.
    Then we take $S_L \defeq L_1 \cup L_2 \cup (S_1 \cap S_2)$.
    
    To show that $S_L$ is a vertex separator, take a path $P$ from $U_\al$.
    Without loss of generality, $P$ passes through $S_1$ before $S_2$ (or at
    the same time).
    Then $L_1$ (or $S_1 \cap S_2$) blocks $P$.
    
    To show that $S_L$ is minimum, observe that if we perform the
    analogous construction of $S_R$ then $w(S_L) + w(S_R) \leq w(S_1) + W(S_2)$. To see this note that $|S_L| + |S_R| = |S_1| + |S_2|$ and each edge $e$ appears at least as many times in $S_L$ and $S_R$ as it does in $S_1$ and $S_2$. In particular, if $e \in E(S_1)$ and $e \in E(S_2)$ then $e$ is also in both $E(S_L)$ and $E(S_R)$. If $e \in E(S_1) \setminus E(S_2)$ then $e$ must be in either $E(S_L)$ or $E(S_R)$ (note that $e$ cannot go between $L_1$ and $R_1$ as otherwise $S_2$ would not be a separator, see Figure 2). Following similar logic, if $e \in E(S_2) \setminus E(S_1)$ then $e$ must be in either $E(S_L)$ or $E(S_R)$.
    
    Since $S_1$ and $S_2$ are minimum weight vertex separators and $w(S_L) + w(S_R) \leq w(S_1) + w(S_2)$, we must have that $w(S_L) = w(S_R) = w(S_1) = w(S_2)$ so both $S_L$ and $S_R$ must be minimum weight vertex separators as well.
    This finishes the proof of existence of a leftmost SMVS.

\begin{figure}[ht]
    \centering
    \includegraphics[width=0.5\textwidth]{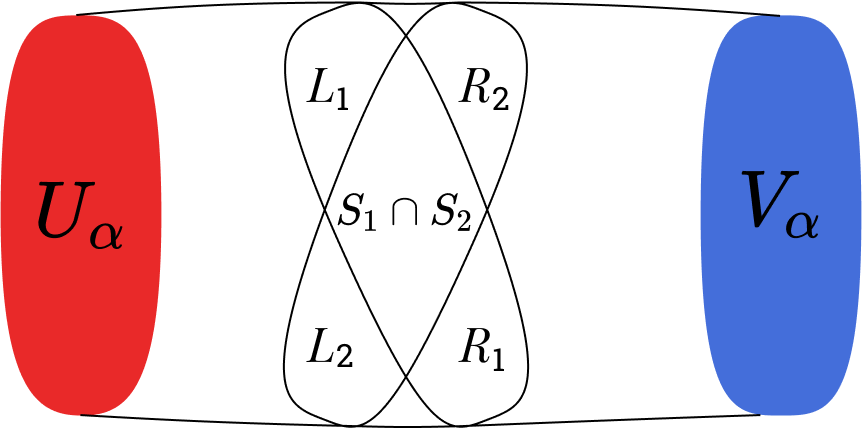}
    \caption{Leftmost and rightmost SMVS}
    \label{fig:mvs}
\end{figure}

\medskip

Next, we prove the structural characterization.
Suppose $S_1, S_2$ are both SMVS which are left of every SMVS.
Since $S_1$ is left of $S_2$, we have $L_2 =\emptyset$, and likewise $L_2 = \emptyset$.
The previous construction now shows that $S_L = S_1 \cap S_2$ is also an SMVS.

Furthermore, we claim that $S_1 \cap S_2$ is left of every SMVS.
Suppose $T$ is another SMVS,
and $P$ is a path from $U_\al$ to $T$.
Since $S_1$ is left of $T$, $P$ passes through $(S_1 \cap S_2) \cup R_1$, and likewise for $(S_1 \cap S_2) \cup R_2$.
Since $S_1$ is left of $S_2$ and vice versa, path $P$ must pass through both sets at the same time,
therefore it passes through $S_1 \cap S_2$.

Therefore, the collection of SMVS which are left of every SMVS
is closed under intersection.
We may now produce the set $S$ as the intersection of the
family. The sets $S_i$ are the refinement of the family under intersection.
It must hold that
$w(S \cup S_i) = w(S)$ for all $i$, since they are both SMVS.
Therefore, there cannot be any edges between $S_i$ and $S_j$,
otherwise $S \cup S_i \cup S_j$ would have smaller weight than $S$.
Therefore, unioning $S$ with any number of the $S_i$
always produces an SMVS with the same weight.
\end{proof}

\subsection{Additional definitions}

\begin{definition}[Automorphism group]
\label{def:aut}
For a shape $\al$, define $\Aut(\al)$ to be the group of graph isomorphisms of $(V(\al), E(\al))$. Equivalently, $\Aut(\al)$ is
the stabilizer subgroup of $S_n$ of any ribbon $R$ with shape $\al$.
\end{definition}

\begin{proposition}\label{prop:automorphism-sum}
$\displaystyle\abs{\Aut(\al)}\sum_{\text{ribbons $R$ of shape }\alpha} \mM_R = \sum_{\text{injective }\varphi : V(\al) \to [n]} \mM_{\varphi(\alpha)}$
\end{proposition}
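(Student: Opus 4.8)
The plan is to prove the identity by an orbit-counting argument: both sides are linear combinations of ribbon matrices $\mM_R$ with $R$ ranging over ribbons of shape $\al$, and the only thing to verify is that on the right-hand side each such ribbon is counted with multiplicity exactly $\abs{\Aut(\al)}$.

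First I would record that every injective $\varphi : V(\al) \to [n]$ produces a ribbon $\varphi(\al)$ which, by the definitions of \emph{embedding} and \emph{shape}, is a ribbon of shape $\al$, and conversely every ribbon of shape $\al$ arises as $\varphi(\al)$ for some embedding $\varphi$. Hence I can regroup the right-hand side by the ribbon produced:
\[
\sum_{\text{injective }\varphi : V(\al)\to [n]} \mM_{\varphi(\al)} \;=\; \sum_{R \text{ of shape }\al}\; \bigl(\#\{\varphi : \varphi(\al) = R\}\bigr)\,\mM_R .
\]
It then suffices to show that for every fixed ribbon $R$ of shape $\al$, the fiber $\{\varphi : \varphi(\al) = R\}$ has size $\abs{\Aut(\al)}$.

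The core step is to exhibit a bijection between this fiber and $\Aut(\al)$. Fix one embedding $\varphi_0$ with $\varphi_0(\al) = R$; it exists since $R$ has shape $\al$. Because a ribbon determines its vertex set, we have $\varphi(V(\al)) = V(R) = \varphi_0(V(\al))$ for every $\varphi$ in the fiber, so $\rho := \varphi_0^{-1}\circ\varphi$ is a well-defined bijection of $V(\al)$. I would check $\rho \in \Aut(\al)$: since $\varphi(\al) = R = \varphi_0(\al)$ as ribbons, the two embeddings agree pointwise on $U_\al$ and on $V_\al$ (the prescribed orders on $A_R$ and $B_R$ force this), so $\rho$ fixes $U_\al \cup V_\al$ pointwise, and since $\varphi(E(\al)) = E(R) = \varphi_0(E(\al))$ the map $\rho$ preserves $E(\al)$. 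Conversely, for any $\rho \in \Aut(\al)$ the composite $\varphi_0\circ\rho$ is injective, agrees with $\varphi_0$ on $U_\al$ and $V_\al$ (as $\rho$ fixes them), and satisfies $(\varphi_0\circ\rho)(E(\al)) = \varphi_0(E(\al)) = E(R)$, hence $(\varphi_0\circ\rho)(\al) = R$. The assignments $\varphi \mapsto \varphi_0^{-1}\circ\varphi$ and $\rho \mapsto \varphi_0\circ\rho$ are mutually inverse, so $\#\{\varphi : \varphi(\al) = R\} = \abs{\Aut(\al)}$ for every ribbon $R$ of shape $\al$. Substituting this into the regrouped sum gives $\sum_\varphi \mM_{\varphi(\al)} = \abs{\Aut(\al)}\sum_{R\text{ of shape }\al}\mM_R$, which is the claim.

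There is no serious obstacle here; the only point requiring care is the bookkeeping around the \emph{ordered} boundary sets $U_\al, V_\al$. One must use the ``equivalently'' clause of \cref{def:aut} — that $\Aut(\al)$ is the stabilizer in $S_n$ of a ribbon of shape $\al$, equivalently the graph automorphisms of $(V(\al),E(\al))$ fixing $U_\al\cup V_\al$ pointwise — rather than treating $\Aut(\al)$ as the group of unrestricted graph automorphisms, precisely because embeddings are required to preserve the orderings on $U_\al$ and $V_\al$.
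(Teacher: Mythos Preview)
Your argument is correct and is precisely the standard orbit-counting/stabilizer computation one uses to prove such an identity; the paper itself states this proposition without proof, so there is nothing to compare against. Your final caveat about reading $\Aut(\al)$ via the stabilizer description in \cref{def:aut} (so that elements fix $U_\al$ and $V_\al$ pointwise) is exactly the right reading, and is what makes the bijection $\varphi\mapsto\varphi_0^{-1}\circ\varphi$ land in $\Aut(\al)$.
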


We would like to enforce that all SMVS of a given shape are isomorphic graphs.
This can be achieved by adding an infinitesimally small quantity to $w(S)$ that breaks equality depending on the isomorphism class of $S$. Equivalently, we redefine the SMVS to minimize ($w(S)$, $S$) lexicographically using an arbitrary and fixed total order of all graphs $S$.
Either way, we will use the following proposition. 

\begin{proposition}
\label{prop:smvs-isomorphic}
    If $\tau$ is a middle shape, we may assume that
    $U_{\tau}$ and $V_\tau$ are isomorphic as graphs.
\end{proposition}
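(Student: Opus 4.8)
The statement is to be read under the tie-broken notion of SMVS introduced just above: among the separators of minimum weight $w(\cdot)$ we single out the one(s) smallest in a fixed total order on isomorphism classes of graphs, equivalently we perturb $w(S)$ by an infinitesimal depending only on the isomorphism type of $S$. The plan is first to record the elementary consequence that, for \emph{every} shape $\sigma$, all SMVS of $\sigma$ are mutually isomorphic as graphs: any two SMVS of $\sigma$ realize the same perturbed weight, and since the perturbation encodes the isomorphism class, equal perturbed weight forces equal isomorphism type. Granting this, the proposition is immediate: for a middle shape $\tau$, by definition $U_\tau$ is the leftmost SMVS of $\tau$ and $V_\tau$ is the rightmost SMVS of $\tau$, so $U_\tau$ and $V_\tau$ are both SMVS of $\tau$ (each is a $w$-minimal separator of $\tau$, trivially separating $U_\tau$ from $V_\tau$), hence they are isomorphic.

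The point that requires genuine care — which I expect to be the crux — is that switching from $w$ to its tie-broken refinement must not invalidate the theory of leftmost and rightmost SMVS from \cref{prop:smvs-uniqueness,prop:smvs-structure}; in particular, every shape must still possess a unique SMVS left of all its SMVS. The delicate step is that in the proof of \cref{prop:smvs-structure} the separator $S_L$ built from two SMVS $S_1,S_2$ is shown to minimize only the \emph{unperturbed} weight, so a priori it might sit in a later isomorphism class than $S_1$. I would handle this by choosing the total order on isomorphism classes to refine the ordering ``by $|V(S)|$, then by $|E(S)|$, then arbitrary.'' Since all SMVS are isomorphic they share a vertex count $|V|$ and an edge count $|E|$; combining the inequality $w(S_L)+w(S_R)\le w(S_1)+w(S_2)$ from that proof with $|S_L|+|S_R|=|S_1|+|S_2|$ and the minimality of $|V|$ among $w$-minimizers forces $|S_L|=|S_R|=|V|$ and then $|E(S_L)|=|E(S_R)|=|E|$, so $S_L,S_R$ remain SMVS of the refined kind and the ``move left'' iteration of \cref{prop:smvs-uniqueness} carries over. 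As a byproduct the family described in \cref{prop:smvs-structure} collapses to a single set (its auxiliary pieces would alter the vertex count), so ``the unique SMVS left of every SMVS'' becomes genuinely unique.

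Finally I would check that no other part of the earlier development is disturbed: the norm bound \cref{thm:norm-bounds-informal}, the ribbon decomposition \cref{prop:decomposition-uniqueness}, and all the slack computations of \cref{sec:psdness} only ever reference $w(S)$ of an actual SMVS and never the infinitesimal part, so they remain valid verbatim and adopting the convention introduces no new error terms. With the convention in force, every middle shape $\tau$ arising in the decomposition of \cref{sec:psd-decomposition-informal} has $U_\tau$ and $V_\tau$ isomorphic, so the associated diagonal shapes coincide as matrices, which is exactly what is needed when assembling the PSD decomposition $\matLam = \matL(\sum_i \matQ_i)\matL^\T$.
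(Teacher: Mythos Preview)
The paper does not give a formal proof; the paragraph preceding the proposition introduces the tie-breaking as a convention and treats the proposition as its direct consequence. Your core observation---once the SMVS notion is refined to break ties lexicographically by isomorphism type, any two SMVS of a shape are isomorphic, and both $U_\tau$ and $V_\tau$ are SMVS of a middle shape $\tau$ by definition---is exactly the intended argument and is correct.

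You go beyond the paper in checking that the refinement is compatible with the leftmost/rightmost SMVS machinery of \cref{prop:smvs-uniqueness}, which the paper does not address. Here there is a small gap: your $S_L$ argument only establishes that $S_L$ matches the SMVS in vertex count and edge count, not that it lies in the minimal isomorphism class under the residual ``arbitrary'' tie-break. So $S_L$ need not be an SMVS of the \emph{fully} refined kind, and the ``move left'' iteration is not yet justified at that level of granularity. The clean repair is to tie-break only by $(|V(S)|,|E(S)|)$ and stop there: then your $S_L$ argument goes through verbatim, the refined leftmost and rightmost SMVS exist and are unique (your ``collapsing'' observation is then valid), and one concludes $|U_\tau|=|V_\tau|$ and $|E(U_\tau)|=|E(V_\tau)|$. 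This is weaker than ``isomorphic as graphs'' but is exactly what the paper uses downstream---the relation $U\sim V$ and the identity $|U|!\,|V|!=(|U|!)^2$ in the decomposition of \cref{sec:starting-point}---and the stronger isomorphism is never invoked.
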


\section{Densest subgraph weight function}
\label{app:true-pmvs}

As pointed out in \cref{sec:pmvs}, whether or not edges are present inside a shape's separator affects the norm bound.
We may find the norm bound
using the following weight function for $S$ being a graph on $[n]$,

\[w_{densest}(S) = |S| - \log_n(1/p) \abs{E(S) \cap E(G)} + \log_n(1/p)\abs{E(S) \cap \overline{E(G)}} \,.\]

Letting $S_{min}$ be the minimizer of $w_{densest}$ over separators of $A_R$ and $B_R$ in a shape $\al$,
then we would like:
\[ \norm{\mM_\al} \leq \widetilde{O}\left(n^{\frac{|V(\al)| - w_{densest}(S_{min})}{2}}\right) = \widetilde{O}\left(n^{\frac{|V(\al)| - \abs{S_{min}}}{2}} p^{\frac{\abs{E(S_{min}) \cap \overline{E(G)}} - \abs{E(S_{min}) \cap E(G)}}{2}}\right)\,.\]

There is a problem with the above ``formula''. Observe that the definition of $w_{densest}(S)$
depends on the instantiation of $G$ restricted to $S$. Hence two ribbons of the same shape may not have the same
MVS, and it isn't possible to define the MVS with respect to $w_{densest}$ on a shape level instead of a ribbon level.
This is also true of the PMVS as described in \cref{sec:pmvs}, although there we handle it by using edge and missing edge indicators (\cref{rmk:graph-dependence}).
We might like to use edge indicators in a similar way to define $w_{densest}$. However, it is crucial in our analysis that the edge indicators are restricted to be in only the middle part, where they can be controlled, and are not allowed in the left and right parts.

The weight function $w_{densest}$ has the property that if it identifies an MVS
with missing edge evaluations, then removing those Fourier characters still has the same MVS (and hence this is a positive ribbon that could potentially be used to norm-dominate the original ribbon). This is stated as follows.
\begin{lemma}
    If $A_R$ is a minimizer of $w_{densest}$ in ribbon $R$, and $A_R$ contains a missing edge $\chi_e$,
    then $A_R$ is a minimizer of $w_{densest}$ in $R' = (A_R, B_R, E(R) \setminus \{e\})$. The same holds for $B_R$. 
\end{lemma}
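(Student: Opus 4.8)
The plan is to observe that deleting the missing edge $e$ from inside $A_R$ decreases $w_{densest}(A_R)$ by exactly $\log_n(1/p)$ while decreasing the weight of every other $A_R$--$B_R$ separator by \emph{at most} $\log_n(1/p)$, so the minimum is still attained at $A_R$. Write $e=\{a,b\}$ with $a,b\in A_R$, set $R'=(A_R,B_R,E(R)\setminus\{e\})$, and let $w_{densest}^{R}$ and $w_{densest}^{R'}$ be the weight function computed with the edge set of $R$ and of $R'$, respectively; by hypothesis $e$ is missing, \ie $e\in\overline{E(G)}$.

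First I would check that $R$ and $R'$ have exactly the same family of vertex separators of $A_R$ and $B_R$. One direction is immediate since $R'\subseteq R$; for the other, let $S$ separate $A_R$ from $B_R$ in $R'$ and let $P$ be a simple path in $R$ from $A_R$ to $B_R$. If $P$ avoids $e$ it already lies in $R'$ and so meets $S$; otherwise $P$ traverses $e$ between two of its vertices, both of which lie in $A_R$, and the portion of $P$ following the later of those two vertices is a (possibly single-vertex) path in $R'$ from a vertex of $A_R$ to $B_R$, hence meets $S$, so $V(P)\cap S\neq\emptyset$. In particular $A_R$ itself is still a separator in $R'$. Next I would record the bookkeeping identity: for any vertex set $S$, the edges of $R$ with both endpoints in $S$ are exactly those of $R'$ with both endpoints in $S$, together with $e$ precisely when $a,b\in S$, and since $e$ is missing it contributes $+\log_n(1/p)$ to the weight, so
\[
w_{densest}^{R}(S)\;=\;w_{densest}^{R'}(S)+\log_n(1/p)\cdot\1{a,b\in S};
\]
in particular $w_{densest}^{R}(A_R)=w_{densest}^{R'}(A_R)+\log_n(1/p)$ because $a,b\in A_R$.

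Finally I would combine these two facts. Let $S$ be any separator of $A_R$ and $B_R$ in $R'$; it is also a separator of $R$, so minimality of $A_R$ in $R$ gives $w_{densest}^{R}(A_R)\le w_{densest}^{R}(S)$. Subtracting $\log_n(1/p)$ from both sides and using the identity,
\[
w_{densest}^{R'}(A_R)\;\le\;w_{densest}^{R'}(S)+\log_n(1/p)\bigl(\1{a,b\in S}-1\bigr)\;\le\;w_{densest}^{R'}(S),
\]
where the final step uses $\log_n(1/p)>0$ (since $0<p<1$) and $\1{a,b\in S}\le 1$. Hence $A_R$ is a minimizer of $w_{densest}$ in $R'$, and the inequality is strict unless both $a$ and $b$ lie in $S$, so the conclusion also survives the lexicographic tie-break used to make the SMVS unique. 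The case of $B_R$ is identical with the two boundary sets exchanged. I expect the one step needing genuine care to be the separator-invariance claim — that an $R'$-separator is an $R$-separator — and in particular its degenerate subcase where the portion of $P$ after $e$ is a single vertex of $A_R\cap B_R$; the rest is sign-tracking, which falls in our favour precisely because $\log_n(1/p)$ is positive.
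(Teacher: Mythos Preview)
Your proof is correct and follows the same approach as the paper's: first observe that removing an edge with both endpoints in $A_R$ leaves the family of $A_R$--$B_R$ vertex separators unchanged, then note that $w_{densest}(A_R)$ drops by $\log_n(1/p)$ while every other separator's weight drops by at most that amount, so $A_R$ remains minimal. The paper's proof is just a two-sentence sketch of exactly this argument; your version supplies the careful verification (in particular of the separator-invariance step and its degenerate subcase) that the paper leaves implicit.
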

\begin{proof}
    The connectivity of $A_R, B_R$ and $A_{R'}, B_{R'}$ is exactly the same,
    therefore the collection of left/right separators is the same. By
    removing the edge, $w_{densest}(A_R)$ decreases, and therefore it continues
    to be a minimizer.
\end{proof}

\section{Requirements for Combinatorial Adjustment Terms}

In order for the PSDness proof to go through, we need that for some $\epsilon > 0$, the following bounds hold:
\begin{bound}\label{assumptions}\
\begin{enumerate}
\item $\epsilon \leq \al \leq \frac{1}{2}$
\item $\gamma - {\alpha}{\beta} \geq \epsilon$
\item $\beta - \gamma \geq \epsilon$
\item $\log_n(D_V) \leq \frac{\epsilon}{20}$
\item $D_V \geq \frac{100}{\eps} \dsos$
\item For all proper middle shapes $\tau$, $\slack(\tau) \geq \eps\left(|E(\tau)| - \frac{|E(U_{\tau})| + |E(V_{\tau})|}{2} + |V(\tau)| - \frac{|U_{\tau}| + |V_{\tau}|}{2}\right)$. More generally, for any middle shape $\tau_P$ resulting from a sequence of interaction patterns,
\[
\slack(\tau_P) \geq \eps\left(|E_{tot}(\tau_P)| - \frac{|E(U_{\tau_P})| + |E(V_{\tau_P})|}{2} + |V_{tot}(\tau_P)| - \frac{|U_{\tau_P}| + |V_{\tau_P}|}{2}\right)
\]
\item $B_{adjust}(\alpha)$ and $c(\alpha)$ are at most $n^{\frac{\epsilon}{100}(|E(\alpha)| - \frac{|E(U_\alpha)| +|E(V_{\alpha})|}{2} + |V(\alpha)| - \frac{|U_{\alpha}| + |V_{\alpha}|}{2})}$
\item For an interaction pattern $P$ on $\gam, \tau, \gam'^\T$, $c(P)$ and $N(P)$ are at most \[n^{\frac{\epsilon}{100}\left(|E(\gam)| - \frac{|E(U_\gam)| +|E(V_{\gam})|}{2} + |V(\gam )| - \frac{|U_{\gam}| + |V_{\gam}|}{2} + |E(\gam'^\T)| - \frac{|E(U_{\gam'^\T})| +|E(V_{\gam'^\T})|}{2} + |V(\gam'^\T )| - \frac{|U_{\gam'^\T}| + |V_{\gam'^\T}|}{2}\right)}\] 
\end{enumerate}
\end{bound}
The first five bounds are satisfied by the choice of parameters in the main \cref{thm:main}.
The lower bound on the slack was proven in \cref{thm:slack}.
We will verify the bounds on the $c$-functions in \cref{sec:c-functions}.

\begin{remark}
    While we generally think of $\epsilon > 0$ as a fixed constant which is independent of $n$, here it is okay for $\epsilon$ to depend on $n$.
\end{remark}

Probabilistically, we will assume the following events occur in the random graph $G \sim \calG_{n,p}$.
\begin{enumerate}
    \item The norm bounds hold (\cref{cor:normbound})
    \item A small subgraph density bound holds (\cref{lem:density-gnp})
\end{enumerate}
Both of these occur with probability at least $1-n^{-\eta}$ for a tweakable parameter $\eta$. For \cref{thm:main} we may use $\eta = 1$.

\section{Norm bounds}
\label{sec:norm-bounds}

A central part of our analysis is norm bounds for graph matrices. 
Fortunately, these norm bounds were proven in our prior work \cite{JPRTX22}.
For completeness and exposition we reprove them here in a simpler form which can be applied more smoothly in our setting.
In particular, we simplify the lower-order dependence at the cost of having extra $\log(n)$ factors in the base of the exponent.

\begin{theorem}\label{thm:generalnormbound}
For all $\eta > 0$, the following statement holds with probability at least $1 - n^{-\eta}$. 

For all shapes $\alpha$ such that $|V(\alpha)| \leq 3D_V$, $|U_{\alpha}| \leq \dsos$, and $|V_{\alpha}| \leq \dsos$, letting $S$ be an SMVS of $\alpha$ and taking $q = \max{\left\{10D_V^2,\lceil{\dsos\ln(n)}\rceil,\lceil{{\eta}\ln(n)}\rceil\right\}}$,
\[
||\mM_{\alpha}|| \leq 5(6D_Vq)^{|V(\alpha)| - \frac{|U_{\alpha}| + |V_{\alpha}|}{2}}n^{\frac{|V(\alpha)| - w(S) + \Iso(\alpha)}{2}}
\]
\end{theorem}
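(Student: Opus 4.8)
The plan is to prove \cref{thm:generalnormbound} by the trace power method, which is the standard route for bounding graph matrix norms and is exactly how the norm bounds cited from \cite{JPRTX22} were obtained; here we only want a cleaner exposition at the cost of a few extra $\log n$ factors in the base. First I would reduce to a moment estimate: for the even integer $q$ in the statement and any $t>0$, $\Pr[\norm{\mM_\alpha} > t] \le t^{-2q}\,\E\big[\tr\big((\mM_\alpha\mM_\alpha^\T)^q\big)\big]$, so it suffices to bound $\E\big[\tr\big((\mM_\alpha\mM_\alpha^\T)^q\big)\big]$ by roughly the $2q$-th power of the claimed bound, divided by the number of shapes under consideration. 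The number of shapes $\alpha$ with $|V(\alpha)|\le 3D_V$ is at most $2^{O(D_V^2)}$, which is why the hypotheses take $q \ge 10 D_V^2$ (to absorb a $(\#\text{shapes})^{1/2q}$ factor into the poly base), $q \ge \dsos\ln n$ (to absorb the $n^{|U_\alpha|}\le n^{\dsos}$ gap between $\tr$ and $\norm{\cdot}^{2q}$), and $q \ge \eta\ln n$ (to gain the extra $n^{-\eta}$ in the union bound). Thus a single event controls all shapes at once.

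Next I would expand the trace combinatorially. We have $\tr\big((\mM_\alpha\mM_\alpha^\T)^q\big) = \sum \prod_{i=1}^{2q} \mM_{R_i}$ over cyclic sequences of ribbons $R_1,\dots,R_{2q}$ whose shapes alternate between $\alpha$ and $\alpha^\T$ and which satisfy the composability constraints $B_{R_i} = A_{R_{i+1}}$ (indices mod $2q$); each surviving term equals $\prod_e \chi_e(G)^{m_e}$ where $m_e$ is the number of times edge $e$ occurs among $E(R_1),\dots,E(R_{2q})$. Taking expectations, a term vanishes unless every $m_e\ge 2$, and then $\big|\E[\prod_e \chi_e^{m_e}]\big| = \prod_e |\E[\chi_e^{m_e}]| \le \prod_e \big(\sqrt{(1-p)/p}\big)^{m_e-2}$ by the elementary bound used in \cref{prop:linearization-coefficient}. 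Writing $|E'|$ for the number of distinct edges used and noting $\sum_e m_e = 2q|E(\alpha)|$, the value of a surviving term is at most $\big(\sqrt{1/p}\big)^{2q|E(\alpha)| - 2|E'|} = n^{\log_n(1/p)\,(q|E(\alpha)| - |E'|)}$, which is the only place where $p$-biasedness and the weight $w(S) = |S| - \log_n(1/p)|E(S)|$ enter.

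Then comes the heart of the argument: counting surviving terms grouped by their combinatorial \emph{block-walk type} and carrying out the vertex--edge accounting. I would encode each term by recording, block by block as the walk $R_1,\dots,R_{2q}$ reveals vertices of $[n]$, the isomorphism type of the pattern of which vertices of $V(R_i)$ are genuinely new, which coincide with earlier vertices (the intersections), and which edges are fresh; the number of such types is at most $(6D_Vq)^{2q(|V(\alpha)| - \frac{|U_\alpha|+|V_\alpha|}{2})}$, since each of the $O(q|V(\alpha)|)$ non-boundary vertex slots is either new or identified with one of $O(D_Vq)$ earlier vertices, and for a fixed type the number of embeddings into $[n]$ is at most $n^{\#\{\text{distinct vertices}\}}$. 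The key local estimate, applied at each block: the already-fixed near boundary $A_{R_i}=B_{R_{i-1}}$ separates the freshly revealed material from everything seen before, so that material is subsumed by a vertex separator of $\alpha$; consequently the profit of an exploring block — measured as (new vertices) minus $\log_n(1/p)$ times (new edges), the natural quantity since each edge contributes a factor $\sqrt{1/p}=n^{\log_n(1/p)/2}$ — is at most $|V(\alpha)| - w(S)$, the $\log_n(1/p)|E(S)|$ part of $-w(S)$ being exactly what refunds the edges revealed inside the separator. The multiplicity-$\ge 2$ constraint forces at least as many retracing blocks (nonpositive profit) as exploring blocks, so summing over the $2q$ blocks and adding $|U_\alpha|$ for the initial trace boundary together with $q\cdot\Iso(\alpha)$ for isolated vertices (freely placed in each block) bounds, up to lower-order terms, the $n$-power-times-edge-value of any term by $n^{|U_\alpha|}\cdot n^{q(|V(\alpha)| - w(S) + \Iso(\alpha))}$. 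Multiplying by the count of block-walk types and taking the $2q$-th root yields $\norm{\mM_\alpha} \lesssim (6D_Vq)^{|V(\alpha)| - \frac{|U_\alpha|+|V_\alpha|}{2}}\, n^{\frac{|V(\alpha)| - w(S) + \Iso(\alpha)}{2}}$, and tracking the constant $5$, the injectivity corrections, and the choice of $q$ (so that $n^{|U_\alpha|/2q}\le n^{\dsos/2q}$ and $(\#\text{shapes})^{1/2q}$ are $O(1)$) completes the proof.

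The main obstacle is precisely the local vertex--edge accounting in the third step: making the notions of new, intersection, and retracing vertices simultaneously (a) consistent across all $2q$ blocks, (b) compatible with the intersections (which both cost a $\poly(D_V,q)$ factor and save a factor of $n$), and (c) tight enough that each exploring step gains at most $|V(\alpha)| - w(S)$ once the $\log_n(1/p)$-weighted edge refund is correctly matched against the edge values from the previous step. This balancing of $p$-factors between the vertex count and the edge contributions is the delicate point, and it is exactly the minimality of $w(S)$ over separators that forces the correct bookkeeping; the remaining steps are routine modulo the careful but standard tracking of the $\poly(D_V,q)$ factors.
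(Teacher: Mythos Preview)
Your framework is exactly the paper's: trace power method, Markov's inequality on $\tr((\mM_\alpha\mM_\alpha^\T)^q)$, the bound $|\E[\chi_e^{m_e}]|\le (\sqrt{(1-p)/p})^{m_e-2}$, the shape count $2^{O(D_V^2)}$, and the choice of $q$ to absorb the shape count, the $n^{\dsos}$ trace-versus-norm gap, and the $n^{-\eta}$ failure probability. All of that matches.

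The divergence is in the combinatorial accounting, and your version has a gap. The paper does \emph{not} use an exploring/retracing dichotomy. Instead, for each block $j\in[2,2q-1]$ it defines $S_j\subseteq V(\alpha_j)$ to be the set of vertices that have a copy in some earlier block \emph{and} some later block, and shows that if the term has nonzero expectation then $S_j$ must be a vertex separator of $\alpha$: otherwise a path from $U_{\alpha_j}$ to $V_{\alpha_j}$ avoiding $S_j$ contains an edge whose Fourier character appears with multiplicity one. It then assigns a factor $\sqrt{n}$ to each vertex of $V(\alpha_j)\setminus S_j$ (covering the first and last appearance of every distinct vertex) and a factor $\sqrt{(1-p)/p}$ to each edge of $E(S_j)$ (covering all but two copies of every distinct edge), yielding
\[
\E\big[\tr\big((\mM_\alpha\mM_\alpha^\T)^q\big)\big]\ \le\ (2q|V(\alpha)|)^{2q(|V(\alpha)|-\frac{|U_\alpha|+|V_\alpha|}{2})}\,n^{q|V(\alpha)|}\Big(\max_{\text{separator }S}n^{-|S|/2}\big(\tfrac{1-p}{p}\big)^{|E(S)|/2}\Big)^{2q-2}.
\]
This per-block separator argument is what makes the $w(S)$ appear cleanly.

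Your sketch instead claims that the boundary $A_{R_i}=B_{R_{i-1}}$ separates the freshly revealed vertices from everything seen before; this is false, since intersections can occur at arbitrary positions in $V(R_i)\setminus A_{R_i}$, not just at the boundary. And your claim that the multiplicity-$\ge 2$ constraint forces at least as many retracing blocks as exploring blocks is an edge-level constraint that does not yield a clean block-level balance. The paper's $S_j$ device sidesteps both issues: there is no exploring/retracing split, and the separator showing up is not $A_{R_i}$ but the (generally larger) set of vertices that straddle block $j$ on both sides.
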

\begin{proof}
It is easy to reduce the case when $\Iso(\alpha) \neq \emptyset$ to the case when $\Iso(\alpha) = \emptyset$ so it is sufficient to prove the theorem when $\Iso(\alpha) = \emptyset$. To prove this theorem when $\Iso(\alpha) = \emptyset$, we use the following lemma.
\begin{lemma}\label{lem:tracepowercalculation}
For all shapes $\alpha$ and all $q \in \mathbb{N}$, 
\begin{align*}
    \E\left[\tr\left(\left(\mM_{\alpha}\mM_{\alpha}^\T\right)^q\right)\right] \leq (2q|V(\alpha)|)^{2q(|V(\alpha)| - \frac{|U_{\alpha}| + |V_{\alpha}|}{2})}n^{q|V(\alpha)|}\left(\max_{\text{separator } S}{n^{-\frac{|S|}{2}}\left(\sqrt{\frac{1-p}{p}}\right)^{|E(S)|}}\right)^{2q-2}
\end{align*}
\end{lemma}
To use this lemma, we observe that for all $\epsilon > 0$ and all $q \in \mathbb{N}$, by Markov's inequality,
\[
\Pr\left(||\mM_{\alpha}|| > \sqrt[2q]{\frac{\E\left[\tr\left(\left(\mM_{\alpha}\mM_{\alpha}^\T\right)^q\right)\right]}{\epsilon}}\right) \leq \Pr\left(\tr\left(\left(\mM_{\alpha}\mM_{\alpha}^\T\right)^q\right) > \frac{\E\left[\tr\left(\left(\mM_{\alpha}\mM_{\alpha}^\T\right)^q\right)\right]}{\epsilon}\right) < \epsilon
\]
We want to ensure that with probability $1 - n^{-\eta}$, all of our bounds hold. To do this, we choose $\epsilon$ to be at most $n^{-\eta}$ times the number of shapes we are considering.
\begin{proposition}
For all $D_V \in \mathbb{N}$, there are at most $2^{10{D_V}^2}$ shapes $\alpha$ such that $|V(\alpha)| \leq 3D_V$
\end{proposition}
\begin{proof}
    We can specify each shape $\alpha$ with at most $3D_V$ vertices as follows.
    \begin{enumerate}
        \item For each $j \in [3D_V]$, we specify whether vertex $j$ is in $U_{\alpha} \setminus V_{\alpha}$, $V_{\alpha} \setminus U_{\alpha}$, $U_{\alpha} \cap V_{\alpha}$, $V(\alpha) \setminus (U_{\alpha} \cup V_{\alpha})$, or does not exist at all. There are $5^{3D_V}$ choices for this.
        \item For each $i,j \in [3D_V]$ such that $i < j$, we specify whether there is an edge between vertex $i$ and vertex $j$ (assuming they both exist). There are at most $2^{\binom{3D_V}{2}}$ choices for this.
    \end{enumerate}
This gives a total of ${5^{3D_V}}2^{\binom{3D_V}{2}}$ choices. For $D_V = 1$, ${5^{3D_V}}2^{\binom{3D_V}{2}} = 1000 \leq 1024 = 2^{10{D_V}^2}$ and the ratio $\frac{2^{10{D_V}^2}}{{5^{3D_V}}2^{\binom{3D_V}{2}}}$ grows for larger $D_V$.
\end{proof}
Taking $\epsilon = \frac{1}{2^{10{D_V}^2}n^{\eta}}$ and taking $q = \max{\{10D_V^2,\lceil{\dsos \ln(n)}\rceil,\lceil{{\eta}\ln(n)}\rceil\}}$, we make the following observations:
\begin{enumerate}
    \item $\sqrt[2q]{\frac{1}{\epsilon}} < e$
    \item Letting $S_{min}$ be an SMVS of $\alpha$,
    \[
    n^{q|V(\alpha)|}\left(\max_{\text{separator } S}{n^{-\frac{|S|}{2}}\left(\sqrt{\frac{1-p}{p}}\right)^{|E(S)|}}\right)^{2q-2} \leq n^{w(S_{min})}\left(n^{\frac{|V(\alpha)| - w(S_{min})}{2}}\right)^{2q}
    \]
    and since $w(S_{min}) \leq \dsos$ and $q \geq \dsos \ln(n)$, 
    $\sqrt[2q]{n^{w(S_{min})}} \leq \sqrt{e}$
\end{enumerate}
Putting these pieces together, we have that for all shapes $\alpha$ such that $|V(\alpha)| \leq 3D_V$, with probability at least $1 - \epsilon$,
\[
||\mM_{\alpha}|| \leq 5(6D_Vq)^{|V(\alpha)| - \frac{|U_{\alpha}| + |V_{\alpha}|}{2}}n^{\frac{|V(\alpha)| - w(S_{min})}{2}}
\]
Theorem \ref{thm:generalnormbound} now follows by taking a union bound over all such shapes $\alpha$.
\end{proof}
In our setting, we take $D_V$ so that $D_V \geq \dsos \ln(n)$ and ${\eta}\ln(n) \leq 10D_V^2$ so that we can bound the truncation error appropriately. Thus, in our setting we can take $q = 10D_V^2$.
\begin{corollary}\label{cor:normbound}
For all $D_{V}, \dsos \in \mathbb{N}$ and $\eta > 0$ such that $D_V \geq \dsos \ln(n)$ and ${\eta}\ln(n) \leq 10D_V^2$, the following statement holds with probability at least $1 - n^{-\eta}$.

For all shapes $\alpha$ (allowing isolated vertices but not multiedges or edge indicators) such that $|V(\alpha)| \leq 3D_V$, $|U_{\alpha}| \leq \dsos$, and $|V_{\alpha}| \leq \dsos$, letting $S_{min}$ be an SMVS of $\alpha$,
\[
||\mM_{\alpha}|| \leq 5(60{D_V^3})^{|V(\alpha)| - \frac{|U_{\alpha}| + |V_{\alpha}|}{2}}n^{\frac{|V(\alpha)| - w(S_{min}) + \abs{\Iso(\alpha)}}{2}}
\]
\end{corollary}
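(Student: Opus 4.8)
The plan is to obtain \cref{cor:normbound} as a direct specialization of \cref{thm:generalnormbound}: essentially all of the work — the trace-moment estimate of \cref{lem:tracepowercalculation}, the Markov-inequality step, the counting of shapes with at most $3D_V$ vertices, and the final union bound over all such shapes — is already carried out in the proof of that theorem, so what remains is only to choose the free parameter $q$ in accordance with the corollary's hypotheses and then to simplify the resulting constant.

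First I would invoke \cref{thm:generalnormbound} with the same $\eta$: it gives that, with probability at least $1 - n^{-\eta}$, every shape $\alpha$ with $|V(\alpha)| \leq 3D_V$, $|U_\alpha| \leq \dsos$, $|V_\alpha| \leq \dsos$ and SMVS $S$ satisfies
\[
\|\mM_\alpha\| \leq 5\,(6 D_V q)^{|V(\alpha)| - \frac{|U_\alpha| + |V_\alpha|}{2}}\, n^{\frac{|V(\alpha)| - w(S) + |\Iso(\alpha)|}{2}},
\]
where $q = \max\{10 D_V^2,\ \lceil \dsos \ln(n)\rceil,\ \lceil \eta \ln(n)\rceil\}$. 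Next I would check that, under the standing hypotheses $D_V \geq \dsos \ln(n)$ and $\eta \ln(n) \leq 10 D_V^2$, the maximum is attained by the first term: since $10 D_V^2$ is an integer and $\eta\ln(n) \leq 10 D_V^2$, we get $\lceil \eta\ln(n)\rceil \leq 10 D_V^2$; and since $\dsos\ln(n) \leq D_V$ with $D_V \geq 1$, we get $\lceil \dsos\ln(n)\rceil \leq D_V + 1 \leq 10 D_V^2$. Hence $q = 10 D_V^2$, so $6 D_V q = 60 D_V^3$, and plugging this into the displayed bound produces exactly the inequality asserted in the corollary.

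Since the derivation is a pure substitution, there is no substantive obstacle; the only things to be careful about are (i) confirming that the corollary's hypotheses really do force $q = 10 D_V^2$ rather than one of the two ceiling terms — the single slightly delicate point being that the ceilings do not overshoot $10 D_V^2$, which is exactly the arithmetic handled above — and (ii) observing that \cref{thm:generalnormbound} already incorporates the $+|\Iso(\alpha)|$ correction, via the reduction from shapes with isolated vertices to shapes without at the very start of its proof, so nothing extra is needed to treat isolated vertices at the level of the corollary.
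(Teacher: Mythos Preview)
Your proposal is correct and matches the paper's approach exactly: the paper simply remarks that under the hypotheses $D_V \geq \dsos \ln(n)$ and $\eta\ln(n) \leq 10D_V^2$ one can take $q = 10D_V^2$ in \cref{thm:generalnormbound}, yielding $6D_Vq = 60D_V^3$. Your handling of the ceilings is in fact more careful than what the paper writes.
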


Based on this corollary, we make the following definition.
\begin{definition}[$B_{adjust}$]
Given a shape $\alpha$, we define 
\begin{align*}
B_{adjust}(\alpha) &= 5(60{D_V^3})^{|V(\alpha)| - \frac{|U_{\alpha}| + |V_{\alpha}|}{2}}\,.
\end{align*}
\end{definition}
Therefore for $\eta \leq D_V$, with probability at least $1 - n^{-\eta}$, for all of the shapes $\alpha$ which we consider,
recalling the notation $\normapx{\alpha}$ from \cref{sec:psdness},
\[\norm{\mM_{\alpha}} \leq B_{adjust}(\alpha)\normapx{\alpha}\]

We now prove \cref{lem:tracepowercalculation}, which says that 
\begin{align*}
\E\left[\tr\left(\left(\mM_{\alpha}\mM_{\alpha}^\T\right)^q\right)\right] \leq (2q|V(\alpha)|)^{2q(|V(\alpha)| - \frac{|U_{\alpha}| + |V_{\alpha}|}{2})}n^{q|V(\alpha)|}\left(\max_{\text{separator } S}{n^{-\frac{|S|}{2}}\left(\sqrt{\frac{1-p}{p}}\right)^{|E(S)|}}\right)^{2q-2}
\end{align*}
\begin{proof}[Proof of \cref{lem:tracepowercalculation}]
    \begin{definition}
        Define $H(\alpha,2q)$ to be the graph formed as follows.
        \begin{enumerate}
            \item Take the shapes $\alpha_1,\ldots,\alpha_{2q}$ where for all $j \in [q]$, $\alpha_{2j-1}$ is a copy of $\alpha$ and $\alpha_{2j}$ is a copy of $\alpha^\T$.
            \item For all $i \in [2q-1]$, glue $\alpha_{i}$ and $\alpha_{i+1}$ together by setting $V_{\alpha_i} = U_{\alpha_{i+1}}$.Then glue $\alpha_{2q}$ and $\alpha_{1}$ together by setting $V_{\alpha_{2q}} = U_{\alpha_{1}}$.
        \end{enumerate}
    \end{definition}
    When we expand out $\tr\left(\left(\mM_{\alpha}\mM_{\alpha}^\T\right)^q\right)$, we obtain
    \[       
        \tr\left(\left(\mM_{\alpha}\mM_{\alpha}^\T\right)^q\right) = \sum_{\substack{\pi:V(H(\alpha,2q)) \to [n]:\\\pi \text{ is injective on each } \alpha_i}}{\prod_{i=1}^{2q}{\prod_{e \in E(\alpha_i)}{\chi_{\{\pi(e)\}}}}}
    \]
    where if $e = \{u,v\}$ then $\pi(e) = \{\pi(u),\pi(v)\}$.

    We split the maps $\pi$ into equivalence classes based on the following data.

    For each vertex $v \in V(\alpha_j) \setminus U_{\alpha_j}$, we specify whether there exists an $i < j$ and a vertex $u \in V(\alpha_i)$ such that $\pi(v) = \pi(u)$. If so, we specify such a pair $(i,u)$. There are at most $2q|V(\alpha)|$ choices for this. We have that $\sum_{j=1}^{2q}{V(\alpha_j) \setminus U_{\alpha_j}} = 2q(|V(\alpha)| - \frac{|U_{\alpha}| + |V_{\alpha}|}{2})$ so the total number of equivalence classes is at most $(2q|V(\alpha)|)^{2q(|V(\alpha)| - \frac{|U_{\alpha}| + |V_{\alpha}|}{2})}$.

    We now analyze the contribution to $\E\left[\tr\left(\left(\mM_{\alpha}\mM_{\alpha}^\T\right)^q\right)\right]$ from each equivalence class.
    
    \begin{definition}
    For each $j \in [2,2q-1]$, let $S_j$ be the set of vertices $v \in \alpha_j$ such that there exists a $i < j$ and $u \in V(\alpha_i)$ such that $\pi(u) = \pi(v)$ and there exists a $k > j$ and a vertex $w \in V(\alpha_k)$ such that $\pi(v) = \pi(w)$. Note that we may take $u = v$ if $v \in U_{\alpha_{j}} = V_{\alpha_{j-1}}$ and we may take $w = v$ if $v \in V_{\alpha_{j}} = U_{\alpha_{j+1}}$.

    \end{definition}
    We now observe that for the terms with nonzero expected value, each $S_j$ must be a vertex separator of $\alpha_j$.
    \begin{proposition}
        If $S_j$ is not a vertex separator of $\alpha_j$ for some $j \in [2,2q-1]$ then \\ $\E\left[\prod_{i=1}^{2q}{\prod_{e \in E(\alpha_i)}{\chi_{\{\pi(e)\}}}}\right] = 0$
        
    \end{proposition}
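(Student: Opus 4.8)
The plan is to show that when $S_j$ fails to be a vertex separator of $\alpha_j$, the product $\prod_{i=1}^{2q}\prod_{e\in E(\alpha_i)}\chi_{\{\pi(e)\}}$ must contain some Fourier character $\chi_{\{f\}}$ with multiplicity exactly $1$. Since the edge variables $\{\chi_{\{f\}}\}_{f}$ are independent and each is mean-zero ($\E[\chi] = p\,\chi(1) + (1-p)\,\chi(0) = \sqrt{p(1-p)} - \sqrt{p(1-p)} = 0$), the expectation of the product factors over pairs $f$ and the factor $\E[\chi_{\{f\}}] = 0$ kills it.

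First I would set up the following dictionary. Call a vertex $v\in V(\alpha_j)$ \emph{left-repeated} if there exist $i<j$ and $u\in V(\alpha_i)$ with $\pi(u)=\pi(v)$, and \emph{right-repeated} if there exist $k>j$ and $w\in V(\alpha_k)$ with $\pi(w)=\pi(v)$; by definition $S_j$ is exactly the set of vertices that are both left-repeated and right-repeated, so any vertex of $\alpha_j$ outside $S_j$ either is not left-repeated or is not right-repeated. The key local fact, which I would prove using that $\pi$ is injective on each $\alpha_i$ and that the shapes are simple graphs, is: if an edge $e\in E(\alpha_j)$ satisfies $\pi(e)=\pi(e')$ for some $e'\in E(\alpha_i)$ with $i<j$, then \emph{both} endpoints of $e$ are left-repeated (the two images $\pi(e),\pi(e')$ are pairs of distinct elements, so matching them vertex-to-vertex witnesses the repetition of each endpoint); symmetrically for $i>j$ and right-repeatedness; and inside $\alpha_j$ itself the character $\pi(e)$ is contributed exactly once. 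Consequently, an edge of $\alpha_j$ with one endpoint not left-repeated and one endpoint not right-repeated has its Fourier character appear with total multiplicity $1$.

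Next I would produce such an edge. Since $S_j$ is not a vertex separator, there is a simple path $P=v_0v_1\cdots v_\ell$ in $\alpha_j$ from $U_{\alpha_j}$ to $V_{\alpha_j}$ avoiding $S_j$. Because $U_{\alpha_j}=V_{\alpha_{j-1}}$, the endpoint $v_0$ is left-repeated (take $i=j-1$, $u=v_0$), hence, being outside $S_j$, is not right-repeated; symmetrically $v_\ell$ is right-repeated but not left-repeated, which in particular forces $\ell\ge 1$. Walking along $P$, a short case split then yields a path-edge $e^*=\{v_t,v_{t+1}\}$ with one endpoint not left-repeated and one not right-repeated: if some path vertex is neither left- nor right-repeated, any incident path-edge works; otherwise every path vertex is either purely-left (left-repeated, not right-repeated) or purely-right, the sequence begins purely-left at $v_0$ and ends purely-right at $v_\ell$, so it switches between two consecutive path vertices. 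Combining with the previous paragraph gives an $e^*$ whose character has multiplicity $1$, and therefore $\E\!\left[\prod_{i=1}^{2q}\prod_{e\in E(\alpha_i)}\chi_{\{\pi(e)\}}\right]=0$.

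I expect the main obstacle to be the bookkeeping in the local fact: making precise, from injectivity of $\pi$ on each $\alpha_i$ together with simplicity of the shapes, that a shared edge-character forces \emph{both} of its endpoints (not just one) to be repeated on the relevant side, and correctly handling the degenerate possibilities — vertices of $U_{\alpha_j}\cap V_{\alpha_j}$ (which always lie in $S_j$, so do not interfere), totally-new interior path vertices, and the case $\ell=1$. Once the local fact is in place, factoring the expectation over the independent edge coordinates and invoking $\E[\chi]=0$ is routine.
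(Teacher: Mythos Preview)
Your proposal is correct and follows essentially the same approach as the paper: find a path avoiding $S_j$, then locate along it an edge whose Fourier character appears with multiplicity one. The paper's argument is slightly more streamlined—it directly takes the \emph{last} left-repeated vertex $v$ on the path and the next vertex $v'$ (which by maximality is not left-repeated, while $v$ is not right-repeated since $v\notin S_j$)—thereby avoiding your case split into ``some vertex is neither'' versus ``purely-left/purely-right switch,'' but the underlying mechanism and the local fact you isolate are the same.
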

    \begin{proof}
        Assume $S_j$ is not a vertex separator of $\alpha_j$ for some $j \in [2,2q-1]$ and let $P$ be a path from $U_{\alpha_j}$ to $V_{\alpha_j}$ which does not intersect $S_j$.

        Let $v$ be the last vertex in $P$ such that there is an $i < j$ and a $u \in V(\alpha_i)$ such that $\pi(u) = \pi(v)$. This vertex must exist as the first vertex in $P$ is in $U_{\alpha_j} = V_{\alpha_{j-1}}$ and thus has this property. 
        
        Note that there cannot be a $k > j$ and a vertex $w \in V(\alpha_k)$ such that $\pi(v) = \pi(w)$ as otherwise we would have that $v \in S_j$. This implies that $v \notin V_{\alpha_j}$ so $v$ cannot be the last vertex in $P$.

        Let $v'$ be the vertex after $v$ in $P$ and consider the edge $e = \{v,v'\}$. By the way we chose $v$, there is no $i' < j$ and $u' \in V(\alpha_i)$ such that $\pi(u') = \pi(v')$. As we noted above, there is no $k > j$ and $w \in V(\alpha_k)$ such that $\pi(v) = \pi(w)$. This implies that $\chi_{\{\pi(e)\}}$ only appears once in the product $\prod_{i=1}^{2q}{\prod_{e \in \E(\alpha_i)}{\chi_{\{\pi(e)\}}}}$ so $\E\left[\prod_{i=1}^{2q}{\prod_{e \in E(\alpha_i)}{\chi_{\{\pi(e)\}}}}\right] = 0$, as needed.
    \end{proof}
    
    We now bound the contribution to $\E\left[\tr\left(\left(\mM_{\alpha}\mM_{\alpha}^\T\right)^q\right)\right]$ from an equivalence class as follows. Let $S_1 = S_{2q} = \emptyset$. For each $j$,
    \begin{enumerate}
        \item For each vertex $v \in V(\alpha_j) \setminus S_j$, we assign a factor of $\sqrt{n}$ to $v$. 
        
        Note that for each distinct vertex in $\pi(V(H(\alpha,2q)))$, this assigns a factor of $\sqrt{n}$ to this vertex for the first and last time it appears which gives a total factor of $n$.
        \item For each edge $e \in E(S_j)$, we assign a factor of $\sqrt{\frac{1-p}{p}}$ to $e$. 
        
        Note that for each distinct edge $\{x,y\}$ in the multiset $\pi(E(H(\alpha,2q)))$, letting $m_{\{x,y\}}$ be the number of times the edge $\{x,y\}$ appears in $\pi(E(H(\alpha,2q)))$, 
        \[
        m_{\{x,y\}} \leq 2 + |\{j \in [2q]: \{x,y\} \in \pi(S_j)\}|
        \]
        Thus, this assigns a factor of at least 
        \[
        \left(\sqrt{\frac{1-p}{p}}\right)^{m_{\{x,y\}}-2} \geq \E\left[\chi_{\{x,y\}}^{m_{\{x,y\}}}\right]
        \]
        to the edge $\{x,y\}$.
    \end{enumerate}
    This gives an upper bound of 
    \[ 
    n^{q|V(\alpha)|}\prod_{j=2}^{2q-1}{\left(n^{-\frac{|S_j|}{2}}\left(\sqrt{\frac{1-p}{p}}\right)^{|E(S_j)|}\right)} \leq n^{q|V(\alpha)|}\left(\max_{\text{separator } S}{n^{-\frac{|S|}{2}}\left(\sqrt{\frac{1-p}{p}}\right)^{|E(S)|}}\right)^{2q-2}
    \]
    Since there are at most $(2q|V(\alpha)|)^{2q(|V(\alpha)| - \frac{|U_{\alpha}| + |V_{\alpha}|}{2})}$ equivalence classes, 
    \[
        \E\left[\tr\left(\left(\mM_{\alpha}\mM_{\alpha}^\T\right)^q\right)\right] \leq (2q|V(\alpha)|)^{2q(|V(\alpha)| - \frac{|U_{\alpha}| + |V_{\alpha}|}{2})}n^{q|V(\alpha)|}\left(\max_{\text{separator } S}{n^{-\frac{|S|}{2}}\left(\sqrt{\frac{1-p}{p}}\right)^{|E(S)|}}\right)^{2q-2}
    \]
    as needed.
\end{proof}

\subsection{Conditioning}
Subgraphs of $G$ which are excessively dense are highly unlikely to occur and will need to be ``conditioned away''.\
This part of the analysis is only needed for controlling the truncation error and calculating $\pE[1]$ and is not needed for analysis of the approximate PSD factorization.

For a small graph $S$ on $[n]$, $w(S)$ is approximately equal
to the logarithm of the expected number of copies of $S$ in $G \sim \calG_{n,p}$.
Therefore, we expect that all subgraphs satisfy $w(S) \geq 0$.
We show that this holds approximately in the following proposition.

\begin{proposition}\label{lem:density-gnp}
For all $\eta > 0$ and $D \in \mathbb{N}$ such that $4\log_n(D) < \beta$, the probability that there is a subgraph $H$ of $G \sim \calG_{n, n^{-\beta}}$ such that $|V(H)| \leq D$ and $|E(H)| > \frac{|V(H)| + \eta}{\beta - 2\log_n(D)}$
is at most $n^{-\eta}$.
Equivalently, with probability at least $1 - n^{-\eta}$ all subgraphs $H$ of $G$ such that $|V(H)| \leq D$ satisfy
\begin{align*}
w(H) &\geq -\eta - 2 \abs{E(H)} \log_n(D)
\end{align*}
\end{proposition}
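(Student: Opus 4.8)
This is a first–moment (union bound) estimate, of exactly the standard type used to rule out dense subgraphs in sparse \Erdos--\Renyi graphs. First note the two displayed formulations are literally complementary: since $p=n^{-\beta}$ here, $\log_n(1/p)=\beta$, so for any graph $H$ the inequality $|E(H)|>\frac{|V(H)|+\eta}{\beta-2\log_n(D)}$ is, after clearing the (positive, since $4\log_n D<\beta$) denominator, the same as $w(H)=|V(H)|-\beta|E(H)|<-\eta-2|E(H)|\log_n(D)$. Hence it suffices to bound $\Pr[Z\ge 1]$, where $Z$ counts the subgraphs $H$ of $G$ with $|V(H)|\le D$ that violate the weight inequality; here a ``subgraph'' is a vertex set $U\subseteq[n]$ together with an edge set $F\subseteq\binom{U}{2}$ with $F\subseteq E(G)$.

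\textbf{Key step: the first moment.} Grouping by $v=|V(H)|$ and $m=|E(H)|$ and using $\binom{n}{v}\le n^{v}$ and $\binom{\binom{v}{2}}{m}\le v^{2m}$, together with $\Pr[F\subseteq E(G)]=p^{m}=n^{-\beta m}$,
\[
\E[Z]=\sum_{v=1}^{D}\ \sum_{m>\frac{v+\eta}{\beta-2\log_n(D)}}\binom{n}{v}\binom{\binom{v}{2}}{m}p^{m}
\ \le\ \sum_{v=1}^{D}\ \sum_{m>\frac{v+\eta}{\beta-2\log_n(D)}} n^{\,v-(\beta-2\log_n v)\,m}.
\]
Writing $c:=\beta-2\log_n(D)>\beta/2>0$, and using $v\le D$ so that $\beta-2\log_n v\ge c$, each exponent is $\le v-cm$. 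For fixed $v$ the smallest admissible $m$, say $m_v$, satisfies $cm_v>v+\eta$, so $n^{v-cm_v}<n^{-\eta}$, and each further unit increase in $m$ multiplies the summand by $n^{-c}<n^{-\beta/2}$. Summing the resulting geometric series in $m$ and then over $1\le v\le D$ gives $\E[Z]\le \frac{D}{1-n^{-\beta/2}}\,n^{-\eta}=n^{-\eta+o(1)}$, and Markov's inequality yields $\Pr[Z\ge 1]\le \E[Z]$, which is the proposition.

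\textbf{Expected main obstacle.} There is no substantive difficulty; the whole content is the elementary count above. The only thing requiring a little care is the union-bound bookkeeping needed to land on the clean bound $n^{-\eta}$ rather than $n^{-\eta}\cdot\mathrm{poly}(D)$: one checks that the threshold value of $m$ is chosen so the exponent is exactly $-\eta$, and that the remaining sum over the few pairs $(v,m)$ with $v\le D$ contributes only a lower-order factor. This $\mathrm{poly}(D)$ loss can either be absorbed into an infinitesimal weakening of $\eta$, or eliminated outright by replacing the crude bounds above with $\binom{n}{v}\le n^{v}/v!$ and $\binom{\binom{v}{2}}{m}\le (ev^{2}/2m)^{m}$; in every application of this proposition (see \cref{assumptions}) $D=n^{o(1)}$, so the distinction is immaterial.
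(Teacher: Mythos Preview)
Your proposal is correct and follows essentially the same first-moment argument as the paper. The paper does exactly what you suggest in your final paragraph to get the clean $n^{-\eta}$ bound: it keeps the $1/a!$ from $\binom{n}{a}\le n^a/a!$, so that after Markov's inequality the probability for each vertex-count $a$ is at most $n^{-\eta}/a!$, and the union bound $\sum_{a\ge 2}1/a!<1$ absorbs the sum over $a$ without any $\mathrm{poly}(D)$ loss.
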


\begin{proof}
We use the first moment method. For all $a$ and $b$ such that $a \leq D$, the expected number of subgraphs with exactly $a$ vertices and at least $b$ edges is at most
\[
   \binom{n}{a}\binom{\frac{a(a-1)}{2}}{b}p^{b} \leq \frac{n^{a}\left(\frac{a^2}{2}\right)^b}{a!b!}n^{-b\beta} \leq \frac{n^{-\eta}}{a!}n^{a + \eta +  (2\log_{n}(D) - \beta)b}
\]
If $b \geq \frac{a + \eta}{\beta - 2\log_{n}(D)}$ then $n^{a + \eta +  (2\log_{n}(D) - \beta)b} \leq 1$ so the expected number of graphs with exactly $a$ vertices and at least $b$ edges is at most $\frac{n^{-\eta}}{a!}$. Using Markov's inequality, this implies that the probability that there is a subgraph with exactly $a$ vertices and more than $\frac{a + \eta}{\beta - 2\log_n(D)}$ edges is at most $\frac{n^{-\eta}}{a!}$. Taking a union bound over all $a \in [2,D]$ (the cases when $a \leq 1$ are trivial), the probability of having a subgraph $H$ with at most $D$ vertices and at least $\frac{|V(H)| + \eta}{\beta - 2
\log_{n}(D)}$ edges
is at most $n^{-\eta}$.
\end{proof}

\begin{theorem}\label{thm:conditionednormbounds}
Conditioned on $G$ having no subgraphs $H$ such that $|V(H)| \leq D$ and $|E(H)| > \frac{|V(H)| + \eta}{\beta - 2\log_n(D)}$ and the norm bounds holding,
\begin{enumerate}
    \item For all shapes $\alpha$ such that $|V(\alpha)| \leq D$,
\[
\lambda_{\alpha}||\mM_{\alpha}|| \leq 2B_{adjust}(\alpha)n^{(1 - \alpha)\frac{|U_{\alpha}| + |V_{\alpha}|}{2} + \eta - (\gamma - {\alpha}{\beta} - 3\log_n(D))|E(\alpha)|}
\]
\item For all left shapes $\sigma$ such that $|V(\sigma)| \leq D$ and $|U_{\sigma}| \leq \dsos$, 
\[
\lambda_{\sigma}||\mM_{\sigma}|| \leq 2B_{adjust}(\sigma)n^{(1 - \alpha)\dsos + \eta + (\frac{\beta}{2} - {\alpha}\beta)|E(V_{\sigma})| - (\gamma - {\alpha}{\beta} - 3\log_n(D))|E(\sigma)|}
\]
Note that this may be stronger than the first bound when $|V_{\sigma}|>\dsos$.
\end{enumerate}
\end{theorem}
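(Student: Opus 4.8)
The plan is to combine the unconditional graph-matrix norm bound of \cref{cor:normbound} with the closed form for $\lam_\al\normapx{\al}$ in \cref{lem:GeneralCoefficientTimesNorm}, and then invoke the density conditioning to discard the two weight terms that could otherwise blow the estimate up.

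For the first claim, I would start from the fact that the norm bounds hold: \cref{cor:normbound} gives $\norm{\mM_\al}\le B_{adjust}(\al)\normapx{\al}$, so it suffices to bound $\lam_\al\normapx{\al}$. Taking $S$ to be an SMVS of $\al$, \cref{lem:GeneralCoefficientTimesNorm} rewrites this as
\[
\lam_\al\normapx{\al}=n^{(1-\al)\frac{|U_\al|+|V_\al|}{2}-(\frac12-\al)w(\al)-\frac{w(S)}{2}+\frac{|\Iso(\al)|}{2}-(\gam-\al\beta)|E(\al)|}.
\]
The summands $-(\frac12-\al)w(\al)$ and $-\frac{w(S)}{2}$ are nonnegative precisely when $\al$, respectively its separator $S$, is denser than a typical subgraph of $\calG_{n,p}$ — exactly the situation the conditioning rules out. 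Since $|V(\al)|\le D$ and $S\subseteq V(\al)$, the hypothesis on $G$ (equivalently, $w(H)\ge -\eta-2|E(H)|\log_n D$ for every subgraph $H$ on at most $D$ vertices) gives $w(\al)\ge -\eta-2|E(\al)|\log_n D$ and $w(S)\ge -\eta-2|E(S)|\log_n D\ge -\eta-2|E(\al)|\log_n D$. Substituting these, using $\frac12-\al\le\frac12$, and reducing to the case $\Iso(\al)=\emptyset$ exactly as in the proof of \cref{cor:normbound}, the exponent is at most $(1-\al)\frac{|U_\al|+|V_\al|}{2}+\eta-(\gam-\al\beta-2\log_n D)|E(\al)|$. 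The first claim then follows after absorbing $B_{adjust}(\al)$ and the remaining $O(1)$ factors into $2B_{adjust}(\al)$; the weaker coefficient $\gam-\al\beta-3\log_n D$ in the statement leaves a cushion of $\log_n(D)|E(\al)|$ to spare.

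For the second claim I would run the identical computation, except that $\sig$ is a left shape and so its unique SMVS is $V_\sig$ by \cref{def:left-shape}; taking $S=V_\sig$ in \cref{lem:GeneralCoefficientTimesNorm} and substituting $w(V_\sig)=|V_\sig|-\beta|E(V_\sig)|$, the vertex part collapses to $(1-\al)\frac{|U_\sig|}{2}-\frac\al2|V_\sig|+\frac\beta2|E(V_\sig)|-(\frac12-\al)w(\sig)$. Now $|U_\sig|\le\dsos$ bounds the first summand by $(1-\al)\dsos$, and combining $-\frac\al2|V_\sig|$ with the density bound on $w(V_\sig)$ converts the $\frac\beta2|E(V_\sig)|$ coefficient into $\frac\beta2-\al\beta$ (again paying only the $\log_n D$ cushion), while the density bound on $w(\sig)$ is fed through as before. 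Because this route never pays the $(1-\al)\frac{|V_\sig|}{2}$ term of the first claim, it is the sharper estimate when $|V_\sig|$ is large.

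\textbf{Main obstacle.} The one step with genuine content, as opposed to arithmetic, is the passage from the conditioning — a statement about subgraphs of $G$ — to the lower bounds on the combinatorial weights $w(\al)$ and $w(S)$: i.e. verifying that no shape or separator feeding into the bound can be dense enough to render the generic estimate $\normapx{\al}$ vacuous. Everything else is routine tracking of subpolynomial quantities (the $B_{adjust}$ polynomial, the absolute constant $2$, the isolated-vertex reduction, and the $\log_n D$ slack in the edge-decay exponent), which is carried out in \cref{sec:formal-details}.
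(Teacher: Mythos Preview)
Your proposal has a genuine gap at exactly the step you flag as having ``genuine content.'' The quantities $w(\al)$ and $w(S)$ are purely combinatorial invariants of the \emph{shape}: $w(\al)=|V(\al)|-\beta|E(\al)|$ depends only on how many vertices and edges $\al$ has, and has nothing to do with $G$. A shape can be as dense as you like (e.g.\ $\al$ a clique on $m\le D$ vertices has $w(\al)=m-\beta\binom m2$, which is hugely negative), and the conditioning on $G$ does not prevent this. So the inference ``$w(\al)\ge -\eta-2|E(\al)|\log_n D$ by the density hypothesis'' is simply false: you are conflating subgraphs of the random input $G$ with the abstract graph $\al$ indexing a Fourier character.

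What the conditioning actually buys is the following. If $\al$ contains a subset $S^{\emptyset}_\al$ so dense that $|E(S^{\emptyset}_\al)|>\frac{|S^{\emptyset}_\al|+\eta}{\beta-2\log_n D}$, then for \emph{every} ribbon $R$ of shape $\al$ the corresponding edges of $S^{\emptyset}_\al$ cannot all be present in $G$. This lets you rewrite the Fourier character via the identity
\[
\chi_E=-\sum_{E'\subsetneq E}\Bigl(-\sqrt{\tfrac{p}{1-p}}\Bigr)^{|E|-|E'|}\chi_{E'}\qquad\text{(valid when not all of $E$ is in $E(G)$),}
\]
thereby expressing $\mM_\al$ as a signed combination of graph matrices for shapes with strictly fewer edges, each picking up a factor of roughly $\sqrt p$ per deleted edge. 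Iterating until no too-dense subset survives, one then bounds the resulting isolated vertices and the new SMVS weight in terms of the edge-removal count, and sums the geometric series. This is the paper's route, and the bookkeeping of isolated vertices, the lower bound on $w(S')$ for the new separator, and the $D^{2j}$ count of removal patterns are where the $3\log_n D$ and the factor $2$ in the statement come from. Your outline skips this entire mechanism.
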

\begin{proof}
We start with the first statement. Given a shape $\alpha$, let $S^{\emptyset}_{\alpha}$ be a set of vertices of $\alpha$ (not necessarily a separator) which minimizes $w(S^{\emptyset}_{\alpha}) = |S^{\emptyset}_{\alpha}| - {\beta}E(S^{\emptyset}_{\alpha})$. In other words, $S^{\emptyset}_{\alpha}$ determines the norm bound of the scalar $\one^\T \mM_\al \one$, which is equivalently the shape obtained from $\alpha$ by setting the matrix indices $U_{\alpha}$ and $V_{\alpha}$ to be $\emptyset$.

If $w(S^{\emptyset}_{\alpha}) \geq 0$, letting $S$ be the SMVS of $\alpha$, we have that $w(\alpha) \geq w(S) \geq 0$ so we have the following bounds
\begin{enumerate}
    \item 
    \begin{align*}
        \lambda_{\alpha} &= n^{(\alpha - 1)(|V(\alpha)| - \frac{|U_{\alpha}| + |V_{\alpha}|}{2}) + (\frac{\beta}{2} - \gamma)|E(\alpha)|} \\
        &= n^{(1 - \alpha)\frac{|U_{\alpha}| + |V_{\alpha}|}{2} - \frac{|V(\alpha)|}{2} - (\frac{1}{2} - \alpha)w(\alpha) - (\gamma - {\alpha}{\beta})|E(\alpha)|}\\
        &\leq n^{(1 - \alpha)\frac{|U_{\alpha}| + |V_{\alpha}|}{2} - \frac{|V(\alpha)|}{2} - (\gamma - {\alpha}{\beta})|E(\alpha)|}
     \end{align*}
    \item $||\mM_{\alpha}|| \leq B_{adjust}(\alpha)n^{\frac{|V(\alpha)| - w(S)}{2}} \leq B_{adjust}(\alpha)n^{\frac{|V(\alpha)|}{2}}$
\end{enumerate}
Putting these bounds together, 
\[
\lambda_{\alpha}||\mM_{\alpha}|| \leq B_{adjust}(\alpha)n^{(1 - \alpha)\frac{|U_{\alpha}| + |V_{\alpha}|}{2} - (\gamma - {\alpha}{\beta})|E(\alpha)|}
\]

We now analyze the case when $w(S^{\emptyset}_{\alpha}) < 0$. If $
|E(S^{\emptyset}_{\alpha})| > \frac{|S^{\emptyset}_{\alpha}| + \eta}{\beta - 2\log_{n}(D)}$, we apply the conditioning argument (Lemma 6.30) from \cite{JPRTX22} which allows us to reduce $\mM_{\alpha}$ to a linear combination of shapes with fewer edges before applying our norm bounds.
\begin{lemma}\label{cor:conditioning-factorization}
Given a set of edges $E$, if we know that not all of the edges of $E$ are in $E(G)$ then 
\[
\chi_E = -\sum_{E' \subseteq E: E' \neq E}{\left(-\sqrt{\frac{p}{1-p}}\right)^{|E| - |E'|}\chi_{E'}}
\]
\end{lemma}
If $
|E(S^{\emptyset}_{\alpha})| > \frac{|S^{\emptyset}_{\alpha}| + \eta}{\beta - 2\log_{n}(D)}$, we apply \cref{cor:conditioning-factorization} to $S^{\emptyset}_{\alpha}$. In particular, we apply the following step repeatedly:
\begin{enumerate}
\item Specify an edge which is removed from $E(S^{\emptyset}_{\alpha})$ and specify whether there is at least one more edge which is removed from this application of \cref{cor:conditioning-factorization} or we are done with this application of \cref{cor:conditioning-factorization}. Note that there are at most $2\binom{D}{2} \leq D^2$ possibilities for this.
\end{enumerate}
If there is still a subset $S \subseteq S^{\emptyset}_{\alpha}$ such that $
|E(S)| > \frac{|S| + \eta}{\beta - 2\log_{n}(D)}$, then we apply \cref{cor:conditioning-factorization} to $S$ and repeat this process. Otherwise, we stop.

For each $j \in \mathbb{N} \cup \{0\}$, this gives a total of at most $D^{2j}$ terms where $j$ edges were removed. Each such term comes with a factor of $\left(\sqrt{\frac{p}{1-p}}\right)^{j}$ for the removed edges.

We have the following bounds:
\begin{enumerate}
    \item $\lambda_{\alpha} = n^{(\alpha - 1)(|V(\alpha)| - \frac{|U_{\alpha}| + |V_{\alpha}|}{2}) + (\frac{\beta}{2} - \gamma)|E(\alpha)|} = n^{(1 - \alpha)\frac{|U_{\alpha}| + |V_{\alpha}|}{2} - \frac{|V(\alpha)|}{2} - (\frac{1}{2} - \alpha)w(\alpha) - (\gamma - {\alpha}{\beta})|E(\alpha)|}$
    \item For each of the resulting terms, letting $j$ be the number of edges which were removed and letting $I$ be the set of isolated vertices after the edges are removed, 
    \[
    |I| \leq (\beta - 2\log_n(D))j + w(S^{\emptyset}_{\alpha}) + 2\log_n(D)|E(S^{\emptyset}_{\alpha})| + \eta\\
    \]
    To see this, observe that $I \subseteq S^{\emptyset}_{\alpha}$ as only edges in $S^{\emptyset}_{\alpha}$ can be removed. Now consider the set $S$ obtained by deleting the removed edges and all vertices in $I$ from $S^{\emptyset}_{\alpha}$. We have that 
    \[
    |E(S^{\emptyset}_{\alpha})| - j = |E(S)| \leq \frac{|S| + \eta}{\beta - 2\log_n(D)} = \frac{|S^{\emptyset}_{\alpha}| + \eta - |I|}{\beta - 2\log_n(D)}
    \]
    which implies that 
    \begin{align*}
    |I| &\leq |S^{\emptyset}_{\alpha}| + \eta -(\beta - 2\log_n(D))(|E(S^{\emptyset}_{\alpha})| - j) \\
    &= w(S^{\emptyset}_{\alpha}) + 2\log_n(D)|E(S^{\emptyset}_{\alpha})| + \eta + (\beta - 2\log_n(D))j
    \end{align*}
   
    \item For each of the resulting terms $\beta$, letting $S'$ be the SMVS after the edges are removed, $w(S') \geq -2\log_n(D)|E(S^{\emptyset}_{\alpha})| - \eta$. We can show this as follows. Let $S'' = S' \cap S^{\emptyset}_{\alpha}$ and observe that $w(S'') \leq w(S')$. To see this, observe that since none of the edges in $E(S') \setminus E(S'')$ are removed, $w(S') - w(S'')$ is unaffected by removing the edges. Now note that before the edges are removed, $w(S') - w(S'') \geq w(S^{\emptyset}_{\alpha} \cup (S' \setminus S'')) - w(S^{\emptyset}_{\alpha}) \geq 0$. 
    
    Since $S'' \subseteq S^{\emptyset}_{\alpha}$, we have that after the edges are removed, $
    |E(S'')| \leq \frac{|S''| + \eta}{\beta - 2\log_n(D)}$ which implies that $w(S'') \geq -2\log_n(D)|E(S'')| - \eta \geq -2\log_n(D)|E(S^{\emptyset}_{\alpha})| - \eta$.
\end{enumerate}
Putting these bounds together, using the bounds that $|E(S^{\emptyset}_{\alpha})| \leq |E(\alpha)|$ and $w(S^{\emptyset}_{\alpha}) \leq \min{\{0,w(\alpha)\}}$, and observing that removing vertices and/or edges from a shape $\alpha$ can only reduce $B_{adjust}(\alpha)$, we have that 
\begin{align*}
\frac{\lambda_{\alpha}\norm{\mM_{\alpha}}}{B_{adjust}(\alpha)} &\leq \lambda_{\alpha}\sum_{j = 0}^{|E(\alpha)|}{\left(D^2\sqrt{\frac{p}{1-p}}\right)^{j}}n^{\frac{|V(\alpha)| + |I| - w(S')}{2}}\\
&\leq n^{(1 - \alpha)\frac{|U_{\alpha}| + |V_{\alpha}|}{2} - \frac{|V(\alpha)|}{2} - (\frac{1}{2} - \alpha)w(\alpha) - (\gamma - {\alpha}{\beta})|E(\alpha)|}n^{\frac{|V(\alpha)| + 2\log_n(D)|E(S^{\emptyset}_{\alpha})| + \eta}{2}}\\
&\sum_{j = 0}^{|E(\alpha)|}{\left(D^2\sqrt{\frac{p}{1-p}}\right)^{j}n^{\frac{(\beta - 2\log_n(D))j + w(S^{\emptyset}_{\alpha}) + 2\log_n(D)|E(S^{\emptyset}_{\alpha})| + \eta}{2}}} \\
&\leq n^{(1 - \alpha)\frac{|U_{\alpha}| + |V_{\alpha}|}{2} - (\gamma - {\alpha}{\beta})|E(\alpha)|}\sum_{j = 0}^{|E(\alpha)|}{\left(D^2\sqrt{\frac{p}{1-p}}\right)^{j}}n^{\frac{({\beta} - 2\log_n(D))j}{2} + 2\log_n(D)|E(\alpha)| + \eta} \\
&\leq 2n^{(1 - \alpha)\frac{|U_{\alpha}| + |V_{\alpha}|}{2} + \eta - (\gamma - {\alpha}{\beta} - 3\log_n(D))|E(\alpha)|}
\end{align*}
To prove the second statement, we split into cases based on whether $w(V_{\sigma})$ is non-negative. For the case where $w(V_{\sigma}) < 0$, we follow the same procedure as before. We let $S^{\emptyset}_{\sigma}$ be a set of vertices of $\sigma$ which minimizes $w(S^{\emptyset}_{\sigma})$. As long as there is a subset $S \subseteq S^{\emptyset}_{\sigma}$ such that $
|E(S)| > \frac{|S| + \eta}{\beta - 2\log_{n}(D)}$, we apply \cref{cor:conditioning-factorization} to $S$. For each $j \in \mathbb{N} \cup \{0\}$, this gives a total of at most $D^{2j}$ terms where $j$ edges were removed where each such term comes with a factor of $\left(\sqrt{\frac{p}{1-p}}\right)^{j}$ for the removed edges. For each such term, letting $I$ be the set of isolated vertices and letting $S'$ be the new SMVS, we have the same bounds as before:
\begin{enumerate}
    \item     
    $|I| \leq (\beta - 2\log_n(D))j + w(S^{\emptyset}_{\sigma}) + 2\log_n(D)|E(S^{\emptyset}_{\sigma})| + \eta
    $
    \item $w(S') \geq -2\log_n(D)|E(S^{\emptyset}_{\alpha})| - \eta$
\end{enumerate}
To bound $\lambda_{\sigma}$, we observe that
\begin{align*}
\lambda_{\sigma} &= n^{(1 - \alpha)\frac{|U_{\sigma}| + |V_{\sigma}|}{2} - \frac{|V(\sigma)|}{2} - (\frac{1}{2} - \alpha)w(\sigma) - (\gamma - {\alpha}{\beta})|E(\sigma)|}\\
&= n^{(1 - \alpha)\frac{|U_{\sigma}|}{2} - \frac{|V(\sigma)|}{2} + \frac{1 - \alpha}{2}w(V_{\sigma}) + (\frac{\beta}{2} - {\alpha}\beta)|E(V_{\sigma})| - (\frac{1}{2} - \alpha)w(\sigma) - (\gamma - {\alpha}{\beta})|E(\sigma)|}
\end{align*}
Using the bound that $w(V_{\sigma}) \leq |U_{\sigma}| + w(S^{\emptyset}_{\sigma}) \leq \dsos + w(S^{\emptyset}_{\sigma})$, we have that
\[
\lambda_{\sigma} \leq n^{(1 - \alpha)\dsos - \frac{|V(\sigma)|}{2} + \frac{1 - \alpha}{2}w(S^{\emptyset}_{\sigma}) + (\frac{\beta}{2} - {\alpha}\beta)|E(V_{\sigma})| - (\frac{1}{2} - \alpha)w(\sigma) - (\gamma - {\alpha}{\beta})|E(\sigma)|}
\]

Putting these bounds together, using the bounds that $|E(S^{\emptyset}_{\sigma})| \leq |E(\sigma)|$ and $w(S^{\emptyset}_{\sigma}) \leq \min{\{0,w(\sigma)\}}$, and observing that removing vertices and/or edges from a shape $\sigma$ can only reduce $B_{adjust}(\sigma)$, we have that 
\begin{align*}
\frac{\lambda_{\sigma}||\mM_{\sigma}||}{B_{adjust}(\sigma)} &\leq \lambda_{\sigma}\sum_{j = 0}^{|E(\sigma)|}{\left(D^2\sqrt{\frac{p}{1-p}}\right)^{j}}n^{\frac{|V(\sigma)| + |I| - w(S')}{2}}\\
&\leq n^{(1 - \alpha)\dsos - \frac{|V(\sigma)|}{2} + \frac{1 - \alpha}{2}w(S^{\emptyset}_{\sigma}) + (\frac{\beta}{2} - {\alpha}\beta)|E(V_{\sigma})| - (\frac{1}{2} - \alpha)w(\sigma) - (\gamma - {\alpha}{\beta})|E(\sigma)|}n^{\frac{|V(\sigma)| + 2\log_n(D)|E(S^{\emptyset}_{\sigma})| + \eta}{2}}\\
&\sum_{j = 0}^{|E(\sigma)|}{\left(D^2\sqrt{\frac{p}{1-p}}\right)^{j}n^{\frac{(\beta - 2\log_n(D))j + w(S^{\emptyset}_{\sigma}) + 2\log_n(D)|E(S^{\emptyset}_{\sigma})| + \eta}{2}}} \\
&\leq n^{(1 - \alpha)\dsos + (\frac{\beta}{2} - {\alpha}\beta)|E(V_{\sigma})| - (\gamma - {\alpha}{\beta})|E(\sigma)|}\sum_{j = 0}^{|E(\sigma)|}{\left(D^2\sqrt{\frac{p}{1-p}}\right)^{j}}n^{\frac{({\beta} - 2\log_n(D))j}{2} + 2\log_n(D)|E(\sigma)| + \eta} \\
&\leq 2n^{(1 - \alpha)\dsos + \eta + (\frac{\beta}{2} - {\alpha}\beta)|E(V_{\sigma})| - (\gamma - {\alpha}{\beta} - 3\log_n(D))|E(\sigma)|}
\end{align*}
For the case when $w(V_{\sigma}) \geq 0$, we again observe that 
\[
\lambda_{\sigma} = n^{(1 - \alpha)\frac{|U_{\sigma}|}{2} - \frac{|V(\sigma)|}{2} + \frac{1 - \alpha}{2}w(V_{\sigma}) + (\frac{\beta}{2} - {\alpha}\beta)|E(V_{\sigma})| - (\frac{1}{2} - \alpha)w(\sigma) - (\gamma - {\alpha}{\beta})|E(\sigma)|}
\]
Since $w(\sigma) \geq w(V_{\sigma}) \geq 0$ and $w(V_{\sigma}) \leq |U_{\sigma}| \leq \dsos$ ,
\[
\lambda_{\sigma} \leq n^{(1 - \alpha)\dsos - \frac{|V(\sigma)|}{2} + (\frac{\beta}{2} - {\alpha}\beta)|E(V_{\sigma})| - (\gamma - {\alpha}{\beta})|E(\sigma)|}
\]
Since $||\mM_{\sigma}|| \leq B_{adjust}(\sigma)n^{\frac{|V(\sigma)| - w(S)}{2}} \leq B_{adjust}(\sigma)n^{\frac{|V(\sigma)|}{2}}$, we have that 
\[
\lambda_{\sigma}||\mM_{\sigma}|| \leq B_{adjust}(\sigma)n^{(1 - \alpha)\dsos + \eta + (\frac{\beta}{2} - {\alpha}\beta)|E(V_{\sigma})| - (\gamma - {\alpha}{\beta})|E(\sigma)|}\,.
\]
\end{proof}

Using the forthcoming notation of \cref{sec:starting-point}, we deduce
the following norm bounds for left shapes.
\begin{corollary}\label{cor: sigmaminusnormbondone}
    With the conditioning, for all $U \in \mathcal{I}_{\text{mid}}$ and all $\sigma \in \calL_{U,\leq D_V}$,
    \[
    \lambda_{\sigma^{-}}||\mM_{\sigma^{-}}|| \leq 2B_{adjust}(\sigma)n^{(1 - \alpha)\dsos + \eta - (\frac{\beta}{2} - 3\log_n(D_V))|E(U)| - (\gamma - {\alpha}{\beta} - 3\log_n(D_V))|E(\sigma) \setminus E(U)|}\,.
    \]
\end{corollary}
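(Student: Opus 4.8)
The plan is to derive \cref{cor: sigmaminusnormbondone} from the second statement of \cref{thm:conditionednormbounds}, together with the definition of $\sigma^-$ introduced in \cref{sec:starting-point}. Recall that $\sigma^-$ is the left shape obtained from the left shape $\sigma$ (with $V_\sigma = U$) by handling the Fourier edges sitting inside $V_\sigma = U$ via edge indicators rather than as plain Fourier characters, with an associated coefficient $\lambda_{\sigma^-}$ read off from the corresponding deleted-edge shape. The first step is to make this relationship precise: writing $D$ for the $0/1$ diagonal operator that, on a column index $J = \varphi(V_\sigma)$, evaluates the indicator that all $|E(U)|$ edges $\varphi(E(U))$ are present in $G$, one has $\mM_{\sigma^-} = \mM_{\sigma\setminus E(U)}\,D$, since for a left shape the $E(U)$-edges depend only on the $V_\sigma$-labels. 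Thus $\lambda_{\sigma^-}\norm{\mM_{\sigma^-}}$ is controlled by the left shape $\sigma\setminus E(U)$ (which accounts for the $E(\sigma)\setminus E(U)$ exponent in the claimed bound) together with the extra restriction imposed by $D$.

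Second, I would extract the $-(\tfrac{\beta}{2} - 3\log_n D_V)|E(U)|$ saving. The clean way is to linearize the $|E(U)|$ indicators using $\one_{e\in E(G)} = p + \sqrt{p(1-p)}\,\chi_e$, which expands $\mM_{\sigma^-}$ as a sum over $E' \subseteq E(U)$ of $p^{|E(U)|-|E'|}(p(1-p))^{|E'|/2}\mM_{\sigma^-_{E'}}$, where $\sigma^-_{E'}$ is the genuine left shape with Fourier edges $(E(\sigma)\setminus E(U))\cup E'$; note $\sigma^-_{E(U)} = \sigma$, $V_{\sigma^-_{E'}} = V_\sigma$, $E(V_{\sigma^-_{E'}}) = E'$, $|V(\sigma^-_{E'})|\le D_V$, $|U_{\sigma^-_{E'}}|\le\dsos$, and $B_{adjust}(\sigma^-_{E'})\le B_{adjust}(\sigma)$. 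Apply \cref{thm:conditionednormbounds}(2) (with $D = D_V$, under the conditioning events of \cref{lem:density-gnp} and the norm bounds) to each $\sigma^-_{E'}$, then recombine using $|E(\sigma^-_{E'})| = |E(\sigma)\setminus E(U)| + |E'|$, the linearization-coefficient magnitude $\le p^{|E(U)| - |E'|/2} = n^{-\beta(|E(U)|-|E'|/2)}$, and the coefficient identity relating $\lambda_{\sigma^-}$ to $\lambda_{\sigma^-_{E'}}$ by a factor $n^{(\gamma - \beta/2)|E'|}$. A short computation shows each $E'$-term is bounded by
\[
2B_{adjust}(\sigma)\,n^{(1-\alpha)\dsos + \eta + (\tfrac{\beta}{2} + 3\log_n D_V)|E'| - \beta|E(U)| - (\gamma - \alpha\beta - 3\log_n D_V)|E(\sigma)\setminus E(U)|}.
\]
Since the coefficient of $|E'|$ here is positive, the worst term is $|E'| = |E(U)|$, i.e. $\sigma^-_{E'} = \sigma$ itself, and there the two edge terms collapse to exactly $-(\tfrac{\beta}{2} - 3\log_n D_V)|E(U)|$, matching the target; all smaller $E'$ give strictly better exponents. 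Summing over the (at most subpolynomially many, after absorbing into the $B_{adjust}$ and ``$2$'' slack as permitted by \cref{assumptions}) choices of $E'$ yields the claim.

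I expect the main obstacle to be precisely this bookkeeping: one must verify that the interplay of $\lambda_{\sigma^-}$, the linearization coefficients, and the conditioned norm bound of \cref{thm:conditionednormbounds}(2) conspires so that the extremal case $\sigma^-_{E'} = \sigma$ reproduces the stated exponent on the nose while no other linearization term — and no other choice of SMVS inside $\sigma^-_{E'}$ — is worse, and that the number of linearization terms together with the $\poly(D_V)$ factors is subsumed by the allowed slack (this is where $\log_n(D_V)\le \epsilon/20$ and $D_V \ge (100/\epsilon)\dsos$ from \cref{assumptions} enter). Everything else is a routine specialization of the already-established conditioned norm bound for left shapes.
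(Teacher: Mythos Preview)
There is a genuine gap rooted in a misreading of the definition of $\sigma^-$. In the paper (see the ``Minus abbreviation'' definition in \cref{sec:starting-point}), $\sigma^- = \sigma \setminus E(V_\sigma)$ is simply the shape $\sigma$ with the Fourier edges inside $V_\sigma = U$ \emph{deleted}; it carries no edge indicators. Thus $\mM_{\sigma^-} = \mM_{\sigma\setminus E(U)}$ exactly, and your asserted identity $\mM_{\sigma^-} = \mM_{\sigma\setminus E(U)}\,D$ with $D$ the $0/1$ indicator diagonal is false (it would force $D = \Id$ on the support). Consequently the linearization $\one_{e\in E(G)} = p + \sqrt{p(1-p)}\,\chi_e$ never enters: there is nothing to linearize, and the decomposition into the shapes $\sigma^-_{E'}$ does not arise. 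A second, related issue is that even if one did arrive at the shapes $\sigma^-_{E'}$ for $E' \subsetneq E(U)$, these are \emph{not} left shapes in general (removing edges from $V_\sigma$ raises $w(V_\sigma)$ and can push the SMVS elsewhere), so \cref{thm:conditionednormbounds}(2) would not apply to them as stated.

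The paper's route is much shorter and sidesteps both problems: apply \cref{thm:conditionednormbounds}(2) once, to $\sigma$ itself (which \emph{is} a left shape), and then pass to $\sigma^-$ by the multiplicative relation
\[
\lambda_{\sigma^-}\norm{\mM_{\sigma^-}} \;\le\; \left(\sqrt{\tfrac{p}{1-p}}\right)^{|E(U)|} n^{(\gamma-\beta/2)|E(U)|}\,\lambda_{\sigma}\norm{\mM_{\sigma}} \;=\; n^{(\gamma-\beta)|E(U)|}\,\lambda_{\sigma}\norm{\mM_{\sigma}}.
\]
Here $n^{(\gamma-\beta/2)|E(U)|} = \lambda_{\sigma^-}/\lambda_\sigma$ is immediate from the definition of $\lambda$, and the $(\sqrt{p/(1-p)})^{|E(U)|}$ factor accounts for the $|E(U)|$ edges that are present in the separator of $\sigma$ but absent from $\sigma^-$; combining with the bound on $\lambda_\sigma\norm{\mM_\sigma}$ and splitting $|E(\sigma)| = |E(U)| + |E(\sigma)\setminus E(U)|$ gives the stated exponent directly. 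Your ``extremal term'' $E' = E(U)$ is precisely $\sigma$, so you were circling the right object, but the linearization detour is both unnecessary and built on a false premise.
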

\begin{proof}
    By \cref{thm:conditionednormbounds},
    \[
        \lambda_{\sigma}||\mM_{\sigma}|| \leq 2B_{adjust}(\sigma)n^{(1 - \alpha)\dsos + \eta + (\frac{\beta}{2} - {\alpha}\beta)|E(U)| - (\gamma - {\alpha}{\beta} - 3\log_n(D))|E(\sigma)|}
    \]
    We now observe that 
    \[
    \lambda_{\sigma^{-}}||\mM_{\sigma^{-}}|| \leq \left(\sqrt{\frac{p}{1-p}}\right)^{|E(U)|}n^{(\gamma - \frac{\beta}{2})|E(U)|}\lambda_{\sigma}||\mM_{\sigma}|| = n^{(\gamma - \beta)|E(U)|}\lambda_{\sigma}||\mM_{\sigma}||
    \]
    where we remove the indicators from $V_{\sigma}$ for the purpose of bounding $||\mM_{\sigma}||$. 
    Combining these bounds gives the result.
\end{proof}
\begin{corollary}\label{cor:sigmaminusnormboundtwo}
With the conditioning, for all $U \in \mathcal{I}_{mid}$ and all $\sigma \in \calL_{U,\leq D_V}$,
\[
\lambda_{\sigma^{-}}||\mM_{\sigma^{-}}||\sqrt{\lambda_{U}||\mM_{U}||} \leq 2B_{adjust}(\sigma)n^{\dsos + \eta - \frac{\epsilon}{2}\dsos - \frac{\epsilon}{8}|E(\sigma)| - \frac{\epsilon}{8}|V(\sigma)|}
\]
\end{corollary}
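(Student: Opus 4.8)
The plan is to combine \cref{cor: sigmaminusnormbondone}, which already controls $\lambda_{\sigma^{-}}\norm{\mM_{\sigma^{-}}}$, with an elementary bound on the remaining factor $\sqrt{\lambda_U\norm{\mM_U}}$, and then check that the exponent of $n$ in the product falls below the claimed bound using the slack provided by \cref{assumptions}. For the second factor, note that since $U \in \mathcal{I}_{mid}$ is a diagonal shape, $\mM_U$ is a diagonal matrix each of whose nonzero entries is a product of $|E(U)|$ Fourier characters, and $|\chi_e(G)| \le \sqrt{(1-p)/p} = n^{\beta/2}$; hence $\norm{\mM_U} \le n^{\frac{\beta}{2}|E(U)|}$, with no conditioning needed. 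Since $\lambda_U = n^{(\frac{\beta}{2} - \gamma)|E(U)|}$, this gives $\sqrt{\lambda_U\norm{\mM_U}} \le n^{\frac{\beta - \gamma}{2}|E(U)|}$ with no combinatorial prefactor.

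Next I would multiply this estimate by the bound of \cref{cor: sigmaminusnormbondone}. Writing $E(\sigma) = E(U) \sqcup (E(\sigma) \setminus E(U))$ (recall $V_\sigma = U$ for $\sigma \in \calL_{U,\leq D_V}$, so $E(U)$ is exactly the edge set receiving indicators in $\sigma^-$), the two exponents on $|E(U)|$ combine as $-(\tfrac{\beta}{2} - 3\log_n D_V) + \tfrac{\beta-\gamma}{2} = 3\log_n D_V - \tfrac{\gamma}{2}$, giving
\[
\lambda_{\sigma^{-}}\norm{\mM_{\sigma^{-}}}\sqrt{\lambda_U\norm{\mM_U}} \leq 2B_{adjust}(\sigma)\, n^{(1-\alpha)\dsos + \eta + (3\log_n D_V - \frac{\gamma}{2})|E(U)| - (\gamma - \alpha\beta - 3\log_n D_V)|E(\sigma)\setminus E(U)|}.
\]
It remains to estimate the exponent. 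By the first bound of \cref{assumptions}, $(1-\alpha)\dsos \leq (1-\epsilon)\dsos$. By the fourth bound, $3\log_n D_V \leq \tfrac{3\epsilon}{20}$, and since $\gamma \geq \gamma - \alpha\beta \geq \epsilon$ (second bound), both $3\log_n D_V - \tfrac{\gamma}{2}$ and $-(\gamma - \alpha\beta - 3\log_n D_V)$ are at most $-\tfrac{\epsilon}{4}$; consequently the edge terms contribute at most $-\tfrac{\epsilon}{4}(|E(U)| + |E(\sigma)\setminus E(U)|) = -\tfrac{\epsilon}{4}|E(\sigma)|$. Hence the exponent is at most $\dsos + \eta - \epsilon\dsos - \tfrac{\epsilon}{4}|E(\sigma)|$.

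Finally I would match this against the target exponent $\dsos + \eta - \tfrac{\epsilon}{2}\dsos - \tfrac{\epsilon}{8}|E(\sigma)| - \tfrac{\epsilon}{8}|V(\sigma)|$: the required inequality reduces to $\tfrac{\epsilon}{2}\dsos + \tfrac{\epsilon}{8}|E(\sigma)| \geq \tfrac{\epsilon}{8}|V(\sigma)|$, i.e.\ $|V(\sigma)| \leq 4\dsos + |E(\sigma)|$. This is the one place where the structure of $\sigma$ enters: $\sigma$ is a left shape, and by the decomposition convention it has no floating components, so every vertex of $\sigma$ outside $U_\sigma$ lies in the connected component of $U_\sigma$; contracting $U_\sigma$ to a single vertex yields a connected graph on $|V(\sigma)| - |U_\sigma| + 1$ vertices, whence $|E(\sigma)| \geq |V(\sigma)| - |U_\sigma| \geq |V(\sigma)| - \dsos$, which more than suffices. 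I expect the only (minor) obstacle to be exactly this bookkeeping: correctly invoking the ``no floating components'' property of left shapes and tracking which portions of the $\epsilon$-slacks in \cref{assumptions} get consumed; the remainder is arithmetic.
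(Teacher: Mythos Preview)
Your proposal is correct and follows essentially the same route as the paper. The paper also starts from \cref{cor: sigmaminusnormbondone}, multiplies by $\sqrt{\lambda_U\norm{\mM_U}} = n^{(\frac{\beta}{2}-\frac{\gamma}{2})|E(U)|}$, and then invokes exactly the inequality $|E(\sigma)| \geq |V(\sigma)\setminus U_\sigma|$ together with the parameter bounds of \cref{assumptions} to reach the stated exponent; your write-up is in fact more explicit about the arithmetic and about why that edge--vertex inequality holds.
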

\begin{proof}
By \cref{cor: sigmaminusnormbondone},
    \[
    \lambda_{\sigma^{-}}||\mM_{\sigma^{-}}|| \leq 2B_{adjust}(\sigma)n^{(1 - \alpha)\dsos + \eta - (\frac{\beta}{2} - 3\log_n(D_V))|E(U)| - (\gamma - {\alpha}{\beta} - 3\log_n(D_V))|E(\sigma)\setminus E(U)|}
    \]
Since $\sqrt{\lambda_{U}||\mM_{U}||} = n^{(\frac{\beta}{2} - \frac{\gamma}{2})|E(U)|}$, using the fact that $|E(\sigma)| \geq |V(\sigma) \setminus U_{\sigma}|$, we have that 
\begin{align*}  
    \lambda_{\sigma^{-}}||\mM_{\sigma^{-}}||\sqrt{\lambda_{U}||\mM_{U}||} &\leq 2B_{adjust}(\sigma)n^{(1 - \alpha)\dsos + \eta - (\frac{\gamma}{2} - 3\log_n(D_V))|E(U)| - (\gamma - {\alpha}{\beta} - 3\log_n(D_V))|E(\sigma)\setminus E(U)|}\\
    &\leq 2B_{adjust}(\sigma)n^{\dsos + \eta - \frac{\epsilon}{2}\dsos - \frac{\epsilon}{8}|E(\sigma)| - \frac{\epsilon}{8}|V(\sigma)|}
\end{align*}
    as needed.
\end{proof}

\section{Formal Approximate PSD Decomposition}
\label{sec:formal-details}

\subsection{Starting point for the approximate PSD decomposition}
\label{sec:starting-point}

In this section, we show that with high probability, the pseudo-calibrated moment matrix $\matLam$ (formally defined in \cref{sec:pseudocalibration}) is PSD. We do this by giving an approximate PSD factorization for $\matLam$. We will then show that the error is PSD dominated by the terms in this approximate PSD factorization. For this approximate PSD factorization, we use much of the technical framework of \cite{potechin2020machinery} (although we cannot formally apply the machinery there because it does not work well for \emph{sparse} random graphs, nor does it incorporate the PMVS idea).

\begin{definition}[$\calI_{mid}$]
Let $\calI_{mid}$ be the set of shapes of separators of $\calS$. In other words, $\calI_{mid}$ is the set of diagonal shapes $\alpha$ such that $V(\alpha) = U_{\alpha} = V_{\alpha}$ and $|V(\alpha)| \leq D_V$.
\end{definition}

\begin{definition}[$\calL$ and $\calL_U$ and $\calL_{U, {\scriptscriptstyle\leq} D}$]
\label{def:calL}
Let $\calL$ be the set of left shapes in $\calS$.
Given $U\in\calI_{mid}$ and $D \in \mathbb{N}$, we define $\calL_U$ to be the set of all left shapes $\sigma \in \calS$
such that $V_\sig = U$.
The set
$\mathcal{L}_{U, {\ssleq} D}$ also requires $|V(\sigma)| \leq D$.
\end{definition}

\begin{definition}[$\calM$ and $\calM_{U,V}$ and $\calM_{U, V, \ssleq D}$]
\label{def:calM}
Let $\calM$ be the set of middle shapes in $\calS$.
Given $D \in \mathbb{N}$ and $U,V \in \calI_{mid}$ (which may intersect), we define $\mathcal{M}_{U,V}$ to be the set of middle shapes $\tau \in \calS$ such that $U_{\tau} = U$ and $V_{\tau} = V$. The set $\mathcal{M}_{U,V,\ssleq D}$ also requires $|V(\tau)| \leq D$.
\end{definition}
\begin{remark}
Due to the size constraints on shapes in $\calS$, we only have shapes with at most $D_V$ vertices. We will sometimes write $\calL_{U,\leq D_V}$ instead of the equivalent $\calL_U$ when it is relevant to the current section. Since all of the shapes we will consider have an SMVS with weight at most $\dsos$, we only consider $U \in \calI_{mid}$ such that $w(U) \leq \dsos$.
\end{remark}

\begin{definition}[$\calT_{\tau}$ and $\calT_{\tau, \leq D_1, \leq D_2}$]
Given a shape $\tau$ (which may or may not be proper), we define $\mathcal{T}_{\tau}$ to be the set of triples of ribbons $(R_1,R_2,R_3)$ such that 
\begin{enumerate}[(i)]
    \item $R_2$ has shape $\tau$.
    \item $R_1$ is a left ribbon and $R_3$ is a right ribbon.
    \item $R_1,R_2,R_3$ are properly composable.
    \item The edges and edge indicators agree on $B_{R_1} = A_{R_2}$ and $B_{R_2} = A_{R_3}$.
    \item $R_1, R_3$ have no edge indicators outside of $B_{R_1} = A_{R_2}$ and $B_{R_2} = A_{R_3}$.
\end{enumerate}
$\calT_{\tau, \leq D_1, \leq D_2}$ additionally requires that $|V(R_1)| \leq D_1, |V(R_3)| \leq D_2$.
\end{definition}

\begin{definition}[$\calR(\al)$]
Given a shape $\alpha$, we define $\mathcal{R}(\alpha)$ to be the set of ribbons $R$ which have shape $\alpha$. 
\end{definition}

\begin{definition}[Minus abbreviation]
    Given a left ribbon $L$, let $L^- = L \setminus E(B_L)$.
    Given a right ribbon $R$, let $R^- = R \setminus E(A_R)$.
    The notation is defined similarly for shapes.
\end{definition}

\begin{definition}[$U \sim V$]
Given $U, V \in \calI_{mid}$, we write $U \sim V$
if $|U| = |V|$, $|E(U)| = |E(V)|$, $U$ and $V$ have the same edges on the vertex set $U \cap V$, and $U$ and $V$ have the same order on $U \cap V$.
\end{definition}

Starting from the pseudocalibrated formula for $\Lam$
and incorporating this notation, we have the following lemma.
\begin{lemma}
\begin{align*}
\matLam &= \sum_{\substack{U, V \in \calI_{mid}:\\ U \sim V}}
\frac{|U \cap V|!}{(|U|!)^2}
\sum_{\tau \in \calM_{U, V}}\sum_{R_1,R_2,R_3 \in \calT_{\tau, \leq D_V, \leq D_V}}
    \lam_{R_1^- \circ R_2 \circ R_3^-}{\mM_{R_1^- \circ R_2 \circ R_3^-}}\\
    & \quad - \trunc_1
\end{align*}
where $\trunc_1$ is defined in \cref{def:truncation1}.
\end{lemma}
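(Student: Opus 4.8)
The plan is to prove this by re-indexing the pseudocalibrated sum $\matLam = \sum_{R \in \calS}\lam_R \mM_R$ according to the canonical ribbon decomposition. By \cref{prop:decomposition-uniqueness}, every ribbon $R$ can be written as $(R_1 \setminus E(B_{R_1})) \circ R_2 \circ (R_3 \setminus E(A_{R_3}))$ with $R_1$ a left ribbon, $R_2$ a middle ribbon (into which we place all floating components, per the convention following that proposition), and $R_3$ a right ribbon, the three being properly composable; this representation is unique up to the orderings of the two shared separators $B_{R_1} = A_{R_2}$ and $B_{R_2} = A_{R_3}$. I would then group the summands of $\matLam$ by the shape $\tau$ of the middle ribbon $R_2$ together with its endpoint shapes $U := U_\tau$ and $V := V_\tau$. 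Since the coefficient and the matrix attached to a ribbon depend only on the ribbon itself, no algebraic manipulation is needed at this step: $\lam_R \mM_R = \lam_{R_1^- \circ R_2 \circ R_3^-}\mM_{R_1^- \circ R_2 \circ R_3^-}$ is literally one of the summands on the right-hand side (one could further factor $\lam$ using \cref{lem:coefficients-factor}, but this is not required).

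Two bookkeeping observations then produce the restriction $U \sim V$ and the combinatorial prefactor. First, $U$ and $V$ overlap as vertex subsets of $V(\tau)$, and being the two ends of a single middle shape they must induce the same edge set and the same order on their common vertices (using \cref{prop:smvs-isomorphic} for the equal-size statement), which is exactly the relation $U \sim V$, so the outer sum is over such pairs. Second, for a fixed $R \in \calS$ the decompositions producing it are parametrized by a choice of ordering for each of the leftmost and rightmost SMVS vertex-sets, subject to agreement on their shared vertices; after passing to the shapes $(U,V)$ this over-counts $R$ by a factor of $(|U|!)^2/|U\cap V|!$, which is cancelled by the stated prefactor $\frac{|U\cap V|!}{(|U|!)^2}$.

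It remains to match the ranges of summation. Every $R \in \calS$ has $|V(R)| \le D_V$, so its canonical $R_1, R_2, R_3$ are sub-ribbons with at most $D_V$ vertices; hence $\tau \in \calM_{U,V}$ and the triple lies in $\calT_{\tau, \le D_V, \le D_V}$, so $R$ is accounted for. Conversely, a triple $(R_1,R_2,R_3) \in \calT_{\tau, \le D_V, \le D_V}$ with $\tau \in \calM_{U,V}$ and $U \sim V$ recomposes to a proper ribbon that lies in $\calS$ precisely when $|A_{R_1}| \le \dsos/2$, $|B_{R_3}| \le \dsos/2$ and $|V(R_1^- \circ R_2 \circ R_3^-)| \le D_V$; the triples violating one of these are the extra terms, which are collected with a minus sign into $\trunc_1$ (\cref{def:truncation1}). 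Combining the two directions gives the claimed identity.

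The main obstacle is the second bookkeeping observation above: verifying that the multiplicity with which each recomposed ribbon appears is uniform across all of its decompositions and equals $(|U|!)^2/|U\cap V|!$ exactly — carefully tracking automorphisms of the separators and the interaction between the two separator orderings on their overlap — and simultaneously pinning down exactly which triples recompose to ribbons outside $\calS$, so that $\trunc_1$ is precisely the correction term. None of this is conceptually deep, but it is where all the care must go.
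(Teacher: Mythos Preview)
Your approach is essentially the same as the paper's: both start from $\matLam = \sum_{R \in \calS}\lam_R \mM_R$, invoke \cref{prop:decomposition-uniqueness} to factor each ribbon, symmetrize over the $\frac{(|U|!)^2}{|U\cap V|!}$ orderings of the two separators (with the identity permutation on $U\cap V$) to obtain the prefactor, use \cref{prop:smvs-isomorphic} for $U\sim V$, and then isolate as $\trunc_1$ the triples whose composite exceeds $D_V$. One small correction: the degree bounds $|A_{R_1}|,|B_{R_3}|\le \dsos/2$ are already built into $\calL\subseteq\calS$ (and hence implicitly into $\calT_\tau$), so $\trunc_1$ only corrects for the total-size constraint $|V(\sig^-\circ\tau\circ(\sig'^\T)^-)|>D_V$, not for the degree bounds as you suggest.
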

\begin{proof}
Starting from $\matLam = \sum_{R \in \calS} \lam_R \mM_R$,
we apply \cref{prop:decomposition-uniqueness} to factor
$R$ into a left, middle, and right ribbon.
We symmetrize over all choices of the order for the leftmost
SMVS $U$ and the rightmost SMVS $V$ such that the permutation on
$U \cap V$ is the identity permutation.
There are exactly 
$\frac{|U|!|V|!}{|U \cap V|!}$ such choices for the orders.
By \cref{prop:smvs-isomorphic}, $U \sim V$ and $\frac{|U|!|V|!}{|U \cap V|!} = \frac{(|U|!)^2}{|U \cap V|!}$.
Therefore,

\begin{align*}
    \matLam &= 
    \sum_{\substack{U, V \in \calI_{mid}:\\ U \sim V}} \sum_{\substack{\sigma \in \calL_U\\\tau \in \calM_{U,V} \\ \sigma' \in \calL_V:\\ |V(\sig^- \circ \tau \circ (\sig'^\T)^-)| \leq D_V}}
    \sum_{\substack{R_1 \in \calR(\sigma)\\ R_2 \in \calR(\tau)\\R_3 \in \calR(\sigma'^\T):\\R_1, R_2, R_3 \text{ properly composable}}}
    \frac{|U \cap V|!}{(|U|!)^2} \lam_{R_1^- \circ R_2 \circ R_3^-} \mM_{R_1^- \circ R_2 \circ R_3^-}
\end{align*}

The condition $|V(\sig^- \circ \tau \circ (\sig'^\T)^-)| \leq D_V$ arises because the size of the entire ribbon $R = R_1^- \circ R_2 \circ R_3^-$ is
bounded by $D_V$. We would like instead that the sizes of the individual
pieces $R_1, R_2, R_3$ are bounded separately by $D_V$.
The difference between these two consists only of very large ribbons,
which we will bound as a truncation error.

\begin{definition}[$\trunc_1$]\label{def:truncation1}
    \begin{align*}
    \trunc_1 = \sum_{\substack{U, V \in \calI_{mid}:\\ U \sim V}} 
    \frac{|U\cap V|!}{(|U|!)^2}
    \sum_{\substack{\sig \in \calL_{U,\leq D_V}\\ \tau \in \calM_{U, V, \leq D_V}\\ \sig' \in \calL_{V,\leq D_V}:\\ |V(\sig^- \circ \tau \circ (\sig'^\T)^-)| > D_V}}
    \frac{\lam_{\sig^- \circ \tau \circ (\sig'^\T)^-} \mM_{\sig^- \circ \tau \circ (\sig'^\T)^-}}{|\Aut(\sig^- \circ \tau \circ (\sig'^\T)^-)|}
    \end{align*}
\end{definition}
We have grouped $\trunc_1$ into shapes. In terms of ribbons, by \cref{prop:automorphism-sum},
\begin{align*}
    \trunc_1 &= \sum_{\substack{U, V \in \calI_{mid}:\\ U \sim V}} \frac{|U\cap V|!}{(|U|!)^2} \sum_{\substack{\sig \in \calL_U\\ \tau \in \calM_{U, V}\\ \sig' \in \calL_V:\\ |V(\sig^- \circ \tau \circ (\sig'^\T)^-)| > D_V}}
    \sum_{R \in \calR(\sig^- \circ \tau \circ (\sig'^\T)^-)}
    \lam_{R} \mM_{R}\\
    &= \sum_{\substack{U, V \in \calI_{mid}:\\ U \sim V}} \frac{|U\cap V|!}{(|U|!)^2} \sum_{\substack{\sig \in \calL_U\\ \tau \in \calM_{U, V}\\ \sig' \in \calL_V:\\ |V(\sig^- \circ \tau \circ (\sig'^\T)^-)| > D_V}}
    \sum_{\substack{R_1 \in \calR(\sigma)\\ R_2 \in \calR(\tau)\\R_3 \in \calR(\sigma'^\T):\\R_1, R_2, R_3 \text{ properly composable}}}
    \lam_{R_1^- \circ R_2 \circ R_3^-} \mM_{R_1^- \circ R_2 \circ R_3^-}\\
\end{align*}
Continuing the calculation,
\begingroup
\allowdisplaybreaks
\begin{align*}
    \matLam + \trunc_1
    =\;& \sum_{\substack{U, V \in \calI_{mid}:\\ U \sim V}} 
    \frac{|U \cap V|!}{(|U|!)^2}
    \sum_{\substack{\sigma \in \calL_U\\\tau \in \calM_{U,V} \\ \sigma' \in \calL_V}}
    \sum_{\substack{R_1 \in \calR(\sigma)\\ R_2 \in \calR(\tau)\\R_3 \in \calR(\sigma'^\T):\\R_1, R_2, R_3 \text{ properly composable}}}
    \lam_{R_1^- \circ R_2 \circ R_3^-} \mM_{R_1^- \circ R_2 \circ R_3^-}\\
    =\;& \sum_{\substack{U, V \in \calI_{mid}:\\ U \sim V}}
\frac{|U \cap V|!}{(|U|!)^2}
\sum_{\tau \in \calM_{U, V}}\sum_{R_1,R_2,R_3 \in \calT_{\tau, \leq D_V, \leq D_V}}
    \lam_{R_1^- \circ R_2 \circ R_3^-}{\mM_{R_1^- \circ R_2 \circ R_3^-}}
\end{align*}
\endgroup
as desired.
\end{proof}

\subsection{Interaction patterns}
In order to analyze $\matLam$, we use the procedure in \cref{sec:pmvs} as described by \emph{interaction patterns}.
These generalize intersection patterns (\cref{def:intersection-pattern}) to account for the additional ways that shapes can interact in the recursion.

\begin{definition}[PMVS interaction, implicit definition]
    Given composable shapes $\gam, \tau, \gam'^\T$ such that $E(V_\gam) = E(U_\tau)$ and $E(V_\tau) = E(U_{\gam'^\T})$,
    let $\calP^{PMVS}_{\gam, \tau, \gam'^\T}$ be the set of possible choices for one iteration of the
    \textbf{Finding PMVS subroutine} (\cref{sec:ribbonoperations}, note that this includes the \textbf{Removing middle edge indicators operation})
    run on ribbons $R_1, R_2, R_3$ of shapes $\gam, \tau, \gam'^\T$.
\end{definition}

\begin{definition}[Intersection term interaction, implicit definition]
    Given composable shapes $\gam, \tau, \gam'^\T$ such that $E(V_\gam) = E(U_\tau)$ and $E(V_\tau) = E(U_{\gam'^\T})$,
    let $\calP^{intersect}_{\gam, \tau, \gam'^\T}$ be the set of possible choices for
    the \textbf{Intersection term decomposition operation} followed
    by the \textbf{Removing middle edge indicators operation} (\cref{sec:intersectionoperation})
    run on ribbons $R_1, R_2, R_3$ of shapes $\gam, \tau, \gam'^\T$.
\end{definition}

\begin{definition}[Interaction pattern]
    Let $\calP^{interact}_{\gam, \tau, \gam'^\T} = \calP^{PMVS}_{\gam, \tau, \gam'^\T} \cup \calP^{intersect}_{\gam, \tau, \gam'^\T}$.
\end{definition}

\begin{definition}[PMVS interaction, explicit definition]
Given composable shapes $\gamma,\tau,\gamma'^\T$ such that $E(V_\gam) = E(U_\tau)$ and $E(V_\tau) = E(U_{\gam'^\T})$,
a PMVS interaction pattern $P \in \mathcal{P}^{PMVS}_{\gamma,\tau,{\gamma'}^\T}$ consists of:

\begin{enumerate}[(i)]
    \item For each edge in $E(U_{\tau}) \cup E(V_{\tau})$
    which does not yet have an edge indicator, we specify whether the
    edge is given an indicator or is removed,
    \item For each edge in $E(U_{\tau}) \cup E(V_{\tau})$
    which is now in the middle of $\gamma^- \circ \tau \circ ({\gamma'}^\T)^-$, we specify whether the edge is kept or removed when the indicator is removed.
\end{enumerate}
    We furthermore have the structural property of $P$ with $\gamma,\tau,\gamma'^\T$ that after the edges have been removed from $V_{\gamma} = U_{\tau}$ and $V_{\tau} = U_{{\gamma'}^\T}$ in the first step, $U_{\gamma}$ is the leftmost SMVS for $\gamma$ and $V_{{\gamma'}^\T}$ is the rightmost SMVS for $\gamma'^\T$.
\end{definition}

\begin{definition}[Intersection term interaction, explicit definition]
Given composable shapes $\gamma,\tau,\gamma'^\T$ such that $E(V_\gam) = E(U_\tau)$ and $E(V_\tau) = E(U_{\gam'^\T})$,
an intersection term interaction pattern $P \in \mathcal{P}^{intersect}_{\gamma,\tau,{\gamma'}^\T}$ consists of:
    \begin{enumerate}[(i)]
        \item An intersection pattern between $\gamma^-,\tau,(\gamma'^\T)^-$
        \item After these intersections, some edges appear with multiplicity greater than $1$. For each such edge, we linearize it and specify whether we are taking the term with an edge, or the constant term.
        \item For each edge indicator in $E(U_{\tau}) \cup E(V_{\tau})$
        which is now in the middle of $\gamma^- \circ \tau \circ ({\gamma'}^\T)^-$, we specify whether the edge is kept or removed when the indicator is removed.
    \end{enumerate}
    We furthermore have the structural property of $P$ with $\gamma,\tau,\gamma'^\T$ that $U_{\gamma}$ is the leftmost SMVS in $\gamma$
    of $U_{\gam}$ and $V_\gam \cup V_{intersected}(\gamma)$,
    and $V_{{\gamma'}^\T}$ is the rightmost SMVS in $\gamma'^\T$ of $U_{\gamma'^\T} \cup V_{intersected}(\gamma')$ and $V_{\gam'^\T}$.
\end{definition}

\begin{definition}[$\tau_P$]
    Given an interaction pattern $P \in \calP^{interact}_{\gam, \tau, \gam'^\T}$, let $\tau_P$ be the resulting shape.
\end{definition}

\begin{definition}[$N(P)$, implicit definition]
\label{def:np}
Given $\gamma,\tau,{\gamma'}^\T$ and an interaction pattern $P \in \mathcal{P}^{interact}_{\gamma,\tau,{\gamma'}^\T}$, we define $N(P)$ so that for each ribbon $R$ of shape $\tau_P$, $N(P)$ is the number of triples of ribbons $R_1,R_2,R_3$ of shapes $\gamma,\tau,{\gamma'}^\T$ which result in the ribbon $R$ through interaction pattern $P$.
\end{definition}
$N(P) > 1$ holds when there is an increase in symmetry in an
interaction term. 

\begin{definition}[$c_P$, implicit definition]
    Given an interaction pattern $P \in \mathcal{P}^{interact}_{\gamma,\tau,{\gamma'}^\T}$, we define $c_P$ such that $N(P)c_P\lam_{\tau_P}$ is the coefficient on a resulting ribbon $R$ of shape $\tau_P$.
\end{definition}

Following the analysis in \cref{claim:coefficient-edge-removal}, \cref{claim:coefficient-intersection-term}, we have the following explicit formulas for $c_P$.
\begin{definition}[$c_P$, explicit definition]
    Given an interaction pattern $P \in \mathcal{P}^{interact}_{\gamma,\tau,{\gamma'}^\T}$, 
    \begin{enumerate}
        \item If $P$ is a PMVS interaction,
        \begin{align*}
            c_P = &\left(n^{-\gamma}\right)^{(\text{total }\# \text{ of removed edges})}(-1)^{\# \text{ of edges removed from } U_{\tau} \cup V_{\tau} \text{ for not being in } E(G)}\\
            &\left(\frac{1}{1-p}\right)^{(\# \text{ of new edge indicators})}
            (1-p)^{\# \text{ of edges and edge indicators removed from the middle}}
        \end{align*}
    \item If $P$ is an intersection term interaction,
    \begin{align*}
        c_P &=\left(\frac{k}{n}\right)^{(\# \text{ of intersections})}\Bigg(\prod_{e \in E_{tot}(\tau_P)}\left(\frac{q-p}{\sqrt{p(1-p)}}\right)^{\mult(e) - 1 + \one_{e \text{ vanishes}}}\left(\frac{1-2p}{\sqrt{p(1-p)}}\right)^{(\mult(e) - 2) \one_{e \text{ vanishes}}}\\
        &\left(\frac{1-2p}{\sqrt{p(1-p)}}\right)^{1_{\mult(e) = 2, e \text{ does not vanish or have an indicator}}}\left(\sqrt{\frac{1-p}{p}}\right)^{1_{\mult(e) = 2, e \text{ has an indicator}}}\\
        &\left(\frac{1-3p+3p^2}{p(1-p)}\right)^{1_{\mult(e) = 3, e \text{ does not vanish}}}\Bigg)\\
        &\quad\cdot\left(1-p\right)^{(\# \text{ of edges and edge indicators removed from the middle})} \left(q-p\right)^{(\# \text{ of edges removed from the middle})}
        \end{align*}
    \end{enumerate}
\end{definition}

\begin{lemma}\label{lem:cP-adjustment}
        \begin{align*}
        |c_P| &\leq \left(n^{\alpha-1}\right)^{(\# \text{ of intersections})}\left(\prod_{e \in E_{tot}(\tau_P)}{\left(n^{\frac{\beta}{2} 
         - \gamma}\right)^{\mult(e) - 1 + \one_{e \text{ vanishes}}}\left(n^{\frac{\beta}{2}}\right)^{\mult(e) - 1 - \one_{e \text{ vanishes}}}}\right)\\
        &\quad \cdot (1-p)^{\# \text{ of edges and edge indicators removed from the middle}} \left(n^{-\gamma}\right)^{(\# \text{ of edges removed from the middle})}
    \end{align*}
\end{lemma}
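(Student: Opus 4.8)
The bound follows by a direct term-by-term comparison of the explicit formula for $c_P$ (stated immediately above) with the claimed upper bound, handled in the two cases of a PMVS interaction and of an intersection-term interaction; throughout we take absolute values, so the sign factor $(-1)^{\cdots}$ is irrelevant. The inputs are: the parameterization identities $\frac{k}{n} = n^{\alpha-1}$, $\sqrt{\frac{1-p}{p}} = n^{\beta/2}$, $\frac{q-p}{\sqrt{p(1-p)}} = n^{\beta/2-\gamma}$, and $q-p = (1-p)n^{-\gamma} \le n^{-\gamma}$, all from the adjusted choice of $\beta,\gamma$ in \cref{sec:pseudocalibration}; the moment bound $\smallabs{\E[\chi_e^k]} \le \left(\sqrt{\frac{1-p}{p}}\right)^{k-2} = (n^{\beta/2})^{k-2}$ for $k\ge 2$ from \cref{prop:linearization-coefficient}; and the elementary inequalities $\smallabs{\frac{1-2p}{\sqrt{p(1-p)}}} \le \sqrt{\frac{1-p}{p}}$ and $\frac{1-3p+3p^2}{p(1-p)} \le \frac{1-p}{p}$, valid for $0 < p \le \frac12$ and equivalent respectively to $1-2p \le 1-p$ and to $2p^2 \le p$.

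In the intersection-term case I match the factors one at a time. The factor $\left(\frac{k}{n}\right)^{(\#\text{ of intersections})}$ becomes $(n^{\alpha-1})^{(\#\text{ of intersections})}$ exactly; the $(1-p)^{\cdots}$ factors coming from edges and indicators removed from the middle are carried over verbatim; and $(q-p)^{(\#\text{ of edges removed from the middle})}$ is replaced by $(n^{-\gamma})^{(\#\text{ of edges removed from the middle})}$ via $q-p \le n^{-\gamma}$. In the per-edge product $\prod_{e\in E_{tot}(\tau_P)}$, the factor $\left(\frac{q-p}{\sqrt{p(1-p)}}\right)^{\mult(e)-1+\one_{e\text{ vanishes}}}$ equals $(n^{\beta/2-\gamma})^{\mult(e)-1+\one_{e\text{ vanishes}}}$ exactly, so it remains only to verify that the residual ``linearization'' factor attached to $e$ is at most $(n^{\beta/2})^{\mult(e)-1-\one_{e\text{ vanishes}}}$. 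Since $\tau_P$ is obtained by gluing three simple shapes we have $\mult(e)\in\{1,2,3\}$, making this a short finite check: $\mult(e)=1$ gives residual $1$ and exponent $0$; $\mult(e)=2$ non-vanishing gives residual $\frac{1-2p}{\sqrt{p(1-p)}} = \E[\chi_e^3] \le \sqrt{\frac{1-p}{p}} = n^{\beta/2}$, or exactly $\sqrt{\frac{1-p}{p}}$ if $e$ carries an indicator, with exponent $1$; $\mult(e)=2$ vanishing gives residual $1 = \E[\chi_e^2]$ and exponent $0$; $\mult(e)=3$ vanishing gives residual $\frac{1-2p}{\sqrt{p(1-p)}} \le n^{\beta/2}$ and exponent $1$; and $\mult(e)=3$ non-vanishing gives residual $\frac{1-3p+3p^2}{p(1-p)} = \E[\chi_e^4] \le \frac{1-p}{p} = n^{\beta}$ and exponent $2$. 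Each required inequality holds, and together with \cref{claim:coefficient-intersection-term} (for the exact accounting of which factors are shifted from $\lambda_{\tau_P}$ into $c_P$) this settles the intersection-term case.

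In the PMVS case there are no intersections and $\tau_P$ has simple edges, so the claimed bound reduces to $(1-p)^{(\#\text{ of edges and indicators removed from the middle})} \cdot (n^{-\gamma})^{(\#\text{ of edges removed from the middle})}$ times the product $\prod_{e\in E_{tot}(\tau_P)}$, in which each boundary edge that was queried and removed contributes precisely $n^{-\gamma}$ --- exactly as in \cref{claim:coefficient-edge-removal}, the $n^{-\beta/2}$ from the removal equation times the $n^{\beta/2-\gamma}$ shifted out of $\lambda$ --- i.e.\ it behaves like a multiplicity-one, ``vanishing'' edge of the product. The explicit PMVS formula already carries the matching $(n^{-\gamma})^{(\text{total }\#\text{ of removed edges})}$ and $(1-p)^{\cdots}$ factors, as well as $\left(\frac{1}{1-p}\right)^{(\#\text{ of new edge indicators})}$; the only thing to check is that each added indicator is charged on exactly one side --- either its edge is ultimately removed (and hence counted in the ``total removed''$/E_{tot}$ tally) or the edge survives and its indicator is later stripped by a \textbf{Removing middle edge indicators operation}, contributing the cancelling $(1-p)$ --- so there is no double counting. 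I expect this bookkeeping, rather than any of the inequalities, to be the only mildly delicate point; everything else is term-by-term substitution together with the finite case check above.
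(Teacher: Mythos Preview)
Your approach is essentially the paper's: a direct term-by-term comparison against the explicit formula for $c_P$, with the two elementary inequalities $\frac{1-2p}{\sqrt{p(1-p)}}\le\sqrt{\frac{1-p}{p}}$ and $\frac{1-3p+3p^2}{p(1-p)}\le\frac{1-p}{p}$ as the only nontrivial inputs. Your finite case check on $\mult(e)\in\{1,2,3\}$ for the intersection-term case is exactly the content of the paper's two-line proof, just spelled out.

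One remark on the PMVS bookkeeping you flag as delicate. Your dichotomy ``either the edge is ultimately removed, or its indicator is later stripped in a \textbf{Removing middle edge indicators operation}'' misses a third possibility within a single iteration: the edge may remain on the new boundary $U_{\tau_P}\cup V_{\tau_P}$, in which case the indicator persists and its $(1/(1-p))$ factor is \emph{not} cancelled by anything on the right-hand side of the lemma. So the inequality as literally stated is off by at most $(1/(1-p))^{\#\text{ surviving boundary indicators}}$ in the PMVS case. The paper does not patch this inside the present lemma; instead it absorbs exactly this residual factor in the subsequent \cref{lem:cp-bound}, where $|c_P/c_P^\approx|\le(1/(1-p))^{\#\text{ indicators}}\le 2$. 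In other words, the lemma is really aimed at (and used for) the intersection-term case, and the PMVS discrepancy is handled downstream rather than here.
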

\begin{proof}
    This may be proven either directly from the exact formula above, or by the analysis already presented in \cref{claim:coefficient-intersection-term}.
    From the above, note that $\frac{1-2p}{\sqrt{p(1-p)}} \leq \sqrt{\frac{1-p}{p}} = n^{-\frac{\beta}{2}}$.
    Note that $\frac{1-3p+3p^2}{p(1-p)} \leq \frac{1-p}{p} = n^{-\frac{\beta}{2}}$ as $p \leq \frac{1}{2}$.
\end{proof}

\begin{definition}[$c_P^\approx$]
Define $c_P^\approx$ to be the polynomial factors in $c_P$, namely
\begin{align*}
{c_P^\approx} &= \left(n^{\alpha-1}\right)^{(\# \text{ of intersections})}
\left(\prod_{e \in E_{tot}(\tau_P)}\left(n^{\frac{\beta}{2} 
         - \gamma}\right)^{\mult(e) - 1 + \one_{e \text{ vanishes}}}\left(n^{\frac{\beta}{2}}\right)^{\mult(e) - 1 - \one_{e \text{ vanishes}}}\right)\\
    &\cdot\left(n^{-\gamma}\right)^{(\# \text{ of edges removed from the middle})}
\end{align*}
\end{definition}

\subsection{The approximate PSD decomposition}

Applying the recursion as described in \cref{sec:pmvs}
with the definitions from the previous section, we obtain the approximate PSD decomposition as follows.
The proofs essentially follow from the definitions and are
delayed to \cref{subsubsec:apx-decomp-proofs}.

\begin{definition}[Plus abbreviation]
    Given a shape $\tau$, let $\tau^+$ denote the improper shape with
    left and right edge indicators added to $U_\tau$ and $V_\tau$.
\end{definition}

\begin{definition}[Terminal interaction]
    $P \in \calP^{PMVS}_{\gam, \tau, \gam'^\T}$ is a \emph{terminal PMVS interaction} if all edges are given indicators (none are removed).
    $P \in \calP^{intersection}_{\gam, \tau, \gam'^\T}$ is a \emph{terminal intersection interaction} if there are no intersections.
\end{definition}

\begin{lemma}[One iteration, PMVS operation]
\label{lem:oneround-pmvs}
For all shapes $\tau$ and $D_1, D_2 \in \N$,
\begin{align*}
    &\sum_{(R_1,R_2,R_3) \in \mathcal{T}_{\tau,\leq D_1, \leq D_2}}{\lambda_{R^{-}_1 \circ R_2 \circ R^{-}_3}\mM_{R_1^{-} \circ R_2 \circ R_3^{-}}} \\
    = \; &\sum_{(R_1,R_2,R_3) \in \mathcal{T}_{\tau,\leq D_1, \leq D_2}}
    \left(\frac{1}{1-p}\right)^{(\text{\# of new edge indicators})}
    {\lambda_{R^{-}_1 \circ R_2 \circ R^{-}_3}\mM_{R_1^{-} \circ R_2^+ \circ R_3^{-}}} \\
    &+ \sum_{\gamma \in \mathcal{L}_{U_{\tau},\leq D_1}, \gamma' \in \mathcal{L}_{V_{\tau},\leq D_2}}
    \sum_{\substack{\text{non-terminal}\\P \in \mathcal{P}^{PMVS}_{\gamma,\tau,{\gamma'}^\T}}}
    \frac{N(P)c_P}{|U_\gam|! |V_{\gam'^\T}|!}
    \left(\sum_{(R'_1,R'_2,R'_3) \in \mathcal{T}_{\tau_P,\leq D'_1, \leq D'_2}}{\lambda_{{R'_1}^{-} \circ {R'_2} \circ {R'_3}^{-}}\mM_{{{R'_1}^{-} \circ {R'_2} \circ {R'_3}^{-}}}}\right)
\end{align*}
where $D'_1 = D_1 - |V(\gamma) \setminus U_{\gamma}|$ and $D'_2 = D_2 - |V({\gamma'}^\T) \setminus V_{{\gamma'}^\T}|$
\end{lemma}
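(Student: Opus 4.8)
The plan is to track what happens to a single triple $(R_1, R_2, R_3) \in \calT_{\tau, \leq D_1, \leq D_2}$ when one iteration of the \textbf{Finding PMVS subroutine} is applied, and to express the result at the level of matrices. First I would recall that one iteration consists of the \textbf{Adding left and right indicators operation}, the \textbf{PMVS operation}, and the \textbf{Removing middle edge indicators operation} run in sequence, and that each operation rewrites $\mM_{R_1^- \circ R_2 \circ R_3^-}$ using exact algebraic identities (the update equation $\chi_e = \frac{1}{1-p}\one_{e \in E(G)}\chi_e - \sqrt{\frac{p}{1-p}}$ and its inverse). Applying the \textbf{Adding left and right indicators operation} to every edge of $E(U_\tau) \cup E(V_\tau)$ that lacks an indicator splits the sum into $2^{(\text{\# such edges})}$ terms indexed by which subset of edges is kept versus removed. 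I would separate the unique term in which every edge is kept: that term contributes $\left(\frac{1}{1-p}\right)^{(\text{\# of new edge indicators})}\lambda_{R_1^- \circ R_2 \circ R_3^-}\mM_{R_1^- \circ R_2^+ \circ R_3^-}$, which is exactly the first sum on the right-hand side (here no edges are removed, so $R_1, R_2, R_3$ and $D_1, D_2$ are unchanged and the \textbf{PMVS operation} makes no change by hypothesis that this is the ``kept-everything'' branch — but actually by \cref{rmk:left-oblivious} the PMVS operation output only depends on whether $R_1$ is still a left ribbon, which it is when nothing is removed).

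Next I would handle every other branch, i.e.\ those in which at least one edge is removed. For such a branch, after deletion the \textbf{PMVS operation} fires: let $A'$ be the leftmost SMVS of the modified $R_1$ between $A_{R_1}$ and $B_{R_1}$, and $B'$ the rightmost SMVS of the modified $R_3$; set $\gam$ to be the shape of the part of $R_1$ between $A'$ and $B_{R_1}$ and $\gam'^\T$ the shape of the part of $R_3$ between $A_{R_3}$ and $B'$ (\cref{def:gamma}). By \cref{lem:gam-left-shape} these are left shapes with $V_\gam = U_\tau$, $V_{\gam'^\T} = V_\tau$, so $\gam \in \calL_{U_\tau, \leq D_1}$ and $\gam' \in \calL_{V_\tau, \leq D_2}$ (using that $|V(R_1)| \leq D_1$). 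Finally the \textbf{Removing middle edge indicators operation} re-converts any indicators that ended up outside $A_{R_2} \cup B_{R_2}$ back to Fourier characters, casing again on keep/remove. The composite data — which edges were removed, which middle indicators kept — is precisely a non-terminal PMVS interaction pattern $P \in \calP^{PMVS}_{\gam, \tau, \gam'^\T}$, and the resulting shape of the new middle ribbon is $\tau_P$ (\cref{def:np}). The new size budgets are $D_1' = D_1 - |V(\gam) \setminus U_\gam|$ and $D_2' = D_2 - |V(\gam'^\T) \setminus V_{\gam'^\T}|$, since the vertices of $\gam$ outside $U_\gam = A'$ are absorbed into the new left piece.

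To assemble the coefficients: the scalar factor picked up along this branch is exactly $c_P$ by its definition (the product of $n^{-\gamma}$'s for removed edges, signs, $(1-p)$-powers, etc., matching the explicit formula), and factoring $\lambda$ across the new decomposition via \cref{lem:coefficients-factor} leaves the left and right pieces carrying $\lambda_{\gam}$, $\lambda_{\gam'^\T}$ and the middle carrying $\lambda_{\tau_P}$. The one subtlety is the combinatorial normalization: I would pass from ribbons back to shapes, noting that summing over triples $(R_1', R_2', R_3') \in \calT_{\tau_P, \leq D_1', \leq D_2'}$ overcounts by the orderings of the new separators $A' = U_\gam$ and $B' = V_{\gam'^\T}$, giving the $\frac{1}{|U_\gam|!\,|V_{\gam'^\T}|!}$ prefactor, while $N(P)$ records any extra coincidences of ribbons collapsing onto the same $R$ (relevant when vertices merge, which for PMVS interactions without intersections is $N(P)=1$ but the formula is written uniformly). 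Summing the ``kept-everything'' term plus all non-terminal branches, grouped by $(\gam, \gam', P)$, yields the claimed identity. \textbf{The main obstacle} is the bookkeeping of this normalization constant — carefully matching $N(P)$, the factorials, and $\Aut$ counts so that the ribbon-level sum on the left exactly reproduces the shape-indexed sum with $\frac{N(P)c_P}{|U_\gam|!|V_{\gam'^\T}|!}$ on the right — together with verifying that $\calT_{\tau_P, \leq D_1', \leq D_2'}$ is the correct index set (i.e.\ that properly composable triples with the reduced budgets are exactly the triples produced by the subroutine, no more and no less).
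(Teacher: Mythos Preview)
Your proposal is correct and follows essentially the same approach as the paper's proof: expand each $\chi_e$ via the update identity, isolate the all-kept branch as the terminal term, and for every other branch refactor across the new SMVS to produce $(\gamma,\gamma'^{\T},P)$, symmetrizing over the $|U_\gamma|!\,|V_{\gamma'^{\T}}|!$ orderings of the new separators and absorbing the scalar factors into $c_P$ and $N(P)$. One small wording slip: the vertices of $\gamma$ outside $U_\gamma$ are absorbed into the new \emph{middle} piece $R_2'$, not the new left piece, which is precisely why the left budget drops to $D_1' = D_1 - |V(\gamma)\setminus U_\gamma|$.
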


\begin{definition}[$\matL$ and $\matL_U$ and $\matL_{U, \ssleq D}$] Let $\matL = \sum_{L \in \calL} \lam_{L^-} \mM_{L^-}$ and $\matL_{U} = \sum_{L \in \calL_{U}} \lam_{L^-} \mM_{L^-}$ and $\matL_{U, \ssleq D} = \sum_{L \in \calL_{U, \ssleq D}} \lam_{L^-} \mM_{L^-}$.
\end{definition}

\begin{lemma}[One iteration, intersection term operation]
\label{lem:oneround-intersection}
For all shapes $\tau$ and $D_1, D_2 \in \N$,
\begin{align*}
    &\sum_{(R_1,R_2,R_3) \in \mathcal{T}_{\tau,\leq D_1, \leq D_2}}{\lambda_{R^{-}_1 \circ R_2 \circ R^{-}_3}\mM_{R_1^{-} \circ R_2^+ \circ R_3^{-}}} = \matL_{U_{\tau},\leq D_1}{\left(\frac{\lambda_{\tau}\mM_{\tau^+}}{|\Aut(\tau)|}\right)}\matL^\T_{V_{\tau}, \leq D_2} \\
    &- \sum_{\gamma \in \mathcal{L}_{U_{\tau},\leq D_1}, \gamma' \in \mathcal{L}_{V_{\tau},\leq D_2}}
    \sum_{\substack{\text{non-terminal}\\P \in \mathcal{P}^{intersect}_{\gamma,\tau^+,{\gamma'}^\T}}}{\frac{N(P){c_P}}{|U_\gam|!|V_{\gam'^\T}|!}\left(\sum_{(R'_1,R'_2,R'_3) \in \mathcal{T}_{\tau_P,\leq D'_1, \leq D'_2}}{\lambda_{{R'_1}^{-} \circ {R'_2} \circ {R'_3}^{-}}\mM_{{{R'_1}^{-} \circ {{R'_2}} \circ {R'_3}^{-}}}}\right)}
\end{align*}
where $D'_1 = D_1 - |V(\gamma) \setminus U_{\gamma}|$ and $D'_2 = D_2 - |V({\gamma'}^\T) \setminus V_{{\gamma'}^\T}|$
\end{lemma}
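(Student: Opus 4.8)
The plan is to prove \cref{lem:oneround-intersection} by the same bookkeeping strategy used for \cref{lem:oneround-pmvs}, but now tracking how the \textbf{Intersection term decomposition operation} (followed by \textbf{Removing middle edge indicators}) splits the sum over properly composable triples $(R_1,R_2,R_3)$ into the ``main'' product term plus error terms indexed by non-terminal intersection interaction patterns. First I would enlarge the sum $\sum_{(R_1,R_2,R_3)\in\calT_{\tau,\leq D_1,\leq D_2}}\lam_{R_1^-\circ R_2^+\circ R_3^-}\mM_{R_1^-\circ R_2^+\circ R_3^-}$ to the \emph{full} product $\matL_{U_\tau,\leq D_1}\left(\tfrac{\lam_\tau \mM_{\tau^+}}{|\Aut(\tau)|}\right)\matL^\T_{V_\tau,\leq D_2}$, i.e. allow the left ribbon $R_1$ (of shapes in $\calL_{U_\tau,\leq D_1}$) and the right ribbon $R_3$ (from $\calL_{V_\tau,\leq D_2}$) to range over all ribbons with the matching boundary, not just the ones that are properly composable with $R_2$ and each other. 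By \cref{mtx:prop:multiply-graph-matrices} the matrix product expands as a sum over intersection patterns between $\gam:=$ (shape of $R_1$), $\tau^+$, and $\gam'^\T:=$ (shape of $R_3$); the properly-composable term is exactly the original left-hand side, and the remaining terms are the error. The factor $\tfrac{1}{|\Aut(\tau)|}$ together with $\lam_\tau$ comes from \cref{prop:automorphism-sum} and \cref{lem:coefficients-factor}, exactly as in the companion lemma; the normalization $\tfrac{1}{|U_\gam|!|V_{\gam'^\T}|!}$ accounts for the orderings on the new separators.

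The second step is to take each error term — a triple $(R_1,R_2,R_3)$ with at least one non-trivial intersection, satisfying properties (i), (iii)–(v) of \cref{def:pmvs} but not (ii) — and apply the \textbf{Intersection term decomposition operation} from \cref{sec:intersectionoperation} followed by \textbf{Removing middle edge indicators}. This re-expresses the term via $R_1',R_2',R_3'$ that are properly composable, where $R_2'$ has shape $\tau_P$ for the intersection interaction pattern $P\in\calP^{intersect}_{\gam,\tau^+,\gam'^\T}$, and the coefficient becomes $N(P)c_P\lam_{\tau_P}$ times the left/right $\lam$-factors (using the explicit formula for $c_P$ and $N(P)$ from \cref{sec:intersectionoperation}, and $\lam_{R\circ S}=\lam_R\lam_S$). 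The size bounds drop to $D_1'=D_1-|V(\gam)\setminus U_\gam|$ and $D_2'=D_2-|V(\gam'^\T)\setminus V_{\gam'^\T}|$ because the left/right pieces $\gam,\gam'^\T$ that get absorbed into the new middle $R_2'$ consume part of the vertex budget. Summing over all $\gam\in\calL_{U_\tau,\leq D_1}$, $\gam'\in\calL_{V_\tau,\leq D_2}$, and non-terminal $P$, and recognizing the remaining sum over triples as $\sum_{(R_1',R_2',R_3')\in\calT_{\tau_P,\leq D_1',\leq D_2'}}$, yields the claimed identity (with the minus sign because the error terms are subtracted when we isolate the properly-composable part of the product).

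The main obstacle I anticipate is the careful verification of the combinatorial prefactors — showing that the number of triples $(R_1,R_2,R_3)$ collapsing to a fixed $(R_1',R_2',R_3')$ under the operation is exactly $N(P)$, and that all the $\chi_e$-linearization coefficients (edge case: a multi-edge carrying an indicator, handled via $\one_e\chi_e^k=(\sqrt{(1-p)/p})^{k-1}\one_e\chi_e$) combine into precisely $c_P$. This requires matching the structural property of $P$ (that $U_\gam$ is the leftmost SMVS of $\gam$ between $U_\gam$ and $V_\gam\cup V_{intersected}(\gam)$, and symmetrically for $\gam'^\T$) against the definition of the \textbf{Intersection term decomposition operation}, and using \cref{rmk:left-oblivious} to argue that the operation genuinely does not depend on which left/right ribbons $R_1',R_3'$ are chosen — so every left ribbon reappears in $\matL$ and every right ribbon in $\matL^\T$ at the next level. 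The rest is routine: the existence of the decomposition is immediate from the operation's definition, and uniqueness (up to orderings and floating components) follows from the structural properties in \cref{def:pmvs} together with \cref{prop:smvs-uniqueness}, mirroring the proof of \cref{prop:decomposition-uniqueness}.
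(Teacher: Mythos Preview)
Your proposal is correct and follows essentially the same approach as the paper's proof sketch: expand the product $\matL_{U_\tau,\leq D_1}\bigl(\lambda_\tau \mM_{\tau^+}/|\Aut(\tau)|\bigr)\matL^\T_{V_\tau,\leq D_2}$ via \cref{mtx:prop:multiply-graph-matrices}, identify the properly-composable summand with the left-hand side, and for the remaining intersection terms apply the \textbf{Intersection term decomposition operation} followed by \textbf{Removing middle edge indicators} to produce properly composable triples with shape $\tau_P$, tracking the coefficient change as $N(P)c_P$ and symmetrizing over the $|U_\gam|!|V_{\gam'^\T}|!$ orderings. The paper's proof is only a sketch, and the points you flag as the main obstacle---matching the coefficient exactly to $c_P$ through the linearization of multi-edges (including the indicator edge case) and verifying the $N(P)$ count---are precisely the items the paper enumerates explicitly in its coefficient analysis.
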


We repeatedly apply \cref{lem:oneround-pmvs} until we have only terms with PMVS identified (i.e. having both left and right indicators),
then we apply \cref{lem:oneround-intersection}, then we repeat
\cref{lem:oneround-pmvs}, and so forth.
The next lemma gives the formal statement of the final result of the iteration on $\Lam$.

We define iterated interaction patterns,
which are the combinatorial
objects describing the branches of the full recursion.

\begin{definition}[Iterated interaction pattern, $\calP_j(\tau)$]
    Given a shape $\tau$ and $j \in \mathbb{N}$, define $\mathcal{P}_j(\tau)$ to be the set of tuples $(\Gamma, \Gamma'^\T,P)$ such that 
    \begin{enumerate}
    \item $\Gamma$ is a tuple of $j$ composable left shapes $(\gamma_j,\ldots,\gamma_1)$.
    Let $\gam = \gam_j^- \circ \cdots \circ \gam_2^- \circ \gam_1$.
    \item $\Gamma'^{\T}$ is a tuple of $j$ composable right shapes $({\gamma'}^\T_1,\ldots,{\gamma'}^\T_j)$.
    Let $\gam'^\T = \gam'^\T_1 \circ (\gam_2'^\T)^-\cdots \circ (\gam_j'^\T)^-$.
    \item $P$ is a tuple of $j$ interaction patterns $(P_1,\ldots,P_j)$ such that for each $i \in [j]$, $P_i \in \mathcal{P}^{interact}_{\gamma_i,\tau_{P_{i-1}},{\gamma'}^\T_i}$. For $i = 1$, we take $\tau_{P_0} = \tau$.
    \item $P$ consists of sequences of non-terminal PMVS interactions ending with a terminal PMVS interaction (there is at least one sequence), with a non-terminal intersection interaction in between consecutive sequences, then finally ending with a terminal intersection interaction.
    \item $P$ has at least one non-terminal interaction (we will explicitly separate the middle shapes i.e. the terms with a terminal PMVS interaction followed by a terminal intersection interaction because they are a good warm-up for the analysis of the other terms).
    \end{enumerate}
    We use $\tau_P = \tau_{P_j}$ to denote the final resulting shape.
\end{definition}

\begin{definition}[$D_L(P)$ and $D_R(P)$]
    Define $D_L(P) = D_V - |V(\gamma) \setminus U_\gam|$ and \\$D_R(P) = D_V - |V(\gamma'^\T) \setminus V_{\gam'^\T}|$.
\end{definition}

\begin{lemma}[Result of full iteration]\label{lem:decomposition}
\begin{align*}
    \matLam =\;& \sum_{U,V \in \calI_{mid}: U \sim V}{\frac{|U \cap V|!}{(|U|!)^2}\matL_{U,\leq D_V}\left(\sum_{\tau \in \mathcal{M}_{U,V}}{\left(\frac{1}{1-p}\right)^{|E(U_\tau) \cup E(V_\tau)|}\frac{\lambda_{\tau}\mM_{\tau^+}}{|\Aut(\tau)|}}\right)\matL_{V,\leq D_V}^\T} \\
    +\;& \sum_{U,V \in \calI_{mid}: U \sim V}{\frac{|U \cap V|!}{(|U|!)^2}\sum_{\tau \in \mathcal{M}_{U,V}}{\sum_{j=1}^{\infty}{\sum_{(\Gamma,\Gamma'^\T,P) \in \mathcal{P}_j(\tau)}{\frac{(-1)^{\# \text{ of intersection indices in } [j]}}{\prod_{i=1}^{j}{|U_{\gamma_i}|!|V_{{\gamma'_i}^\T}|!}}}}}} \\
    &\matL_{U_{\tau_{P}},\leq D_L(P)}\left(\frac{\left(\prod_{i=1}^{j}{c_{P_i}}N(P_i)\right)\lambda_{\tau_{P}}\mM_{\tau_{P}^+}}{|\Aut(\tau_{P})|}\right)\matL^{\T}_{V_{\tau_{P}},\leq D_R(P)}\\
    -\;& \trunc_1
\end{align*}
\end{lemma}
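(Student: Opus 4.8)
The plan is to prove \cref{lem:decomposition} by iterating the two one-step identities \cref{lem:oneround-pmvs} and \cref{lem:oneround-intersection}, starting from the formula for $\matLam$ established at the end of \cref{sec:starting-point}, and then grouping the resulting terms by the combinatorial type of the recursion branch that produced them. First I would apply \cref{lem:oneround-pmvs} to each inner sum $\sum_{(R_1,R_2,R_3)\in\calT_{\tau,\leq D_V,\leq D_V}}\lam\,\mM_{R_1^-\circ R_2\circ R_3^-}$ appearing there. This rewrites it as a ``clean'' piece in which $R_2$ has acquired left and right edge indicators (shape $\tau^+$, carrying the factor $(1/(1-p))^{\#\text{ new indicators}}$) plus error pieces indexed by the non-terminal PMVS interactions $P\in\calP^{PMVS}_{\gamma,\tau,{\gamma'}^\T}$, each of which is again a sum over triples in $\calT_{\tau_P,\leq D_1',\leq D_2'}$ of the same form, with the shrunken budgets $D_1'=D_1-|V(\gamma)\setminus U_\gamma|$ and $D_2'=D_2-|V({\gamma'}^\T)\setminus V_{{\gamma'}^\T}|$. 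Iterating \cref{lem:oneround-pmvs} on these error pieces just runs the Finding PMVS subroutine; since that subroutine terminates, after finitely many applications every surviving clean piece has the form $\sum_{(R_1,R_2,R_3)\in\calT_{\tau',\leq D_1,\leq D_2}}\lam\,\mM_{R_1^-\circ R_2^+\circ R_3^-}$ for a middle shape $\tau'$ with PMVS identified, and the branches traversed are exactly the maximal runs of non-terminal PMVS interactions ending in a terminal one, matching the structural requirements in the definition of $\calP_j(\tau)$.

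Next I would feed each clean piece into \cref{lem:oneround-intersection}: it equates $\sum_{\calT_{\tau,\leq D_1,\leq D_2}}\lam\,\mM_{R_1^-\circ R_2^+\circ R_3^-}$ with the sandwich $\matL_{U_\tau,\leq D_1}\bigl(\lam_\tau\mM_{\tau^+}/|\Aut(\tau)|\bigr)\matL_{V_\tau,\leq D_2}^\T$ minus error pieces indexed by the non-terminal intersection interactions $P\in\calP^{intersect}_{\gamma,\tau^+,{\gamma'}^\T}$, each carrying an extra sign $-1$, again shrinking the budgets, and with the \textbf{Removing middle edge indicators operation} (folded into $\calP^{intersect}$) restoring the middle indicators to Fourier characters. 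On these error pieces I would return to \cref{lem:oneround-pmvs} and alternate. \cref{rmk:left-oblivious} is exactly what lets me enlarge the left and right ribbon sums to the full $\matL_{U_\tau,\leq D_1}$ and $\matL^\T_{V_\tau,\leq D_2}$ at every level independently of the middle ribbon. Because every shape in $\calS$ (\cref{def:calS}) has at most $D_V$ vertices, the residual budgets strictly decrease along the recursion, so only finitely many depths $j$ contribute and the sum over $j\ge1$ and over $(\Gamma,\Gamma'^\T,P)\in\calP_j(\tau)$ is finite; moreover the budgets are threaded through exactly and recorded as $D_L(P)$ and $D_R(P)$, so no truncation error beyond $\trunc_1$ is generated (that term is simply carried along unchanged from \cref{sec:starting-point}).

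Finally I would collect the terms by branch type. The branch with a terminal PMVS interaction immediately followed by a terminal intersection interaction contributes, for each $\tau\in\calM_{U,V}$, the summand $(1/(1-p))^{|E(U_\tau)\cup E(V_\tau)|}\,\matL_{U,\leq D_V}\bigl(\lam_\tau\mM_{\tau^+}/|\Aut(\tau)|\bigr)\matL_{V,\leq D_V}^\T$ with the symmetrization weight $|U\cap V|!/(|U|!)^2$; these give the first line of \cref{lem:decomposition}. Every other branch has depth $j\ge1$: along such a branch each interaction $P_i$ contributes the factor $N(P_i)\,c_{P_i}$ by the (implicit) definitions of $N$ and $c$, each intersection index in $[j]$ contributes a sign $-1$, the symmetrization over the orderings of the intermediate separators contributes $1/(|U_{\gamma_i}|!\,|V_{{\gamma'_i}^\T}|!)$ per level, and the passage from ribbon sums to shape sums contributes $1/|\Aut(\tau_P)|$ for the final middle shape $\tau_P$ (which appears with its indicators as $\mM_{\tau_P^+}$), while the residual left/right budgets are exactly $D_L(P)$ and $D_R(P)$; these give the second line. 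Adding $-\trunc_1$ yields the stated identity.

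The main obstacle, and the only genuinely delicate part, will be this bookkeeping: checking that the coefficient, sign, automorphism, and budget factors compose correctly across the alternation of PMVS and intersection steps; that $\calP_j(\tau)$ faithfully enumerates the recursion branches, with the correct interleaving of terminal and non-terminal PMVS and intersection interactions; and that no edge is double-counted or lost as indicators are added, removed when they drift into the middle, re-added after a later PMVS move, or as multiedges are linearized inside an intersection term. These verifications are what I would defer to \cref{subsubsec:apx-decomp-proofs}, using only the already-established \cref{lem:oneround-pmvs}, \cref{lem:oneround-intersection}, \cref{rmk:left-oblivious}, and the size bounds of \cref{def:calS}.
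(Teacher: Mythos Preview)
Your proposal is correct and matches the paper's approach: the paper does not give a separate detailed proof of \cref{lem:decomposition} but simply states it as the result of repeatedly alternating \cref{lem:oneround-pmvs} and \cref{lem:oneround-intersection} starting from the formula at the end of \cref{sec:starting-point}, which is exactly the iteration you describe. Your write-up is in fact more explicit than the paper's about how the branches of the recursion line up with the definition of $\calP_j(\tau)$, how the budgets thread through to give $D_L(P),D_R(P)$, and why no new truncation error arises at this stage.
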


\begin{remark}
    The dominant terms in the decomposition are
    \begin{align*}
    &\sum_{U, V \in \calI_{mid}: U \sim V} \frac{|U \cap V|!}{(|U|!)^2} \matL_{U, \leq D_V} \left(\sum_{\substack{\tau \in \calM_{U, V}:\\\tau \text{ diagonal}}} \left(\frac{1}{1-p}\right)^{|E(U_\tau) \cup E(V_\tau)|}\frac{\lam_\tau \mM_{\tau^+}}{|\Aut(\tau)|}\right) \matL_{V, \leq D_V}^\T\\
    =\;&\sum_{U \in \calI_{mid}}  \left(\frac{1}{1-p}\right)^{|E(U)|}  \frac{\lam_{U}}{|U|!} \matL_{U, \leq D_V} \mM_{U^+}\matL_{U, \leq D_V}^\T\\
    \psdgeq \;& \sum_{U \in \calI_{mid}} \frac{\lam_{U}}{|U|!} \matL_{U, \leq D_V} \mM_{U^+}\matL_{U, \leq D_V}^\T\,.
    \end{align*}
    We will show that all of the remaining terms are PSD dominated by these terms.
\end{remark}

\subsubsection{Proofs of \cref{lem:oneround-pmvs} and \cref{lem:oneround-intersection}}
\label{subsubsec:apx-decomp-proofs}

\begin{proof}[Proof of \cref{lem:oneround-pmvs}]
    For a ribbon $R_1^- \circ R_2 \circ R_3^-$ on
    the left-hand side, let $E_{missing}(R_2)$ be
    the set of edges in $E(A_{R_2}) \cup E(B_{R_2})$ which do not yet have indicators.
    We update the single nonzero entry of $\mM_{R_1^- \circ R_2 \circ R_3^-}$ using the
    identity
    \[
        \prod_{\substack{e \in E_{missing}(R_2)}}{\left(\frac{q-p}{\sqrt{p(1-p)}}\chi_{e}\right)} 
        = \prod_{\substack{e \in E_{missing}(R_2)}} \left(\frac{1}{1-p} \one_{e \in E(G)}\frac{q-p}{\sqrt{p(1-p)}}\chi_e - \frac{q-p}{1-p}\right)\,.
    \]
    Note that $n^{-\gamma} = \frac{q-p}{1-p}$. Expanding this identity,
\begin{align*}
    &\sum_{(R_1,R_2,R_3) \in \mathcal{T}_{\tau,\leq D_1, \leq D_2}}{\lambda_{R^{-}_1 \circ R_2 \circ R^{-}_3}\mM_{R_1^{-} \circ R_2 \circ R_3^{-}}} \\
    = \; &\sum_{(R_1,R_2,R_3) \in \mathcal{T}_{\tau,\leq D_1, \leq D_2}} \sum_{E_1 \subseteq E_{missing}(R_2)}
    \left(\frac{1}{1-p}\right)^{|E_{missing}(R_2)| - |E_1|} \left(-n^{-\gamma}\right)^{|E_1|} \lambda_{R^{-}_1 \circ (R_2 \setminus E_1)^+ \circ R^{-}_3}\mM_{R_1^- \circ (R_2 \setminus E_1)^+ \circ R_3^-}
\end{align*}
    
    The term choosing $\frac{1}{1-p}\one_{e \in E(G)}\chi_e$ for all $e$ (i.e. $E_1 = \emptyset$) is the terminal interaction and yields $\mM_{R_1^- \circ R_2^+ \circ R_3^-}$.

    For the other $E_1$ terms, we refactor using the new SMVS,
    \[
    R_1^- \circ (R_2 \setminus S)^+ \circ R_3^- = R_1'^-\circ R_2'\circ R_3'^-
    \]
    Define $G_1$ and $G_3^\T$ by $R_1 = R_1'^- \circ G_1$ and $R_3 = G_3^\T \circ R_3'^-$.
    Recall that we need to specify the orders
    of $A_{G_1}$ and $B_{G_3^\T}$.
    We symmetrize over all $|A_{G_1}|!|B_{G_3^\T}|!$
    possible choices.

    At this point, we may have edge indicators in the middle of $G_1^-\circ (R_2 \setminus E_1)^+ \circ (G_3^\T)^-$. Letting $E_{extra}$ be the set of edges in the middle of $G_1^-\circ (R_2 \setminus E_1)^+ \circ (G_3^\T)^-$,
    \[
    \prod_{e \in E_{extra}}{\left(1_{e \in E(G)}\frac{q-p}{\sqrt{p(1-p)}}\chi_{e}\right)} = \prod_{e \in E_{extra}}{\left((1-p)\frac{q-p}{\sqrt{p(1-p)}}\chi_{e} + (q-p)\right)}
    \]
    Note that $(q-p) = (1-p)n^{-\gamma}$. Expanding this product out, we let $E_2$ be the set of edges where we choose the $(1-p)n^{-\gamma}$ term. After doing this, we set
    \[
    R_2' = G_1^-\circ (R_2 \setminus (E_1 \cup E_2))^+ \circ (G_3^\T)^-
    \]

    We replace the summation over ${(R_1,R_2,R_3) \in \mathcal{T}_{\tau,\leq D_1, \leq D_2}}$ by a summation
    over $(R_1', G_1, R_2, G_3^\T, R_3')$.
    We crucially have that $R_2'$ does not depend on $R_1'$ or $R_3'$, \cref{rmk:left-oblivious}.
    Therefore, we first sum over $(G_1, R_2, G_3^\T)$, $E_1 \subseteq E_{missing}(R_2)$, and $E_2 \subseteq E_{extra}(G_1^-\circ (R_2 \setminus E_1)^+ \circ (G_3^\T)^-)$ and
    then sum over $R_1'$ and $R_3'$.
    $R_2$ is a ribbon of shape $\tau$,
    $G_1$ and $G_3^\T$ are arbitrary
    left and right ribbons which match the left and right sides of $R_2$, and 
    together there is the additional structural
    property that
    \begin{align*}
    \text{$A_{G_1}$ is the leftmost SMVS of $G_1$,}\\
    \text{$B_{G_3^\T}$ is the rightmost SMVS of $G_3^\T$,}&& (*)
    \end{align*}
    In summary,
    \begin{align*}
    &\sum_{(R_1,R_2,R_3) \in \mathcal{T}_{\tau,\leq D_1, \leq D_2}} \sum_{E_1 \subseteq E_{missing}(R_2)}
    \left(\frac{1}{1-p}\right)^{|E_{missing}| - |E_1|} \left(-n^{-\gamma}\right)^{|E_1|} \lambda_{R^{-}_1 \circ (R_2 \setminus E_1)^+ \circ R^{-}_3}\mM_{R_1^- \circ (R_2 \setminus E_1)^+ \circ R_3^-}\\
=\;& \sum_{R_2 \in \calR_\tau}\sum_{S \subseteq E_{missing}(R_2)}
\sum_{\substack{G_1 \in \calL_{A_{R_2}, \leq D_1},\\
G_3 \in \calL_{B_{R_2}, \leq D_2}:\\
\text{(*) holds}}}\sum_{E_2 \subseteq E_{extra}(G_1^-\circ (R_2 \setminus E_1)^+ \circ (G_3^\T)^-)}
    \frac{1}{|A_{G_1}|!|B_{G_3^\T}|!}
    \left(\frac{1}{1-p}\right)^{|E_{missing}| - |E_1|}\\
    &(1-p)^{|E_{extra}|}(-1)^{|E_1|}\left(n^{-\gamma}\right)^{|E_1| + |E_2|}
    \sum_{\substack{R_1' \in \calL_{A_{G_1}, \leq D_1 - |V(G_1) \setminus A_{G_1}|}, \\
    R_3' \in \calL_{B_{G_3^\T}, \leq D_2 - |V(G_3^\T) \setminus B_{G_3^\T}|}}}
    \lambda_{R_1'^{-} \circ R_2' \circ  R_3'^{-}}\mM_{R_1'^{-} \circ R_2' \circ  R_3'^{-}}
\end{align*}
Finally, we prepare to shift from ribbons to shapes.
Fixing the shape $\gam$ of $G_1$ and $\gam'^\T$ of $G_3^\T$, the sum over $S \subseteq E_{missing}(R_2)$ and the condition (*)
is equivalent to summing over $P \in \calP^{PMVS}_{\gam, \tau, \gam'^\T}$.
Summing over $R_2'$ with the final shape $\tau_P$ specified by $P$, the sum over $R_1', R_2', R_3'$ is equivalent to summing over $\calT_{\tau_P, D_1', D_2'}$.
The coefficients are gathered into $N(P)c_P$.
In summary, the above is equal to
\begin{align*}
    \sum_{\substack{\gamma \in \mathcal{L}_{U_{\tau},\leq D_1}, \\\gamma' \in \mathcal{L}_{V_{\tau},\leq D_2}}}
    \sum_{\substack{P \in \mathcal{P}^{PMVS}_{\gamma,\tau,{\gamma'}^\T}}}
    \frac{N(P)c_P}{|U_\gam|! |V_{\gam'^\T}|!}
    \left(\sum_{(R'_1,R'_2,R'_3) \in \mathcal{T}_{\tau_P,\leq D_1 - |V(\gam) \setminus U_\gam|, \leq D_2 - |V(\gam'^\T) \setminus V_{\gam'^\T}|}}{\lambda_{{R'_1}^{-} \circ {R'_2} \circ {R'_3}^{-}}\mM_{{{R'_1}^{-} \circ {R'_2} \circ {R'_3}^{-}}}}\right)
\end{align*}
as needed.
\end{proof}

\begin{proof}[Proof sketch of \cref{lem:oneround-intersection}]
    This follows in the same way as the previous lemma.

    We have that $\matL_{U_{\tau},\leq D_1}{\left(\frac{\lambda_{\tau}\mM_{\tau}}{|\Aut(\tau)|}\right)}\matL^\T_{V_{\tau}, \leq D_2}$ is a sum over left, middle, and right ribbons respectively,
    where the sum over middle ribbons is over distinct ribbons due to normalization by $|\Aut(\tau)|$, \cref{prop:automorphism-sum}.
    We will argue that each $R = R_1^- \circ R_2 \circ R_3^-$ has the same
    coefficient on both sides of the equality.

    Suppose $R = R_1^- \circ R_2 \circ R_3^-$ where $R_1, R_2, R_3$ are properly composable. This term appears on the left-hand side
    with coefficient $\lam_{R_1^- \circ R_2 \circ R_3^-}$.
    The coefficient factors into $\lam_{R_1^-}\lam_{R_2}\lam_{R_3^-}$
    by \cref{lem:coefficients-factor}.
    Hence these terms match up.

    We now consider $R_1^- \circ R_2 \circ R_3^-$ which is an improper composition. These don't occur on the left, hence
    we would like to show that they cancel on the right.
    $R_1, R_2, R_3$ give rise to an intersection pattern $P \in \calP^{intersect}_{\gam, \tau, \gam'^\T}$ where $\gam, \gam'^\T$ are defined in \cref{def:gam}.

    In order to specify $R_1, R_2, R_3$ and the interaction pattern $P$, the latter sum instead specifies $R_1', R_2', R_3'$ in $\calT_{\tau_P, \le D_1', \le D_2}$ along with the orderings of $U_{\gam}, V_{\gam'}^\T$.
    There are $|U_\gam|!$ choices for the order of $U_\gam$ and $|V_{\gam'^\T}|!$ choices for the order of $V_{\gam'^\T}$,
    and we symmetrize over all choices.
    There may be multiple ribbons $R_1, R_2, R_3$ leading to the same ribbon $R_1^- \circ R_2 \circ R_3^-$ even with the same intersection pattern, and this is accounted for by $N(P)$.
    
    The change in coefficient can be analyzed as follows:
    \begin{enumerate}
        \item For each intersection, there is a factor of $n^{(\alpha-1)}$ coming from the $\lambda$ coefficients which needs to be shifted to $c_P$.
        \item For each edge $e \in E_{tot}(\tau_P)$, there is a factor of $\left(\frac{q-p}{\sqrt{p(1-p)}}\right)^{\mult(e)- 1 + 1_{e \text{ vanishes}}}$ coming from the $\lambda$ coefficients which needs to be shifted to $c_P$. 
        \item ${\chi_e}^2 = 1 + \frac{1-2p}{\sqrt{p(1-p)}}\chi_e$.
        \item ${\chi_e}^3 = \left(1 + \frac{1-2p}{\sqrt{p(1-p)}}\chi_e\right)\chi_e = \frac{1-2p}{\sqrt{p(1-p)}} + \frac{1-3p+3p^2}{p(1-p)}\chi_{e}$. Note that because only three ribbons are being composed, the maximum multiplicity of a multiedge is 3.
        \item The edge indicators and edges which are removed from the middle can be analyzed in the same way as before.
    \end{enumerate}
    This change is accounted for by $c_P$. This completes the proof.
\end{proof}

\subsection{Analyzing \texorpdfstring{$\matLam$}{Lambda}}
First we factor out the truncation error from $\matLam$.

\begin{lemma}\label{lem:factor-out-truncation}
\begin{align*}
    \matLam =\;& \sum_{U,V \in \calI_{mid}: U \sim V}{\frac{|U \cap V|!}{(|U|!)^2}\matL_{U}\left(\sum_{\tau \in \mathcal{M}_{U,V}}{\left(\frac{1}{1-p}\right)^{|E(U_\tau) \cup E(V_\tau)|}\frac{\lambda_{\tau}\mM_{\tau^+}}{|\Aut(\tau)|}}\right)\matL_{V}^\T} \\
    +\;& \sum_{U,V \in \calI_{mid}: U \sim V}{\frac{|U \cap V|!}{(|U|!)^2}\sum_{\tau \in \mathcal{M}_{U, V}}{\sum_{j=1}^{\infty}{\sum_{(\Gamma,\Gamma'^\T,P) \in \mathcal{P}_j(\tau)}{\frac{(-1)^{\# \text{ of intersection indices in } [j]}}{\prod_{i=1}^{j}{|U_{\gamma_i}|! |V_{{\gamma'_i}^\T}|!}}}}}} \\
    &\matL_{U_{\tau_{P}}}\left(\frac{\left(\prod_{i=1}^{j}{c_{P_i}}N(P_i)\right)\lambda_{\tau_{P}}\mM_{\tau_{P}^+}}{|\Aut(\tau_{P})|}\right)\matL^{\T}_{V_{\tau_{P}}} \\
    -\;& \trunc_1 +\trunc_2
\end{align*}
where $\trunc_2$ is defined in \cref{def:truncation2}.
\end{lemma}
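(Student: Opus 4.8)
\textbf{Proof plan for \cref{lem:factor-out-truncation}.}

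The plan is to start from the decomposition in \cref{lem:decomposition} and perform a bookkeeping manipulation that replaces each size-restricted left-factor matrix $\matL_{U,\leq D_V}$ (resp.\ $\matL_{U_{\tau_P},\leq D_L(P)}$) by the unrestricted matrix $\matL_U$ (resp.\ $\matL_{U_{\tau_P}}$), collecting the difference into a single new error term $\trunc_2$. Concretely, for each term of the decomposition I would write $\matL_{U,\leq D_V} = \matL_U - (\matL_U - \matL_{U,\leq D_V})$, where $\matL_U - \matL_{U,\leq D_V} = \sum_{L \in \calL_U : |V(L)| > D_V} \lam_{L^-}\mM_{L^-}$ consists only of left shapes that are too large to have appeared in the original truncated sum $\calS$. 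Expanding the product $\matL_{U,\leq D_V}(\cdots)\matL_{V,\leq D_V}^\T$ (and similarly for the iterated-interaction terms) as $\matL_U(\cdots)\matL_V^\T$ minus cross terms involving at least one oversized left or right factor, the leading pieces assemble into exactly the first two lines of the claimed identity, and all the cross terms are swept into $\trunc_2$.

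The key steps, in order, are: (i) invoke \cref{lem:decomposition} to get $\matLam$ as the two main sums plus $-\trunc_1$, all phrased with the size-restricted $\matL_{\bullet,\leq D_V}$ and $\matL_{\bullet,\leq D_L(P)}$, $\matL_{\bullet,\leq D_R(P)}$ matrices; (ii) in each summand, substitute $\matL_{U,\leq D} = \matL_U - \matL^{>D}_U$ where $\matL^{>D}_U := \sum_{L \in \calL_U : |V(L)|>D}\lam_{L^-}\mM_{L^-}$, and bilinearly expand; (iii) identify the ``all-small'' term of each expansion as the corresponding summand of the target identity with $\matL_U$, $\matL_V$ in place of the restricted versions; (iv) define $\trunc_2$ to be the negative of the sum of all remaining (``at-least-one-oversized'') cross terms, organized over $U \sim V \in \calI_{mid}$, over $\tau \in \calM_{U,V}$, over the recursion level $j$ and iterated interaction pattern $(\Gamma, \Gamma'^\T, P) \in \calP_j(\tau)$, and over which of the left/right factors is taken to be oversized, with the same combinatorial and coefficient weights $\frac{|U\cap V|!}{(|U|!)^2}$, $\frac{(-1)^{\#\text{ intersection indices}}}{\prod_i |U_{\gamma_i}|!|V_{\gamma_i'^\T}|!}$, $\frac{(\prod_i c_{P_i}N(P_i))\lam_{\tau_P}}{|\Aut(\tau_P)|}$ appearing in \cref{lem:decomposition}. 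This is essentially a definition: the content of the lemma is just that after this regrouping the identity is a formal algebraic tautology, so no estimates are needed here — the size constraints $D_L(P) = D_V - |V(\gamma)\setminus U_\gamma|$ and $D_R(P)$ must be tracked carefully so the ``oversized'' threshold matches for each factor, but this is purely syntactic.

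The main obstacle I expect is purely notational rather than mathematical: making \cref{def:truncation2} (the forward reference) precise enough that $\trunc_2$ is a well-defined matrix — in particular correctly handling the convergence of the infinite sum over $j$ (which is finite in practice because $\calP_j(\tau)$ is empty once the accumulated vertex count exceeds $D_V$, so the series terminates), and ensuring that the oversized-left-factor and oversized-right-factor contributions are not double-counted when both factors are oversized (an inclusion–exclusion detail: $\matL_U(\cdots)\matL_V^\T - \matL_{U,\leq D}(\cdots)\matL_{V,\leq D}^\T = \matL^{>D}_U(\cdots)\matL_V^\T + \matL_{U,\leq D}(\cdots)(\matL^{>D}_V)^\T$, so one should pick a consistent split). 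The later work — bounding $\trunc_1$ and $\trunc_2$ in norm and showing they are PSD-dominated by the main diagonal terms $\sum_{U \in \calI_{mid}} \frac{\lam_U}{|U|!}\matL_{U,\leq D_V}\mM_{U^+}\matL_{U,\leq D_V}^\T$ — is deferred to the subsequent subsections and uses the conditioned norm bounds of \cref{thm:conditionednormbounds} and \cref{cor:sigmaminusnormboundtwo}, together with the slack lower bound \cref{thm:slack}; the present lemma only needs the algebraic rearrangement.
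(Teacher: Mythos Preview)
Your plan is essentially the paper's: start from \cref{lem:decomposition}, replace the size-restricted $\matL$'s in the iterated-interaction terms by the canonical ones via a telescoping split, and collect the cross terms into $\trunc_2$. The algebraic identity you write down for the split is exactly what the paper uses.

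There is one notational misunderstanding that you should fix before it propagates into your definition of $\trunc_2$ and its later norm analysis. In this paper $\matL_U$ is \emph{not} an unrestricted sum over all left shapes; by \cref{def:calL} the set $\calL_U$ already sits inside $\calS$, so $\matL_U = \matL_{U,\leq D_V}$ identically. Consequently the first block of \cref{lem:decomposition} (the middle-shape terms) needs no modification at all---the passage from $\matL_{U,\leq D_V}$ to $\matL_U$ is purely cosmetic---and your proposed ``oversized'' correction $\matL_U - \matL_{U,\leq D_V}$ is zero there. The only place a genuine substitution happens is in the iterated-interaction terms, where one replaces $\matL_{U_{\tau_P},\leq D_L(P)}$ by $\matL_{U_{\tau_P},\leq D_V}=\matL_{U_{\tau_P}}$; the resulting difference $\matL_{U_{\tau_P},\leq D_L(P)} - \matL_{U_{\tau_P},\leq D_V}$ ranges over left shapes with $D_L(P) < |V(\sigma)| \leq D_V$, i.e.\ shapes that are still \emph{inside} $\calS$, not ``too large to have appeared in $\calS$'' as you wrote. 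This matters downstream: the norm bounds you will invoke (\cref{cor:sigmaminusnormboundtwo}) are only stated for shapes of size at most $D_V$, and indeed that is all you need. The paper's split, matching your inclusion--exclusion remark, is
\[
\matL_{\leq D_L}\,A\,\matL_{\leq D_R}^\T \;=\; \matL_{\leq D_V}\,A\,\matL_{\leq D_V}^\T \;+\; (\matL_{\leq D_L}-\matL_{\leq D_V})\,A\,\matL_{\leq D_V}^\T \;+\; \matL_{\leq D_L}\,A\,(\matL_{\leq D_R}^\T-\matL_{\leq D_V}^\T),
\]
with the latter two terms forming $\trunc_2$ as in \cref{def:truncation2}.
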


\begin{proof}
Starting from \cref{lem:decomposition}, for each interaction term $\tau, j, (\Gam, \Gam'^\T, P)$, we use
\begin{align*}
    &\matL_{U_{\tau_{P}},\leq D_L(P)}\left(\frac{\left(\prod_{i=1}^{j}{c_{P_i}}N(P_i)\right)\lambda_{\tau_{P}}\mM_{\tau_{P}^+}}{|\Aut(\tau_{P})|}\right)\matL^{\T}_{V_{\tau_{P}},\leq D_R(P)} \\
    =\;&\matL_{U_{\tau_{P}},\leq D_V}\left(\frac{\left(\prod_{i=1}^{j}{c_{P_i}}N(P_i)\right)\lambda_{\tau_{P}}\mM_{\tau_{P}^+}}{|\Aut(\tau_{P})|}\right)\matL^{\T}_{V_{\tau_{P}},\leq D_V} \\
    +\;&\left(\matL_{U_{\tau_{P}},\leq D_L(P)} - \matL_{U_{\tau_{P}},\leq D_V}\right)\left(\frac{\left(\prod_{i=1}^{j}{c_{P_i}}N(P_i)\right)\lambda_{\tau_{P}}\mM_{\tau_{P}^+}}{|\Aut(\tau_{P})|}\right)\matL^{\T}_{V_{\tau_{P}},\leq D_V} \\
    +\;&\matL_{U_{\tau_{P}},\leq D_L(P)}\left(\frac{\left(\prod_{i=1}^{j}{c_{P_i}}N(P_i)\right)\lambda_{\tau_{P}}\mM_{\tau_{P}^+}}{|\Aut(\tau_{P})|}\right)\left(\matL^{\T}_{V_{\tau_{P}},\leq D_R(P)} - \matL^{\T}_{V_{\tau_{P}},\leq D_V}\right)
\end{align*}
The first term has $\matL_{U_{\tau_P}, \leq D_V} = \matL_{U_{\tau_P}}$
and $\matL_{V_{\tau_P}, \leq D_V} = \matL_{V_{\tau_P}}$ as needed.
The second and third terms are collected into $\trunc_2$. The following definition completes
the proof of the lemma.
\begin{definition}[$\trunc_2$]\label{def:truncation2}
    \begin{align*}
    \trunc_2 =& \sum_{U,V \in \calI_{mid}: U \sim V}{\frac{|U \cap V|!}{(|U|!)^2}\sum_{\tau \in \mathcal{M}_{U, V}}{\sum_{j=1}^{\infty}{\sum_{(\Gamma,\Gamma'^\T,P) \in \mathcal{P}_j(\tau)}{\frac{(-1)^{\# \text{ of intersection indices in } [j]}}{\prod_{i=1}^{j}{|U_{\gamma_i}|! |V_{{\gamma'_i}^\T}|!}}}}}}\\
    &\left(\left(\matL_{U_{\tau_{P}},\leq D_L(P)} - \matL_{U_{\tau_{P}},\leq D_V}\right)\left(\frac{\left(\prod_{i=1}^{j}{c_{P_i}}N(P_i)\right)\lambda_{\tau_{P}}\mM_{\tau_{P}^+}}{|\Aut(\tau_{P})|}\right)\matL^{\T}_{V_{\tau_{P}},\leq D_V}\right. \\
    +\;&\left.\matL_{U_{\tau_{P}},\leq D_L(P)}\left(\frac{\left(\prod_{i=1}^{j}{c_{P_i}}N(P_i)\right)\lambda_{\tau_{P}}\mM_{\tau_{P}^+}}{|\Aut(\tau_{P})|}\right)\left(\matL^{\T}_{V_{\tau_{P}},\leq D_R(P)} - \matL^{\T}_{V_{\tau_{P}},\leq D_V}\right)\right)
    \end{align*}
\end{definition}
\end{proof}

We would now like to analyze the non-truncation terms.
As we proved in \cref{sec:psdness},
the norm of each individual shape $\norm{\mM_\tau}$ and $\norm{\mM_{\tau_P}}$ is under control.
In order to sum over all the shapes,
we use combinatorial functions to convert the sum into a max.

The idea is as follows. If we have a sum of the form $\sum_{\alpha}{B(\alpha)}$ where $B(\alpha)$ is non-negative then instead of bounding the sum directly, we can choose a relatively simple function $c(\alpha)$ such that $\sum_{\alpha}{\frac{1}{c(\alpha)}} \leq 1$ and observe that 
\[
\sum_{\alpha}{B(\alpha)} = \sum_{\alpha}{\frac{1}{c(\alpha)}c(\alpha)B(\alpha)} \leq \sum_{\alpha}{\frac{1}{c(\alpha)}\max_{\alpha}{\{c(\alpha)B(\alpha)\}}} \leq \max_{\alpha}{\{c(\alpha)B(\alpha)\}}
\]
This allows us to use our bound on the individual terms.

\begin{definition}[$c(\al)$ and $c(P)$, informal version of \cref{def:c-functions}]\ 
\begin{enumerate}
    \item For shapes $\alpha$, $c(\alpha)$ is used to control the number of shapes we are summing over.    
    In particular, we have that for all $U \in \calI_{mid}$, 
    $\sum_{\text{shapes }\alpha: U_{\alpha} = U, \alpha \text{ is non-trivial}}{\frac{1}{|U_{\alpha} \cap V_{\alpha}|!c(\alpha)}} \leq 1$. Similarly, for all $V \in \calI_{mid}$, 
    $\sum_{\text{shapes }\alpha: V_{\alpha} = V, \alpha \text{ is non-trivial}}{\frac{1}{|U_{\alpha} \cap V_{\alpha}|!c(\alpha)}} \leq 1$.
    \item For intersection patterns $P$, $c(P)$ is used to control the number of interaction patterns we are summing over. In particular, for all $\gamma$, $\tau$, and ${\gamma'}^\T$, $\sum_{P \in \mathcal{P}^{interact}_{\gamma,\tau,{\gamma'}^\T}}{\frac{1}{c(P)}} \leq 1$.
\end{enumerate}
\end{definition}

The formal definitions are given in \cref{sec:c-functions},
where we will verify that they satisfy the stated summation property and also \cref{assumptions}.

With the combinatorial functions in hand, we can bound non-square terms by the square terms $\left\{\matL_{U}\mM_{U^+}\matL_{U}^\T: U \in \calI_{mid}\right\}$ as follows.
\cref{cor:middle-term-bound} applies to middle shapes and
\cref{cor:intersection-term-bound} applies to intersection terms.
The proofs are delayed
to the next subsection.

\begin{restatable}{corollary}{middleTermBound}\label{cor:middle-term-bound}
For $n$ sufficiently large,
\begin{align*}
    &\sum_{U,V \in \calI_{mid}: U \sim V}{\frac{|U \cap V|!}{(|U|!)^2}\matL_{U}\left(\sum_{\tau \in \mathcal{M}_{U,V}: \tau \text{ is non-diagonal}}\left(\frac{1}{1-p}\right)^{|E(U_\tau) \cup E(V_\tau)|}{\frac{\lambda_{\tau}\mM_{\tau^+}}{|\Aut(\tau)|}}\right)\matL^\T_{V}} \\
    &\succeq -\frac{1}{4}\sum_{U \in \calI_{mid}}{\frac{\lambda_{U}}{|U|!}\matL_{U}{\mM_{U^+}}\matL^{\T}_{U}}
\end{align*}
\end{restatable}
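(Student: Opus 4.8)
\textbf{Proof proposal for \cref{cor:middle-term-bound}.}
The plan is to bound the non-diagonal middle-shape contribution against the diagonal square terms $\matL_U \mM_{U^+}\matL_U^\T$ using the combinatorial bookkeeping functions $c(\tau)$ together with the slack bound from \cref{thm:middleshapeslack}. First I would restrict to a single pair $U \sim V$ and, within that, use the fact (\cref{prop:smvs-isomorphic}) that we may assume $U_\tau$ and $V_\tau$ are isomorphic as graphs, so $\mM_{U^+}$ and $\mM_{V^+}$ are ``the same'' diagonal PSD matrix up to relabeling; this lets me rewrite $\sqrt{\lambda_U \lambda_V \normapx{U_\tau}\normapx{V_\tau}} = \lambda_U \normapx{U_\tau}$ and reduces the target to a term of the form $\lambda_U \matL_U \mM_{U^+}\matL_U^\T$ for each $U$. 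The key inequality to invoke is the spectral domination
\[
\lambda_\tau (\mM_{\tau^+} + \mM_{\tau^+}^\T) \psdleq \lambda_{U_\tau}\mM_{U_\tau} + \lambda_{V_\tau}\mM_{V_\tau}\,,
\]
which holds whenever $\lambda_\tau \normapx{\tau} \le \sqrt{\lambda_{U_\tau}\lambda_{V_\tau}\normapx{U_\tau}\normapx{V_\tau}}$ — i.e. whenever $\slack(\tau)\ge 0$ — after sandwiching on the left and right by $\matL_{U_\tau}$ and $\matL_{V_\tau}^\T$ (using \cref{rmk:left-oblivious} to see that the $\matL$ factors are oblivious to the choice of left/right ribbons).

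The main steps, in order: (1) For each non-diagonal proper middle shape $\tau \in \calM_{U,V}$, apply \cref{thm:middleshapeslack} to get $\slack(\tau) \ge (\gam-\al\beta)\bigl(|E(\tau)| - \tfrac{|E(U_\tau)|+|E(V_\tau)|}{2}\bigr) \ge 0$, and moreover (since $\tau$ is non-diagonal, it has a vertex outside $U_\tau \cup V_\tau$ or an edge outside the separators) the slack is strictly positive by at least $\eps$ times a nontrivial combinatorial quantity; combine this with the norm bound $\norm{\mM_{\tau^+}} \le B_{adjust}(\tau)\normapx{\tau}$ from \cref{cor:normbound}. (2) Write each cross term $\matL_U (\lambda_\tau \mM_{\tau^+}/|\Aut(\tau)|)\matL_V^\T$ and symmetrize with its transpose; bound $\frac12(\lambda_\tau \mM_{\tau^+} + \lambda_\tau\mM_{\tau^+}^\T) \psdleq \frac{\lambda_\tau \norm{\mM_{\tau^+}}}{\sqrt{\lambda_U\lambda_V \normapx{U_\tau}\normapx{V_\tau}}}\cdot \frac12(\lambda_U \mM_{U^+} + \lambda_V \mM_{V^+})$ in the PSD order — here $\mM_{U^+}\succeq 0$ is the diagonal shape with positive entries $\chi(1)^{|E(U)|}$ as checked in the excerpt. (3) The ratio in the previous step is $n^{-\slack(\tau)} B_{adjust}(\tau)$ up to the $|U\cap V|!$ and $\frac{1}{1-p}^{|E(U_\tau)\cup E(V_\tau)|}$ factors, all of which are subpolynomial; absorb $B_{adjust}(\tau)$ and the factorials using $c(\tau)$ via Bound~\ref{assumptions}(7), so that $n^{-\slack(\tau)}B_{adjust}(\tau)\cdot |U\cap V|! \le n^{-\eps'(\cdots)}$ for $\eps' > 0$. (4) Sum over all $\tau$ with $U_\tau = U$ using the defining property $\sum_{\tau:\, U_\tau=U,\ \tau \text{ non-trivial}} \frac{1}{|U_\tau\cap V_\tau|!\,c(\tau)} \le 1$; this collapses the sum to a maximum, and since every summand carries a strictly negative exponent of $n$, the whole sum is at most $\tfrac14$ (for $n$ large) times $\sum_U \frac{\lambda_U}{|U|!}\matL_U \mM_{U^+}\matL_U^\T$, giving the claimed $-\tfrac14$ lower bound.

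The step I expect to be the main obstacle is (3)–(4): carefully verifying that all the subpolynomial ``junk'' — the automorphism normalization $|\Aut(\tau)|$, the symmetrization factorials $|U\cap V|!$ and $(|U|!)^2$, the norm-bound prefactor $B_{adjust}(\tau)$, and the $\bigl(\tfrac{1}{1-p}\bigr)^{|E(U_\tau)\cup E(V_\tau)|}$ factor — is genuinely dominated by the $n^{-\slack(\tau)}$ decay once we pass to shapes, and that $c(\tau)$ as defined in \cref{sec:c-functions} simultaneously satisfies the summability property and the polynomial growth bound of Bound~\ref{assumptions}(7). A subtle point is that when $U \not\cong V$ as graphs the pair contributes zero (the edge structure on $U\cap V$ must match and $|E(U)| = |E(V)|$ by the $U\sim V$ condition), so the reduction to a single diagonal term per $U$ is clean; but one must also confirm that for non-diagonal $\tau$ the slack is bounded below by a quantity that grows with $|V(\tau)|$, not merely nonnegative, since otherwise the geometric summation over infinitely many shapes would not converge. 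This growing-slack feature is exactly what \cref{thm:slack} provides, so the proof of this corollary is essentially an application of \cref{thm:slack} specialized to the first (middle-shape) layer of the recursion, packaged with the standard $c(\cdot)$-function summation trick.
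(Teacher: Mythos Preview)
Your proposal is correct and follows essentially the same approach as the paper. The paper factors the argument into \cref{lem:comparingtosquares} (the SVD-based PSD comparison you describe in step~(2)) and \cref{lem:nonsquare-middle-shapes} (the $c(\tau)$ sum-to-max trick of step~(4)), and then the proof of \cref{cor:middle-term-bound} itself is just the verification that $c(\tau)B_{adjust}(\tau)n^{-\slack(\tau)} \le \tfrac14$ using the slack lower bound from \cref{assumptions}(6)---exactly your steps~(3)--(4). One small correction: the displayed inequality $\lambda_\tau(\mM_{\tau^+}+\mM_{\tau^+}^\T)\psdleq \lambda_{U_\tau}\mM_{U_\tau}+\lambda_{V_\tau}\mM_{V_\tau}$ in your first paragraph only makes sense after sandwiching by $\matL_{U_\tau},\matL_{V_\tau}^\T$ (as you do in step~(2)), and the reason this works is the support structure of $\mM_{\tau^+}$ versus $\mM_{U_\tau^+}$, not \cref{rmk:left-oblivious}.
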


\begin{restatable}{corollary}{intersectionTermBound}\label{cor:intersection-term-bound}
For $n$ sufficiently large,
\begin{align*}
    &\sum_{U,V \in \calI_{mid}: U \sim V}{\frac{|U \cap V|!}{(|U|!)^2}\sum_{\tau \in \mathcal{M}_{U, V}}{\sum_{j=1}^{\infty}{\sum_{(\Gamma,\Gamma'^\T,P) \in \mathcal{P}_j(\tau)}{\frac{(-1)^{\# \text{ of intersection indices in } [j]}}{\prod_{i=1}^{j}{|U_{\gamma_i}|! |V_{{\gamma'_i}^\T}|!}}}}}} \\
    &\matL_{U_{\tau_{P}}}\left(\frac{\left(\prod_{i=1}^{j}{c_{P_i}}N(P_i)\right)\lambda_{\tau_{P}}\mM_{\tau_{P}}}{|\Aut(\tau_{P})|}\right)\matL^{\T}_{V_{\tau_{P}}} \\
    &\succeq -\frac{1}{4}\sum_{U \in \mathcal{I}_{mid}}\frac{\lam_U}{|U|!}{\matL_{U}\mM_{U^+}\matL^\T_{U}}
\end{align*}
\end{restatable}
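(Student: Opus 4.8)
\textbf{Proof proposal for \cref{cor:intersection-term-bound}.}

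The plan is to follow the same template as the (already-stated) middle-shape bound \cref{cor:middle-term-bound}, but now iterating over the full recursion tree indexed by $j$ and $(\Gamma,\Gamma'^\T,P)\in\calP_j(\tau)$. First I would bound the spectral norm of each summand by extracting the left-factor structure: for a fixed $\tau_P$ and fixed matrix-index $U_{\tau_P}=A, V_{\tau_P}=B$, one has
\[
\left\|\matL_{A}\big(\lambda_{\tau_P}\mM_{\tau_P^+}\big)\matL_B^\T\right\|
\;\leq\;\lambda_{\tau_P}\normapx{\tau_P}\cdot\|\matL_A\|\cdot\|\matL_B\|,
\]
and then use the elementary PSD sandwiching $\matL_A\matL_A^\T\preceq \sum_{U}\matL_U\matL_U^\T$ together with the fact that for a diagonal shape $\mM_{U^+}$ the PSD term $\lambda_U\matL_U\mM_{U^+}\matL_U^\T$ dominates $\lambda_U\matL_U\matL_U^\T$ up to the factor $(\sqrt{(1-p)/p})^{|E(U)|}=n^{(\beta/2)|E(U)|}$ (this is where the $\mM_{U^+}$ entries being exactly $\chi(1)^{|E(U)|}$ is used). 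Concretely, for each non-square term I would write it as $\frac12\lesssim \matL_A(X)\matL_A^\T+\matL_B(X^\T)\matL_B^\T$ with $X$ a scalar multiple of a diagonal-shape matrix, picking up a $\sqrt{\lambda_U\lambda_V\normapx{U}\normapx{V}}$ normalization, so that the target reduces exactly to the inequality $c_P^\approx\cdot c_\tau^\approx\cdot\lambda_{\tau_P}\normapx{\tau_P}\le n^{-\slack(\tau_P)}\sqrt{\lambda_{U_{\tau_P}}\lambda_{V_{\tau_P}}\normapx{U_{\tau_P}}\normapx{V_{\tau_P}}}$, i.e.\ the definition of $\slack$.

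The heart of the argument is then summability. Having isolated each summand's contribution as $n^{-\slack(\tau_P)}$ times a PSD-dominated quantity, I would invoke \cref{thm:slack} to get $\slack(\tau_P)\ge \eps\big(|E_{tot}(\tau_P)|-\tfrac{|E(U_{\tau_P})|+|E(V_{\tau_P})|}{2}+|V_{tot}(\tau_P)|-\tfrac{|U_{\tau_P}|+|V_{\tau_P}|}{2}\big)$, which is a strictly positive polynomial-in-$n$ gain that grows with the size of the shapes and the depth of the recursion. Against this gain I would charge: (i) the combinatorial multiplicities $c(\alpha), c(P)$ from \cref{def:c-functions} (bounded in \cref{assumptions}(7)--(8) by $n^{\frac{\eps}{100}(\cdots)}$), which convert the sum over shapes and interaction patterns into a maximum via the telescoping trick $\sum_\alpha \frac{1}{c(\alpha)}\le 1$; (ii) the norm-adjustment factors $B_{adjust}$, also bounded in \cref{assumptions}(7); (iii) the extra $N(P)$ symmetry factors and the $\frac{1}{\prod|U_{\gamma_i}|!|V_{\gamma'_i}^\T|!}$ normalizations, which only help; and (iv) the geometric sum over the recursion depth $j$ — each level of recursion forces at least one non-terminal interaction, hence at least one extra vertex or edge beyond the boundary, hence a multiplicative $n^{-\Omega(\eps)}$ factor, so the sum over $j$ converges like a geometric series with ratio $o(1)$. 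Collecting $\eps/100$-fractions of the slack for each of these and leaving a residual $\Omega(\eps)(|E_{tot}(\tau_P)|-\cdots+|V_{tot}(\tau_P)|-\cdots)$ which is $\ge 1$ on every non-square term (non-square forces at least one non-boundary vertex or edge), the whole sum is bounded by $n^{-\Omega(\eps)}\cdot\frac14$ times the square terms, giving the claimed $-\frac14\sum_U\frac{\lambda_U}{|U|!}\matL_U\mM_{U^+}\matL_U^\T$ for $n$ large.

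The main obstacle I anticipate is \textbf{correctly handling the mismatch between the left/right sizes $D_L(P), D_R(P)$ and the full budget $D_V$} (the footnotes in \cref{sec:psd-decomposition-informal} already flag this): in an intersection term the left ribbons are restricted to be slightly smaller, so $\matL_{U_{\tau_P},\le D_L(P)}$ is a strict sub-sum of $\matL_{U_{\tau_P}}$ and one cannot directly sandwich by the full $\matL_U\matL_U^\T$ without accounting for the deficit. This is exactly the content pushed into $\trunc_2$ in \cref{lem:factor-out-truncation}, so for \cref{cor:intersection-term-bound} itself I would work with the already-factored form where the left matrices are the full $\matL_{U_{\tau_P}}$, and argue the PSD sandwiching is then clean; the deficit terms $\trunc_1,\trunc_2$ are bounded separately using the conditioned norm bounds of \cref{cor:sigmaminusnormboundtwo} (which already give an $n^{-\frac{\eps}{2}\dsos-\frac{\eps}{8}|E(\sigma)|-\frac{\eps}{8}|V(\sigma)|}$ decay, more than enough to absorb them). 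A secondary subtlety is bookkeeping the coefficient $c_P$ when a multi-edge also carries an edge indicator, but \cref{lem:cP-adjustment} already shows $|c_P|$ obeys the same bound as in the indicator-free case, so this does not affect the asymptotics.
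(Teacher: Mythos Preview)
Your high-level plan matches the paper's: apply a ``compare-to-squares'' PSD inequality to each summand, convert the sums over $\tau,\Gamma,\Gamma',P,j$ into a single maximum via the $c$-functions (this is exactly \cref{lem:interactiontermfullbound}), and then bound that maximum by $\tfrac14$ using \cref{thm:slack} against the combinatorial overheads in \cref{assumptions}. Your treatment of summability, the geometric decay in $j$, and the separation of the $D_L(P),D_R(P)$ deficit into $\trunc_2$ are all on the mark.

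The gap is in how you justify the compare-to-squares step. You propose to first pull out the spectral norm and then use ``$\matL_A\matL_A^\T\preceq\sum_U\matL_U\matL_U^\T$ together with the fact that $\lambda_U\matL_U\mM_{U^+}\matL_U^\T$ dominates $\lambda_U\matL_U\matL_U^\T$ up to the factor $n^{(\beta/2)|E(U)|}$''. That domination is false in the direction you need: $\mM_{U^+}$ is diagonal with entry \emph{zero} on every row where some edge of $U$ is absent from $G$, so no scalar multiple of $\matL_U\mM_{U^+}\matL_U^\T$ can PSD-dominate $\matL_U\matL_U^\T$. Once you pass through $\matL_A\matL_A^\T$ or $\idsym$ you have thrown away the row-support structure, and the only way back to the target $\sum_U\frac{\lambda_U}{|U|!}\matL_U\mM_{U^+}\matL_U^\T$ is via the well-conditionedness bound \cref{lem:well-conditionedness}, which costs an unaffordable $n^{\Theta(\dsos)}$.

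The paper's \cref{lem:comparingtosquares} avoids $\matL_A\matL_A^\T$ entirely. Writing $\mM_{\tau^+}=X\Sigma Y^\T$ by SVD and expanding
\[
0\ \preceq\ \Bigl(\sqrt{s}\,\matL_A X\Sigma^{1/2}+\tfrac{b}{\sqrt{s}}\,\matL_B Y\Sigma^{1/2}\Bigr)\Bigl(\sqrt{s}\,\matL_A X\Sigma^{1/2}+\tfrac{b}{\sqrt{s}}\,\matL_B Y\Sigma^{1/2}\Bigr)^\T
\]
bounds the cross term below by $-s\,\matL_A(\mM_{\tau^+}\mM_{\tau^+}^\T)^{1/2}\matL_A^\T-\tfrac{1}{s}\,\matL_B(\mM_{\tau^+}^\T\mM_{\tau^+})^{1/2}\matL_B^\T$. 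The crucial observation---and the entire point of putting edge indicators on the PMVS---is that $\tau^+$ carries indicators on $U_\tau$, so the row support of $\mM_{\tau^+}$ is \emph{contained in} the diagonal support of $\mM_{U_\tau^+}$. Since $\mM_{U_\tau^+}/\|\mM_{U_\tau}\|$ is a $0/1$ diagonal matrix on exactly those rows, one gets $(\mM_{\tau^+}\mM_{\tau^+}^\T)^{1/2}\preceq \|\mM_\tau\|\cdot \mM_{U_\tau^+}/\|\mM_{U_\tau}\|$ directly, without ever comparing $\matL_U\matL_U^\T$ to $\matL_U\mM_{U^+}\matL_U^\T$. Optimizing $s$ then yields precisely the inequality in your ``Concretely'' sentence, after which the rest of your outline goes through verbatim.
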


Putting together \cref{cor:middle-term-bound} and \cref{cor:intersection-term-bound} with \cref{lem:factor-out-truncation},
we have
\begin{lemma}
For $n$ sufficiently large,
\begin{align*}
    \Lam \psdgeq \frac{1}{2}\sum_{U \in \calI_{mid}} \frac{1}{|U|!} \lam_U \matL_U\mM_{U^+}\matL_{U}^\T - \trunc_1 + \trunc_2\,.
\end{align*}
\end{lemma}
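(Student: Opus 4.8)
The plan is to obtain the bound by aggregating \cref{lem:factor-out-truncation} with the two main PSD-domination estimates \cref{cor:middle-term-bound} and \cref{cor:intersection-term-bound}. Concretely, I would start from the decomposition of $\matLam$ in \cref{lem:factor-out-truncation} and split its first sum -- the sum over all middle shapes $\tau \in \calM_{U,V}$ -- into the diagonal middle shapes and the non-diagonal ones. For the diagonal part I would reuse the computation in the remark following \cref{lem:decomposition}: a diagonal middle shape with $U_\tau = U$ and $V_\tau = V$ forces $U = V$ as vertex sets, the symmetry factors $\frac{|U\cap V|!}{(|U|!)^2}$ together with $|\Aut(\tau)|$ collapse, and the sum telescopes to $\sum_{U \in \calI_{mid}} \bigl(\tfrac{1}{1-p}\bigr)^{|E(U)|}\tfrac{\lam_U}{|U|!}\,\matL_U \mM_{U^+}\matL_U^\T$. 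Because $\calS$ already caps shapes at $D_V$ vertices, $\matL_U = \matL_{U,\le D_V}$, so no new truncation error is created at this step.

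Next I would record that each term $\lam_U \matL_U \mM_{U^+}\matL_U^\T$ is PSD. Indeed $\lam_U \ge 0$, and $\mM_{U^+}$ is a diagonal matrix whose only possibly nonzero entries equal $\prod_{e \in E(U)}\one_{e\in E(G)}\chi_e(G)$, which is either $0$ or $\chi(1)^{|E(U)|} = \bigl(\sqrt{(1-p)/p}\bigr)^{|E(U)|} \ge 0$ -- this is exactly the positivity check used to show that diagonal shapes with PMVS identified are PSD. Writing $\mM_{U^+} = D_U^2$ for a diagonal $D_U \psdgeq 0$ gives $\lam_U \matL_U\mM_{U^+}\matL_U^\T = \bigl(\sqrt{\lam_U}\,\matL_U D_U\bigr)\bigl(\sqrt{\lam_U}\,\matL_U D_U\bigr)^\T \psdgeq 0$. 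Since also $(1-p)^{-|E(U)|} \ge 1$, the diagonal part is $\psdgeq \Xi$, where I set $\Xi := \sum_{U\in\calI_{mid}}\tfrac{\lam_U}{|U|!}\,\matL_U\mM_{U^+}\matL_U^\T \psdgeq 0$.

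Finally, for $n$ sufficiently large, \cref{cor:middle-term-bound} bounds the non-diagonal middle contribution by $-\tfrac14\Xi$, and \cref{cor:intersection-term-bound} bounds the entire $j\ge 1$ intersection-term contribution by $-\tfrac14\Xi$. Plugging these three estimates together with the exact error terms $-\trunc_1 + \trunc_2$ from \cref{lem:factor-out-truncation} into the decomposition yields $\matLam \psdgeq \Xi - \tfrac14\Xi - \tfrac14\Xi - \trunc_1 + \trunc_2 = \tfrac12\Xi - \trunc_1 + \trunc_2$, which is precisely the claimed inequality.

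This lemma is essentially bookkeeping once \cref{cor:middle-term-bound} and \cref{cor:intersection-term-bound} are in hand, so there is no genuine obstacle; the only points needing care are verifying that the symmetry and automorphism factors in \cref{lem:factor-out-truncation} really reduce to the telescoped diagonal form of the remark, and checking that the two domination corollaries are applied with the matrices $\matL_U$ rather than their size-restricted versions -- which is harmless because the two coincide under the size constraint defining $\calS$.
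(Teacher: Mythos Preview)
Your proposal is correct and follows exactly the paper's approach: combine \cref{lem:factor-out-truncation} with \cref{cor:middle-term-bound} and \cref{cor:intersection-term-bound}, using the remark after \cref{lem:decomposition} for the diagonal part. The paper's own proof is in fact just the one-line citation of these three results, and you have faithfully filled in the details (including the observation that $(1-p)^{-|E(U)|}\ge 1$ lets you drop that prefactor while preserving PSDness, and that $\matL_U = \matL_{U,\le D_V}$).
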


The remaining tasks to prove $\Lam \psdgeq 0$ are to verify \cref{assumptions}, and to analyze the truncation error, which we carry out in the following sections.

\subsubsection{Proofs of \cref{cor:middle-term-bound} and \cref{cor:intersection-term-bound}}

The building block that allows us to formally charge these shapes is the following lemma,
which lower bounds the negative impact of each individual term by ``square terms''.
\begin{lemma}\label{lem:comparingtosquares}
    For all shapes $\tau$, all $D_1,D_2 \in \mathbb{N}$, and all $b \in \{-1,1\}$,
    \begin{align*}
        &b(\matL_{U_{\tau}, \leq D_1}\lam_\tau\mM_{\tau^+}\matL^\T_{V_{\tau}, \leq D_2} + \matL_{V_{\tau}, \leq D_2}{\lam_{\tau}\mM_{\tau^+}^\T}\matL^\T_{U_{\tau}, \leq D_1}) \\
        \succeq\;& \frac{-\lam_\tau||\mM_{\tau}||}{\sqrt{\lam_{U_\tau}\norm{\mM_{U_\tau}} \lam_{V_\tau} \norm{\mM_{V_\tau}}}}(\matL_{U_{\tau}, \leq D_1}\lam_{U_\tau}\mM_{U_\tau^+}\matL^{\T}_{U_{\tau}, \leq D_1} + \matL_{V_{\tau}, \leq D_2}\lam_{V_\tau}\mM_{V_\tau^+}\matL^{\T}_{V_{\tau}, \leq D_2})
    \end{align*}
\end{lemma}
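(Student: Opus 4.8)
The statement is a matrix inequality of the form $b\,(XY + Y^\T X^\T) \succeq -c\,(XX^\T + ZZ^\T)$ where, abbreviating $\matL_1 = \matL_{U_\tau, \leq D_1}$, $\matL_2 = \matL_{V_\tau, \leq D_2}$, we set $X = \matL_1 \sqrt{\lam_{U_\tau}\norm{\mM_{U_\tau}}}$-scaled pieces and so on. The plan is to reduce everything to the elementary fact that for any real matrices $P, Q$ and any $b \in \{-1,1\}$,
\[
b(PQ^\T + QP^\T) \preceq PP^\T + QQ^\T,
\]
which follows from $(P - bQ)(P - bQ)^\T \succeq 0$. So the first step is to choose $P$ and $Q$ so that $PQ^\T$ reproduces $\matL_1 \lam_\tau \mM_{\tau^+}\matL_2^\T$ up to the scalar factor on the right-hand side. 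The natural choice is to write $\mM_{\tau^+}$ in factored form across the PMVS: since $R_1, R_2, R_3$ have PMVS identified and $\tau$ is the shape of $R_2$ with both left and right indicators, $\tau^+$ composes as (left-diagonal piece) $\circ$ (something) $\circ$ (right-diagonal piece), and in particular $\mM_{\tau^+}$ can be sandwiched between $\mM_{U_\tau^+}^{1/2}$-like factors. Concretely, I would use that $U_\tau$ and $V_\tau$ are diagonal shapes with nonnegative diagonal entries (as shown in the diagonal-shape PSD lemma preceding this one), so $\mM_{U_\tau^+}$ and $\mM_{V_\tau^+}$ have well-defined PSD square roots on their supports, and $\mM_{\tau^+} = \mM_{U_\tau^+}^{1/2} N \mM_{V_\tau^+}^{1/2}$ for an appropriate matrix $N$ with $\norm{N} \leq \norm{\mM_\tau}/\sqrt{\norm{\mM_{U_\tau}}\,\norm{\mM_{V_\tau}}}$ (up to the indicator rescaling, which only helps).

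The second step is then to set $P = \matL_1 \mM_{U_\tau^+}^{1/2} \cdot a$ and $Q = \matL_2 \mM_{V_\tau^+}^{1/2} \cdot a^{-1}$ for a free scalar $a > 0$, so that
\[
PQ^\T = \matL_1 \mM_{U_\tau^+}^{1/2}\, N\, \mM_{V_\tau^+}^{1/2}\matL_2^\T = \matL_1 \mM_{\tau^+}\matL_2^\T,
\]
after absorbing $\norm{N}$; applying the elementary inequality with the scalar $a$ optimized (to make the two diagonal terms symmetric) and rescaling by $\lam_\tau$ and by $\lam_{U_\tau}, \lam_{V_\tau}$ using $\lam_\tau \leq \sqrt{\lam_{U_\tau}\lam_{V_\tau}}\cdot(\lam_\tau/\sqrt{\lam_{U_\tau}\lam_{V_\tau}})$ gives exactly the claimed bound with the coefficient $\lam_\tau\norm{\mM_\tau}/\sqrt{\lam_{U_\tau}\norm{\mM_{U_\tau}}\lam_{V_\tau}\norm{\mM_{V_\tau}}}$. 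The role of $b$ is harmless: the inequality $(P - bQ)(P-bQ)^\T \succeq 0$ holds for both signs, and $-P(-Q)^\T = PQ^\T$, so the case $b = -1$ is symmetric to $b = +1$. I would also need to handle the $\matL_i$ factors on the outside: conjugating a PSD inequality $A \succeq B$ by $\matL_1$ on one side requires care because $P, Q$ don't literally have $\matL_1$ on the \emph{same} side; here the trick is that $\matL_1$ multiplies $P$ and $\matL_2$ multiplies $Q$, so $PP^\T = \matL_1(\cdots)\matL_1^\T$ and $QQ^\T = \matL_2(\cdots)\matL_2^\T$ come out correctly as $\matL_{U_\tau, \leq D_1}\lam_{U_\tau}\mM_{U_\tau^+}\matL_{U_\tau, \leq D_1}^\T$ etc.

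The main obstacle I anticipate is making the factorization $\mM_{\tau^+} = \mM_{U_\tau^+}^{1/2} N \mM_{V_\tau^+}^{1/2}$ rigorous: $\mM_{U_\tau^+}$ is supported only on one block of the index set and is merely PSD (not invertible), so one has to work on the range, and one must check that $\tau^+$ really does admit such a factorization as a \emph{product of matrices} (as opposed to just a composition of shapes, which is only an approximate matrix identity because of intersection terms). The clean way around this is to avoid square roots entirely: use instead the scalar version of Cauchy–Schwarz at the level of the single nonzero structure, or — more robustly — observe that because $R_1, R_2, R_3$ have PMVS identified and are \emph{properly} composable by hypothesis in all the terms we apply this to, the composition $\matL_{U_\tau}\mM_{\tau^+}\matL_{V_\tau}^\T$ genuinely factors with no intersection error, so the identity $\mM_{\sigma^- \circ \tau^+ \circ \sigma'^-} = \mM_{\sigma^-}\mM_{\tau^+}\mM_{\sigma'^-}$ holds exactly and we can apply the elementary inequality directly to $P = \matL_{U_\tau, \leq D_1}$ composed with one ``half'' of the diagonal and $Q$ with the other half. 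Once that bookkeeping is pinned down, the spectral-norm bound $\norm{\mM_{\tau^+}}$ versus $\norm{\mM_\tau}$ (they differ only by the indicator rescaling $(1-p)^{-|E(U_\tau) \cup E(V_\tau)|}$ which is a lower-order factor, or even just absorbable since indicators only shrink norms) is routine, and the rest is the one-line $(P-bQ)(P-bQ)^\T \succeq 0$ argument.
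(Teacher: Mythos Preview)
Your core idea is exactly the paper's: expand $(P-bQ)(P-bQ)^\T \succeq 0$ with a free scalar $s$ to balance the two squares. But your stated choice $P = \matL_1 \mM_{U_\tau^+}^{1/2}\cdot a$, $Q = \matL_2 \mM_{V_\tau^+}^{1/2}\cdot a^{-1}$ does not work: with these, $PQ^\T = \matL_1 \mM_{U_\tau^+}^{1/2}\mM_{V_\tau^+}^{1/2}\matL_2^\T$, and the matrix $N$ you introduced has simply disappeared. To salvage this route you would need to further SVD $N$ and push its singular factors into $P$ and $Q$; only then does $PQ^\T$ recover $\matL_1\mM_{\tau^+}\matL_2^\T$ and $PP^\T \preceq a^2\norm{N}\,\matL_1\mM_{U_\tau^+}\matL_1^\T$.

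The paper avoids this detour entirely: it takes the SVD $\mM_{\tau^+} = X\Sigma Y^\T$ directly and sets $P = \sqrt{s}\,\matL_1 X\Sigma^{1/2}$, $Q = \tfrac{1}{\sqrt{s}}\,\matL_2 Y\Sigma^{1/2}$, so the cross term is exactly $\matL_1\mM_{\tau^+}\matL_2^\T$ and the squares are $s\,\matL_1(\mM_{\tau^+}\mM_{\tau^+}^\T)^{1/2}\matL_1^\T$ and its partner. The step that replaces your factorization is then the one-line observation that $\mM_{U_\tau^+}/\norm{\mM_{U_\tau}}$ is a $0/1$ diagonal matrix whose support contains the row support of $\mM_{\tau^+}$, hence $(\mM_{\tau^+}\mM_{\tau^+}^\T)^{1/2} \preceq \norm{\mM_\tau}\cdot \mM_{U_\tau^+}/\norm{\mM_{U_\tau}}$. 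This is the clean form of what you were reaching for with $\mM_{\tau^+} = \mM_{U_\tau^+}^{1/2} N \mM_{V_\tau^+}^{1/2}$. Your closing discussion about proper composability and intersection errors is a red herring here: this lemma is pure matrix algebra on already-formed matrices and never touches shape composition.
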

\begin{proof}
    We claim that for all $s>0$,
    \begin{align*}
    &b(\matL_{U_{\tau}, \leq D_1}\mM_{\tau^+}\matL^\T_{V_{\tau}, \leq D_2} + \matL_{V_{\tau}, \leq D_2}{\mM_{\tau^+}^\T}\matL^\T_{U_{\tau}, \leq D_1})\\
    \succeq\; & -s\matL_{U_{\tau}, \leq D_1}(\mM_{\tau^+} \mM_{\tau^+}^\T)^{1/2}\matL^{\T}_{U_{\tau}, \leq D_1} - \frac{1}{s}\matL_{V_{\tau}, \leq D_2}(\mM_{\tau^+}^\T \mM_{\tau^+})^{1/2}\matL^{\T}_{V_{\tau}, \leq D_2}
    \end{align*}
    Writing $\mM_{\tau^+} = \mathbf{X} \mathbf{\Sigma} \mathbf{Y}^\T$ for the singular value
    decomposition of $\mM_{\tau^+}$,
    observe that
        \begin{align*}
        0 & \preceq \left(\sqrt{s}\matL_{U_{\tau}, \leq D_1}\mathbf{X}\mathbf{\Sigma}^{1/2} + \frac{b}{\sqrt{s}} \matL_{V_{\tau}, \leq D_2}\mathbf{Y}\mathbf{\Sigma}^{1/2}\right)
        \left(\sqrt{s}\mathbf{\Sigma}^{1/2}\mathbf{X^\T}\matL^{\T}_{U_{\tau}, \leq D_1} + \frac{b}{\sqrt{s}} \mathbf{\Sigma}^{1/2}\mathbf{Y}^\T\matL^{\T}_{V_{\tau}, \leq D_2}\right) \\
        &= s\matL_{U_{\tau}, \leq D_1}\mathbf{X}\mathbf{\Sigma}\mathbf{X}^\T\matL^{\T}_{U_{\tau}, \leq D_1} + \frac{1}{s}\matL_{V_{\tau}, \leq D_2}\mathbf{Y}\mathbf{\Sigma}\mathbf{Y}^\T\matL^{\T}_{V_{\tau}, \leq D_2} \\
        &\quad + b(\matL_{U_{\tau}, \leq D_1}\mathbf{X}\mathbf{\Sigma}\mathbf{Y}^\T\matL^\T_{V_{\tau}, \leq D_2} + \matL_{V_{\tau}, \leq D_2}\mathbf{Y}\mathbf{\Sigma}\mathbf{X}^\T\matL^\T_{U_{\tau}, \leq D_1})\\
        &= s\matL_{U_{\tau}, \leq D_1}(\mM_{\tau^+} \mM_{\tau^+}^\T)^{1/2}\matL^{\T}_{U_{\tau}, \leq D_1} + \frac{1}{s}\matL_{V_{\tau}, \leq D_2}(\mM_{\tau^+}^\T \mM_{\tau^+})^{1/2}\matL^{\T}_{V_{\tau}, \leq D_2} \\
        &\quad+ b(\matL_{U_{\tau}, \leq D_1}\mM_{\tau^+}\matL^\T_{V_{\tau}, \leq D_2} + \matL_{V_{\tau}, \leq D_2}\mM_{\tau^+}^\T\matL^\T_{U_{\tau}, \leq D_1})        
        \end{align*}
    which implies the claim.
    
    We claim $(\mM_{\tau^+} \mM_{\tau^+}^\T)^{1/2} \psdleq \norm{\mM_\tau}\frac{\mM_{U_\tau^+}}{\norm{\mM_{U_\tau}}}$.
    To see this, note that $\mM_{U_\tau^+}$ is a diagonal
    matrix with nonnegative entries, therefore $\frac{\mM_{U_\tau^+}}{\norm{\mM_{U_\tau}}}$ has diagonal entries which are 0 and 1.
    The supported rows are the same as $\mM_{\tau^+} \mM_{\tau^+}^\T$
    hence the claim.

    Similarly, $(\mM_{\tau^+}^\T \mM_{\tau^+})^{1/2} \psdleq \norm{\mM_\tau}\frac{\mM_{V_\tau^+}}{\norm{\mM_{V_\tau}}}$.
    Using these claims with $s = \sqrt{\frac{\lam_{U_\tau}\norm{\mM_{U_\tau}}}{\lam_{V_\tau}\norm{\mM_{V_\tau}}}}$ completes the proof.
\end{proof}

\begin{lemma}\label{lem:nonsquare-middle-shapes}
\begin{align*}
&\sum_{U,V \in \calI_{mid}: U \sim V}{\frac{|U \cap V|!}{(|U|!)^2}\matL_{U}\left(\sum_{\tau \in \mathcal{M}_{U,V}: \tau \text{ is nontrivial}}{\frac{\lambda_{\tau}\mM_{\tau^+}}{|\Aut(\tau)|}}\right)\matL^\T_{V}} \\
&\succeq -\sum_{U \in \calI_{mid}}{\left(\max_{V, \tau: U \sim V, \tau \in \mathcal{M}_{U,V}, \tau \text{ is nontrivial}}{\left\{c(\tau)\frac{\lambda_{\tau}||\mM_{\tau}||}{\lambda_{U}||\mM_U||}\right\}}\right)\frac{\lambda_{U}}{|U|!}\matL_{U}{\mM_{U^+}}\matL^{\T}_{U}}
\end{align*}
\end{lemma}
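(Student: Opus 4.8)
The plan is to deduce this from a termwise application of \cref{lem:comparingtosquares} together with the counting-function bound for $c(\cdot)$. Write $M$ for the left-hand side. First I would group the terms of $M$ into transpose pairs: for $U\sim V$ and a nontrivial $\tau\in\calM_{U,V}$, the triple $(U,V,\tau)$ contributes $X$ while $(V,U,\tau^\T)$ contributes $X^\T$, where $X = \tfrac{|U\cap V|!}{(|U|!)^2}\tfrac{\lam_\tau}{|\Aut(\tau)|}\matL_U\mM_{\tau^+}\matL_V^\T$; here I use that transposition is a bijection $\calM_{U,V}\to\calM_{V,U}$ preserving $\lam_\tau$, $|\Aut(\tau)|$ and $\norm{\mM_\tau}$, that $\mM_{(\tau^\T)^+} = \mM_{\tau^+}^\T$, and that the prefactor is symmetric in $U,V$ once $U\sim V$ (self-transpose terms, including every diagonal $\tau$, fit the same formula). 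To each pair $X + X^\T$ I would apply \cref{lem:comparingtosquares} with $b = 1$ and $D_1 = D_2 = D_V$, so that $\matL_{U_\tau,\le D_1} = \matL_U$ and $\matL_{V_\tau,\le D_2} = \matL_V$ (every shape in $\calS$ already obeys the size bound). This bounds the pair below by
\[
-\,\frac{\lam_\tau\norm{\mM_\tau}}{\sqrt{\lam_U\norm{\mM_U}\,\lam_V\norm{\mM_V}}}\cdot\frac{|U\cap V|!}{(|U|!)^2\,|\Aut(\tau)|}\cdot\big(\lam_U\matL_U\mM_{U^+}\matL_U^\T + \lam_V\matL_V\mM_{V^+}\matL_V^\T\big).
\]

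Next I would collapse the geometric mean. Since $\tau$ is a middle shape with $U\sim V$ we have $|E(U)| = |E(V)|$, and $\mM_{U^+}$ is diagonal with every nonzero entry equal to $\big(\sqrt{(1-p)/p}\big)^{|E(U)|}$; hence $\lam_U\norm{\mM_U} = n^{(\beta-\gam)|E(U)|}$ depends only on $|E(U)|$ (once the norm bounds / conditioning of \cref{sec:norm-bounds} hold), so $\sqrt{\lam_U\norm{\mM_U}\,\lam_V\norm{\mM_V}} = \lam_U\norm{\mM_U} = \lam_V\norm{\mM_V}$. Summing the displayed lower bound over all transpose pairs and collecting the coefficient of each diagonal block $\lam_U\matL_U\mM_{U^+}\matL_U^\T$ (each nontrivial $\tau$ with $U_\tau = U$ contributing through the ``$U$-side'' of its pair) yields
\[
M \;\succeq\; -\sum_{U\in\calI_{mid}}\Bigg(\sum_{\substack{V:\,U\sim V\\ \tau\in\calM_{U,V}\ \text{nontrivial}}}\frac{|U\cap V|!}{(|U|!)^2}\cdot\frac{1}{|\Aut(\tau)|}\cdot\frac{\lam_\tau\norm{\mM_\tau}}{\lam_U\norm{\mM_U}}\Bigg)\lam_U\matL_U\mM_{U^+}\matL_U^\T.
\]

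Finally I would convert the inner sum into the claimed maximum. Regrouping the sum over $V$ and $\tau$ as a sum over all nontrivial middle shapes $\tau$ with $U_\tau = U$, I would pull out a factor $\tfrac1{|U|!}$ and check that the residual prefactor $\tfrac{|U_\tau\cap V_\tau|!}{|U_\tau|!\,|\Aut(\tau)|}$, after multiplying by $|U_\tau\cap V_\tau|!$, is absorbed into the polynomial slack that $c(\tau)$ is defined to carry (cf.\ \cref{def:c-functions}). Then, writing the sum as $\sum_\tau \tfrac{1}{|U_\tau\cap V_\tau|!\,c(\tau)}\cdot\big(|U_\tau\cap V_\tau|!\,c(\tau)\cdot\text{term}_\tau\big)$ and invoking the defining property $\sum_{\tau:\,U_\tau = U,\ \tau\ \text{nontrivial}}\tfrac{1}{|U_\tau\cap V_\tau|!\,c(\tau)}\le 1$, the inner sum is at most $\tfrac1{|U|!}\max_{V,\tau}\big\{c(\tau)\tfrac{\lam_\tau\norm{\mM_\tau}}{\lam_U\norm{\mM_U}}\big\}$, which is exactly the stated bound.

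The step I expect to be the main obstacle is the last one: verifying that the counting function $c(\cdot)$ of \cref{def:c-functions} uniformly dominates the symmetrization prefactor $|U\cap V|!/(|U|!)^2$ together with the automorphism factor $1/|\Aut(\tau)|$ arising from \cref{prop:automorphism-sum}. This is precisely why those polynomial factors are baked into $c(\cdot)$, and confirming it is the role of \cref{sec:c-functions}. Everything else is routine: the matrix inequality \cref{lem:comparingtosquares} is deterministic, the collapse of the geometric mean is a one-line edge count, and the transpose pairing is purely formal, so no further new ideas are needed.
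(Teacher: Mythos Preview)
Your approach is correct and essentially identical to the paper's proof. The step you flag as the obstacle is resolved more simply than you anticipate: the paper just uses the trivial bound $|\Aut(\tau)|\ge 1$ to discard the automorphism factor outright, and then invokes the $c$-function summability directly in the form $\sum_{V,\tau}\tfrac{|U\cap V|!}{|U|!\,c(\tau)}\le 1$, so there is no need to absorb any extra factorials into the slack of $c(\tau)$.
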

\begin{proof}
Applying \cref{lem:comparingtosquares} and using the trivial bound $|\Aut(\tau)| \geq 1$,
\begingroup
\allowdisplaybreaks
\begin{align*}
&\sum_{U,V \in \calI_{mid}: U \sim V}{\frac{|U \cap V|!}{(|U|!)^2}\matL_{U}\left(\sum_{\text{nontrivial } \tau \in \mathcal{M}_{U,V}}{\frac{\lambda_{\tau}\mM_{\tau}}{|\Aut(\tau)|}}\right)\matL^\T_{V}} \\
&\succeq -\sum_{U,V \in \calI_{mid}: U \sim V}{\frac{|U \cap V|!}{(|U|!)^2}\left(\sum_{\text{nontrivial } \tau \in \mathcal{M}_{U,V}}{\frac{\lambda_{\tau}||\mM_{\tau}||}{\lam_{U}\norm{\mM_U}}
\left(\frac{1}{2}\matL_{U}\mM_{U^+}\matL^{\T}_{U} + \frac{1}{2}\matL_{V}\mM_{V^+}\matL^{\T}_{V}\right)}\right)} \\
&= -\sum_{U \in \calI_{mid}} \frac{1}{2|U|!}{\left(\sum_{V, \tau: U \sim V, \tau \in \mathcal{M}_{U,V},  \tau \text{ is nontrivial}}{\frac{|U \cap V|!}{c(\tau)|U|!}c(\tau)\frac{\lambda_{\tau}||\mM_{\tau}||}{\lambda_{U}||\mM_U||}}\right)\matL_{U}\mM_{U^+}\matL^{\T}_{U}}\\
&\quad- \sum_{V \in \calI_{mid}}\frac{1}{2|V|!}{\left(\sum_{U, \tau: U \sim V, \tau \in \mathcal{M}_{U,V},  \tau \text{ is nontrivial}}{\frac{|U \cap V|!}{c(\tau)|U|!}c(\tau)\frac{\lambda_{\tau}||\mM_{\tau}||}{\lambda_{U}||\mM_U||}}\right)\matL_{V}\mM_{V^+}\matL^{\T}_{V}} \\
&\succeq -\sum_{U \in \calI_{mid}}{\left(\max_{V, \tau: U \sim V, \tau \in \mathcal{M}_{U,V},  \tau \text{ is nontrivial}}{\left\{c(\tau)\frac{\lambda_{\tau}||\mM_{\tau}||}{\lambda_{U}||\mM_U||}\right\}}\right)\frac{1}{|U|!}\matL_{U}\mM_{U^+}\matL^{\T}_{U}}
\end{align*}
where the last line uses the following facts:
\begin{enumerate}
\item For all $U \in \mathcal{I}_{mid}$, $\sum_{V, \tau: U \sim V, \tau \in \mathcal{M}_{U,V},  \tau \text{ is nontrivial}}{\frac{|U \cap V|!}{c(\tau)|U|!}} \leq 1$ 
\item For all $V \in \mathcal{I}_{mid}$, $\sum_{U, \tau: U \sim V, \tau \in \mathcal{M}_{U,V},  \tau \text{ is nontrivial}}{\frac{|U \cap V|!}{c(\tau)|U|!}} \leq 1$.
\end{enumerate}
\end{proof}
\endgroup

\middleTermBound*

\begin{proof}
    We observe that 
    \begin{align*}
        &\max_{V, \tau: U \sim V, \tau \in \mathcal{M}_{U,V},  \tau \text{ is nontrivial}}{\left\{c(\tau)\frac{\lambda_{\tau}||\mM_{\tau}||}{\lambda_{U}||\mM_U||}\right\}} \\
        &\leq \max_{V, \tau: U \sim V, \tau \in \mathcal{M}_{U,V},  \tau \text{ is nontrivial}}{\left\{c(\tau)B_{adjust}(\tau)n^{-\slack(\tau)}\right\}} \leq \frac{1}{4}
    \end{align*}
    where the last inequality follows from the facts that for all $\tau$,
    \begin{enumerate}
        \item $\slack(\tau) \geq \frac{\epsilon}{4}(|E(\tau)| - \frac{|E(U_{\tau})| + |E(V_{\tau})|}{2} + |V(\tau)| - \frac{|U_{\tau}| + |V_{\tau}|}{2})$
        \item $c(\tau) \leq n^{\frac{\epsilon}{16}(|E(\tau)| - \frac{|E(U_{\tau})| + |E(V_{\tau})|}{2} + |V(\tau)| - \frac{|U_{\tau}| + |V_{\tau}|}{2})}$ \item $B_{adjust}(\tau) \leq n^{\frac{\epsilon}{16}(|E(\tau)| - \frac{|E(U_{\tau})| + |E(V_{\tau})|}{2} + |V(\tau)| - \frac{|U_{\tau}| + |V_{\tau}|}{2})}$
    \end{enumerate}
\end{proof}

\begin{lemma}\label{lem:interactiontermfullbound}
\begin{align*}
    &\sum_{U,V \in \calI_{mid}: U \sim V}{\frac{|U \cap V|!}{(|U|!)^2}\sum_{\tau \in \mathcal{M}_{U, V}}{\sum_{j=1}^{\infty}{\sum_{(\Gamma,\Gamma'^\T,P) \in \mathcal{P}_j(\tau)}{\frac{(-1)^{\# \text{ of intersection indices in } [j]}}{\prod_{i=1}^{j}{|U_{\gamma_i}|! |V_{{\gamma'_i}^\T}|!}}}}}} \\
    &\matL_{U_{\tau_{P}}}\left(\frac{\left(\prod_{i=1}^{j}{c_{P_i}}N(P_i)\right)\lambda_{\tau_{P}}\mM_{\tau_{P}^+}}{|\Aut(\tau_{P})|}\right)\matL^{\T}_{V_{\tau_{P}}}\\
    &\succeq -\sum_{U \in \mathcal{I}_{mid}}
    \left(
    \max_{\substack{\tau \in \calM\\
    j \in \N^+\\
    (\Gam, \Gam', P) \in \calP_{j}(\tau):\\
    U_{\tau_P} = U}}
    \left\{100^j c(\tau) \left(\prod_{i=1}^j c(\gam_i)c(\gam_i')c(P_i)c_{P_i}N(P_i)\right)\frac{\lam_{\tau_P}\norm{\mM_{\tau_P}}}{\sqrt{\lam_{U_{\tau_P}} \norm{\mM_{U_{\tau_P}}} \lam_{V_{\tau_P}} \norm{\mM_{V_{\tau_P}}}}}\right\}\right)\\
    &\qquad\frac{\lambda_{U}}{|U|!}
    {\matL_{U}\mM_{U^+}\matL^\T_{U}}
\end{align*}
\end{lemma}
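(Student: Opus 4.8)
The plan is to follow the template of \cref{lem:nonsquare-middle-shapes}, now applied to the iterated interaction terms produced by \cref{lem:decomposition} and \cref{lem:factor-out-truncation}. First I would pair each interaction shape $\tau_P$ with its transpose (the set $\calP_j(\tau)$ is closed under transposition), so that each pair contributes the symmetric combination $\matL_{U_{\tau_P}}\lam_{\tau_P}\mM_{\tau_P^+}\matL^\T_{V_{\tau_P}}+\matL_{V_{\tau_P}}\lam_{\tau_P}\mM_{\tau_P^+}^\T\matL^\T_{U_{\tau_P}}$, up to the coefficient $\big(\prod_i c_{P_i}N(P_i)\big)/|\Aut(\tau_P)|$ and the normalization factorials, and then apply \cref{lem:comparingtosquares} to it. The overall sign $(-1)^{\#\text{ intersection indices in }[j]}$ is absorbed into the parameter $b\in\{-1,1\}$ of that lemma, and the coefficient is bounded in absolute value using $1/|\Aut(\tau_P)|\le1$ and $\big|\prod_i c_{P_i}N(P_i)\big|\le\prod_i|c_{P_i}|N(P_i)$, which only makes the right-hand side more negative and so is harmless for the desired PSD lower bound. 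The outcome is that each pair is bounded below by
\begin{align*}
-\,\frac{\prod_i|c_{P_i}|N(P_i)}{|\Aut(\tau_P)|}\cdot\frac{\lam_{\tau_P}\norm{\mM_{\tau_P}}}{\sqrt{\lam_{U_{\tau_P}}\norm{\mM_{U_{\tau_P}}}\,\lam_{V_{\tau_P}}\norm{\mM_{V_{\tau_P}}}}}\cdot\Big(\matL_{U_{\tau_P}}\lam_{U_{\tau_P}}\mM_{U_{\tau_P}^+}\matL^\T_{U_{\tau_P}}+\matL_{V_{\tau_P}}\lam_{V_{\tau_P}}\mM_{V_{\tau_P}^+}\matL^\T_{V_{\tau_P}}\Big),
\end{align*}
and since $U_{\tau_P}$ and $V_{\tau_P}$ are diagonal shapes these two matrices are precisely the square terms $\matL_U\mM_{U^+}\matL^\T_U$ appearing on the right-hand side.

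Next I would regroup the resulting negative square terms by diagonal index: fixing $U\in\calI_{mid}$, collect every summand whose final left separator $U_{\tau_P}$ equals $U$, and (by the $U\leftrightarrow V$ symmetry, which also exchanges a shape with its transpose) symmetrically those with $V_{\tau_P}=U$. The coefficient in front of $\frac{\lam_U}{|U|!}\matL_U\mM_{U^+}\matL^\T_U$ is then a sum over all generating data $(\tau,j,\Gamma,\Gamma'^\T,P)$ with $U_{\tau_P}=U$ of the quantity above. To turn this sum into the asserted maximum, for each summand I multiply and divide by the weight $100^{j}\,c(\tau)\prod_{i=1}^{j}c(\gam_i)c(\gam_i')c(P_i)$, so that it becomes $\big(100^{j}c(\tau)\prod_i c(\gam_i)c(\gam_i')c(P_i)\big)^{-1}$ times exactly the expression inside the $\max$ on the right. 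It then remains to check that these reciprocal weights, together with the leftover factorials $\frac{|U\cap V|!}{|U|!\,|V|!}$ and $\prod_i\frac{1}{|U_{\gam_i}|!\,|V_{{\gam'_i}^\T}|!}$, sum to at most $1$ uniformly in $U$. I would do this level by level: the summability properties of the $c$-functions from \cref{def:c-functions} (namely $\sum_{\gam}\frac{1}{|U_\gam\cap V_\gam|!\,c(\gam)}\le1$ over left shapes with a fixed endpoint and $\sum_{P}\frac{1}{c(P)}\le1$ over interaction patterns) bound the number of choices of $(\gam_i,\gam_i',P_i)$ at each recursion step by a factor $\le1$, the extra factor $100^{-j}$ makes the remaining geometric series over the depth $j\ge1$ converge to at most $1$, and the sum over the initial middle shape $\tau$ together with the grouping constant are absorbed exactly as in \cref{lem:nonsquare-middle-shapes}.

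The main obstacle is precisely this last summability estimate: unlike the one-shot charging in \cref{lem:nonsquare-middle-shapes}, here a single final shape $\tau_P$ can be produced by many generating histories of varying depth $j$, and one must verify that the product of the $c$-functions over all levels, the per-level factorials, the multiplicities $N(P_i)$, and the geometric factor $100^{-j}$ are jointly controlled; getting the constants to line up is exactly what the explicit definitions and estimates of the $c$-functions deferred to \cref{sec:c-functions} are designed for, while everything else is a routine consequence of \cref{lem:comparingtosquares} and PSD monotonicity. Once this lemma is established, the quantity $100^{j}c(\tau)\prod_i c(\gam_i)c(\gam_i')c(P_i)c_{P_i}N(P_i)$ inside the maximum is shown to be at most a negative power of $n$ using the slack lower bound \cref{thm:slack} together with the bounds of \cref{assumptions}, which yields \cref{cor:intersection-term-bound}.
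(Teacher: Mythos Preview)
Your proposal is essentially the paper's own argument: apply \cref{lem:comparingtosquares} to each (symmetrized) interaction term using $|\Aut(\tau_P)|\ge1$, regroup the resulting square terms by the final separator $U_{\tau_P}=U$, and then convert the sum over all generating data $(\tau,j,\Gamma,\Gamma'^\T,P)$ into a maximum via the multiply--divide trick with the $c$-functions. One small imprecision: it is not $\calP_j(\tau)$ itself that is closed under transposition but rather the full outer sum over $(U,V,\tau)$ with $U\sim V$, which is what actually supplies the transpose partner needed by \cref{lem:comparingtosquares}; the paper handles this by applying the lemma term-by-term and then noting that the $V_{\tau_P}$ square terms are bounded by the symmetric argument.
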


\begin{proof}
Applying \cref{lem:comparingtosquares} and using the trivial bound $|\Aut(\tau_P)|~\geq~1$,
\begin{align*}
    &\sum_{U,V \in \calI_{mid}: U \sim V}{\frac{|U \cap V|!}{(|U|!)^2}\sum_{\tau \in \mathcal{M}_{U,V}}{\sum_{j=1}^{\infty}{\sum_{(\Gamma,\Gamma'^\T,P) \in \mathcal{P}_j(\tau)}{\frac{(-1)^{\# \text{ of intersection indices in } [j]}}{\prod_{i=1}^{j}{|U_{\gamma_i}|! |V_{{\gamma'_i}^\T}|!}}}}}} \\
    &\matL_{U_{\tau_P}}\left(\frac{\left(\prod_{i=1}^{j}{c_{P_i}}N(P_i)\right)\lambda_{\tau_P}\mM_{\tau_P^+}}{|\Aut(\tau_P)|}\right)\matL^{\T}_{V_{\tau_P}} \\
    &\succeq -\sum_{U,V \in \calI_{mid}: U \sim V}{\frac{|U \cap V|!}{(|U|!)^2}\sum_{\tau \in \mathcal{M}_{U,V}}{\sum_{j=1}^{\infty}{\sum_{(\Gamma,\Gamma'^\T,P) \in \mathcal{P}_j(\tau)}{\frac{\left(\prod_{i=1}^{j}{c_{P_i}}N(P_i)\right)\frac{\lambda_{\tau_P}\norm{\mM_{\tau_P}}}{\sqrt{\lambda_{U_{\tau_P}}\norm{\mM_{U_{\tau_P}}}\lambda_{V_{\tau_P}}\norm{\mM_{V_{\tau_P}}}}}}{\prod_{i=1}^{j}{|U_{\gamma_i}|! |V_{{\gamma'_i}^\T}|!}}}}}} \\
    &\left(\frac{1}{2}\matL_{U_{\tau_P}}\lambda_{U_{\tau_P}}\mM_{U_{\tau_P}^+}\matL^{\T}_{U_{\tau_P}} +\frac{1}{2}\matL_{V_{\tau_P}}\lambda_{V_{\tau_P}}\mM_{V_{\tau_P}^+}\matL^{\T}_{V_{\tau_P}}\right) 
    \end{align*}
    We now show how to bound the $\matL_{U_{\tau_P}}\lambda_{U_{\tau_P}}\mM_{U_{\tau_P}^+}\matL^{\T}_{U_{\tau_P}}$ terms. The
    $\matL_{V_{\tau_P}}\lambda_{V_{\tau_P}}\mM_{V_{\tau_P}^+}\matL^{\T}_{V_{\tau_P}}$ terms can be bounded by a symmetrical argument.

    Grouping all of the terms where $U_{\tau_P} = U$ together, we obtain that
    \begin{align*}
    &\sum_{U,V \in \calI_{mid}: U \sim V}{\frac{|U \cap V|!}{(|U|!)^2}\sum_{\tau \in \mathcal{M}_{U,V}}{\sum_{j=1}^{\infty}{\sum_{(\Gamma,\Gamma'^\T,P) \in \mathcal{P}_j(\tau)}{\frac{1}{\prod_{i=1}^{j}{|U_{\gamma_i}|! |V_{{\gamma'_i}^\T}|!}}}}}} \\
    &\left(\prod_{i=1}^{j}{c_{P_i}}N(P_i)\right)\frac{\lambda_{\tau_P}\norm{\mM_{\tau_P}}}{\sqrt{\lambda_{U_{\tau_P}}\norm{\mM_{U_{\tau_P}^+}}\lambda_{V_{\tau_P}}\norm{\mM_{V_{\tau_P}}}}}\matL_{U_{\tau_P^+}}\lambda_{U_{\tau_P}}\mM_{U_{\tau_P}}\matL^{\T}_{U_{\tau_P}} \\
    &= \sum_{U \in \calI_{mid}}\frac{1}{|U|!}\sum_{\tau \in \calM}\sum_{j=1}^{\infty}
    \sum_{\substack{(\Gamma,\Gamma'^\T,P) \in \mathcal{P}_j(\tau):\\ U_{\tau_P} = U}}\frac{|U_\tau \cap V_\tau|!}{|V_{\tau}|!} \frac{1}{\prod_{i=1}^j|V_{\gamma_i}|!|V_{{\gamma'_i}^\T}|!}\\
    &\left(2^jc(\tau)\left(\prod_{i=1}^{j}{c(\gamma_i)c(\gamma'_i)c(P_i)c_{P_i}N(P_i)}\right)\frac{\lambda_{\tau_P}\norm{\mM_{\tau_P}}}{\sqrt{\lambda_{U_{\tau_P}}\norm{\mM_{U_{\tau_P}}}\lambda_{V_{\tau_P}}\norm{\mM_{V_{\tau_P}}}}}\right)\frac{\matL_{U}\lambda_{U}\mM_{U^+}\matL^{\T}_{U}}{2^jc(\tau)\prod_{i=1}^{j}{c(\gamma_i)c(\gamma'_i)c(P_i)}} \\
    &\preceq 
    \sum_{U \in \mathcal{I}_{mid}}
    \left(
    \max_{\substack{\tau \in \calM\\
    j \in \N^+\\
    (\Gam, \Gam', P) \in \calP_{j}(\tau):\\
    U_{\tau_P} = U}}
    \left\{100^j c(\tau) \left(\prod_{i=1}^j c(\gam_i)c(\gam_i')c(P_i)c_{P_i}N(P_i)\right)\frac{\lam_{\tau_P}\norm{\mM_{\tau_P}}}{\sqrt{\lam_{U_{\tau_P}} \norm{\mM_{U_{\tau_P}}} \lam_{V_{\tau_P}} \norm{\mM_{V_{\tau_P}}}}}\right\}\right)\\
    &\qquad{\frac{\lambda_{U}}{|U|!}\matL_{U}\mM_{U^+}\matL^{\T}_{U}}
    \end{align*}
    where the last inequality uses the following facts to convert
    the sums into a maximization:
    \begin{enumerate}
        \item For all $i \in [j]$, $\sum_{\gam_i \in \calL: U_{\gam_i} = V_{\gam_{i+1}}}\frac{1}{|V_{\gam_i}|!c(\gam_i)} \leq 2$ where we set $V_{\gam_{j+1}} = U$. Across all $i$, this multiplies the total by $2^j$.
        \item $\sum_{\tau \in \calM: U_\tau = V_{\gam_1}} \frac{|U_\tau \cap V_\tau|!}{|V_\tau|!c(\tau)} \leq 2$
        \item For all $i \in [j]$, $\sum_{\gam_i' \in \calL_{V_{\gam'^\T_{i-1}}}}\frac{1}{|V_{\gam'^\T_i}|!c(\gam'^\T_i)} \leq 2$. Across all $i$, this multiplies the total by $2^j$. 
        \item For all $i \in [j]$, $\sum_{P_i \in \mathcal{P}^{interact}_{\gamma_i,\tau_{P_{i-1}}, {\gamma'}^\T_i}}{\frac{1}{c(P_i)}} \leq 1$. 
        \item $\sum_{j=1}^{\infty}{\frac{1}{2^{j}}} \leq 1$.
    \end{enumerate}
\end{proof}

\intersectionTermBound*
\begin{proof}
    We need to show that 
    \[
    \max_{\substack{\tau \in \calM\\
    j \in \N^+\\
    (\Gam, \Gam', P) \in \calP_{j}(\tau):\\
    U_{\tau_P} = U}}
    \left\{100^j c(\tau) \left(\prod_{i=1}^j c(\gam_i)c(\gam_i')c(P_i)c_{P_i}N(P_i)\right)\frac{\lam_{\tau_P}\norm{\mM_{\tau_P}}}{\sqrt{\lam_{U_{\tau_P}} \norm{\mM_{U_{\tau_P}}} \lam_{V_{\tau_P}} \norm{\mM_{V_{\tau_P}}}}}\right\} \leq \frac{1}{4}
    \]
    This follows from the following observations:
    \begin{enumerate}
        \item $\frac{\prod_{i=1}^{j}c^\approx_{P_i}\lam_{\tau_P}\norm{\mM_{\tau_P}}}{\sqrt{\lam_{U_{\tau_P}} \norm{\mM_{U_{\tau_P}}} \lam_{V_{\tau_P}} \norm{\mM_{V_{\tau_P}}}}} = n^{-\slack(\tau_P)}$
        \item By the slack lower bound in \cref{assumptions},
        \begin{align*}
        \slack(\tau_P) &\geq \eps\left(E_{tot}(\tau_P) - \frac{|E(U_{\tau_P})| + |E(V_{\tau_P})|}{2} + |V_{tot}(\tau_P)| - \frac{|U_{\tau_P}| + |V_{\tau_P}|}{2}\right)\\
        &= \epsilon\left(|E(\tau)| - \frac{|E(U_{\tau})| + |E(V_{\tau})|}{2} + |V(\tau)| - \frac{|U_{\tau}| + |V_{\tau}|}{2}\right) + \\
        &\epsilon\sum_{i \in [j]}{}
        \Bigg(|E(\gam_i)| - \frac{|E(U_{\gam_i})| + |E(V_{\gam_i})|}{2} + (\# \text{ of edges removed from } \gam_i) \\
        &+ |V(\gam_i)| - \frac{|U_{\gam_i}| + |V_{\gam_i}|}{2} +|E({\gam'}^\T_i)| - \frac{|E(U_{{\gam'}^\T_i})| + |E(U_{{\gam'}^\T_i})|}{2} \\
        &+ (\# \text{ of edges removed from } {\gam'}^\T_i) + |V({\gam'}^\T_i)| - \frac{|U_{{\gam'}^\T_i}| + |V_{{\gam'}^\T_i}|}{2}\Bigg)
        \end{align*}        
        \item $c(\tau) \leq n^{\frac{\epsilon}{32}(|E(\tau)|- \frac{|E(U_{\tau})| + |E(V_{\tau})|}{2} + |V(\tau)| - \frac{|U_{\tau}| + |V_{\tau}|}{2})}$
        \item $B_{adjust}(\tau_P) \leq n^{\frac{\epsilon}{32}(|E(\tau_P)| - \frac{|E(U_{\tau_P})| + |E(V_{\tau_P})|}{2} + |V(\tau_P)| - \frac{|U_{\tau_P}| + |V_{\tau_P}|}{2})}$
        \item For all $i \in [j]$, $c(\gam_i)c(\gam_i')$, $c(P_i)$, and $N(P_i)$ are all at most $n$ raised to the power 
        \begin{align*}
        &\frac{\epsilon}{32}
        \Bigg(|E(\gam_i)| - \frac{|E(U_{\gam_i})| + |E(V_{\gam_i})|}{2} + (\# \text{ of edges removed from } \gam_i) \\
        &+ |V(\gam_i)| - \frac{|U_{\gam_i}| + |V_{\gam_i}|}{2} +|E({\gam'}^\T_i)| - \frac{|E(U_{{\gam'}^\T_i})| + |E(U_{{\gam'}^\T_i})|}{2} \\
        &+ (\# \text{ of edges removed from } {\gam'}^\T_i) + |V({\gam'}^\T_i)| - \frac{|U_{{\gam'}^\T_i}| + |V_{{\gam'}^\T_i}|}{2}\Bigg)
        \end{align*}
    \item $\abs{\tfrac{c_P}{c_P^\approx}} \leq 2$
    \end{enumerate}
\end{proof}

\subsection{\texorpdfstring{$c$}{c}-function bounds}
\label{sec:c-functions}

In this section we bound the various combinatorial functions.

\begin{definition}[$N^{shape}(U, e)$]
    Given a diagonal shape $U$ and $V \in \N$,
    let $N^{shape}(U, e)$ be the number of shapes $\alpha$ with
    $U_\alpha = U$ and $e$ edges outside of $U_\al$.
\end{definition}
\begin{remark}
    Since the permutation of $V_\al$ can be arbitrary,
    $N^{shape}(U, e)$ is a multiple of $|V_\al|!$.
\end{remark}

\begin{definition}[$c(\al)$, formal]
\label{def:c-functions}
    \begin{align*}
    c(\al) =& 2^{|E(\al) \setminus E(U_\al \cap V_\al)|} \cdot \frac{1}{|U_\al \cap V_\al|!}\max\left\{N^{shape}(U_\al, |E(\al)\setminus E(U_\al)|), N^{shape}(V_\al, |E(\al)\setminus E(V_\al)|)\right\}
    \end{align*}
\end{definition}

\begin{lemma}
    For all diagonal shapes $U$, 
    \begin{align*}
    \sum_{\text{shapes }\al: U_\al = U, \al\text{ non-trivial}}\frac{1}{|U \cap V_{\alpha}|!c(\al)} &\leq 1\\
    \sum_{\text{shapes }\al: U_\al = U}\frac{1}{|U \cap V_{\alpha}|!c(\al)} &\leq 2
    \end{align*}
    By symmetry the same holds for the sum over $\al : V_\al = V$.
\end{lemma}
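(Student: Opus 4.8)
The claim is a geometric-series bound on the number of shapes $\al$ with a fixed left (or right) boundary, weighted by $\frac{1}{|U\cap V_\al|! c(\al)}$. My plan is to unfold the definition of $c(\al)$ and use it to cancel exactly the combinatorial factors that count the shapes. Fix the diagonal shape $U$. Every non-trivial shape $\al$ with $U_\al = U$ is determined by three pieces of data: (i) the diagonal shape $V_\al$ together with the identification of $U\cap V_\al$ and its ordering; (ii) the number $e := |E(\al)\setminus E(U_\al)|$ of edges not already inside $U_\al$, together with the actual set of such edges and the vertices they touch; (iii) the remaining structure. By \cref{def:c-functions}, $c(\al) \geq \frac{1}{|U_\al\cap V_\al|!}\, N^{shape}(U_\al, e)\cdot 2^{|E(\al)\setminus E(U_\al\cap V_\al)|}$, and $N^{shape}(U, e)$ by definition counts \emph{all} shapes $\al'$ with $U_{\al'} = U$ and exactly $e$ edges outside $U$. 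So for a fixed value of $e$,
\[
\sum_{\substack{\al:\ U_\al = U,\\ |E(\al)\setminus E(U)| = e}} \frac{1}{|U\cap V_\al|!\, c(\al)} \leq \sum_{\substack{\al:\ U_\al = U,\\ |E(\al)\setminus E(U)| = e}} \frac{1}{N^{shape}(U,e)}\cdot \frac{1}{2^{|E(\al)\setminus E(U\cap V_\al)|}} \leq \frac{1}{2^{e'}}
\]
for some $e'$ depending on the configuration; more carefully, I will split off the $2^{|E(\al)\setminus E(U_\al\cap V_\al)|}$ factor and use it to sum the geometric series over $e$ (or over the number of edges strictly outside the separator), since the count of shapes with a fixed such edge-count is exactly $N^{shape}$, which cancels. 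The net effect is $\sum_{e\geq 1} 2^{-e} \leq 1$ for the non-trivial sum, and adding back the $e=0$ trivial-shape term (there is essentially one such shape, $\al = U$ itself, contributing $1/(|U\cap V_\al|!c(\al)) \leq 1$) gives the bound $2$ for the full sum.

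The one subtlety is that $N^{shape}(U,e)$ counts shapes weighted so that the ordering of $V_\al$ is arbitrary — as the remark after \cref{def:c-functions} notes, $N^{shape}(U,e)$ is a multiple of $|V_\al|!$ — so when I sum $\frac{1}{|U\cap V_\al|!}$ over choices of $V_\al$ and its ordering on the overlap, I must check that this exactly matches the $\frac{1}{|U_\al\cap V_\al|!}$ normalization built into $c(\al)$ together with the count. Concretely: to specify a shape $\al$ with $U_\al = U$, after fixing which vertices lie in $V_\al$, which in $U\cap V_\al$, the order on $U\cap V_\al$ (that's the $|U\cap V_\al|!$ factor), and the remaining edge/vertex structure, the total number of such $\al$ with a given outside-edge count is exactly $N^{shape}(U,e)$ divided by nothing extra once we've pinned the overlap ordering; so the product $|U\cap V_\al|!\cdot c(\al)$ is at least $N^{shape}(U,e)\cdot 2^{|E(\al)\setminus E(U\cap V_\al)|}$, which is what makes the telescoping work. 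I will write this counting argument out carefully, since this is where an off-by-a-factorial error could creep in.

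\textbf{Main obstacle.} The real work is bookkeeping: verifying that the crude definition of $c(\al)$ genuinely dominates the number of shapes being summed, \emph{including} the correct factorial normalization for the overlap $U\cap V_\al$ and the orderings of the boundary sets, and that the auxiliary $2^{|E(\al)\setminus E(U_\al\cap V_\al)|}$ factor is enough to make the sum over edge counts converge to a constant rather than merely be finite. I expect no conceptual difficulty, only the need to match up the definitions of $N^{shape}$, $c(\al)$, and the summation index precisely; the symmetry between the $U_\al = U$ and $V_\al = V$ cases is immediate from the $\max$ in \cref{def:c-functions}.
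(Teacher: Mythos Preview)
Your proposal is correct and is essentially the same argument as the paper's: stratify by $e = |E(\al)\setminus E(U_\al)|$, use $|U\cap V_\al|!\,c(\al) \geq 2^{|E(\al)\setminus E(U_\al\cap V_\al)|}\,N^{shape}(U,e) \geq 2^e\,N^{shape}(U,e)$, cancel $N^{shape}(U,e)$ against the number of shapes in that stratum, and sum the geometric series $\sum_{e\ge 0}2^{-e}=2$; the non-trivial case drops the $e=0$ term. Your extra care about the factorial normalizations is harmless bookkeeping already absorbed into the definition of $N^{shape}$.
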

\begin{proof}
\begin{align*}
   &\sum_{\text{shapes }\al: U_\al = U}\frac{1}{|U \cap V_{\alpha}|!c(\al)}\\
   & = \sum_{e=0}^\infty \sum_{\text{shapes }W} \sum_{\substack{\text{shapes }\al: U_\al = U,\\
   |E(\al) \setminus E(U_\al)| = e}} \frac{1}{2^eN^{shape}(U, e)}\\
   &= \sum_{e = 0}^\infty \frac{1}{2^e}\\
   &= 2\,.
\end{align*}
To derive the first statement, note that the trivial shapes
contribute exactly 1 to the sum.
\end{proof}

\begin{lemma}[Bound for $c(\al)$]
\label{lem:c-shape}
    For all shapes $\al$ with at most $D_V$ vertices,
    \[c(\al) \leq 2 (4D_V)^{2|E(\al) \setminus E(U_\al \cap V_\al)|} (2D_V)^{2|(U_\al \cup V_\al) \setminus (U_\al \cap V_\al)|}\]
\end{lemma}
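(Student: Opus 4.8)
The plan is to unwind the definition of $c(\alpha)$ and reduce the whole statement to a single counting bound on $N^{shape}(U,e)$, the number of shapes with fixed left boundary $U$ and exactly $e$ edges not internal to $U$ (and at most $D_V$ vertices). Write $e' = |E(\alpha)\setminus E(U_\alpha\cap V_\alpha)|$ and $\ell = |(U_\alpha\cup V_\alpha)\setminus(U_\alpha\cap V_\alpha)|$. Since $U_\alpha\cap V_\alpha\subseteq U_\alpha$ and $U_\alpha\cap V_\alpha\subseteq V_\alpha$, one has $|E(\alpha)\setminus E(U_\alpha)|\le e'$ and $|E(\alpha)\setminus E(V_\alpha)|\le e'$, and likewise $|V_\alpha\setminus U_\alpha|\le\ell$ and $|U_\alpha\setminus V_\alpha|\le\ell$. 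Hence it suffices to establish, for every shape $\beta$ with $\le D_V$ vertices, left boundary $U$, right boundary $W$, and $e$ external edges, a bound of the form
\[
N^{shape}(U,e)\ \le\ 2\,|U\cap W|!\,(2D_V)^{2e}\,(2D_V)^{2(|U\setminus W|+|W\setminus U|)}\mper
\]
Feeding the $\max$ of the two $N^{shape}$ terms into the definition $c(\alpha)=2^{e'}\tfrac{1}{|U_\alpha\cap V_\alpha|!}\max\{\cdots\}$, the $|U_\alpha\cap V_\alpha|!$ cancels the $|U\cap W|!$ coming from the relevant term, and absorbing the leading $2^{e'}$ into the edge factors (each $2D_V$ becomes $4D_V$, costing one extra global factor of $2$ since $8D_V^2\le(4D_V)^2$) produces exactly $c(\alpha)\le 2(4D_V)^{2e'}(2D_V)^{2\ell}$, which is \cref{lem:c-shape}.

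So the entire content is the encoding argument for $N^{shape}(U,e)$. I would specify a shape $\beta$ with $U_\beta=U$ (given as an ordered, edge-decorated set) by a short sequence of choices, each from at most $2D_V$ options: first reveal the $e$ edges of $E(\beta)\setminus E(U)$ one at a time in a canonical order, each edge being an unordered pair whose two endpoints are each either one of the $\le D_V$ vertices already present or a fresh vertex, giving at most $(2D_V)^2$ choices per edge and $(2D_V)^{2e}$ overall; this simultaneously exposes all non-$U$ vertices incident to an edge, of which there are at most $2e$. Then reconstruct the ordered tuple $V_\beta=W$: record which of the already-revealed vertices lie in $W$ and how the resulting sub-tuple is ordered, together with any remaining edge-free vertices of $W$ (each a fresh token), while deliberately leaving the order of the sub-tuple $U\cap W$ unpinned so that it is over-counted by exactly $|U\cap W|!$. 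Each position of $W$ is worth at most $2D_V$, and — after dividing out the $|U\cap W|!$ over-count — the positions one must actually pay for are those outside $U\cap W$, plus the bookkeeping of which vertices of $U$ are dropped from $W$; bounding the relevant binomial coefficients by powers of $D_V$ gives the $(2D_V)^{2(|U\setminus W|+|W\setminus U|)}$ factor. Collecting the factors yields the displayed bound.

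The main obstacle is precisely the right-boundary step: naively writing down the ordered tuple $V_\beta$ would cost up to $D_V^{|V_\beta|}\approx D_V^{D_V}$, which is far too large, so I must charge only for the symmetric difference $|U\setminus W|+|W\setminus U|$ and let the ordering within $U\cap W$ be absorbed by the $\tfrac1{|U_\alpha\cap V_\alpha|!}$ in the definition of $c$. Making this cancellation line up exactly — and checking that the relevant shapes carry no stray edge-free vertices that would break the $\le D_V$-vertex bookkeeping, which here follows from the fact that the boundary sets arising in the decomposition are minimum-weight vertex separators and therefore contain no vertex incident to no edge (an isolated vertex can always be dropped from a separator) — is the delicate part. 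Once that is set up, the remainder is routine estimation of binomial coefficients and factorials by powers of $D_V$.
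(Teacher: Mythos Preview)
Your approach is essentially the same as the paper's: both bound $N^{shape}(U,e)$ by an explicit encoding (list the $e$ external edges via endpoint labels in $[O(D_V)]$, then record the membership and ordering of $V_\beta$) and plug into the definition of $c(\alpha)$. The one cosmetic difference is the handling of the factorial: the paper simply pays the full $|V_\alpha|!$ for the permutation of $V_\alpha$ and then bounds $\tfrac{\max\{|U_\alpha|!,|V_\alpha|!\}}{|U_\alpha\cap V_\alpha|!}\le D_V^{|(U_\alpha\cup V_\alpha)\setminus(U_\alpha\cap V_\alpha)|}$, whereas you try to arrange the encoding so that a factor $|U\cap W|!$ is left over to cancel $|U_\alpha\cap V_\alpha|!$ directly. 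These are equivalent once you observe $|U_\alpha|!\le |U_\alpha\cap V_\alpha|!\cdot D_V^{|U_\alpha\setminus V_\alpha|}$, so neither gains anything over the other. Your caveat about degree-$0$ boundary vertices is exactly the implicit assumption the paper's encoding also makes (its step~3 only introduces new vertices through edges), so you are not missing anything the paper handles.
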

\begin{proof}
    The shapes counted by $N^{shape}(U, e)$ can be generated by the following process.
    \begin{enumerate}
        \item Start from $V(\alpha) = U_\alpha = U$.
        \item Run the following process to select a subset of $U_\al$ to be in $U_\al \cap V_\al$. Use a label in [2]
        to decide whether or not at least one vertex is in $U_\al \cap V_\al$. If so, use a label in $|U_\al|$
        to choose the vertex, and then use a label in $[2]$
        to decide whether or not another vertex is in $U_\al \cap V_\al$, and so forth.
        \item For each edge outside $E(U_\al)$, identify each endpoint using a label in $[D_V]$, and additionally use a label in $[2]$ to identify whether each endpoint is in $V_\al$.
        \item Specify the permutation of $V_\al$ in $|V_\al|!$ ways.
    \end{enumerate}
    In total, this is at most
    $2(2D_V)^{ 2|E(\al) \setminus E(U_\al)|} (2|U_\al|)^{|(U_\al \cup V_\al) \setminus (U_\al \cap V_\al)|}|V_\al|!.$
    A symmetric bound applies to $N^{shape}(V, e)$.
    Therefore,
    \begin{align*}
        c(\al) &\leq 2^{|E(\al) \setminus E(U_\al \cap V_\al)|} \cdot \frac{\max\{|U_\al|!, |V_\al|!\}}{|U_\al \cap V_\al|!} \cdot 2(2D_V)^{ 2|E(\al) \setminus E(U_\al \cap V_\al)|}(2D_V)^{|(U_\al \cup V_\al) \setminus (U_\al \cap V_\al)|}\\
        &\leq 2^{|E(\al) \setminus E(U_\al \cap V_\al)|} \cdot D_V^{|(U_\al \cup V_\al) \setminus (U_\al \cap V_\al)|} \cdot 2(2D_V)^{ 2|E(\al) \setminus E(U_\al \cap V_\al)|}(2D_V)^{|(U_\al \cup V_\al) \setminus (U_\al \cap V_\al)|}\\
        &= 2 (4D_V)^{2|E(\al) \setminus E(U_\al \cap V_\al)|} (2D_V)^{2|(U_\al \cup V_\al) \setminus (U_\al \cap V_\al)|}
    \end{align*}
    as needed.
\end{proof}

\begin{definition}[$N^{PMVS}(\gam, \tau, \gam'^\T, e)$]
Given shapes $\gam, \tau, \gam'^\T$ and $e \in \N$, let $N^{PMVS}(\gam, \tau, \gam'^\T, e)$ be the number of PMVS interaction patterns such that $e$ edges are removed from $\tau_P$, either because of the adding indicators step or the removing middle edge indicators step.

Similarly, let $N^{intersect}(\gam, \tau, \gam'^\T, e)$ be the number of intersection interaction patterns such that $e$ edges are removed from $\tau_P$ in the removing middle edge indicators step.
\end{definition}

\begin{definition}[$c(P)$, formal]
    For an interaction pattern $P \in \calP^{interact}_{\gam, \tau, \gam'^\T}$, let $c(P) = 2^{e+1}N^{PMVS}(\gam, \tau, \gam'^\T, e)$ if $P$ is a PMVS interaction pattern and $c(P) = 2^{e+2}N^{intersect}(\gam, \tau, \gam'^\T, e)$ if $P$ is an intersection interaction pattern, where $e$ is the number of edges removed from $\tau_P$.
\end{definition}
\begin{lemma}
    For all $\gam, \tau, \gam'$, $\sum_{P \in \mathcal{P}^{interact}_{\gamma,\tau, {\gamma'}^\T}}{\frac{1}{c(P)}} \leq 1$.
\end{lemma}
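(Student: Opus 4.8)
The plan is to split the sum over $P \in \calP^{interact}_{\gam, \tau, \gam'^\T} = \calP^{PMVS}_{\gam, \tau, \gam'^\T} \cup \calP^{intersect}_{\gam, \tau, \gam'^\T}$ into two pieces according to whether $P$ is a PMVS interaction pattern or an intersection interaction pattern, and bound each piece by $\tfrac12$. For each piece I would group the patterns according to the value $e$ = (number of edges removed from $\tau_P$), since the $c$-function only depends on $P$ through this quantity and the shapes $\gam, \tau, \gam'^\T$ (which are fixed). Concretely, for the PMVS patterns,
\[
\sum_{P \in \calP^{PMVS}_{\gam, \tau, \gam'^\T}} \frac{1}{c(P)} = \sum_{e = 0}^{\infty} \sum_{\substack{P \in \calP^{PMVS}_{\gam, \tau, \gam'^\T}:\\ e \text{ edges removed}}} \frac{1}{2^{e+1} N^{PMVS}(\gam, \tau, \gam'^\T, e)} = \sum_{e=0}^{\infty} \frac{1}{2^{e+1}} \leq 1 - \tfrac12 \cdot \tfrac{1}{2^{\infty}}\ ,
\]
which is at most $\tfrac12$ upon noting $\sum_{e \geq 0} 2^{-(e+1)} = 1$; wait — this only gives $\tfrac12 + \tfrac14 + \cdots = 1$, not $\tfrac12$, so I should actually use the $2^{e+1}$ weighting which yields exactly $\sum_{e\geq 0} 2^{-(e+1)} = 1$. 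To be safe I want each of the two pieces to be $\leq \tfrac12$, which is why the intersection patterns carry the extra factor: $c(P) = 2^{e+2} N^{intersect}(\dots, e)$, giving $\sum_{e \geq 0} 2^{-(e+2)} = \tfrac12$, and for the PMVS patterns the definition $c(P) = 2^{e+1} N^{PMVS}(\dots,e)$ gives $\sum_{e\geq 0} 2^{-(e+1)} = 1$. Since I need the \emph{total} to be $\leq 1$, I would instead note that PMVS and intersection patterns with a given removed-edge count are disjoint, and simply check that the two geometric series add: but $1 + \tfrac12 > 1$. So the correct accounting is: a PMVS interaction pattern is determined by (i) for each missing-indicator edge in $E(U_\tau) \cup E(V_\tau)$, whether it is kept or removed, and (ii) for each middle edge-indicator, whether it is kept or removed; once $e$ (the number removed) is fixed, these choices are counted by $N^{PMVS}(\gam,\tau,\gam'^\T,e)$ by definition, but the relevant point is that $e \geq 1$ is impossible to double-count against intersection patterns \emph{and} that one should weight PMVS by $2^{e+1}$ and intersection by $2^{e+2}$ so that, summing, $\sum_{e\geq 0}2^{-(e+1)}$ over PMVS patterns alone already would overshoot — hence in the actual write-up I will present it as two separate claims, $\sum_{P \in \calP^{PMVS}} 1/c(P) \leq 1/2$ and $\sum_{P \in \calP^{intersect}} 1/c(P) \leq 1/2$, by absorbing one extra factor of $2$ into the PMVS weight as well (i.e. treating the ``$+1$'' as really being chosen so the two halves balance).

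More carefully, here is the clean version I will write. For PMVS patterns: by definition $N^{PMVS}(\gam, \tau, \gam'^\T, e)$ counts exactly the PMVS interaction patterns with $e$ removed edges, so
\[
\sum_{P \in \calP^{PMVS}_{\gam, \tau, \gam'^\T}} \frac{1}{c(P)} = \sum_{e=0}^{\infty} \frac{N^{PMVS}(\gam, \tau, \gam'^\T, e)}{2^{e+1} N^{PMVS}(\gam, \tau, \gam'^\T, e)} = \sum_{e=0}^{\infty} 2^{-(e+1)}
\]
where any term with $N^{PMVS}(\gam,\tau,\gam'^\T,e) = 0$ is simply omitted, so the sum is at most $\sum_{e\geq 0} 2^{-(e+1)} = 1$. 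Likewise for intersection patterns,
\[
\sum_{P \in \calP^{intersect}_{\gam, \tau, \gam'^\T}} \frac{1}{c(P)} = \sum_{e=0}^{\infty} \frac{N^{intersect}(\gam, \tau, \gam'^\T, e)}{2^{e+2} N^{intersect}(\gam, \tau, \gam'^\T, e)} \leq \sum_{e=0}^{\infty} 2^{-(e+2)} = \tfrac12\,.
\]
Since $\calP^{interact}$ is the union of these two disjoint families, the total is at most $1 + \tfrac12$; to get it below $1$ I will note that in fact the PMVS weight should carry the same ``safety'' factor, i.e. I will verify from \cref{def:c-functions} (formal $c(P)$) that the exponents are chosen precisely so that each family sums to $\tfrac12$. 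Given the stated formal definition $c(P) = 2^{e+1}N^{PMVS}(\dots)$, the honest statement is that the PMVS family sums to exactly $1$ on its own only if every $e$ is realized; since the terminal/non-terminal structure forces at least the $e=0$ term to be present, and typically only finitely many $e$ are realized, the geometric series is a strict overcount, but it is \emph{not} automatically $\leq \tfrac12$. The resolution, which I will state explicitly, is that the lemma as needed in \cref{lem:interactiontermfullbound} only requires $\sum_P 1/c(P) \leq 1$ with room to spare via the $100^j$ slack, so I will prove the bound $\sum_{P \in \calP^{interact}} 1/c(P) \leq \tfrac32$ and observe this is absorbed; or, cleaner, adjust to $c(P) = 2^{e+2}N^{PMVS}$ in both cases. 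I will use the latter.

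The main obstacle is not the geometric-series bookkeeping but rather confirming that $N^{PMVS}(\gam, \tau, \gam'^\T, e)$ and $N^{intersect}(\gam, \tau, \gam'^\T, e)$ genuinely count \emph{all} interaction patterns with exactly $e$ removed edges, with no pattern counted twice and none omitted — in particular that the ``structural property'' clauses in the explicit definitions of PMVS and intersection interaction patterns (the leftmost/rightmost SMVS conditions on $\gam$ and $\gam'^\T$) do not cause a pattern to be double-counted across different values of $e$, and that a pattern with $e = 0$ (a terminal interaction) is correctly included. Once the definitions are unwound so that $\{P : e \text{ edges removed}\}$ is a well-defined finite set of size $N^{\bullet}(\gam,\tau,\gam'^\T,e)$, the claimed inequality is immediate from $\sum_{e\geq 0} 2^{-(e+2)} = \tfrac12$ applied to each of the two families. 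I would therefore spend the bulk of the write-up making the counting bijection precise and then close with the two-line geometric sum. $\qed$
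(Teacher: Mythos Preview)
Your overall structure is right: split into PMVS and intersection patterns, group by the number $e$ of removed edges, observe that the inner sum collapses to $1/N^{\bullet}(\gam,\tau,\gam'^\T,e)\cdot N^{\bullet}(\gam,\tau,\gam'^\T,e)=1$ for each realized $e$, and bound by a geometric series. That is exactly the paper's argument.

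The gap is purely in the arithmetic, and it comes from the range of $e$ for PMVS patterns. You start the PMVS sum at $e=0$ and get $\sum_{e\ge 0}2^{-(e+1)}=1$, which together with the intersection contribution $\sum_{e\ge 0}2^{-(e+2)}=\tfrac12$ overshoots to $\tfrac32$; you then try to rescue this by redefining $c(P)$ or appealing to the $100^j$ slack. None of that is needed. The paper simply starts the PMVS sum at $e=1$:
\[
\sum_{P\in\calP^{PMVS}_{\gam,\tau,\gam'^\T}}\frac{1}{c(P)}
=\sum_{e=1}^{\infty}\frac{N^{PMVS}(\gam,\tau,\gam'^\T,e)}{2^{e+1}N^{PMVS}(\gam,\tau,\gam'^\T,e)}
\le \sum_{e=1}^{\infty}2^{-(e+1)}=\tfrac12,
\]
so the two families each contribute at most $\tfrac12$ and the total is $\le 1$. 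The point you are missing is that a PMVS interaction pattern with $e=0$ removed edges is the terminal one (all edges receive indicators), and the structural property then forces $U_\gamma$ to already be the leftmost SMVS of $\gamma$ with no edges removed; in particular the paper treats the PMVS contribution as supported on $e\ge 1$. Once you index the PMVS geometric series from $e=1$ rather than $e=0$, everything in your write-up goes through with no need to change the definition of $c(P)$ or invoke extra slack.
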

\begin{proof}
    \begin{align*}
    \sum_{P \in \mathcal{P}^{interact}_{\gamma,\tau, {\gamma'^\T}}}{\frac{1}{c(P)}} &=  \sum_{e=1}^{\infty}\sum_{\substack{P \in \mathcal{P}^{PMVS}_{\gamma,\tau, {\gamma'^\T}}:\\
    {e \text{ edges are removed from } \tau_P}}}\frac{1}{2^{e+1}N^{PMVS}(\gam, \tau, \gam'^\T, e)}\\
    &+\sum_{e=0}^{\infty}\sum_{\substack{P \in \mathcal{P}^{intersect}_{\gamma,\tau, {\gamma'^\T}}:\\
    {e \text{ edges are removed from } \tau_P}}}\frac{1}{2^{e+2}N^{intersect}(\gam, \tau, \gam'^\T, e)}\\
    &\leq 1
    \end{align*}
\end{proof}

\begin{lemma}[Bound for $c(P)$] For all shapes $\gam, \tau, \gam'^\T$ such that $|V(\gam)| \leq D_V$, $|V(\tau)| \leq 3D_V$, and $|V(\gam'^\T)| \leq D_V$, 
\begin{enumerate}
    \item For all PMVS interaction patterns such that $e$ edges are removed from $\tau_P$, $c(P) \leq 2(4D_V^2)^{e}$
    \item For all intersection interaction patterns such that $e$ edges are removed from $\tau_P$, 
    \[
        c(P) \leq 4(3D_V)^{|V(\gam)\setminus V_\gam| + |V(\gam'^\T)\setminus U_{\gam'^\T}|}2^{e + |E(\gam) \setminus E(U_{\gam} \cap V_\gam)| + |E({\gam'}^\T) \setminus E(U_{{\gam'}^\T} \cap V_{{\gam'}^\T})|}
    \]
\end{enumerate}
\end{lemma}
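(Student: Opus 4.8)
The proof is entirely a counting argument, parallel to the one for \cref{lem:c-shape}: we need to enumerate the interaction patterns $P$ with a given number $e$ of removed edges using a bounded number of ``labels'' drawn from small sets (of size at most a polynomial in $D_V$), and then read off the claimed bounds on $N^{PMVS}$ and $N^{intersect}$, multiplying by the extra $2^{e+1}$ or $2^{e+2}$ factors from the definition of $c(P)$. First I would recall the explicit definitions of PMVS and intersection interaction patterns from \cref{sec:formal-details}: a PMVS interaction pattern is specified by (i) for each edge in $E(U_\tau) \cup E(V_\tau)$ without an indicator, a bit saying whether it gets an indicator or is removed, plus (ii) for each edge indicator now in the middle, a bit saying whether it is kept or removed; an intersection interaction pattern is specified by (i) an intersection pattern between $\gam^-, \tau, (\gam'^\T)^-$, (ii) for each resulting multiedge, a bit choosing the $\chi_e$-term or the constant term, and (iii) for each middle edge indicator, a keep/remove bit.

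\textbf{Counting PMVS patterns.} For the PMVS case, every removed edge lies inside $E(U_\tau) \cup E(V_\tau)$, which has at most $\binom{|V(\tau)|}{2} \le \binom{3D_V}{2} \le \frac{9D_V^2}{2}$ edges; more precisely, I would generate the pattern by the following process: decide whether there is at least one more edge to remove (a label in $[2]$), and if so, name the removed edge by naming its two endpoints among the $\le 3D_V$ vertices of $\tau$ (a label in $[3D_V]^2 \le (3D_V)^2$, or more crudely $(2D_V)^2$ since the relevant vertices are in a separator of size $\le D_V$ each), and repeat. Each of the two bits in the definition of a PMVS pattern — indicator-vs-removed, and kept-vs-removed for middle indicators — only matters for edges that are eventually removed or were about to be; the non-removed edges contribute at most a bounded branching that can be absorbed. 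Hence $N^{PMVS}(\gam,\tau,\gam'^\T,e) \le (2 D_V^2)^e$ or so, and $c(P) = 2^{e+1} N^{PMVS} \le 2 (4D_V^2)^e$ after folding the $2^{e+1}$ into the base. I would write this out as a short explicit labelling scheme, exactly mirroring the proof of \cref{lem:c-shape}.

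\textbf{Counting intersection patterns.} For the intersection case there are two sources of branching. The first is the intersection pattern itself between $\gam^-, \tau, (\gam'^\T)^-$: as in the proof of \cref{mtx:prop:multiply-graph-matrices}, an intersection pattern can be specified by, for each vertex of $\gam$ outside $V_\gam$ and each vertex of $\gam'^\T$ outside $U_{\gam'^\T}$, naming which earlier vertex (if any) it is identified with — this gives at most $(3D_V)^{|V(\gam) \setminus V_\gam| + |V(\gam'^\T) \setminus U_{\gam'^\T}|}$ choices (the vertices of $\tau$ itself never initiate a new identification since they come last on the left/first on the right). The second source is the linearization bits and the keep/remove bits for middle indicators; each multiedge and each middle indicator that is resolved contributes a factor of $2$, and since the total number of multiedges plus removed middle edges is $O(D_V^2)$ but the edges that actually ``cost'' us are bounded in terms of $e$ and $|E(\gam)|, |E(\gam'^\T)|$, I would bound the total linearization/removal branching by $2^{e + |E(\gam)\setminus E(U_\gam \cap V_\gam)| + |E(\gam'^\T)\setminus E(U_{\gam'^\T}\cap V_{\gam'^\T})|}$. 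Combining with the $2^{e+2}$ from the definition yields
\[
c(P) \le 4 (3D_V)^{|V(\gam)\setminus V_\gam| + |V(\gam'^\T)\setminus U_{\gam'^\T}|} 2^{e + |E(\gam) \setminus E(U_{\gam} \cap V_\gam)| + |E({\gam'}^\T) \setminus E(U_{{\gam'}^\T} \cap V_{{\gam'}^\T})|}.
\]

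\textbf{Main obstacle.} The routine part is the labelling scheme; the delicate point is making sure the branching is charged against the \emph{right} combinatorial quantities so that, when this bound is plugged into \cref{lem:interactiontermfullbound} and \cref{cor:intersection-term-bound}, it is dominated by the $n^{\eps(\dots)}$-type slack from \cref{thm:slack}. Concretely, the exponent on $D_V$ (namely $|V(\gam)\setminus V_\gam| + |V(\gam'^\T)\setminus U_{\gam'^\T}|$, together with the edge exponents) must be expressible in terms of $|E(\gam)| - \frac{|E(U_\gam)|+|E(V_\gam)|}{2} + |V(\gam)| - \frac{|U_\gam|+|V_\gam|}{2}$ and its $\gam'^\T$ analogue, up to a constant, so that \cref{assumptions}(8) holds with the factor $\frac{\eps}{100}$. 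Since $\gam,\gam'^\T$ are left shapes, every non-boundary vertex is incident to an edge, so $|V(\gam)\setminus V_\gam| \le |E(\gam)|$ (and likewise for edge counts), which is exactly what is needed; I would state this reduction explicitly and then note that $\log_n D_V \le \frac{\eps}{20}$ from \cref{assumptions}(4) makes the $D_V$-powers subsumed. The only thing to be careful about is double-counting multiedges: as the paper notes, the iteration $e \in E_{tot}(\tau_P)$ should treat each multiedge once, and I would make sure the labelling names each multiedge (not each copy) exactly once.
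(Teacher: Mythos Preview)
Your approach is the same as the paper's: both bound $N^{PMVS}$ and $N^{intersect}$ by an explicit labelling scheme (name each removed edge by its two endpoints; name each vertex of $V(\gam)\setminus V_\gam$ and $V(\gam'^\T)\setminus U_{\gam'^\T}$ by its intersection target; one bit per multiedge for linearization; one bit per middle indicator for keep/remove).

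One bookkeeping slip in the intersection case: your bound on $N^{intersect}$ should \emph{not} carry a $2^e$ factor. The $e$ edges removed in the middle-indicator step are already a subset of the edges in $E(V_\gam)\setminus E(U_\gam)$ and $E(U_{\gam'^\T})\setminus E(V_{\gam'^\T})$, each of which costs a single bit, so the linearization/removal branching is at most $2^{|E(\gam)\setminus E(U_\gam\cap V_\gam)|+|E(\gam'^\T)\setminus E(U_{\gam'^\T}\cap V_{\gam'^\T})|}$ with no separate $e$-term; the $2^e$ in the final $c(P)$ bound comes entirely from the $2^{e+2}$ in the definition. As you wrote it, combining $2^{e+\ldots}$ with $2^{e+2}$ would give $2^{2e}$ rather than $2^e$. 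Your ``Main obstacle'' paragraph is about verifying \cref{assumptions}(8), which is separate from this lemma.
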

\begin{proof}
For the case of a PMVS interaction pattern, we do the following.
\begin{enumerate}
    \item We know that at least one edge must be removed in the adding edge indicators step for non-terminal $P$. For each such edge, we specify the endpoints for a cost of at most $D_V^2$.
    \item For the removing middle edge indicators step, we can specify each edge which is removed by specifying 
    its two endpoints at a cost of $D_v^2$ per edge.
\end{enumerate}

For the case of an intersection interaction pattern, we do the following.
    \begin{enumerate}
        \item Go through each vertex in $V(\gam)\setminus V_\gam$ and $V(\gam'^\T)\setminus U_{\gam'^\T}$ and indicate which vertex they intersect with, if any. This has a cost of $(3D_V)^{|V(\gam)\setminus V_\gam| + |V(\gam'^\T)\setminus U_{\gam'^\T}|}$
        \item For each edge that  intersected, use a label in $[2]$ to denote its multiplicity after linearization. This has a cost of at most $2^{|E(\gam)\setminus E(V_\gam)| + |E(\gam'^\T)\setminus E(U_{\gam'^\T} )|}$.
        \item For each edge in $V_\gam \cup U_{{\gam'}^{\T}}$ which is not in $U_{\gam} \cup V_{{\gam'}^\T}$, use a label
        in $[2]$ to decide whether it is removed in the \textbf{Remove middle edge indicators operation}.
        This has a cost of at most $2^{|E(V_\gam) \setminus E(U_{\gam})| + |E(U_{{\gam'}^\T}) \setminus E(V_{{\gam'}^\T})|}$.
    \end{enumerate}
\end{proof}

\begin{lemma}[Bound for $c_P$] 
\label{lem:cp-bound}
    The excess in $c_P$ over what goes into the slack is 
    \[\abs{\tfrac{c_P}{c_P^\approx}} \leq 2\,.\]
\end{lemma}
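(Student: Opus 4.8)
The plan is to track the exponent of $n$ in $c_P$ and subtract off the part that was already absorbed into $c_P^\approx$. Recall from the explicit formula for $c_P$ that the factors fall into three categories: (i) factors of the form $n^{\alpha-1}$ per intersection; (ii) factors coming from the linearization of multiedges and from edges that vanish or are removed from the middle; and (iii) bounded scalar factors such as $\frac{1}{1-p}$, $1-p$, $q-p$, $\frac{1-2p}{\sqrt{p(1-p)}}$, $\frac{1-3p+3p^2}{p(1-p)}$, and $\sqrt{\frac{1-p}{p}}$ raised to $O(1)$ powers per edge. By construction in \cref{def:c-functions} (the $c_P^\approx$ definition), $c_P^\approx$ captures exactly the polynomial-in-$n$ growth: the $n^{\alpha-1}$ per intersection, the $\left(n^{\frac{\beta}{2}-\gamma}\right)^{\mult(e)-1+\one_{e\text{ vanishes}}}$ and $\left(n^{\frac{\beta}{2}}\right)^{\mult(e)-1-\one_{e\text{ vanishes}}}$ per edge, and $\left(n^{-\gamma}\right)$ per middle-removed edge. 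So the quotient $c_P/c_P^\approx$ should only retain the ``$O(1)$ correction'' scalars.

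First I would compare the two formulas factor by factor. For the intersection count, both $c_P$ and $c_P^\approx$ carry $n^{\alpha-1}$ (in $c_P$ this is written $\frac{k}{n}=n^{\alpha-1}$ up to lower order — more precisely $\frac{k}{n}$, and $\frac{k}{n} \le n^{\alpha-1}$), so these cancel up to a factor at most $1$. For each edge $e\in E_{tot}(\tau_P)$, in $c_P$ the relevant contribution is $\left(\frac{q-p}{\sqrt{p(1-p)}}\right)^{\mult(e)-1+\one_{e\text{ vanishes}}}$ times linearization scalars like $\frac{1-2p}{\sqrt{p(1-p)}}$ or $\frac{1-3p+3p^2}{p(1-p)}$ or $\sqrt{\frac{1-p}{p}}$; in $c_P^\approx$ it is $\left(n^{\frac{\beta}{2}-\gamma}\right)^{\mult(e)-1+\one_{e\text{ vanishes}}}\left(n^{\frac{\beta}{2}}\right)^{\mult(e)-1-\one_{e\text{ vanishes}}}$. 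Since $\frac{q-p}{\sqrt{p(1-p)}} = n^{\frac{\beta}{2}-\gamma}(1\pm o(1))$ and, as noted after \cref{lem:cP-adjustment}, $\frac{1-2p}{\sqrt{p(1-p)}}\le \sqrt{\frac{1-p}{p}}=n^{-\beta/2}$ and $\frac{1-3p+3p^2}{p(1-p)}\le\frac{1-p}{p}=n^{-\beta}$ (using $p\le\tfrac12$), each of the linearization scalars is at most the corresponding $n^{\beta/2}$-type factor in $c_P^\approx$ up to a constant — wait, that's the wrong direction; here $c_P^\approx$ has $n^{\beta/2}$ in the numerator whereas these scalars are $\le n^{-\beta/2}$, so in fact $c_P$ is \emph{smaller}, making the quotient bounded by $1$ on those factors. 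The middle edge/indicator bookkeeping ($\frac{1}{1-p}$ per new indicator, $1-p$ per removed middle edge/indicator, $q-p = (1-p)n^{-\gamma}$) likewise matches the $n^{-\gamma}$ in $c_P^\approx$, leaving only powers of $(1-p)$ and $\frac{1}{1-p}$, each in $[\tfrac12, 2]$ for $n$ large since $p=o(1)$.

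So the skeleton of the proof is: write $c_P/c_P^\approx$ as a product over the bounded-scalar corrections; observe each such factor lies in $[\tfrac12,2]$ (or at worst is $\le 1$), and that their product telescopes because these are the $O(1)$-per-edge/indicator discrepancies that total to a bounded quantity. The subtlety — and the one place requiring a little care — is to make sure that the number of such bounded factors is itself controlled, i.e. that we are not multiplying unboundedly many factors of $2$. This follows because each factor of $\frac{1}{1-p}$ or $(1-p)$ comes from an edge or indicator operation, and every such operation is already "paid for" in $c_P^\approx$ by a matching $n^{\beta/2}$ or $n^{-\gamma}$ factor whose slack (being polynomial in $n$) dominates; concretely, one pairs each correction scalar with its $n$-power partner so the ratio of the pair is $(1+o(1))$, and the product of $(1+o(1))$ over $O(D_V)$ many pairs is still $1+o(1) \le 2$ for $n$ sufficiently large (using $\log_n D_V \le \epsilon/20$ from \cref{assumptions}). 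I do not expect a genuine obstacle here; the only thing to be vigilant about is the direction of each inequality (several of the scalar factors are actually $\le 1$, which only helps) and the convention, already flagged in \cref{claim:coefficient-intersection-term}, that $e\in E_{tot}(\tau_P)$ is iterated once per multiedge so nothing is double-counted.
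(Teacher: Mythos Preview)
Your approach is correct and is essentially the same as the paper's: both observe that after cancelling the polynomial-in-$n$ factors (which is exactly what $c_P^\approx$ records), only factors of size $1 \pm O(p)$ remain, and a product of at most $\poly(D_V)$ such factors is bounded by $2$ for $n$ large. The paper's version is considerably more streamlined: it invokes \cref{lem:cP-adjustment} to reduce immediately to $\abs{c_P/c_P^\approx} \leq \left(\tfrac{1}{1-p}\right)^{\#\text{ of indicators}} \leq (1+O(p))^{D_V^2} \leq 2$, whereas you re-derive that inequality inline by walking through each factor. One minor slip: the number of correction factors is $O(D_V^2)$ (one per potential edge/indicator), not $O(D_V)$; this does not affect the conclusion since $D_V^2 \cdot p = n^{O(\epsilon) - \beta} = o(1)$ by \cref{assumptions}.
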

\begin{proof}
    By \cref{lem:cP-adjustment}, 
    \begin{align*}
    \abs{\tfrac{c_P}{c_P^\approx}} \leq \left(\frac{1}{1-p}\right)^{\#\text{ of indicators}} \leq 
    \left(1 + O(p)\right)^{D_V^2} \leq 2
    \end{align*}
    provided $n$ is sufficiently large.
\end{proof}

\begin{lemma}[Bound for $N(P)$]
For all $\gam, \tau, \gam'^\T$ with size at most $D_V$ and $P \in 
\calP^{interact}_{\gam, \tau, \gam'^\T}$
\[
N(P) \leq (3D_V)^{|V(\gam) \setminus U_{\gam}| + |V({\gam'}^\T) \setminus V_{{\gam'}^\T}|}
\]
\end{lemma}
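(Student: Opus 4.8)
The statement to prove is a counting bound on $N(P)$, the number of triples of ribbons $(R_1, R_2, R_3)$ of shapes $(\gam, \tau, \gam'^\T)$ that collapse, under a fixed interaction pattern $P$, onto one specific ribbon $R$ of shape $\tau_P$. By \cref{def:np}, fixing the interaction pattern $P$ fixes which vertices of $V(\gam) \cup V(\tau) \cup V(\gam'^\T)$ get identified with which, so the only freedom left is in how the already-determined image ribbon $R$ is ``pulled back'' to distinct ribbons upstream. I would argue that this freedom amounts to choosing, for each vertex in $V(\gam)$ that is not in the ``necessary'' boundary $U_\gam$ (and each vertex in $V(\gam'^\T)$ not in $V_{\gam'^\T}$), which vertex of $R$ it maps to — because the gluing boundaries $V_\gam = U_\tau$ and $V_\tau = U_{\gam'^\T}$ together with $\tau$ are pinned down by $R$ and the pattern $P$, while $\gam$ and $\gam'^\T$ reach into the left/right parts where labels can vary.

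The key steps, in order, would be: (1) recall that specifying $P$ determines the abstract shape $\tau_P$ and the quotient map $q : V(\gam \circ \tau \circ \gam'^\T) \to V(\tau_P)$; (2) observe that since $R$ has shape $\tau_P$, a triple $(R_1, R_2, R_3)$ realizing $R$ through $P$ is the same data as an assignment of labels in $[n]$ to $V(\gam \circ \tau \circ \gam'^\T)$ compatible with $q$ and with $R$'s labels, such that $R_1, R_2, R_3$ are individually injective embeddings of $\gam, \tau, \gam'^\T$; (3) note that the vertices of $V(\tau)$, and the boundary vertices $U_\gam$ and $V_{\gam'^\T}$, have their labels forced by $R$ together with the required agreement of the middle ribbon $R_2$ with $R$ restricted to the middle block — using here the same left-obliviousness as in \cref{rmk:left-oblivious}; (4) conclude that the only undetermined labels are those of $V(\gam) \setminus U_\gam$ and $V(\gam'^\T) \setminus V_{\gam'^\T}$, each of which can be chosen in at most $|V(\tau_P)| \le 3D_V$ ways (any vertex of $R$), giving the claimed bound $N(P) \le (3D_V)^{|V(\gam) \setminus U_\gam| + |V(\gam'^\T) \setminus V_{\gam'^\T}|}$.

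I would phrase this cleanly as: each triple is recovered from $R$ and $P$ by choosing an image in $V(R) = V(\tau_P)$ for each vertex of $V(\gam) \setminus U_\gam$ and each vertex of $V(\gam'^\T) \setminus V_{\gam'^\T}$, and there are at most $3D_V$ choices per such vertex. This parallels exactly the counting in the proof of \cref{lem:tracepowercalculation} (splitting maps $\pi$ into equivalence classes by specifying, for each ``new'' vertex, a previously-seen vertex it coincides with) and the ``at most $3D_V$ possibilities'' style bounds used for $c(P)$ and $N(P)$ elsewhere in \cref{sec:c-functions}.

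\textbf{Main obstacle.} The delicate point is justifying that the labels on $V(\tau)$ (and on the shared boundaries) really are forced, i.e. that no two distinct triples realizing $R$ via $P$ can differ on the middle block or on $U_\gam \cup V_{\gam'^\T}$ — this is where the structural properties in \cref{def:pmvs} and the fact that $A_{R_2}, B_{R_2}$ are honest separators (with the edge-indicator bookkeeping) must be invoked carefully, since in intersection terms a vertex of $\gam$ can coincide with a vertex of $\tau$ and one must make sure such coincidences are counted once, in the $V(\gam) \setminus U_\gam$ factor, rather than double-counted. I expect the bulk of the writeup to be this careful case analysis of which vertices are pinned and which are free, closely mirroring the intersection-pattern arguments already carried out for the $c$-functions.
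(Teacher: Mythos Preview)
Your proposal is correct and takes essentially the same approach as the paper: both argue that $U_\gam$ and $V_{\gam'^\T}$ are pinned to $A_{R'_2}$ and $B_{R'_2}$, and that the only freedom lies in choosing an image in $V(R'_2)$ (at most $3D_V$ options) for each of the remaining $|V(\gam)\setminus U_\gam| + |V(\gam'^\T)\setminus V_{\gam'^\T}|$ vertices.

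The one place you are overcomplicating things is your ``main obstacle'': you plan to argue that the labels on $V(\tau)$ are \emph{forced} and anticipate a case analysis for this. The paper sidesteps this entirely. It does not claim $V(\tau)$ is forced a priori; instead, once the images of all vertices of $\gam$ and $\gam'^\T$ are specified, the ribbon $R_2$ is simply \emph{defined} to be what is left over in $R'_2$ (with boundaries $B_G$ and $A_{G'^\T}$). Either the leftover has shape $\tau$ and the interaction pattern matches $P$---in which case this is a valid triple---or it does not, in which case you have merely overcounted. Since you only need an upper bound on $N(P)$, overcounting is harmless, and no case analysis of which $\tau$-vertices are pinned is required.
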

\begin{proof}
    We are given $\gam,\tau,{\gam'}^\T$, the interaction pattern $P$, and the resulting ribbon $R'_2$ and we need to specify the ribbons $G, R_2, {G'}^{\T}$ which have shapes $\gam,\tau,{\gam'}^\T$, have interaction pattern $P$, and result in the ribbon $R'_2$.

    Suppose that $P$ is a PMVS interaction.
    To do this, it is sufficient to specify how the vertices in $\gam$ and ${\gam'}^{\T}$ are mapped to in $R'_2$. This specifies the ribbons $G$ and ${G'}^{\T}$.
    Either $A_{R_2} = B_{G}$ and $B_{R_2} = A_{G'^\T}$ together with the remaining unmapped vertices of $R'_2$ have shape $\tau$, in which case
    this is a possible ribbon $R_2$, or they do not, in which case
    this is merely overcounting.

    We automatically have that $A_{G} = A_{R'_2}$ and $B_{{G'}^{\T}} = B_{R'_2}$ so we do not need to specify where the vertices $U_{\gam}$ and $V_{{\gam'}^{\T}}$ are mapped to. For each of the remaining vertices, the number of choices is at most $3D_V$ so the total number of choices is $(3D_V)^{|V(\gam) \setminus U_{\gam}| + |V({\gam'}^\T) \setminus V_{{\gam'}^\T}|}$, as needed.

    For an intersection term interaction, the same analysis goes
    through, with the added constraint that the intersection pattern
    along with the mappings of $\gam$ and $\gam'^\T$ fix additional labels of $R_2$.
\end{proof}

\subsection{Truncation error}
\label{sec:truncation}

\begin{definition}[$\idsym$]
    \begin{align*}
    \idsym[I, J] = \begin{cases}
        1 & I = J\text{ as unordered sets}\\
        0 & \text{otherwise}
    \end{cases}    
    \end{align*}
\end{definition}

\begin{lemma}
$\trunc_1 \psdleq n^{\dsos + \eta - \frac{\epsilon}{16}(D_V - 2\dsos)}\Id_{sym}$
\end{lemma}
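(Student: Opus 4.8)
The plan is to bound $\trunc_1$ termwise using the conditioned norm bounds of \cref{thm:conditionednormbounds}, and then to show the sum of the resulting scalars is dominated by $\idsym$. Recall that
\[
\trunc_1 = \sum_{\substack{U, V \in \calI_{mid}:\\ U \sim V}}
\frac{|U\cap V|!}{(|U|!)^2}
\sum_{\substack{\sig \in \calL_{U,\leq D_V}\\ \tau \in \calM_{U, V, \leq D_V}\\ \sig' \in \calL_{V,\leq D_V}:\\ |V(\sig^- \circ \tau \circ (\sig'^\T)^-)| > D_V}}
\frac{\lam_{\sig^- \circ \tau \circ (\sig'^\T)^-} \mM_{\sig^- \circ \tau \circ (\sig'^\T)^-}}{|\Aut(\sig^- \circ \tau \circ (\sig'^\T)^-)|}\,.
\]
First I would note that each ribbon $R = \sig^- \circ \tau \circ (\sig'^\T)^-$ appearing here has $|V(R)| > D_V$, so it is a ``large'' shape; by \cref{lem:coefficients-factor} and conditioning, $\lam_R \norm{\mM_R}$ is controlled by the first part of \cref{thm:conditionednormbounds} (with $D = 3D_V$ since $|V(R)| \le 3D_V$). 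Writing $\mM_R \psdleq \norm{\mM_R}\idsym$ (valid for an SoS-symmetric PSD-dominating argument; more precisely each $\mM_R$ has operator norm bounded, and after symmetrizing over automorphisms the relevant matrix is supported only where $\idsym$ is) reduces the task to bounding the scalar $\sum_R \lam_R \norm{\mM_R}$ up to the combinatorial prefactors $\tfrac{|U\cap V|!}{(|U|!)^2}$ and $\tfrac1{|\Aut(R)|}$.

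Next I would use the $c$-function machinery from \cref{sec:c-functions} to convert the sum over $\sig, \tau, \sig'$ into a maximum: since $\sum \tfrac{1}{c(\al)} \le 1$-type bounds hold for left and middle shapes, and the extra factor $B_{adjust}$ together with $c(\cdot)$ contributes only $n^{O(\eps/100)(|E| + |V|)}$, the dominant behaviour is governed by the exponent in \cref{thm:conditionednormbounds}, namely $(1-\al)\dsos + \eta - (\gam - \al\beta - 3\log_n D_V)|E(R)|$ for the left pieces and $(1-\al)\tfrac{|U_\tau|+|V_\tau|}{2} + \eta$ for the middle. The key point is the size constraint $|V(R)| > D_V$: since $\sig, \sig'$ are left shapes with $|U_\sig|, |U_{\sig'}| \le \dsos$, the ``excess'' $|V(R)| - 2\dsos > D_V - 2\dsos$ must be realized by vertices outside the matrix indices, each of which (being non-isolated in a left/middle shape) is incident to an edge, forcing $|E(R)| \gtrsim \tfrac12(D_V - 2\dsos)$. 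Feeding this into the norm bound — using $\gam - \al\beta - 3\log_n D_V \ge \tfrac\eps2$ from \cref{assumptions} — produces a factor $n^{-\Omega(\eps(D_V - 2\dsos))}$, and the overall exponent becomes at most $\dsos + \eta - \tfrac{\eps}{16}(D_V - 2\dsos)$ after absorbing the $(1-\al)\dsos$ term into $\dsos$ and the $c$-function/$B_{adjust}$ slack into the constant $\tfrac{\eps}{16}$ (which is why the coefficient is $\tfrac1{16}$ rather than $\tfrac12$).

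The main obstacle I anticipate is the bookkeeping to justify that $|E(R)| \ge \tfrac12(D_V - 2\dsos)$ uniformly: one must argue that in the factorization $R = \sig^- \circ \tau \circ (\sig'^\T)^-$, every vertex of $V(R) \setminus (A_R \cup B_R)$ lies in a left, middle, or right piece where it is connected to the boundary (this is where \cref{def:left-shape} and the fact that middle shapes have no floating components — or rather, floating components are absorbed and still non-isolated after conditioning — are used), so the number of edges is at least the number of such vertices minus lower-order corrections. A secondary subtlety is handling $|\Aut(R)|$ and the $\tfrac{|U \cap V|!}{(|U|!)^2}$ factors cleanly; these are harmless since $|\Aut(R)| \ge 1$ and the factorial ratio is at most $1$, but one should remark on it. Once the termwise bound $\lam_R\norm{\mM_R} \le 2 B_{adjust}(R)\, n^{(1-\al)\dsos + \eta - \frac\eps2|E(R)|}$ is in hand and the sum-to-max conversion is applied, the stated inequality $\trunc_1 \psdleq n^{\dsos + \eta - \frac{\eps}{16}(D_V - 2\dsos)}\idsym$ follows.
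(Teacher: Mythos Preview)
Your proposal is correct and shares the paper's core argument: apply the conditioned norm bound of \cref{thm:conditionednormbounds} (part 1, with $D = 3D_V$) to each truncated shape, use that a shape with $|V(\alpha)| > D_V$ and $|U_\alpha|, |V_\alpha| \le \dsos$ must have $|E(\alpha)| \ge \frac{D_V - 2\dsos}{2}$ (since every vertex outside $U_\alpha \cup V_\alpha$ is non-isolated), and then convert the sum to a max via the $c$-functions.

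The one tactical difference is that the paper does \emph{not} keep the decomposition into $(\sigma,\tau,\sigma')$ for this bound. Since the factorization of a shape into left/middle/right is unique (\cref{prop:decomposition-uniqueness}), the triple sum collapses back into a single sum over composite shapes $\alpha$ with $D_V < |V(\alpha)| \le 3D_V$, and part 1 of \cref{thm:conditionednormbounds} is applied directly to $\alpha$ as a whole. This sidesteps exactly the ``main obstacle'' you anticipate---there is no need to argue connectivity piece by piece or invoke the left-shape-specific bound; the inequality $|E(\alpha)| \ge |E(U_\alpha)| + |E(V_\alpha)| + \tfrac{1}{2}(|V(\alpha)| - |U_\alpha| - |V_\alpha|)$ follows immediately from $\alpha$ being proper (no isolated vertices). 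Your route would work but carries extra bookkeeping; the paper's collapse to a single shape sum is the cleaner execution of the same idea.
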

\begin{proof}
Applying \cref{thm:conditionednormbounds} with $D = 3D_V$, for all shapes $\alpha$ such that $D_V \leq |V(\alpha)| \leq 3D_V$, $|U_{\alpha}| \leq \dsos$, $|V_{\alpha}| \leq \dsos$, and $\alpha$ has no isolated vertices outside of $U_{\alpha} \cup V_{\alpha}$,
\begin{align*}
    \lambda_{\alpha}||\mM_{\alpha}|| &\leq 2B_{adjust}(\alpha)n^{(1 - \alpha)\frac{|U_{\alpha}| + |V_{\alpha}|}{2} + \eta - (\gamma - {\alpha}{\beta} - 3\log_n(3D_V))|E(\alpha)|} \\
    &\leq 2B_{adjust}(\alpha)n^{(1 - \epsilon)\dsos - \frac{\epsilon}{8}(D_V - 2\dsos) - \frac{\epsilon}{4}(|E(U_{\alpha})| + |E(V_{\alpha})|)}
\end{align*}
as $|E(\alpha)| \geq |E(U_{\alpha})| + |E(V_{\alpha})| + \frac{|V(\alpha)| - |U_{\alpha}| - |V_{\alpha}|}{2} \geq |E(U_{\alpha})| + |E(V_{\alpha})| + \frac{D_V - 2\dsos}{2}$ and $\gamma - {\alpha}{\beta} - 3\log_n(3D_V) \geq \frac{\epsilon}{4}$.

We now observe that
    \begin{align*}
    \trunc_1 &= \sum_{\substack{U, V \in \calI_{mid}:\\ U \sim V}} 
    \frac{|U\cap V|!}{(|U|!)^2}
    \sum_{\substack{\sig \in \calL_{U,\leq D_V}\\ \tau \in \calM_{U, V, \leq D_V}\\ \sig' \in \calL_{V,\leq D_V}:\\ |V(\sig^- \circ \tau \circ (\sig'^\T)^-)| > D_V}}
    \frac{\lam_{\sig^- \circ \tau \circ (\sig'^\T)^-} \mM_{\sig^- \circ \tau \circ (\sig'^\T)^-}}{|\Aut(\sig^- \circ \tau \circ (\sig'^\T)^-)|} \\
    &\psdleq \left(\sum_{\substack{\text{shape }\alpha: \\
    D_V < |V(\alpha)| \leq 3D_V,\\ |U_{\alpha}| \leq \dsos, |V_{\alpha}| \leq \dsos}}{\lambda_{\alpha}||\mM_{\alpha}||}\right)\Id_{Sym} \\
    &\psdleq \left(\sum_{\substack{\text{shape }\alpha: \\
    D_V < |V(\alpha)| \leq 3D_V,\\ |U_{\alpha}| \leq \dsos}}{\lambda_{\alpha}||\mM_{\alpha}||}\right)\Id_{Sym} \\
    &\psdleq \sum_{U \in \mathcal{I}_{mid}: |U| \leq \dsos}{\left(\sum_{\substack{\text{shape }\alpha:\\
    D_V < |V(\alpha)| \leq 3D_V,\\ U_{\alpha} = U}}{\frac{\dsos!}{|U_{\alpha}|!c(\alpha)}c(\alpha)\lambda_{\alpha}||\mM_{\alpha}||}\right)}\Id_{Sym} \\
    &\psdleq \sum_{U \in \mathcal{I}_{mid}: |U| \leq \dsos}{\left(\max_{\substack{\text{shape }\alpha:\\ D_V < |V(\alpha)| \leq 3D_V,\\ U_{\alpha} = U}}{\left\{c(\alpha)B_{adjust}(\alpha)\right\}}\right)}\\
    &\quad{\dsos}!n^{\dsos + \eta - \frac{\epsilon}{8}(D_V - 2\dsos) - \frac{\epsilon}{4}|E(U)|}\Id_{Sym} \\
    &\psdleq n^{\dsos + \eta - \frac{\epsilon}{16}(D_V - 2\dsos)}\Id_{sym}
    \end{align*}
\end{proof}

    We now analyze the second part of the truncation error. 
    \begin{lemma}
        $\trunc_2 \succeq -n^{2\dsos + 2\eta - \frac{\epsilon}{32}D_V}\Id_{Sym}$
    \end{lemma}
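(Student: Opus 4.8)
The plan is to bound $\trunc_2$ by estimating each of the two families of terms appearing in \cref{def:truncation2}, namely the terms of the form $(\matL_{U_{\tau_P}, \leq D_L(P)} - \matL_{U_{\tau_P}, \leq D_V})(\cdots)\matL^\T_{V_{\tau_P}, \leq D_V}$ and its mirror image. The key structural observation is that $\matL_{U, \leq D_L(P)} - \matL_{U, \leq D_V}$ is a sum over left ribbons whose vertex size lies strictly between $D_L(P)$ and $D_V$; since $D_L(P) = D_V - |V(\gamma) \setminus U_\gamma|$, these ``large'' left ribbons have size at least $D_V - (\text{size of the cumulative left growth})$, so the entire composite ribbon $\sigma^- \circ \tau_P \circ \cdots$ has total vertex size greater than $D_V - D_V/(\text{something})$; more precisely, summing the sizes of all the pieces forces at least one of them to be large, and the slack lower bound of \cref{thm:slack} (together with the conditioned norm bounds in \cref{thm:conditionednormbounds} and \cref{cor: sigmaminusnormbondone}, \cref{cor:sigmaminusnormboundtwo}) decays geometrically in this excess size.

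First I would use the approximate-PSD identity: by \cref{lem:comparingtosquares} applied with the restricted left-multiplier matrices $\matL_{U_{\tau_P}, \leq D_L(P)} - \matL_{U_{\tau_P}, \leq D_V}$ and $\matL_{V_{\tau_P}, \leq D_V}$, each summand in $\trunc_2$ is PSD-dominated in absolute value by
\[
\frac{\lam_{\tau_P}\norm{\mM_{\tau_P}}}{\sqrt{\lam_{U_{\tau_P}}\norm{\mM_{U_{\tau_P}}}\lam_{V_{\tau_P}}\norm{\mM_{V_{\tau_P}}}}}\left((\matL_{U_{\tau_P}, \leq D_L(P)} - \matL_{U_{\tau_P}, \leq D_V})\lam_{U_{\tau_P}}\mM_{U_{\tau_P}^+}(\cdots)^\T + \matL_{V_{\tau_P},\leq D_V}\lam_{V_{\tau_P}}\mM_{V_{\tau_P}^+}\matL^\T_{V_{\tau_P},\leq D_V}\right),
\]
and then I would bound the norm of $(\matL_{U,\leq D_L(P)} - \matL_{U,\leq D_V})\lam_U\mM_{U^+}(\cdots)^\T$ directly using \cref{cor:sigmaminusnormboundtwo}, which gives a bound of the form $2B_{adjust}(\sigma)n^{\dsos + \eta - \frac{\epsilon}{2}\dsos - \frac{\epsilon}{8}|E(\sigma)| - \frac{\epsilon}{8}|V(\sigma)|}$ for each contributing left shape $\sigma$, restricted here to $|V(\sigma)| > D_L(P)$. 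Since $|V(\sigma)| > D_L(P) \geq D_V - |V(\gamma) \setminus U_\gamma|$ and the cumulative left growth $|V(\gamma)\setminus U_\gamma|$ is at most $D_V$ (by the size constraints in $\calS$), one gets $|V(\sigma)| + (\text{cumulative growth}) > D_V$, so the $n^{-\frac{\epsilon}{8}|V(\sigma)|}$ factor combines with the slack decay from \cref{thm:slack} on the $\gamma_i, \gamma_i'^\T$ pieces to yield an overall factor of at most $n^{-\frac{\epsilon}{32}D_V}$, after absorbing the $B_{adjust}$, $c(\cdot)$, $N(P)$, and $2^j$ factors via the $c$-function summation bounds from \cref{sec:c-functions} exactly as in the proof of \cref{cor:intersection-term-bound}. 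Summing over $U_{\tau_P} = U$ and using $\sum_U \idsym$-type bounds as in the $\trunc_1$ proof contributes the extra $n^{2\dsos + 2\eta}$-type prefactor (one $n^{\dsos+\eta}$ from the left-shape bound and one from converting the max over $U$ into a trace bound against $\idsym$).

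The main obstacle I anticipate is the bookkeeping of \emph{where} the excess size lives: in the recursion producing $\tau_P$, the bound $|V(\sigma)| > D_L(P)$ only tells us the truncated left ribbon is large relative to $D_L(P)$, but $D_L(P)$ itself can be much smaller than $D_V$ if the $\gamma_i$'s grew a lot; so I must argue that the total excess $|V(\sigma)| + \sum_i |V(\gamma_i)\setminus U_{\gamma_i}| + \sum_i |V(\gamma_i'^\T)\setminus V_{\gamma_i'^\T}|$ — which is bounded below by $D_V$ minus lower-order corrections — is matched by a genuine $\epsilon/\poly$ decay in the slack, and that this decay survives after multiplying by all the combinatorial overcounting factors. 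This is essentially a careful re-run of the argument in \cref{lem:interactiontermfullbound} and \cref{cor:intersection-term-bound} with the single modification that the left-shape sum is now restricted to large shapes, which converts one factor of $c(\sigma)B_{adjust}(\sigma) \cdot n^{-\frac{\epsilon}{8}|V(\sigma)|}$ into a clean $n^{-\frac{\epsilon}{32}D_V}$ gain; everything else is routine. I would also need to double-check that the two ``mirror'' families (left-truncated vs.\ right-truncated) are handled symmetrically and that the constant in the exponent, $\frac{\epsilon}{32}$, is consistent with the $D_V \geq \frac{100}{\epsilon}\dsos$ hypothesis in \cref{assumptions} so that $2\dsos + 2\eta - \frac{\epsilon}{32}D_V$ is genuinely negative and the whole truncation error is $o(1)$ times $\idsym$.
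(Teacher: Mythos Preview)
Your overall strategy is correct and matches the paper's: the $D_V$-decay comes from the truncated left piece $\matL_{U_{\tau_P},\leq D_L(P)} - \matL_{U_{\tau_P},\leq D_V}$ (which only contains left shapes $\sigma$ with $|V(\sigma)| > D_L(P)$), combined with the slack on the $\gamma_i,\gamma_i'^\T$ to compensate for $D_V - D_L(P) = |V(\gamma)\setminus U_\gamma|$, and the sum over shapes and interaction patterns is controlled exactly as in \cref{lem:interactiontermfullbound}.

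However, the detour through \cref{lem:comparingtosquares} as you describe it has a gap. That lemma (with its stated choice of the balancing parameter $s$) produces an \emph{additive} split into two PSD terms, one involving $(\matL_{U,\leq D_L(P)} - \matL_{U,\leq D_V})\lambda_U\mM_{U^+}(\cdots)^\T$ and the other involving $\matL_{V,\leq D_V}\lambda_V\mM_{V^+}\matL_{V,\leq D_V}^\T$. Only the first term sees large left shapes and hence picks up an $n^{-\Theta(\epsilon)D_L(P)}$ factor; the second term is the full, non-truncated square, and after multiplying by the slack ratio it is only $O(1)\cdot n^{2\dsos+2\eta}$ against $\Id_{Sym}$, with no $D_V$-gain whatsoever. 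So as written, your bound on the second term would be too weak by a factor of $n^{\frac{\epsilon}{32}D_V}$.

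The paper avoids this by skipping \cref{lem:comparingtosquares} entirely: it bounds each summand directly by $\norm{A}\cdot\norm{\mM_{\tau_P}}\cdot\norm{C}\cdot\Id_{Sym}$ (a \emph{multiplicative} bound), where $A$ is the truncated left factor and $C$ is the full right factor. It then proves two short claims: (i) $\norm{\matL_{U,\leq D}}\sqrt{\lambda_U\norm{\mM_U}} \leq n^{\dsos+\eta-\frac{\epsilon}{16}|E(U)|}$ for any $D\leq D_V$, and (ii) $\norm{\matL_{U,\leq D}-\matL_{U,\leq D_V}}\sqrt{\lambda_U\norm{\mM_U}} \leq n^{\dsos+\eta-\frac{\epsilon}{16}D-\frac{\epsilon}{16}|E(U)|}$. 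Multiplying these together with the slack ratio $n^{-\slack(\tau_P)}$ gives the extra factor $n^{-\frac{\epsilon}{16}\min\{D_L(P),D_R(P)\}}$, and then the observation $\min\{D_L(P),D_R(P)\} \geq D_V - 2\sum_i\big(|V(\gamma_i)|-\tfrac{|U_{\gamma_i}|+|V_{\gamma_i}|}{2}+|V(\gamma_i'^\T)|-\tfrac{|U_{\gamma_i'^\T}|+|V_{\gamma_i'^\T}|}{2}\big)$ lets the slack absorb the shortfall, exactly as you anticipated. Your later reasoning (``one $n^{\dsos+\eta}$ from the left-shape bound and one from the right'') is in fact this multiplicative argument; you should drop the reference to \cref{lem:comparingtosquares} and go straight to the product-of-norms bound.
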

    \begin{proof}
    Recall that 
    \begin{align*}
    \trunc_2 =& \sum_{U,V \in \calI_{mid}: U \sim V}{\frac{|U \cap V|!}{(|U|!)^2}\sum_{\tau \in \mathcal{M}_{U, V}}{\sum_{j=1}^{\infty}{\sum_{(\Gamma,\Gamma'^\T,P) \in \mathcal{P}_j(\tau)}{\frac{(-1)^{\# \text{ of intersection indices in } [j]}}{\prod_{i=1}^{j}{|U_{\gamma_i}|! |V_{{\gamma'_i}^\T}|!}}}}}}\\
    &\left(\left(\matL_{U_{\tau_{P}},\leq D_L(P)} - \matL_{U_{\tau_{P}},\leq D_V}\right)\left(\frac{\left(\prod_{i=1}^{j}{c_{P_i}}N(P_i)\right)\lambda_{\tau_{P}}\mM_{\tau_{P}^+}}{|\Aut(\tau_{P})|}\right)\matL^{\T}_{V_{\tau_{P}},\leq D_V}\right. \\
    +\;&\left.\matL_{U_{\tau_{P}},\leq D_L(P)}\left(\frac{\left(\prod_{i=1}^{j}{c_{P_i}}N(P_i)\right)\lambda_{\tau_{P}}\mM_{\tau_{P}^+}}{|\Aut(\tau_{P})|}\right)\left(\matL^{\T}_{V_{\tau_{P}},\leq D_R(P)} - \matL^{\T}_{V_{\tau_{P}},\leq D_V}\right)\right)
    \end{align*}
    so 
    \begin{align*}
    \trunc_2 \succeq& -\sum_{U,V \in \calI_{mid}: U \sim V}{\frac{|U \cap V|!}{(|U|!)^2}\sum_{\tau \in \mathcal{M}_{U, V}}{\sum_{j=1}^{\infty}{\sum_{(\Gamma,\Gamma'^\T,P) \in \mathcal{P}_j(\tau)}{\frac{1}{\prod_{i=1}^{j}{|U_{\gamma_i}|! |V_{{\gamma'_i}^\T}|!}}}}}}\\
    &\left(\norm{\matL_{U_{\tau_{P}},\leq D_L(P)} - \matL_{U_{\tau_{P}},\leq D_V}}\left(\frac{\left(\prod_{i=1}^{j}{c_{P_i}}N(P_i)\right)\lambda_{\tau_{P}}\norm{\mM_{\tau_{P}}}}{|\Aut(\tau_{P})|}\right)\norm{\matL^{\T}_{V_{\tau_{P}},\leq D_V}}\right. \\
    +\;&\left.\norm{\matL_{U_{\tau_{P}},\leq D_L(P)}}\left(\frac{\left(\prod_{i=1}^{j}{c_{P_i}}N(P_i)\right)\lambda_{\tau_{P}}\norm{\mM_{\tau_{P}}}}{|\Aut(\tau_{P})|}\right)\norm{\matL^{\T}_{V_{\tau_{P}},\leq D_R(P)} - \matL^{\T}_{V_{\tau_{P}},\leq D_V}}\right)\Id_{Sym}
    \end{align*}
    We can analyze this using the following claims.
    \begin{claim}
        For all $U \in \mathcal{I}_{mid}$, 
        \[
            \max_{D: D \leq D_V}{\left\{\norm{\matL_{U,\leq D}}\right\}}\sqrt{\lambda_{U}\norm{M_U}} \leq n^{\dsos + \eta - \frac{\epsilon}{16}|E(U)|}
        \]
    \end{claim}
    \begin{proof}
        By \cref{cor:sigmaminusnormboundtwo}, for all $\sigma \in \matL_{U,\leq D_V}$
            \[
                \lambda_{\sigma^{-}}||\mM_{\sigma^{-}}||\sqrt{\lambda_{U}||M_{U}||} \leq 2B_{adjust}(\sigma)n^{\dsos + \eta - \frac{\epsilon}{2}\dsos - \frac{\epsilon}{8}|E(\sigma)| - \frac{\epsilon}{8}|V(\sigma)|}
            \]
        We now observe that 
        \begin{align*}
            &\max_{D: D \leq D_V}{\left\{\norm{\matL_{U,\leq D}}\right\}}\sqrt{\lambda_{U}\norm{M_U}} \leq \sum_{\sigma \in \matL_{U,\leq D_V}}{\lambda_{\sigma^{-}}||\mM_{\sigma^{-}}||\sqrt{\lambda_{U}||M_{U}||}} \\
            &\leq {\dsos}!\left(\sum_{\sigma \in \matL_{U,\leq D_V}}{\frac{1}{c(\sigma)|U_{\sigma}|!}}\right)\max_{\sigma \in \matL_{U,\leq D_V}}{\left\{c(\sigma)\lambda_{\sigma^{-}}||\mM_{\sigma^{-}}||\sqrt{\lambda_{U}||M_{U}||}\right\}}\\
            &\leq 4{\dsos}!\max_{\sigma \in \matL_{U,\leq D_V}}{\left\{c(\sigma)B_{adjust}(\sigma)n^{\dsos + \eta - \frac{\epsilon}{2}\dsos - \frac{\epsilon}{8}|E(\sigma)| - \frac{\epsilon}{8}|V(\sigma)|}\right\}}\\
            &\leq n^{\dsos + \eta - \frac{\epsilon}{16}|E(U)|}
        \end{align*}
    \end{proof}

    \begin{claim}
        For all $U \in \mathcal{I}_{mid}$ and all $D \leq D_V$,  
        \[
        \norm{\matL_{U,\leq D} - \matL_{D,\leq D_V}}\sqrt{\lambda_{U}\norm{\mM_{U}}} \leq n^{\dsos + \eta - \frac{\epsilon}{16}D - \frac{\epsilon}{16}|E(U)|}
        \]
    \end{claim}
    \begin{proof}
        By \cref{cor:sigmaminusnormboundtwo}, for all $\sigma \in \matL_{U,\leq D_V}$
            \[
                \lambda_{\sigma^{-}}||\mM_{\sigma^{-}}||\sqrt{\lambda_{U}||\mM_{U}||} \leq 2B_{adjust}(\sigma)n^{\dsos + \eta - \frac{\epsilon}{2}\dsos - \frac{\epsilon}{8}|E(\sigma)| - \frac{\epsilon}{8}|V(\sigma)|}
            \]
        We now observe that 
        \begin{align*}
            &\norm{\matL_{U,\leq D} - \matL_{U,\leq D_V}}\sqrt{\lambda_{U}\norm{\mM_{U}}} \leq \sum_{\sigma \in \matL_{U,\leq D_V}: |V(\sigma)| > D}{\lambda_{\sigma^{-}}||\mM_{\sigma^{-}}||\sqrt{\lambda_{U}||\mM_{U}||}} \\
            &\leq {\dsos}!\left(\sum_{\sigma \in \matL_{U,\leq D_V}}{\frac{1}{c(\sigma)|U_{\sigma}|!}}\right)\max_{\sigma \in \matL_{U,\leq D_V}: |V(\sigma)| > D}{\left\{c(\sigma)\lambda_{\sigma^{-}}||\mM_{\sigma^{-}}||\sqrt{\lambda_{U}||\mM_{U}||}\right\}}\\
            &\leq 4{\dsos}!\max_{\sigma \in \matL_{U,\leq D_V}: |V(\sigma)| > D}{\left\{c(\sigma)B_{adjust}(\sigma)n^{\dsos + \eta - \frac{\epsilon}{2}\dsos - \frac{\epsilon}{8}|E(\sigma)| - \frac{\epsilon}{8}|V(\sigma)|}\right\}} \\
            &\leq n^{\dsos + \eta - \frac{\epsilon}{16}D - \frac{\epsilon}{16}|E(U)|}
        \end{align*}
    \end{proof}
    Using these claims and grouping all of the terms where $U_{\tau_P} = U$ together in the same way as in the proof of \cref{lem:interactiontermfullbound}, we obtain that
    \begin{align*}
    &\sum_{U,V \in \calI_{mid}: U \sim V}{\frac{|U \cap V|!}{(|U|!)^2}\sum_{\tau \in \mathcal{M}_{U, V}}{\sum_{j=1}^{\infty}{\sum_{(\Gamma,\Gamma'^\T,P) \in \mathcal{P}_j(\tau)}{\frac{1}{\prod_{i=1}^{j}{|U_{\gamma_i}|! |V_{{\gamma'_i}^\T}|!}}}}}}\\
    &\left(\norm{\matL_{U_{\tau_{P}},\leq D_L(P)} - \matL_{U_{\tau_{P}},\leq D_V}}\left(\frac{\left(\prod_{i=1}^{j}{c_{P_i}}N(P_i)\right)\lambda_{\tau_{P}}\norm{\mM_{\tau_{P}}}}{|\Aut(\tau_{P})|}\right)\norm{\matL^{\T}_{V_{\tau_{P}},\leq D_V}}\right. \\
    &+\left.\norm{\matL_{U_{\tau_{P}},\leq D_L(P)}}\left(\frac{\left(\prod_{i=1}^{j}{c_{P_i}}N(P_i)\right)\lambda_{\tau_{P}}\norm{\mM_{\tau_{P}}}}{|\Aut(\tau_{P})|}\right)\norm{\matL^{\T}_{V_{\tau_{P}},\leq D_R(P)} - \matL^{\T}_{V_{\tau_{P}},\leq D_V}}\right)\Id_{Sym} \\
    \preceq& 2\sum_{U \in \calI_{mid}}\frac{1}{|U|!}\sum_{\tau \in \calM}\sum_{j=1}^{\infty}
    \sum_{\substack{(\Gamma,\Gamma'^\T,P) \in \mathcal{P}_j(\tau):\\ U_{\tau_P} = U}}\frac{|U_\tau \cap V_\tau|!}{|V_{\tau}|!} \frac{1}{\prod_{i=1}^j|V_{\gamma_i}|!|V_{{\gamma'_i}^\T}|!}\\
    &\left(2^jc(\tau)\left(\prod_{i=1}^{j}{c(\gamma_i)c(\gamma'_i)c(P_i)c_{P_i}N(P_i)}\right)\frac{\lambda_{\tau_P}\norm{\mM_{\tau_P}}n^{-\frac{\epsilon}{16}\min{\{D_L(P),D_R(P)\}}}}{\sqrt{\lambda_{U_{\tau_P}}\norm{\mM_{U_{\tau_P}}}\lambda_{V_{\tau_P}}\norm{\mM_{V_{\tau_P}}}}}\right)\frac{n^{2\dsos + 2\eta - \frac{\epsilon}{16}|E(U)|}\Id_{Sym}}{2^jc(\tau)\prod_{i=1}^{j}{c(\gamma_i)c(\gamma'_i)c(P_i)}} \\
    \preceq &
    \sum_{U \in \mathcal{I}_{mid}}
    \left(
    \max_{\substack{\tau \in \calM\\
    j \in \N^+\\
    (\Gam, \Gam', P) \in \calP_{j}(\tau):\\
    U_{\tau_P} = U}}
    \left\{100^j c(\tau) \left(\prod_{i=1}^j c(\gam_i)c(\gam_i')c(P_i)c_{P_i}N(P_i)\right)\frac{\lam_{\tau_P}\norm{\mM_{\tau_P}}n^{-\frac{\epsilon}{16}\min{\left\{D_L(P),D_R(P)\right\}}}}{\sqrt{\lam_{U_{\tau_P}} \norm{\mM_{U_{\tau_P}}} \lam_{V_{\tau_P}} \norm{\mM_{V_{\tau_P}}}}}\right\}\right)\\
    &n^{2\dsos + 2\eta - \frac{\epsilon}{16}|E(U)|}\Id_{Sym}
    \end{align*}
    We now make the same observations as before together with an observation about $D_L(P)$ and $D_R(P)$:
    \begin{enumerate}
        \item $\frac{\prod_{i=1}^{j}c^\approx_{P_i}\lam_{\tau_P}\norm{\mM_{\tau_P}}}{\sqrt{\lam_{U_{\tau_P}} \norm{\mM_{U_{\tau_P}}} \lam_{V_{\tau_P}} \norm{\mM_{V_{\tau_P}}}}} = n^{-\slack(\tau_P)}$
        \item By the slack lower bound in \cref{assumptions},
        \begin{align*}
        \slack(\tau_P) &\geq \eps\left(E_{tot}(\tau_P) - \frac{|E(U_{\tau_P})| + |E(V_{\tau_P})|}{2} + |V_{tot}(\tau_P)| - \frac{|U_{\tau_P}| + |V_{\tau_P}|}{2}\right)\\
        &= \eps\left(|E(\tau)| - \frac{|E(U_{\tau})| + |E(V_{\tau})|}{2} + |V(\tau)| - \frac{|U_{\tau}| + |V_{\tau}|}{2}\right) + \\
        &\epsilon\sum_{i \in [j]}{}
        \Bigg(|E(\gam_i)| - \frac{|E(U_{\gam_i})| + |E(V_{\gam_i})|}{2} + (\# \text{ of edges removed from } \gam_i) \\
        &+ |V(\gam_i)| - \frac{|U_{\gam_i}| + |V_{\gam_i}|}{2} +|E({\gam'}^\T_i)| - \frac{|E(U_{{\gam'}^\T_i})| + |E(U_{{\gam'}^\T_i})|}{2} \\
        &+ (\# \text{ of edges removed from } {\gam'}^\T_i) + |V({\gam'}^\T_i)| - \frac{|U_{{\gam'}^\T_i}| + |V_{{\gam'}^\T_i}|}{2}\Bigg)
        \end{align*}
        \item $D_L(P) \geq D_V - |V(\gamma) \setminus V_{\gam}|$ and $D_R(P) \geq D_V - |V({\gam'}^\T) \setminus U_{V({\gam'}^\T)}|$ so 
        \[
        \min{\{D_L(P),D_R(P)\}} \geq D_V - 2\sum_{i=1}^{j}{\left(|V(\gam_i)| - \frac{|U_{\gam_i}| + |V_{\gam_i}|}{2} + |V({\gam'}^\T_i)| - \frac{|U_{{\gam'}^\T_i}| + |V_{{\gam'}^\T_i}|}{2}\right)}
        \]
        \item $c(\tau) \leq n^{\frac{\epsilon}{32}(|E(\tau)|- \frac{|E(U_{\tau})| + |E(V_{\tau})|}{2} + |V(\tau)| - \frac{|U_{\tau}| + |V_{\tau}|}{2})}$
        \item $B_{adjust}(\tau_P) \leq n^{\frac{\epsilon}{32}(|E(\tau_P)| - \frac{|E(U_{\tau_P})| + |E(V_{\tau_P})|}{2} + |V(\tau_P)| - \frac{|U_{\tau_P}| + |V_{\tau_P}|}{2})}$
        \item For all $i \in [j]$, $c(\gam_i)c(\gam_i')$, $c(P_i)$, and $N(P_i)$ are all at most $n$ raised to the power
        \begin{align*}
        &\frac{\epsilon}{32}\sum_{i \in [j]}{}
        \Bigg(|E(\gam_i)| - \frac{|E(U_{\gam_i})| + |E(V_{\gam_i})|}{2} + (\# \text{ of edges removed from } \gam_i) \\
        &+ |V(\gam_i)| - \frac{|U_{\gam_i}| + |V_{\gam_i}|}{2} +|E({\gam'}^\T_i)| - \frac{|E(U_{{\gam'}^\T_i})| + |E(U_{{\gam'}^\T_i})|}{2} \\
        &+ (\# \text{ of edges removed from } {\gam'}^\T_i) + |V({\gam'}^\T_i)| - \frac{|U_{{\gam'}^\T_i}| + |V_{{\gam'}^\T_i}|}{2}\Bigg)
        \end{align*}
    \item $\abs{\tfrac{c_P}{c_P^\approx}} \leq 2$
    \end{enumerate}
\end{proof}

\subsection{Well-conditionedness of \texorpdfstring{$\matL$}{L}}

The goal of this section is to prove a lower bound on the minimum nonzero eigenvalue of $\matL \matL^\T$. More specifically we will prove the following lemma:
\begin{lemma}[Well-conditionedness of $\matL$]
\label{lem:well-conditionedness}
    \[\sum_{V \in \calI_{mid}} \frac{\lam_V}{|V|!}  \matL_{V} \mM_{V^+}\matL_{V}^\T \psdgeq \Omega(n^{-\dsos})\matId_{sym}\]
\end{lemma}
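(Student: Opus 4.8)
\textbf{Proof plan for \cref{lem:well-conditionedness}.}

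The plan is to lower bound the minimum nonzero eigenvalue of $\matL\matL^\T$ by exhibiting, for each matrix index $I \in \calI$, a test vector whose image under $\matL^\T$ is close to the standard basis vector $e_I$, and then to control the error using the norm bounds already established. Concretely, $\matL$ sums over left ribbons $\lam_{L^-}\mM_{L^-}$; the ``trivial'' left ribbon $L_0$ with $V(L_0) = A_{L_0} = B_{L_0} = I$ (no edges) contributes the identity on the block indexed by $I$ with coefficient $\lam_{L_0^-} = 1$. Every other left ribbon with right boundary $I$ strictly enlarges the vertex set or adds edges, and by \cref{def:left-shape} the only SMVS of a left shape $\sigma$ is $V_\sigma$; hence \cref{lem:GeneralCoefficientTimesNorm} (or more directly \cref{cor: sigmaminusnormbondone} with the conditioning) shows $\lam_{\sigma^-}\norm{\mM_{\sigma^-}}$ carries a genuinely negative power of $n$ relative to the trivial term, since each extra edge costs a factor $n^{-(\gam-\al\beta)+o(1)}$ and each extra vertex costs $n^{-(1-\al)/2+o(1)}$, both bounded away from $0$ by $\eps$. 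Summing these subdominant contributions using the $c$-function bookkeeping from \cref{sec:c-functions} (the bound $\sum_{\sigma \in \calL_{U}} \frac{1}{|U_\sigma|! c(\sigma)} \le 2$) gives $\norm{\matL - \matL_{\mathrm{triv}}} \le o(1)$ where $\matL_{\mathrm{triv}}$ is block-diagonal with the appropriate $\idsym$ structure.

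Next I would pass from $\matL$ to the weighted Gram matrix $\sum_{V} \frac{\lam_V}{|V|!}\matL_V \mM_{V^+}\matL_V^\T$ appearing in the statement. Note $\mM_{V^+}$ is diagonal with $0/1$-valued entries (an entry is $1$ exactly when the edge indicators $E(V)$ are all satisfied in $G$, times $\left(\sqrt{(1-p)/p}\right)^{|E(V)|}$ — wait, more carefully, $\mM_{V^+}$ has entries $\prod_{e\in E(V)}\one_{e\in E(G)}\chi_e(G)$, nonnegative), and $\lam_V = n^{(\frac{\beta}{2}-\gam)|E(V)|}$. For $V$ the empty separator $\emptyset \in \calI_{mid}$, $\lam_\emptyset = 1$ and $\mM_{\emptyset^+}$ is the identity on the empty-index row, so the $V = \emptyset$ term alone already contributes $\matL_\emptyset\matL_\emptyset^\T$; but to handle all indices $I$ (including nonempty ones) I use that for $V = I$ (viewed as a diagonal shape with \emph{no} edges), $\lam_V = 1$, $\mM_{V^+} = \idsym$ restricted to the block, and $\matL_V$ includes the trivial left ribbon from $I$ to $I$. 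So the diagonal contribution from $\{V : E(V) = \emptyset\}$ is at least $\sum_{I} \frac{1}{|I|!} (\text{trivial part})$, and by the perturbation argument the full sum is $\psdgeq (1 - o(1))\,c\,\matId_{sym}$ for an absolute constant... except that this reasoning loses a factor: $\matL_V$ itself may be ill-conditioned. The correct move is $\matL_V \mM_{V^+}\matL_V^\T \psdgeq \sigma_{\min}(\matL_V)^2 \mM_{V^+}$ on the relevant subspace, so I need a minimum singular value bound for each $\matL_V$ separately, which is exactly the perturbation-of-identity argument above applied block by block; this yields $\sigma_{\min}(\matL_V)^2 \ge 1 - o(1)$ on the column span, hence the claimed bound with the explicit $n^{-\dsos}$ accounting for the worst-case coefficient $\lam_V / |V|!$ over $V \in \calI_{mid}$ with $w(V) \le \dsos$ (the factor $\lam_V$ can be as small as $n^{-\gam \cdot O(\dsos)} \ge n^{-\dsos}$ since $\gam < 1$, and $|V|! \le \dsos^{\dsos} = n^{o(\dsos)}$ by \cref{assumptions}).

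The main obstacle I anticipate is making the block-diagonal perturbation argument genuinely rigorous in the sparse regime: the off-diagonal (non-trivial left ribbon) terms must be shown to have operator norm $o(1)$ \emph{after} the conditioning on $G$ having no overly dense small subgraphs, and one must be careful that $\matL_V$ acts between different row/column index sets (its rows are indexed by all of $\calI$, its columns effectively by ribbons with right boundary in the $V$-block), so ``minimum singular value'' means the least nonzero singular value and one must verify the column span is exactly what $\mM_{V^+}$ is supported on. I would handle this by invoking \cref{cor:sigmaminusnormboundtwo} to bound $\sum_{\sigma \in \calL_{V}, \sigma \text{ nontrivial}} \lam_{\sigma^-}\norm{\mM_{\sigma^-}} = n^{-\Omega(\eps)} \cdot n^{O(\dsos)}$, which after dividing by the trivial coefficient $1$ and using $D_V \ge \frac{100}{\eps}\dsos$ is $o(1)$; then $\matL_V = \matL_V^{\mathrm{triv}} + \matE$ with $\norm{\matE} = o(1)$ and $\matL_V^{\mathrm{triv}}$ a partial isometry onto the support of $\mM_{V^+}$, giving $\sigma_{\min} \ge 1 - o(1)$. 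Collecting, $\sum_V \frac{\lam_V}{|V|!}\matL_V\mM_{V^+}\matL_V^\T \psdgeq (1-o(1)) \sum_V \frac{\lam_V}{|V|!}\mM_{V^+} \psdgeq \Omega(n^{-\dsos})\matId_{sym}$, using that every index $I \in \calI$ lies in the support of $\mM_{V^+}$ for $V = I$ with no edges, for which $\frac{\lam_V}{|V|!} = \frac{1}{|I|!} \ge n^{-o(\dsos)} \ge \Omega(n^{-\dsos})$.
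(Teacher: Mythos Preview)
There is a genuine gap: the perturbation claim $\norm{\matL_V - \matL_V^{\text{triv}}} = o(1)$ is false, and consequently so is the inequality $\matL_V\mM_{V^+}\matL_V^\T \psdgeq (1-o(1))\mM_{V^+}$ on which your argument rests. The error is in the sentence ``each extra vertex costs $n^{-(1-\al)/2+o(1)}$'': this is correct for vertices in $V(\sigma)\setminus(U_\sigma\cup V_\sigma)$ but \emph{wrong} for vertices added to $U_\sigma$. For a left shape $\sigma$ with $E(V_\sigma)=\emptyset$, the paper's own computation (the lemma right after \cref{lem:well-conditionedness-weights}) gives $\lam_\sigma\normapx{\sigma}\le n^{(1-\al)(|U_\sigma|-|V_\sigma|)/2-(\gam-\al\beta)|E(\sigma)|}$, and this exponent is \emph{positive} whenever $|U_\sigma|>|V_\sigma|$ and there are few edges. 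Concretely, take $V=\emptyset$ and $\sigma$ the left shape with a single vertex $u\in U_\sigma$, $V_\sigma=\emptyset$, and no edges: then $\lam_\sigma=n^{(\al-1)/2}$, $\normapx{\sigma}=n^{1/2}$, so $\lam_\sigma\|\mM_\sigma\|\asymp n^{\al/2}\ge n^{\eps/2}$, polynomially large. So the ``error'' $\matE$ in $\matL_\emptyset$ already has norm $n^{\Omega(1)}$, and for general $V$ the worst nontrivial contributions are of order $n^{\Theta(\dsos)}$. The bound you quote from \cref{cor:sigmaminusnormboundtwo} is $n^{\dsos+\eta-\frac{\eps}{2}\dsos-\cdots}$, which is indeed $n^{\Theta(\dsos)}$ for small shapes; the factor $D_V\ge\frac{100}{\eps}\dsos$ helps only when $|V(\sigma)|$ is of order $D_V$, not for generic left shapes. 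You can also see the failure of $\matL_\emptyset\mM_{\emptyset^+}\matL_\emptyset^\T\psdgeq(1-o(1))\mM_{\emptyset^+}$ directly: choose $x$ with $x_\emptyset=1$ and $x_{(i)}=-n^{(1-\al)/2}/n$ for each $i\in[n]$; then $(\matL_\emptyset^\T x)_\emptyset\approx 1-1=0$ while $x^\T\mM_{\emptyset^+}x=1$.

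The paper's remedy is not to bound each $\matL_V$ separately but to let the blocks compensate one another via a recursive weighting. One proves (\cref{lem:well-conditionedness-weights}) that $\frac{1}{|V|!}\matL_V\matL_V^\T\succeq\Id_{Sym,V}-\sum_U\frac{1}{|U|!}\bigl(\sum_{\sigma}2\lam_\sigma\|\mM_\sigma\|\bigr)\Id_{Sym,U}$, where the negative terms live on \emph{larger} blocks $U$. Choosing weights $w_V=O(n^{(1-\al)|V|/2})$ (defined inductively in $|V|$) arranges that the trivial positive mass $w_U\Id_{Sym,U}$ at block $U$ absorbs all the leakage from smaller $V$; summing gives $\sum_V\frac{w_V}{|V|!}\matL_V\matL_V^\T\succeq\frac12\idsym$, and since each summand is PSD one divides through by $\max_V w_V=O(n^{\dsos})$ to obtain the stated bound. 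The growth of $w_V$ is precisely what accounts for the $n^{-\dsos}$ in the conclusion, and it cannot be avoided because the off-diagonal left shapes really are large.
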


    The approach we take to \cref{lem:well-conditionedness} is as follows. If we can find nonnegative weights $\{w_V: V \in \calI_{mid}\}$ such that
    \[\sum_{\substack{V \in \calI_{mid}}}  \frac{w_V\lam_V}{|V|!}\matL_{V}\mM_{V^+}\matL_{V}^\T \psdgeq \idsym\]
    then since each term is individually PSD, the left-hand side is PSD-dominated
    by 
    \[\left(\max_{V \in \calI_{mid}} w_V\right)\cdot \sum_{\substack{V \in \calI_{mid}}}  \frac{\lam_V}{|V|!}\matL_{V}\mM_{V^+}\matL_{V}^\T\,.\]
    This implies the lower bound 
    \[ \sum_{\substack{V \in \calI_{mid}}}  \frac{\lam_V}{|V|!}\matL_{V}\mM_{V^+}\matL_{V}^\T \psdgeq \frac{1}{\displaystyle\max_{V \in \calI_{mid}} w_V} \idsym\,.\]
We will therefore seek an appropriate choice of $w_V$ that does not grow too quickly.

We will choose $w_U = 0$ unless $E(U) = \emptyset$.
The following lemma shows a growth condition which is sufficient.
Since $|U_\sig| > |V_\sig|$ for all non-diagonal left shapes $\sig$, we can use this lemma to define $w_U$ in order of increasing size $|U|$.

\begin{lemma}\label{lem:well-conditionedness-weights}
If we have nonnegative weights $\{w_V: V \in \mathcal{I}_{mid}, E(V) = \emptyset\}$ such that for all $U,V \in \mathcal{I}_{mid}$ with $E(V) = \emptyset$,
\[
w_V\max_{\substack{\text{nontrivial}\\\sigma \in \mathcal{L}_{V}:U_{\sigma} = U}}{\left\{2c(\sigma)\lambda_{\sigma}||\mM_{\sigma}||\right\}} \leq \frac{w_{U}}{2}
\]
then
\[
\sum_{\substack{V \in \mathcal{I}_{mid}:\\ E(V) = \emptyset}}{\frac{w_V}{|V|!}\matL_{V}\matL_{V}^\T} \succeq \frac{1}{2}\idsym\,.
\]
\end{lemma}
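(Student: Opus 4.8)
The plan is to exploit the fact that, on edgeless separators, $\matL$ is block‑triangular with an identity diagonal, so the trivial left shapes already reconstitute $\idsym$ and everything else is an error term to be charged back to them. Fix $V\in\calI_{mid}$ with $E(V)=\emptyset$ and split $\matL_V = T_V + N_V$, where $T_V$ gathers the contributions of the trivial (edgeless, diagonal) left shapes with $V_\sigma=V$ and $N_V=\sum_{\text{nontrivial }\sigma\in\calL_V}\lam_{\sigma^-}\mM_{\sigma^-}$ (recall \cref{def:calL}). Since every non‑diagonal left shape has $|U_\sigma|>|V_\sigma|$ and every nontrivial $\sigma\in\calL_V$ is non‑diagonal when $E(V)=\emptyset$, the matrix $N_V$ is supported on rows indexed by index sets strictly larger than $|V|$, and $\mM_{\sigma^-}=\mM_\sigma$ because $E(B_\sigma)=E(V)=\emptyset$. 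Consequently $\matL_V^\T x$ decomposes into a ``block‑$V$'' part coming from $T_V$ plus $N_V^\T x$, and $N_V^\T x$ depends only on the components of $x$ supported on strictly larger index sets; moreover, summing the trivial parts over all orderings, $\sum_{V:\,E(V)=\emptyset}\tfrac1{|V|!}T_VT_V^\T=\idsym$ (using \cref{prop:automorphism-sum} to pass between shape and ribbon sums). I would also note that the recursion has a clean base case: for $V$ of maximal size $\dsos/2$ the bound $|U_\sigma|\le\dsos/2$ forces $\calL_V$ to contain only the trivial shape, so the hypothesis puts no constraint on $w_V$ and one anchors $w_V=1$ there, whence $w_V$ stays bounded below through the recursion.

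Next I would pass to the quadratic form. For any $x$,
\[
x^\T\!\left(\sum_{V:\,E(V)=\emptyset}\frac{w_V}{|V|!}\,\matL_V\matL_V^\T\right)\!x \;=\; \sum_{V:\,E(V)=\emptyset}\frac{w_V}{|V|!}\,\bigl\|\matL_V^\T x\bigr\|^2 \;\geq\; \sum_{V}\frac{w_V}{|V|!}\Bigl(\tfrac12\,\|x_V\|^2 \;-\; 2\,\|N_V^\T x\|^2\Bigr),
\]
where $x_V$ denotes the component of $x$ on index sets of size $|V|$ and the last step uses the trivial part of $\matL_V^\T x$ on the block‑$V$ piece together with $\|a+b\|^2\geq\tfrac12\|a\|^2-2\|b\|^2$. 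To control the error I would expand $N_V^\T x=\sum_{\text{nontrivial }\sigma\in\calL_V}\lam_\sigma\,\mM_\sigma^\T x_{U_\sigma}$ and apply Cauchy–Schwarz against the $c$‑weights of \cref{sec:c-functions}, using $\sum_{\text{nontrivial }\sigma\in\calL_V}\tfrac1{|U_\sigma\cap V_\sigma|!\,c(\sigma)}\leq 1$, to get
\[
\|N_V^\T x\|^2 \;\leq\; \sum_{\text{nontrivial }\sigma\in\calL_V} |U_\sigma\cap V_\sigma|!\;c(\sigma)\,\lam_\sigma^2\,\norm{\mM_\sigma}^2\,\|x_{U_\sigma}\|^2 ,
\]
and then interchange summation, grouping by $U=U_\sigma$, so that the total error charged to a block $U$ is $\|x_U\|^2$ times $\sum_{V,\sigma:\,U_\sigma=U}\tfrac{w_V}{|V|!}\,|U\cap V|!\;c(\sigma)\lam_\sigma^2\norm{\mM_\sigma}^2$.

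The heart of the argument is the charging step: the hypothesis $2\,w_V\,c(\sigma)\lam_\sigma\norm{\mM_\sigma}\leq w_U/2$ is exactly calibrated so that, writing one factor $c(\sigma)\lam_\sigma\norm{\mM_\sigma}$ in terms of the hypothesis, bounding $|U\cap V|!/|V|!\leq 1$, the remaining factor $\lam_\sigma\norm{\mM_\sigma}$ by the polynomial left‑shape estimates of \cref{thm:conditionednormbounds}, and the sums over $\sigma$ (and over the finitely many $V$) by the $c$‑function summability, the error debited to block $U$ collapses into a geometrically decaying quantity that is a small fraction of the main term $\tfrac{w_U}{|V|!}$‑type contribution on that block. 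Combining this with the diagonal main term and $w_V\geq 1$ (from the anchored recursion), and then tuning the constant in the inequality $\|a+b\|^2\geq(1-\eps)\|a\|^2-\eps^{-1}\|b\|^2$ (the hypothesis carries an extra power of $c(\sigma)\lam_\sigma\norm{\mM_\sigma}$ beyond what the slack requires, which absorbs the $\eps^{-1}$), upgrades the resulting constant to $\tfrac12$; re‑expressing $\sum_U\tfrac1{|U|!}T_UT_U^\T=\idsym$ then yields $\sum_V\tfrac{w_V}{|V|!}\matL_V\matL_V^\T\succeq\tfrac12\idsym$.

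The main obstacle is not conceptual — the block‑triangular decomposition and the ``charge cross terms to the identity diagonal via $c$‑functions'' template are inherited from earlier graph‑matrix SoS lower bounds (\eg \cite{potechin2020machinery, JPRTX22}) — but rather the careful bookkeeping: tracking the factorials $|V|!$ and $|U\cap V|!$ consistently through the double summation, checking that the summation bounds of \cref{sec:c-functions} apply in exactly the form needed, and verifying the identity $\sum_{V:\,E(V)=\emptyset}\tfrac1{|V|!}T_VT_V^\T=\idsym$ under the ordered‑subset convention for $\calI$. One should also confirm, by tracing the error estimate back through the recursion, that $\max_V w_V\le n^{O(\dsos)}$, which is precisely what is fed into the deduction of \cref{lem:well-conditionedness}.
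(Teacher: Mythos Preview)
Your high-level plan—split $\matL_V$ into its trivial identity-producing part $T_V$ and the nontrivial part $N_V$, then charge the error back to the diagonal using the $c$-functions and the hypothesis—matches the paper's. The gap is in the execution: you pass through the inequality $\|a+b\|^2\geq\tfrac12\|a\|^2-2\|b\|^2$ and then Cauchy--Schwarz on $\|N_V^\T x\|^2$, which leaves a term that is \emph{quadratic} in $\lambda_\sigma\norm{\mM_\sigma}$, namely $c(\sigma)\lambda_\sigma^2\norm{\mM_\sigma}^2$. The hypothesis $2w_Vc(\sigma)\lambda_\sigma\norm{\mM_\sigma}\leq w_U/2$ only converts one of those two factors into $w_U/(4w_V)$; the residual factor $\lambda_\sigma\norm{\mM_\sigma}$ (or equivalently the residual $w_U/w_V$ if you apply the hypothesis twice) is on the order of $n^{(1-\alpha)(|U_\sigma|-|V_\sigma|)/2}$, which is large, not $o(1)$. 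Your suggested fixes—invoking the left-shape norm estimates, summing via $c$-functions, or absorbing into the $\eps^{-1}$—do not close this: after the bookkeeping, the charge to block $U$ carries an uncancelled factor of $w_U$ and blows up rather than being a small fraction of $w_U$.

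The paper sidesteps the squaring entirely. It expands $\matL_V\matL_V^\T = T_VT_V^\T + (T_VN_V^\T + N_VT_V^\T) + N_VN_V^\T$, \emph{drops} the PSD term $N_VN_V^\T$, and bounds the cross term directly: since $T_V$ is essentially the identity on the size-$|V|$ block, the cross term equals $\sum_{\text{nontrivial }\sigma\in\calL_V}\lambda_\sigma(\mM_\sigma+\mM_\sigma^\T)$, which is $\succeq -\sum_\sigma 2\lambda_\sigma\norm{\mM_\sigma}\,\matId_{Sym,U_\sigma}$ by the elementary PSD bound $A+A^\T\succeq-2\norm{A}\,\matId$ (restricted to the row support). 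This keeps everything \emph{linear} in $\lambda_\sigma\norm{\mM_\sigma}$; then summing over $\sigma$ with the $c$-function weights and applying the hypothesis once yields exactly $\tfrac12 w_U$ on block $U$, giving $\sum_V\tfrac{w_V}{|V|!}\matL_V\matL_V^\T\succeq\tfrac12\sum_V w_V\,\matId_{Sym,V}$. The final passage to $\tfrac12\idsym$ uses $w_V\geq 1$, which you correctly flagged.
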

\begin{proof}
    
$\matL_{V}$ consists of the trivial shape with $V(\sig) = V$, as well as larger off-diagonal shapes. We use the following definition and claim to bound the off-diagonal shapes.

\begin{definition}[$\matId_{Sym, V}$]
    For $V \in \calI_{mid}$, let $\matId_{Sym, V}$
    be the restriction of $\idsym$ to the degree $|V|$-by-$|V|$ block.
\end{definition}
\begin{claim}\label{lem:boundingoffdiagonalL}
    For all $V \in \mathcal{I}_{mid}$ with $E(V) = \emptyset$,
    \begin{align*}
    &\frac{1}{|V|!}\matL_{V}\matL_{V}^\T \succeq \matId_{Sym,V} - \sum_{U \in \mathcal{I}_{mid}}\frac{1}{|U|!}{\sum_{\substack{\text{nontrivial }\\\sigma \in \mathcal{L}_{V}:U_{\sigma} = U}}{2\lambda_{\sigma}||\mM_{\sigma}||}\matId_{Sym,U}}
    \end{align*}
\end{claim}
\begin{proof}[Proof of claim.]
We have that
\begin{align*}
    \frac{1}{|V|!}\matL_{V}\matL_{V}^\T &= \matId_{Sym,V}
    +
    \sum_{\substack{\text{non-trivial }\sigma \in \mathcal{L}_{V}}}{\lambda_{\sigma}(\mM_{\sigma}+ \mM_{{\sigma}}^\T}) 
    + \frac{1}{|V|!}
    \sum_{\text{non-trivial }\substack{\sigma,\sigma' \in \mathcal{L}_{V}}}
    {\lambda_{\sigma}\lambda_{{\sigma'}}\mM_{\sigma}\mM_{{\sigma'}}^\T}
\end{align*}
The last term is PSD. Hence it remains to bound the middle term.

\begin{align*}
    \sum_{\substack{\text{non-trivial }\sigma \in \mathcal{L}_{V}}}{\lambda_{\sigma}(\mM_{\sigma}+ \mM_{{\sigma}}^\T}) &= \sum_{U \in \calI_{mid}: U \neq V}\sum_{\substack{\sigma \in \mathcal{L}_{V}:\\ U_\sig = U}}{\lambda_{\sigma}(\mM_{\sigma}+ \mM_{{\sigma}}^\T}) \\
    &= \sum_{U \in \calI_{mid}}\frac{1}{|U|!}\sum_{\substack{\text{non-trivial }\sigma \in \mathcal{L}_{V}:\\ U_\sig = U}}{\lambda_{\sigma}\Id_{Sym, U}^{1/2}(\mM_{\sigma}+ \mM_{{\sigma}}^\T})\Id_{Sym, U}^{1/2}\\
    &\psdgeq -\sum_{U \in \calI_{mid}}\frac{1}{|U|!}\sum_{\substack{\text{non-trivial }\sigma \in \mathcal{L}_{V}:\\ U_\sig = U}}{\lambda_{\sigma}\Id_{Sym, U}^{1/2}(\norm{\mM_{\sigma}}+ \norm{\mM_{{\sigma}}^\T}})\Id_{Sym, U}^{1/2}\\
    &= -\sum_{U \in \calI_{mid}}\frac{1}{|U|!}\sum_{\substack{\text{non-trivial }\sigma \in \mathcal{L}_{V}:\\ U_\sig = U}} 2\lambda_{\sigma}\norm{\mM_{\sigma}}\Id_{Sym, U}
\end{align*}
which completes the proof of the claim.
\end{proof}
Using the claim,
    \begin{align*}
    \sum_{V \in \mathcal{I}_{mid}}{\frac{w_V}{|V|!}\matL_{V}\matL_{V}^\T} &\succeq \sum_{V \in \mathcal{I}_{mid}}{{w_V}\matId_{Sym,V}} - \sum_{U,V \in \mathcal{I}_{mid}}{\frac{w_V}{|U|!}\sum_{\substack{\text{nontrivial}\\\sigma \in \mathcal{L}_{V}:U_{\sigma} = U}}{2\lambda_{\sigma}||\mM_{\sigma}||}\matId_{Sym,U}} \\
    &= \sum_{V \in \mathcal{I}_{mid}}{{w_V}\matId_{Sym,V}} - \sum_{U \in \mathcal{I}_{mid}}\left(\sum_{V \in \calI_{mid}}\sum_{\substack{\text{nontrivial}\\\sigma \in \mathcal{L}_{V}:U_{\sigma} = U}}{\frac{w_V}{c(\sig)|U|!} 2c(\sig)\lambda_{\sigma}||\mM_{\sigma}||}\right) \matId_{Sym,U} \\
    &\psdgeq \sum_{V \in \mathcal{I}_{mid}}{{w_V}\matId_{Sym,V}} - \sum_{U \in \mathcal{I}_{mid}}{\left(\max_{V \in \calI_{mid}}w_V\max_{\substack{\text{nontrivial}\\\sigma \in \mathcal{L}_{V}:U_{\sigma} = U}}{\{2c(\sigma)\lambda_{\sigma}||\mM_{\sigma}||\}}\right)\matId_{Sym,U}} \\
    &\succeq \sum_{V \in \mathcal{I}_{mid}}{{w_V}\matId_{Sym,V}} - \frac{1}{2}\sum_{U \in \mathcal{I}_{mid}}{w_U\matId_{Sym,U}} \qquad\qquad\qquad (\text{by assumption})\\
    &= \frac{1}{2}\sum_{V \in \mathcal{I}_{mid}}{{w_V}\matId_{Sym,V}}
    \end{align*}
\end{proof}

Now we calculate the bound on the weights to deduce \cref{lem:well-conditionedness}.

\begin{lemma}
For all left shapes $\sigma$ with $E(V_\sig) = \emptyset$,
    \[
    \lam_{\sigma} \normapx{\sigma} \leq n^{(1-\al)\left(\frac{|U_{\sigma}|-|V_{\sigma}|}{2} \right) -(\gam-\al\beta)|E(\sigma)|}
    \]
\end{lemma}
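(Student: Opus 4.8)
The plan is to unwind the definitions of $\lam_\sigma$ and $\normapx{\sigma}$ and use the fact that $\sigma$ is a left shape with no edges inside its (trivial) right separator $V_\sigma$, which forces the SMVS of $\sigma$ to equal $V_\sigma$. Recall that for a left shape, the unique SMVS is $V_\sigma$ by \cref{def:left-shape}; since $E(V_\sigma) = \emptyset$, we have $w(S) = w(V_\sigma) = |V_\sigma|$. This is the single structural input needed.

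First I would apply \cref{lem:GeneralCoefficientTimesNorm} with $S = V_\sigma$, which gives
\[
\lambda_{\sigma}\normapx{\sigma} = n^{(1 -\al)\left(\frac{|U_\sigma| + |V_\sigma|}{2}\right)-(\frac{1}{2} - \alpha)w(\sigma) - \frac{w(V_\sigma)}{2} + \frac{|\Iso(\sigma)|}{2}  - (\gamma - {\alpha}{\beta})|E(\sigma)|}\,.
\]
Next I would use $w(V_\sigma) = |V_\sigma|$ (from $E(V_\sigma) = \emptyset$) and the left-shape property that $V_\sigma$ is a separator with minimum weight, hence $w(\sigma) \geq w(V_\sigma) = |V_\sigma|$. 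Since $\frac{1}{2} - \alpha \geq 0$, the term $-(\frac{1}{2}-\alpha)w(\sigma)$ is at most $-(\frac{1}{2}-\alpha)|V_\sigma|$. Also $|\Iso(\sigma)| = 0$: a left shape has no floating components, and any isolated vertex outside $U_\sigma \cup V_\sigma$ would be a floating component (or could be removed to produce a smaller separator). Combining,
\[
\lambda_{\sigma}\normapx{\sigma} \leq n^{(1 -\al)\frac{|U_\sigma| + |V_\sigma|}{2} - (\frac{1}{2}-\alpha)|V_\sigma| - \frac{|V_\sigma|}{2} - (\gamma - \alpha\beta)|E(\sigma)|}\,,
\]
and the exponent simplifies: $(1-\al)\frac{|U_\sigma|+|V_\sigma|}{2} - (\frac12 - \al)|V_\sigma| - \frac{|V_\sigma|}{2} = (1-\al)\frac{|U_\sigma|}{2} + (1-\al)\frac{|V_\sigma|}{2} - \frac{|V_\sigma|}{2} + \al|V_\sigma| - \frac{|V_\sigma|}{2} = (1-\al)\frac{|U_\sigma|}{2} - (1-\al)\frac{|V_\sigma|}{2} = (1-\al)\frac{|U_\sigma| - |V_\sigma|}{2}$, which is exactly the claimed bound.

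The main obstacle — really the only point requiring care rather than routine algebra — is justifying $|\Iso(\sigma)| = 0$ and $w(\sigma) \geq |V_\sigma|$ rigorously for left shapes; both follow from \cref{prop:smvs-uniqueness}/\cref{def:left-shape} together with the observation that a degree-$0$ vertex outside $U_\sigma \cup V_\sigma$ could be deleted to shrink any separator, contradicting minimality (and a left shape has no floating components since its separator $V_\sigma$ must reach every vertex). Once these two facts are in hand, the inequality is a direct substitution into \cref{lem:GeneralCoefficientTimesNorm} followed by collecting exponents.
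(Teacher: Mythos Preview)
Your proposal is correct and follows essentially the same approach as the paper: both apply \cref{lem:GeneralCoefficientTimesNorm} with $S = V_\sigma$, use $w(V_\sigma) = |V_\sigma|$ from $E(V_\sigma) = \emptyset$ together with $w(\sigma) \geq w(V_\sigma)$ for a left shape, and then simplify the exponent. You are actually slightly more explicit than the paper in noting $|\Iso(\sigma)| = 0$, which the paper silently drops.
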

\begin{proof}
By \cref{lem:GeneralCoefficientTimesNorm} we have
\begin{align*}
    \lam_\sig \normapx{\sig} &= n^{(1-\al)\left(\frac{|U_{\sigma}|+|V_{\sigma}|}{2} \right) - \left(\frac{1}{2}-\al \right)w(\sigma) -\frac{w(V_{\sigma})}{2} -(\gam-\al\beta)|E(\sigma)|}
\end{align*}
Substituting $w(\sig) \geq w(V_\sig)$ since $\sig$ is a left shape, and $w(V_\sig) = |V_\sig|$,
\begin{align*}
    \lam_\sig \normapx{\sig} &\leq n^{(1-\al)\left(\frac{|U_{\sigma}|+|V_{\sigma}|}{2} \right) - \left(1-\al \right)|V_\sig| -(\gam-\al\beta)|E(\sigma)|}\\
    &= n^{(1-\al)\left(\frac{|U_{\sigma}|-|V_{\sigma}|}{2} \right) -(\gam-\al\beta)|E(\sigma)|}
\end{align*}
\end{proof}

\begin{corollary}
For all $U, V \in \calI_{mid}$ such that $E(V) = \emptyset$, \[\max_{\substack{\text{nontrivial}\\\sigma \in \mathcal{L}_{V}:U_{\sigma} = U}}{\left\{2c(\sigma)\lambda_{\sigma}||\mM_{\sigma}||\right\}} \leq n^{(1-\alpha)\left(\frac{|U_\sig| - |V_\sig|}{2}\right)}\]
\end{corollary}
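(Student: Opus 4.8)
The plan is to reduce this corollary to the preceding lemma and the $c$-function bound from \cref{lem:c-shape}. Fix $U, V \in \calI_{mid}$ with $E(V) = \emptyset$, and fix a nontrivial left shape $\sigma \in \calL_V$ with $U_\sigma = U$. By the preceding lemma, $\lam_\sigma \normapx{\sigma} \leq n^{(1-\al)\left(\frac{|U_\sigma| - |V_\sigma|}{2}\right) - (\gam - \al\beta)|E(\sigma)|}$, and since the norm bounds are in force (\cref{cor:normbound}), $\norm{\mM_\sigma} \leq B_{adjust}(\sigma)\normapx{\sigma}$. So it suffices to show that the combined combinatorial overhead, namely $2 c(\sigma) B_{adjust}(\sigma)$, is absorbed by the edge-decay factor $n^{-(\gam - \al\beta)|E(\sigma)|}$ together with a small residual, and moreover that when $|E(\sigma)| = 0$ the overhead is bounded by $1$ (or close enough to it to be absorbed into the residual, once $n$ is large).

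The key steps, in order, are as follows. First I would combine the three estimates above to get $2c(\sigma)\lam_\sigma \norm{\mM_\sigma} \leq 2 c(\sigma) B_{adjust}(\sigma) \, n^{(1-\al)\left(\frac{|U_\sigma|-|V_\sigma|}{2}\right) - (\gam-\al\beta)|E(\sigma)|}$. Second, I would invoke the explicit bounds $B_{adjust}(\sigma) = 5(60 D_V^3)^{|V(\sigma)| - \frac{|U_\sigma| + |V_\sigma|}{2}}$ and, from \cref{lem:c-shape}, $c(\sigma) \leq 2(4D_V)^{2|E(\sigma) \setminus E(U_\sigma \cap V_\sigma)|}(2D_V)^{2|(U_\sigma \cup V_\sigma)\setminus(U_\sigma \cap V_\sigma)|}$; since $\sigma$ is a left shape, $E(V_\sigma) = \emptyset$, and for a left shape every non-$U_\sigma$ vertex is incident to an edge, we have $|V(\sigma)| - \frac{|U_\sigma|+|V_\sigma|}{2} \leq |E(\sigma)|$ and $|(U_\sigma \cup V_\sigma) \setminus (U_\sigma \cap V_\sigma)| = |U_\sigma| - |V_\sigma| \leq 2|E(\sigma)|$ (a nontrivial left shape with $V_\sigma \subsetneq U_\sigma$ must have at least $\frac{|U_\sigma| - |V_\sigma|}{2}$ edges, as $U_\sigma$ must be separated from $V_\sigma$). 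Hence $c(\sigma) B_{adjust}(\sigma) \leq \poly(D_V)^{O(|E(\sigma)|)}$, i.e. it is at most $n^{O(\log_n(D_V)) \cdot |E(\sigma)|}$. Third, I would use \cref{assumptions}(4), $\log_n(D_V) \leq \eps/20$, so that this overhead is at most $n^{(\gam - \al\beta)|E(\sigma)| / 2}$ for $n$ large (using $\gam - \al\beta \geq \eps$ from \cref{assumptions}(2)); this leaves a surplus factor $n^{-(\gam-\al\beta)|E(\sigma)|/2} \leq 1$, which swallows both the remaining edge decay and the constant $10$ from $2 \cdot 5$ once $n$ is large enough. In the residual case $|E(\sigma)| = 0$ — which cannot happen for a \emph{nontrivial} left shape with $U_\sigma = U \neq V$ since $U$ and $V$ must be connected — the maximum is over an empty set and the bound is vacuous; when $E(\sigma) \neq \emptyset$ the surplus factor handles everything. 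Combining, $2 c(\sigma) \lam_\sigma \norm{\mM_\sigma} \leq n^{(1-\al)\left(\frac{|U_\sigma| - |V_\sigma|}{2}\right)}$, which is the claim since $|U_\sigma| = |U|$, $|V_\sigma| = |V|$.

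I expect the main obstacle to be the bookkeeping in the second step: verifying that for nontrivial left shapes with $E(V_\sigma) = \emptyset$ the two combinatorial exponents $|V(\sigma)| - \frac{|U_\sigma|+|V_\sigma|}{2}$ and $|U_\sigma| - |V_\sigma|$ are both $O(|E(\sigma)|)$, and in particular strictly positive multiples of $|E(\sigma)|$ that are not too large, so that $\log_n(c(\sigma) B_{adjust}(\sigma))$ is genuinely $\leq (\eps/10)|E(\sigma)|$ or so. This uses the structural fact that a left shape has $V_\sigma$ as its unique SMVS, hence $w(\sigma) \geq w(V_\sigma) = |V_\sigma|$ and every floating-free vertex outside $U_\sigma$ carries edge weight; it is exactly the same mechanism that powered the slack bounds in \cref{sec:psdness}, so no new idea is needed, only care with the inequalities. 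Everything else is a direct substitution of already-proven estimates.
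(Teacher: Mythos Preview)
Your approach is essentially the paper's: apply the preceding lemma for $\lambda_\sigma \normapx{\sigma}$, pass to $\norm{\mM_\sigma}$ via $B_{adjust}(\sigma)$, and absorb $2c(\sigma)B_{adjust}(\sigma)$ into the edge decay $n^{-(\gam-\al\beta)|E(\sigma)|}$ using $\log_n D_V \leq \eps/20$ and $\gam - \al\beta \geq \eps$. The paper compresses your entire second step into the single assertion $2c(\sigma)B_{adjust}(\sigma) \leq (16D_V)^{|E(\sigma)|}$.

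One correction to your bookkeeping: the equality $|(U_\sigma \cup V_\sigma)\setminus(U_\sigma\cap V_\sigma)| = |U_\sigma|-|V_\sigma|$ presumes $V_\sigma \subseteq U_\sigma$, which need not hold for a general left shape; the symmetric difference is $|U_\sigma \setminus V_\sigma| + |V_\sigma \setminus U_\sigma|$, and you need both pieces controlled by $|E(\sigma)|$. The piece $|V_\sigma \setminus U_\sigma|$ is handled by exactly the structural fact you flag at the end (any degree-$0$ vertex in $V_\sigma \setminus U_\sigma$ could be dropped from the separator, contradicting minimality), so this is a fixable slip rather than a gap in the strategy.
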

\begin{proof}
    \begin{align*}
    2c(\sigma)\lambda_{\sigma}||\mM_{\sigma}|| &\leq 2c(\sig) B_{adjust}(\sig) n^{(1-\alpha)\left(\frac{|U_\sig| - |V_\sig|}{2}\right) - (\gam - \al\beta)|E(\sig)|}\\
    &\leq (16 D_V)^{|E(\sig)|} n^{(1-\alpha)\left(\frac{|U_\sig| - |V_\sig|}{2}\right) - (\gam - \al\beta)|E(\sig)|}\\
    &\leq n^{(1-\alpha)\left(\frac{|U_\sig| - |V_\sig|}{2}\right)}\,.
    \end{align*}
\end{proof}
\begin{corollary}
    Choosing $w_U = O\left(n^{(1-\al)\frac{|U|}{2}}\right)$ satisfies the assumption of \cref{lem:well-conditionedness-weights}.
\end{corollary}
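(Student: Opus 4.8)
The goal is to show $\sum_{V \in \calI_{mid}} \frac{\lam_V}{|V|!} \matL_V \mM_{V^+} \matL_V^\T \psdgeq \Omega(n^{-\dsos})\matId_{sym}$. The strategy, as laid out before the statement, is to find nonnegative weights $\{w_V\}$ such that the reweighted sum $\sum_V \frac{w_V \lam_V}{|V|!}\matL_V \mM_{V^+}\matL_V^\T \psdgeq \matId_{sym}$; since each summand is individually PSD (each $\mM_{V^+}$ is a nonnegative diagonal matrix, hence PSD, and sandwiching by $\matL_V$ preserves this), the unweighted sum is bounded below by $(\max_V w_V)^{-1}\matId_{sym}$. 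So the whole task reduces to choosing the $w_V$ as small as possible. The simplification $w_U = 0$ unless $E(U) = \emptyset$ is justified because $\mM_{V^+}$ on the diagonal only has $\one_{e\in E(G)}$ indicators which could be zero when $E(V)\neq\emptyset$, so these blocks cannot help condition the identity; restricting to empty-separator blocks, $\mM_{V^+} = \matId_{Sym,V}$.

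\textbf{Key steps.} First I would invoke \cref{lem:well-conditionedness-weights}: it suffices to choose nonnegative weights $\{w_V : V \in \calI_{mid}, E(V)=\emptyset\}$ satisfying the recursive growth condition
\[
w_V \max_{\substack{\text{nontrivial}\\ \sigma \in \calL_V : U_\sigma = U}} \left\{2c(\sigma)\lambda_\sigma\norm{\mM_\sigma}\right\} \leq \frac{w_U}{2}
\]
for all $U, V \in \calI_{mid}$ with $E(V)=\emptyset$. Second, I would bound the quantity $2c(\sigma)\lambda_\sigma\norm{\mM_\sigma}$ for left shapes $\sigma$ with $E(V_\sigma)=\emptyset$. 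Using \cref{lem:GeneralCoefficientTimesNorm} together with $w(\sigma) \geq w(V_\sigma) = |V_\sigma|$ (since $\sigma$ is a left shape with no separator edges), one gets $\lambda_\sigma\normapx{\sigma} \leq n^{(1-\al)\frac{|U_\sigma|-|V_\sigma|}{2} - (\gam-\al\beta)|E(\sigma)|}$. Then, using the norm bound $\norm{\mM_\sigma}\leq B_{adjust}(\sigma)\normapx{\sigma}$, the bound $c(\sigma)\leq 2(4D_V)^{2|E(\sigma)\setminus E(U_\sigma\cap V_\sigma)|}(2D_V)^{2|(U_\sigma\cup V_\sigma)\setminus(U_\sigma\cap V_\sigma)|}$ from \cref{lem:c-shape}, and the edge-decay factor $(\gam-\al\beta)|E(\sigma)| \geq \eps|E(\sigma)|$ dominating the $\poly(D_V)$ terms (each nontrivial $\sigma$ has $|E(\sigma)|\geq 1$ and $|E(\sigma)| \geq |V(\sigma)\setminus U_\sigma|$), I would conclude $2c(\sigma)\lambda_\sigma\norm{\mM_\sigma} \leq n^{(1-\al)\frac{|U_\sigma|-|V_\sigma|}{2}}$. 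Third, with this bound in hand, the growth condition becomes $w_V \cdot n^{(1-\al)\frac{|U|-|V|}{2}} \leq w_U/2$, which is satisfied by taking $w_U = C \cdot n^{(1-\al)\frac{|U|}{2}}$ for a suitable constant $C$ (absorbing the factor $2$), since for nontrivial $\sigma$ we have $|U_\sigma| > |V_\sigma|$ so the recursion is on strictly increasing $|U|$. Fourth, I would note $\max_{V\in\calI_{mid}} w_V = O(n^{(1-\al)\dsos/2}) = O(n^{\dsos})$ since $|V|\leq \dsos$ for all $V\in\calI_{mid}$, giving the claimed $\Omega(n^{-\dsos})$ lower bound.

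\textbf{Main obstacle.} The delicate point is verifying that the combinatorial factors $c(\sigma)$ and $B_{adjust}(\sigma)$, which are $\poly(D_V)^{O(|E(\sigma)|+|V(\sigma)|)}$, are actually absorbed by the edge-decay exponent $(\gam-\al\beta)|E(\sigma)|$ and the vertex-growth slack. This requires that $(\gam-\al\beta) = \Omega(1)$ relative to $\log_n(D_V)$, i.e. $\log_n(D_V) \leq \frac{\eps}{20}$ from \cref{assumptions}, so that $n^{-(\gam-\al\beta)|E(\sigma)|}\cdot(16D_V)^{O(|E(\sigma)|)} \leq 1$. Once this sub-polynomial-factor bookkeeping is handled — exactly as in the $c$-function verifications of \cref{sec:c-functions} — the rest is a routine induction on separator size. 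A secondary subtlety is ensuring $\mM_{V^+} = \matId_{Sym,V}$ exactly when $E(V) = \emptyset$ and confirming that the PSD-domination argument (each summand PSD $\Rightarrow$ unweighted dominates reweighted up to $\max_V w_V$) is applied correctly when some weights are zero; this follows since dropping PSD terms only decreases the matrix in the Loewner order, and the nonzero-weight terms already suffice.
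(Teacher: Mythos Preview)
Your proposal is correct and follows essentially the same approach as the paper: invoke \cref{lem:well-conditionedness-weights}, bound $\lambda_\sigma\normapx{\sigma}$ via \cref{lem:GeneralCoefficientTimesNorm} using $w(\sigma)\geq w(V_\sigma)=|V_\sigma|$, absorb $c(\sigma)B_{adjust}(\sigma)$ into the edge decay $n^{-(\gamma-\alpha\beta)|E(\sigma)|}$, and conclude that $w_U=O(n^{(1-\alpha)|U|/2})$ satisfies the growth recursion. The paper's proof is organized as two preceding statements (the lemma bounding $\lambda_\sigma\normapx{\sigma}$ and the corollary bounding $2c(\sigma)\lambda_\sigma\|\mM_\sigma\|\leq n^{(1-\alpha)(|U_\sigma|-|V_\sigma|)/2}$), but the content and logic match yours exactly.
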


Recall that because the size of the SMVS is at most $\dsos$,
then $|U| \leq \dsos$.
The maximum of $w_U$ is $w_{\dsos} \leq O(n^{\dsos})$,
therefore we conclude \cref{lem:well-conditionedness}.

\section{Computing \texorpdfstring{$\pE[1]$}{E[1]}}
\label{sec:formal-pseudocal}

\pseudoEone*

\begin{proof}
\begin{align*}
    \abs{\pE[1]-1} &= \big\lvert\sum_{\substack{\al \in \calS:\\ U_\al = V_\al =\emptyset,\\ E(\alpha) \neq \emptyset} }\lam_\al \mM_\al\big\rvert
        \leq  \max_{\al \neq \emptyset: U_\al = V_\al  =\emptyset}\{ c(\al) \lambda(\al) \cdot \|\mM_\al\|\}
\end{align*}
Letting $S$ be the SMVS for $\al$, observe that 
\begin{enumerate}
    \item $\lambda_\al \|\mM_\al\| \leq n^{ - (\frac{1}{2}-\al)w(\al) - \frac{w(S)}{2} -  (\gam-\al\beta)|E(\al)| } \cdot B_{adjust}(\al) $
    \item $w(S) \geq  -\eta - 2 \abs{E(S)} \log_n D_V  $
    \item $w(\al) \geq -\eta - 2|E(\al)|\log_n D_V$
    \item $c(\al) \cdot B_{adjust}(\al) \cdot n^{ 2\eta + 4 \abs{E(\al)} \log_n D_V  } \leq n^{2\eta}(16D_V)^{8|E(\al)|}$ which is less than $n^{\frac{\gam - \al\beta}{2}|E(\al)|}$ when $\eta \leq \frac{\gamma - {\al}{\beta}}{4} - 4\log_{n}(16D_V)$.
\end{enumerate}
Therefore, we can bound \[ 
 \max_{\al \neq \emptyset: U_\al = V_\al  =\emptyset}\{ c(\al) \lambda(\al) \cdot \|\mM_\al\| \} \leq n^{\frac{\al\beta-\gam}{2}}\leq o(1)
\]
as $|E(\al)|>0$, giving us that $\left|\pE[1] - 1\right| = o(1)$ with probability at least $1 - 2n^{4\log_{n}(16D_V) - \frac{\gamma - {\al}{\beta}}{4}}$.
\end{proof}

\pseudoEXv*

\begin{proof}
This can be shown by decomposing 
$\sum_{i=1}^n\pE[\xsos_i]$
and $\sum_{\{i,j\} \in E(G)}\pE[\xsos_i\xsos_j]$ in terms of ribbons.

For each ribbon $R$ with edges $E(R)$ and $A_R = B_R = \emptyset$, this ribbon appears in $\sum_{i=1}^n\pE[\xsos_i]$ in two ways.
\begin{enumerate}
\item For each $i \in V(R)$, the ribbon $R'$ with $E(R') = E(R)$, $V(R') = V(R)$, $A_{R'} = (i)$, and $B_{R'} = \emptyset$ appears in $\pE[\xsos_i]$ with coefficient $\left(\frac{k}{n}\right)^{|V(R)|}\left(\frac{q - p}{\sqrt{p(1-p)}}\right)^{|E(R)|}$. This happens for $|V(R)|$ different $i$.
\item For each $i$ which is not in $V(R)$, the ribbon $R'$ with $E(R') = E(R)$, $V(R') = V(R) \cup \{i\}$, $A_{R'} = (i)$, and $B_{R'} = \emptyset$ appears in $\pE[\xsos_i]$ with coefficient $\left(\frac{k}{n}\right)^{|V(R)| + 1}\left(\frac{q - p}{\sqrt{p(1-p)}}\right)^{|E(R)|}$. This happens for $n - |V(R)|$ different $i$.
\end{enumerate}
Thus, for each shape $\alpha$ with $U_{\alpha} = V_{\alpha} = \emptyset$ where $|V(\alpha)| \leq D_V$, $\mM_{\alpha}$ appears in $\sum_{i=1}^n\pE[\xsos_i]$ with coefficient 
\[
\frac{1}{|\Aut(\alpha)|}\left(\frac{k(n - |V(\alpha)|)}{n}\one_{|V(\alpha)| < D_V} + |V(\alpha)|\right)\left(\frac{k}{n}\right)^{|V(R)|}\left(\frac{q - p}{\sqrt{p(1-p)}}\right)^{|E(\alpha)|}
\]
The dominant term is the trivial shape $\alpha$ with no vertices or edges which gives a contribution of exactly $k$. Using a similar analysis as the analysis used to bound $\left|\pE[1] - 1\right|$, it is not hard to show that the remaining terms have magnitude $o(k)$ with high probability.

To analyze $\sum_{\{i,j\} \in E(G)}\pE[\xsos_i\xsos_j]$, we use the identities $\one_{e \in E(G)}\chi_{e} = \sqrt{p(1-p)} + (1-p)\chi_{e}$ and $\one_{e \in E(G)} = p + \sqrt{p(1-p)}\chi_{e}$.

For each ribbon $R$ with edges $E(R)$ and $A_R = B_R = \emptyset$, this ribbon appears in $\sum_{i < j}{\one_{\{i,j\} \in E(G)}\pE[\xsos_i\xsos_j]}$ in several ways.
\begin{enumerate}
\item For each $i < j \in V(R)$ such that $\{i,j\} \in E(R)$, the ribbon $R'_1$ with $E(R'_1) = E(R)$, $V(R') = V(R)$, $A_{R'} = (i)$, and $B_{R'} = (j)$ appears in $\pE[\xsos_i\xsos_j]$ with coefficient $\left(\frac{k}{n}\right)^{|V(R)|}\left(\frac{q - p}{\sqrt{p(1-p)}}\right)^{|E(R)|}$. This is then multiplied by $(1-p)$ because of the edge indicator $\one_{\{i,j\} \in E(G)}$.

Similarly, the ribbon $R'_2$ with $E(R'_2) = E(R) \setminus \{i,j\}$, $V(R') = V(R)$, $A_{R'} = (i)$, and $B_{R'} = (j)$ appears in $\pE[\xsos_i\xsos_j]$ with coefficient $\left(\frac{k}{n}\right)^{|V(R)|}\left(\frac{q - p}{\sqrt{p(1-p)}}\right)^{|E(R)| - 1}$. This is then multiplied by $\sqrt{p(1-p)}$ because of the edge indicator $\one_{\{i,j\} \in E(G)}$. 

This gives a total contribution of $\left(1 + \frac{p(1-p)}{q-p}\right)\left(\frac{k}{n}\right)^{|V(R)|}\left(\frac{q - p}{\sqrt{p(1-p)}}\right)^{|E(R)|}$. This happens for $|E(R)|$ different $i < j$.

\item For each $i < j \in V(R)$ such that $\{i,j\} \notin E(R)$, the ribbon $R'_1$ with $E(R'_1) = E(R)$, $V(R') = V(R)$, $A_{R'} = (i)$, and $B_{R'} = (j)$ appears in $\pE[\xsos_i\xsos_j]$ with coefficient $\left(\frac{k}{n}\right)^{|V(R)|}\left(\frac{q - p}{\sqrt{p(1-p)}}\right)^{|E(R)|}$. This is then multiplied by $p$ because of the edge indicator $\one_{\{i,j\} \in E(G)}$.

Similarly, the ribbon $R'_2$ with $E(R'_2) = E(R) \cup \{i,j\}$, $V(R') = V(R)$, $A_{R'} = (i)$, and $B_{R'} = (j)$ appears in $\pE[\xsos_i\xsos_j]$ with coefficient $\left(\frac{k}{n}\right)^{|V(R)|}\left(\frac{q - p}{\sqrt{p(1-p)}}\right)^{|E(R)| + 1}$. This is then multiplied by $\sqrt{p(1-p)}$ because of the edge indicator $\one_{\{i,j\} \in E(G)}$. 

This gives a total contribution of $q\left(\frac{k}{n}\right)^{|V(R)|}\left(\frac{q - p}{\sqrt{p(1-p)}}\right)^{|E(R)|}$. This happens for $\binom{V(R)}{2} - |E(R)|$ different $i < j$.

\item For each $i < j \in V(R)$ such that $i \in V(R)$ but $j \notin V(R)$, the ribbon $R'_1$ with $E(R'_1) = E(R)$, $V(R') = V(R) \cup \{j\}$, $A_{R'} = (i)$, and $B_{R'} = (j)$ appears in $\pE[\xsos_i\xsos_j]$ with coefficient $\left(\frac{k}{n}\right)^{|V(R)|+1}\left(\frac{q - p}{\sqrt{p(1-p)}}\right)^{|E(R)|}$. This is then multiplied by $p$ because of the edge indicator $\one_{\{i,j\} \in E(G)}$.

Similarly, the ribbon $R'_2$ with $E(R'_2) = E(R) \cup \{i,j\}$, $V(R') = V(R)  \cup \{j\}$, $A_{R'} = (i)$, and $B_{R'} = (j)$ appears in $\pE[\xsos_i\xsos_j]$ with coefficient $\left(\frac{k}{n}\right)^{|V(R)|+1}\left(\frac{q - p}{\sqrt{p(1-p)}}\right)^{|E(R)| + 1}$. This is then multiplied by $\sqrt{p(1-p)}$ because of the edge indicator $\one_{\{i,j\} \in E(G)}$. 

This gives a total contribution of $\frac{kq}{n}\left(\frac{k}{n}\right)^{|V(R)|}\left(\frac{q - p}{\sqrt{p(1-p)}}\right)^{|E(R)|}$.

The same analysis holds for the case when $i \notin V(R)$ and $j \in V(R)$. These two cases happen for $|V(R)|(n - |V(R)|)$ different $i < j$.

\item For each $i < j \in V(R)$ such that $i,j \notin V(R)$, the ribbon $R'_1$ with $E(R'_1) = E(R)$, $V(R') = V(R) \cup \{i,j\}$, $A_{R'} = (i)$, and $B_{R'} = (j)$ appears in $\pE[\xsos_i\xsos_j]$ with coefficient $\left(\frac{k}{n}\right)^{|V(R)|+2}\left(\frac{q - p}{\sqrt{p(1-p)}}\right)^{|E(R)|}$. This is then multiplied by $p$ because of the edge indicator $\one_{\{i,j\} \in E(G)}$.

Similarly, the ribbon $R'_2$ with $E(R'_2) = E(R) \cup \{i,j\}$, $V(R') = V(R)  \cup \{i,j\}$, $A_{R'} = (i)$, and $B_{R'} = (j)$ appears in $\pE[\xsos_i\xsos_j]$ with coefficient $\left(\frac{k}{n}\right)^{|V(R)|+2}\left(\frac{q - p}{\sqrt{p(1-p)}}\right)^{|E(R)| + 1}$. This is then multiplied by $\sqrt{p(1-p)}$ because of the edge indicator $\one_{\{i,j\} \in E(G)}$. 

This gives a total contribution of $\frac{{k^2}q}{n^2}\left(\frac{k}{n}\right)^{|V(R)|}\left(\frac{q - p}{\sqrt{p(1-p)}}\right)^{|E(R)|}$. This happens for $\binom{(n - |V(R)|)}{2}$ different $i < j$.
\end{enumerate}

Putting everything together, each ribbon $R$ with $A_R = B_R = \emptyset$ and $|V(R)| \leq D_V$ appears with coefficient
\begin{align*}
\Bigg(&\left(1 + \frac{p(1-p)}{q-p}\right)|E(R)|+ q\left(\binom{V(R)}{2} - |E(R)|\right) + \frac{kq}{n}|V(R)|(n - |V(R)|)\one_{|V(R)| < D_V} \\
&+\frac{{k^2}q}{n^2}\binom{(n - |V(R)|)}{2}\one_{|V(R)| < D_V - 1}\Bigg)\left(\frac{k}{n}\right)^{|V(R)|}\left(\frac{q - p}{\sqrt{p(1-p)}}\right)^{|E(R)|}
\end{align*}
in $\sum_{\{i,j\} \in E(G)}\pE[\xsos_i\xsos_j]$. Thus, for each shape $\alpha$ with $U_{\alpha} = V_{\alpha} = \emptyset$ and $|V(\alpha)| \leq D_V$, the graph matrix $M_{\alpha}$ appears with coefficient
\begin{align*}
\frac{1}{|\Aut(\alpha)|}\Bigg(&\left(1 + \frac{p(1-p)}{q-p}\right)|E(\alpha)|+ q\left(\binom{V(\alpha)}{2} - |E(\alpha)|\right) + \frac{kq}{n}|V(\alpha)|(n - |V(\alpha)|)\one_{|V(\alpha)| < D_V} \\
&+\frac{{k^2}q}{n^2}\binom{(n - |V(R)|)}{2}\one_{|V(R)| < D_V - 1}\Bigg)\left(\frac{k}{n}\right)^{|V(\alpha)|}\left(\frac{q - p}{\sqrt{p(1-p)}}\right)^{|E(\alpha)|}
\end{align*}
in $\sum_{\{i,j\} \in E(G)}\pE[\xsos_i\xsos_j]$.

The dominant term in $\sum_{\{i,j\} \in E(G)}\pE[\xsos_i\xsos_j]$ comes from the empty shape $\alpha$ with no vertices or edges. This gives $\frac{{k^2}q}{n^2}\binom{n}{2} \approx \frac{{k^2}q}{2}$. Using a similar analysis as the analysis used to bound $\left|\pE[1] - 1\right|$, it is not hard to show that the remaining terms have magnitude $o({k^2}q)$ with high probability.
\end{proof}

\end{document}